\definecolor{grey}{RGB}{65, 65, 65}
\definecolor{niebieski}{RGB}{153, 255, 255}
\definecolor{fioletowy}{RGB}{255, 0, 127}
\definecolor{zgnily}{RGB}{140, 140, 0}
\newtheorem{theorem}{Theorem}
\newtheorem{lemma}{Lemma}
\newtheorem{example}{Example}
\newtheorem{definition}{Definition}
\newtheorem{observation}{Observation}
\newcommand{\bodyCQ}{\ensuremath{\mn{bodyCQ}}\xspace}
\newcommand{\ata}{\text{2ATA}\xspace}
\title{
  Conservative Extensions
  for Existential Rules}
\author{%
Jean Christoph Jung$^1$\and
Carsten Lutz$^2$\and
Jerzy Marcinkowski$^{3}$ \\
\affiliations
$^1$Department of Computer Science, University of Hildesheim, Germany\\
$^2$Department of Computer Science, University of Bremen, Germany\\
$^3$Institute of Computer Science, University of Wrocław, Poland\\
\emails
jungj@uni-hildesheim.de,
clu@uni-bremen.de,
jma@cs.uni.wroc.pl
}
\begin{document}

\maketitle

\begin{abstract}
  We study the problem to decide, given sets $T_1,T_2$ of
  tuple-generating dependencies (TGDs), also
  called existential rules, whether $T_2$ is a conservative extension
  of $T_1$. We consider two natural notions of conservative extension,
  one pertaining to answers to conjunctive queries over databases and
  one to homomorphisms between chased databases. Our main results are
  that these problems are undecidable for linear TGDs,
  undecidable for guarded TGDs even when $T_1$ is empty,
  and decidable for frontier-one TGDs. 
\end{abstract}

\section{Introduction}

Tuple-generating dependencies (TGDs) are an expressive constraint
language that emerged in database theory, where it has various important
applications~\cite{AbHV95}. In knowledge representation,
TGDs are used as an ontology language under the names of
existential rules and Datalog$^\pm$
\cite{BMRT11,DBLP:conf/lics/CaliGLMP10}.  For the purposes of this
paper, however, we stick with the name of `TGDs'. A major application of
TGDs in KR is
ontology-mediated querying where a database query is enriched with an
ontology, aiming to deliver more complete answers and to extend  the
vocabulary available for query formulation~\cite{DBLP:journals/tods/BienvenuCLW14,BiOr15,DBLP:conf/rweb/CalvaneseGLLPRR09}. The semantics
of ontology-mediated querying can be given in terms of homomorphisms
and the widely known chase procedure that makes explicit the logical
consequences of a set of TGDs.

As the use of unrestricted TGDs makes the evaluation of
ontology-mediated queries undecidable, various computationally more
well-behaved fragments have been identified. In this paper, we
consider linear TGDs, guarded TGDs, and frontier-one TGDs
\cite{DBLP:journals/ws/CaliGL12,BMRT11,CaGK13}. For all of these,
ontology-mediated query evaluation is decidable.  Deferring a formal
definition to Section~\ref{sect:prelim} of this paper, we remark that
guarded generalizes linear, and that frontier-one is orthogonal to both
linear and guarded. Moreover, linear TGDs generalize description
logics (DLs) of the DL-Lite family \cite{DBLP:journals/jair/ArtaleCKZ09}
while both guarded and frontier-one generalize DLs of
the \ELI family~\cite{DBLP:books/daglib/0041477}.

On top of bare-bones query evaluation, there are other natural
problems that are suggested by the framework of ontology-mediated
querying. Consider the following: given sets of TGDs $T_1$ and~$T_2$, a
database schema $\Sigma_D$, and a query schema $\Sigma_Q$, decide
whether $T_2$ is a \emph{$\Sigma_D,\Sigma_Q$-CQ-conservative
  extension} of $T_1$, that is, whether there is a
$\Sigma_D$-database~$D$, a conjunctive query (CQ) $q(\bar x)$ in
schema $\Sigma_Q$, and a tuple $\bar c$ that is an answer to $q$ on
$D$ given $T_1$, but not an answer to $q$ on $D$
given~$T_2$~\cite{DBLP:conf/rweb/BotoevaKLRWZ16}.  Note that this is a
very relevant problem. If, for instance, $T_2$ is a
$\Sigma_D,\Sigma_Q$-CQ-conservative extension of $T_1$ and vice versa,
then we can safely replace $T_1$ with $T_2$ in any application where
databases are formulated in schema $\Sigma_D$ and queries in schema
$\Sigma_Q$.  CQ-conservative extensions have been studied for various
DLs and are decidable for many members of the DL-Lite
and \ELI families
\cite{DBLP:conf/aaai/KonevKLSWZ11,DBLP:journals/jair/JungLMS20}. In
this paper, we address the naturally emerging question whether
decidability extends to the more general settings of linear, guarded,
and frontier-one TGDs.

Another natural problem related to CQ-conservative extensions is
\emph{$\Sigma_D,\Sigma_Q$-hom-conservative extension} which asks
whether for every $\Sigma_D$-database, there is a
$\Sigma_Q$-homomorphism\footnote{A homomorphism that disregards
  symbols outside of $\Sigma_Q$.} from the chase $\mn{chase}_{T_2}(D)$
of $D$ with $T_2$ to $\mn{chase}_{T_1}(D)$ that is the identity on all
constants in $D$. In fact, this problem corresponds to CQ-conservative
extensions with CQs that may be infinitary, and it is known that these
two problems do not coincide even in the case of DLs~\cite{DBLP:conf/rweb/BotoevaKLRWZ16}. We study hom-conservative
extensions along with CQ-conservative extensions. In addition, we
consider the variant of CQ/hom-conservative extensions where the set
of TGDs $T_1$ is required to be empty. We refer to this as
\emph{$\Sigma_D,\Sigma_Q$-CQ/hom-triviality}. Note that these are also
very natural problems as they ask whether the given set of TGDs $T_2$
says \emph{anything at all} about $\Sigma_D$-databases as far as
conjunctive queries and homomorphisms over schema $\Sigma_Q$ are
concerned. It is not difficult to show that
$\Sigma_D,\Sigma_Q$-CQ-triviality and $\Sigma_D,\Sigma_Q$-hom-triviality coincide even for
unrestricted TGDs, and thus we only speak of
\emph{$\Sigma_D,\Sigma_Q$-triviality}.

Our main results are as follows.
\begin{enumerate}

\item For linear TGDs, CQ- and hom-conservative extensions are
  undecidable, but triviality is decidable.

\item For guarded TGDs, triviality is undecidable.

\item For frontier-one TGDs, CQ- and hom-conservative extensions are decidable.
  
\end{enumerate}
We consider it remarkable that undecidability already appears for a
class as restricted as linear TGDs.  Regarding Point~1, we also
determine the exact complexity of triviality for linear TGDs as being
\PSpace-complete, and \coNP-complete when the arity of relation
symbols is bounded by a constant. Regarding Point~3, our algorithms
yield 3\ExpTime upper bounds while 2\ExpTime lower bounds can be
imported from the DL \ELI, 
a restricted fragment of frontier-one TGDs
\cite{DBLP:conf/ijcai/Gutierrez-Basulto18,DBLP:journals/jair/JungLMS20}.
The exact complexity remains open.

Our undecidability results are proved by reduction from a convergence
problem that concerns Conway functions \cite{conway}. In a database
theory context, such a technique has been used in
\cite{DBLP:conf/icalp/GogaczM14}. As the reader shall see, the
reductions take place in the setting of Pyramus and Thisbe (see
\cite{Ovid}), a mythological couple that could only communicate
through a crack in the wall and whose fate it was to never meet again
in person. Bring some popcorn. The decidability result for
hom-conservative extensions rests on the observation that whenever
there is a database that witnesses non-conservativity, then there is
such a database of bounded treewidth. This enables a decision
procedure based on alternating tree automata. The case of
CQ-conservative extensions is more intricate as it requires the use of
\emph{homomorphism limits}, that is, families of homomorphisms that
can only look $n$ steps `into the model', for any $n$. It is not clear
how the existence of homomorphism limits can directly be verified by
tree automata. Our solution generalizes the approach to
CQ-conservative extensions in \ELI pursued in
\cite{DBLP:journals/jair/JungLMS20}. In short, the idea is to push the
use of homomorphism limits to parts of the chase that are
$\Sigma_Q$-disconnected from the database and regular in shape, and to
then characterize homomorphism limits from/into such regular
(infinite) databases in terms of unbounded homomorphisms.

\smallskip 

Most proof details are deferred to the appendix.

\smallskip

{\bf Related Work.} We already mentioned the work on DLs from the
DL-Lite and \ELI families
\cite{DBLP:conf/aaai/KonevKLSWZ11,DBLP:journals/jair/JungLMS20}. For
description logics such as \ALC that support negation and disjunction,
CQ- and hom-conservative extensions are undecidable
\cite{DBLP:journals/ai/BotoevaLRWZ19}. A different kind of
conservative extension is obtained by replacing databases and query
answers with logical consequences formulated in the ontology language
\cite{DBLP:conf/kr/GhilardiLW06}. While such conservative extensions
are decidable in 
\ALC
\cite{DBLP:conf/kr/GhilardiLW06,DBLP:conf/ijcai/LutzWW07}, they are
undecidable in the guarded fragment and in the two-variable fragment
of first-order logic \cite{DBLP:conf/icalp/JungLM0W17}. For
existential rule languages, the difference between this version of
conservative extensions and CQ-conservative extensions tends to
be small (depending on the class of rules considered).

\section{Preliminaries}
\label{sect:prelim}

\newcommand{\dbsig}{\ensuremath{ {\Sigma_D}}\xspace}
\newcommand{\querysig}{\ensuremath{{\Sigma_Q}}\xspace}
\newcommand{\fullsig}{\ensuremath{\Sigma}\xspace}
\newcommand{\chase}{{\sf chase}}
\newcommand{\tp}{{\sf tp}}
\newcommand{\adom}{{\sf adom}}
\newcommand{\con}{{\sf con}}
\renewcommand{\lim}{{\sf lim}}
\newcommand{\src}{{\sf src}}
\newcommand{\TP}{{\sf TP}}
\newcommand{\var}{{\sf var}}
\newcommand{\nul}{{\sf null}}

\paragraph{Relational Databases.}

Fix countably infinite and disjoint sets of \emph{constants} \Cbf and
\Nbf. We refer to the constants in \Nbf as \emph{nulls}.  A {\em
schema} $\Sigma$ is a set of relation symbols~$R$ with associated
arity $\mn{ar}(R) \geq 1$. 
A {\em $\Sigma$-fact} is an expression of the form $R(\bar c)$ with $R
\in \Sigma$ and $\bar c$ is an $\mn{ar}(R)$-tuple of constants from
$\Cbf \cup \Nbf$. 
A {\em $\Sigma$-instance} is a possibly infinite set of
$\Sigma$-facts and a {\em $\Sigma$-database} is a finite
$\Sigma$-instance that uses only constants from \Cbf. We write
$\mn{adom}(I)$ for the set of constants used in instance $I$.  For an
instance $I$ and a schema $\Sigma$, $I|_\Sigma $ denotes the
restriction of $I$ to $\Sigma$, that is, the set of all facts in
$I$ that use a relation symbol from $\Sigma$.  We say that $I$ is
\emph{connected} (resp., \emph{$\Sigma$-connected}) if the
Gaifman graph of $I$ (resp., $I|_\Sigma$) is
connected and that $I$ is of \emph{finite degree} if the Gaifman graph
of $I$ has finite degree. 

For a schema $\Sigma$, a {\em $\Sigma$-homomorphism} from instance $I$
to instance $J$ is a function
$h : \mn{adom}(I) \rightarrow \mn{adom}(J)$ such that
$R(h(\bar c)) \in J$ for every $R(\bar c) \in I$ with $R \in
\Sigma$. We say that $h$ is \emph{database-preserving} if it is the
identity on all constants from \Cbf and write $I \rightarrow_\Sigma J$
if there exists a database-preserving $\Sigma$-homomorphism from
$I$ to $J$. 

\smallskip\noindent\textbf{Conjunctive Queries.}
A {\em conjunctive query} (CQ) $q(\bar x)$ over a schema $\Sigma$
takes
the form $q(\bar x) \leftarrow \phi(\bar x, \bar y)$
where $\bar x$ and $\bar y$ are tuples of variables, $\phi$ is a
set of \emph{relational atoms} $R_i(\bar x_i)$ with
$R_i \in \Sigma$ and $\bar x_i$ a tuple of variables of length
$\mn{ar}(R_i)$.  We refer to the variables in $\bar x$ as the
\emph{answer variables} of $q$.
The {\em arity} of $q$
is the number of its answer variables and $q$ is \emph{Boolean} if
it is of arity~0. 

 Every CQ $q(\bar x)$ gives rise to a database $D_q$, known as the
 {\em canonical database} of~$q$, by viewing variables as constants.
  A \emph{$\Sigma$-homomorphism}
$h$  from $q$ to an instance $I$ is a $\Sigma$-homomorphism from $D_q$ to
 $I$.
 A tuple $\bar c \in \mn{adom}(I)^{|\bar x|}$ is an {\em answer} to
 $q$ on $I$ if there is a homomorphism $h$ from $q$ to $I$ with
 $h(\bar x) = \bar c$.
The {\em evaluation of $q(\bar x)$ on $I$}, denoted $q(I)$, is the set of all answers to $q$ on~$I$.
%

For a CQ $q$, but also for any other syntactic object $q$, we use
$||q||$ to denote the number of symbols needed to write $q$ as
a word over a suitable alphabet.

\smallskip\noindent\textbf{TGDs.}  
A {\em tuple-generating dependency} (TGD) $\vartheta$ 
is a
first-order sentence $\forall \bar x \forall \bar y \,
\big(\phi(\bar x,\bar y) \rightarrow \exists \bar z \, \psi(\bar
x,\bar z)\big) $ such that $q_\phi(\bar x) \leftarrow 
\phi(\bar x,\bar y)$ and $q_\psi(\bar x) \leftarrow 
\psi(\bar x,\bar z)$ are CQs.  We call
$\phi$ and $\psi$ the {\em body} and {\em head} of $\vartheta$.  The body may
be the empty conjunction, that is, logical truth. 
The variables in $\bar x$ are the 
\emph{frontier variables}. 
For readability, we write $\vartheta$ as $\phi(\bar x,\bar y) \rightarrow
\exists \bar z \, \psi(\bar x,\bar z)$.  
An instance $I$ 
\emph{satisfies}~$\vartheta$, denoted $I \models \vartheta$, if $q_\phi(I) \subseteq
q_\psi(I)$. It \emph{satisfies} $T$
if $I \models \vartheta$ for each $\vartheta \in T$. We then also say
that $I$ is a \emph{model} of $T$.

A TGD $\vartheta$ is {\em frontier-one} if it has exactly one frontier
variable \cite{BMRT11}. 
It is {\em linear} if its body contains at most one atom. Clearly,
every linear TGD is guarded. The \emph{body width} of a set $T$ of
TGDs is the maximum
number of variables in a rule body of a TGD in $T$, and the \emph{head
  width} is defined accordingly.  

Throughout this paper, we are going to make use of the well-known
chase procedure for making explicit the consequences of a set of TGDs
\cite{JoKl84,FKMP05,CaGK13}.  The result of chasing a database $D$
with a set of TGDs $T$ is denoted with $\chase_T(D)$, details are
given in the appendix.  We only mention here that our chase is not
oblivious, that is, it does not apply a TGD if the consequences of the
application are already there.  As a result, $\chase_T(D)$ is of
finite degree, which shall be important for our proofs.

We also mention that if $T$ is a set of frontier-one TGDs, then for
any database~$D$ the instance $\chase_T(D)$ generated by the chase can
be obtained from $D$ by `glueing' a (potentially infinite) instance
onto each constant $c \in \mn{adom}(D)$. We denote this instance with
$\mn{chase}_T(D)|^\downarrow_c$. A precise definition is given in the
appendix.

Let $T$ be a set of TGDs, $q(\bar x)$ a CQ and $D$ a database.  A
tuple $\bar c \in \mn{adom}(D)^{|\bar x|}$ is an {\em answer} to $q$
on $D$ w.r.t.\ $T$, written $D,T \models q(\bar c)$, if $q(\bar c)$ is
a logical consequence of $D \cup T$ or, equivalently, if there is a
homomorphism $h$ from $q$ to $\chase_T(D)$ with $h(\bar x)=\bar
c$. The \emph{evaluation of $q$ on $D$ w.r.t.\ $T$}, denoted $q_T(D)$,
is the set of all answers to $q$ on $D$ w.r.t.~$T$.

\section{Conservative Extensions}
\label{sect:consext}

We introduce the notions of conservative extension that are studied
in this paper and the associated decision problems.
\begin{definition}
  \label{def:conservative}
  Let $T_1, T_2$ be sets of TGDs and let $\dbsig,\querysig$ be schemas
  called the \emph{data schema} and \emph{query schema}. Then
\begin{itemize}

\item \emph{$T_2$ is $\dbsig,\querysig$-hom-conservative over $T_1$},
  written $T_1 \models^{\text{hom}}_{\dbsig,\querysig} T_2$, if
  there is a database-preserving $\querysig$-homomorphism from
  $\chase_{T_2}(D)$ to $\chase_{T_1}(D)$ for all $\dbsig$-databases $D$;

\item 
\emph{$T_2$ is $\dbsig,\querysig$-CQ-conservative
  over $T_1$}, written
$T_1 \models^{\text{CQ}}_{\dbsig,\querysig} T_2$, if
$q_{T_2}(D) \subseteq q_{T_1}(D)$ for all $\dbsig$-databases $D$ and
all CQs $q$ over schema $\querysig$.

\item $T_1$ is \emph{$\dbsig,\querysig$-hom-trivial} if $T_1$ is
$\dbsig,\querysig$-hom-conservative over the empty set of
TGDs, and likewise for \emph{$\dbsig,\querysig$-CQ-triviality}.

\end{itemize}
\end{definition}
%
It is easy to see that logical entailment $T_1 \models T_2$
implies $T_1 \models^{\text{hom}}_{\dbsig,\querysig} T_2$ for all
schemas $\dbsig$ and $\querysig$, and that
$\dbsig,\querysig$-hom-conservativity implies
$\dbsig,\querysig$-CQ-conservativity. The following example from
\cite{DBLP:conf/rweb/BotoevaKLRWZ16} shows that the converse fails.
\begin{example}
  \label{ex:basic}
  Consider the following sets of TGDs that are both linear and
  frontier-one:
  $$
  \begin{array}{r@{\,}c@{\;}c@{\;}l}
    T_1 \ = \ \{ & A(x) &\rightarrow& \exists y \, S(x,y), B(y), \\[1mm]
     & B(x) &\rightarrow& \exists y \, R(x,y), B(y) \ \} \\[3mm]
    T_2 \ = \ \{ & A(x) &\rightarrow& \exists y \, S(x,y), B(y),\\[1mm]
     & B(x) &\rightarrow& \exists y \, R(y,x), B(y) \ \}.
  \end{array}
  $$
  Let $\Sigma_D=\{A\}$ and $\Sigma_Q=\{R\}$.  We recommend to the
  reader to verify that $T_2$ is not
  $\Sigma_D,\Sigma_Q$-hom-conservative over $T_1$, but that it is
  $\Sigma_D,\Sigma_Q$-CQ-conservative.
\end{example}
However, $\dbsig,\querysig$-hom-conservativity is equivalent to
$\dbsig,\querysig$-CQ-conservativity with infinitary CQs.
We refrain from making this precise and instead consider the converse,
that is, $\dbsig,\querysig$-CQ-conservativity is equivalent to
$\dbsig,\querysig$-hom-conservativity when the latter is defined
in terms of a finitary version of homomorphisms that we introduce next.

\smallskip

Let $I_1,I_2$ be instances and $n \geq 0$, and let $\Sigma$ be a
schema. We write $I_1 \rightarrow^n_\Sigma I_2$ if for every induced
subinstance $I$ of $I_1$ with $|\adom(I)| \leq n$, there is a
database-preserving $\Sigma$-homomorphism from $I$ to $I_2$. We
further write $I_1\rightarrow^{\lim}_\Sigma I_2$ if $I_1
\rightarrow^n_\Sigma I_2$ for all $n \geq 1$.
\begin{theorem} \label{thm:firstchar} 
  Let $T_1$ and $T_2$ be sets of
  TGDs and $\dbsig,\querysig$ schemas.  Then $T_1
  \models^{\text{CQ}}_{\dbsig,\querysig} T_2$ iff $\chase_{T_2}(D)
  \rightarrow^{\lim}_{\querysig} \chase_{T_1}(D)$.  
\end{theorem}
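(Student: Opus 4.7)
The plan is to prove both implications of the biconditional by leveraging the standard equivalence between a CQ and its canonical database, together with the observation that a $\querysig$-homomorphism is only required to preserve atoms whose relation symbol lies in $\querysig$. Throughout, care must be taken to distinguish the constants that appear in $\adom(D)$ (which homomorphisms must fix) from the chase-generated nulls (which behave like existential variables). Both directions are instantiations of this single correspondence, applied respectively to a given CQ and to an induced subinstance.

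For the ``if'' direction, I would start from a $\dbsig$-database $D$, a CQ $q(\bar x)$ over $\querysig$, and a tuple $\bar c$ with $\bar c \in q_{T_2}(D)$, witnessed by a homomorphism $h \colon q \to \chase_{T_2}(D)$ satisfying $h(\bar x) = \bar c$. Let $I$ be the induced subinstance of $\chase_{T_2}(D)$ on $h(\adom(D_q))$, so $|\adom(I)| \leq ||q||$. The assumed limit condition supplies a database-preserving $\querysig$-homomorphism $g \colon I \to \chase_{T_1}(D)$. Since every atom of $q$ uses a symbol from $\querysig$ and its $h$-image sits in $I|_\querysig$, the composition $g \circ h$ is a homomorphism from $q$ into $\chase_{T_1}(D)$. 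The tuple $\bar c$ consists of database constants, which $g$ fixes by database-preservation, so $\bar c \in q_{T_1}(D)$.

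For the ``only if'' direction, fix $D$, $n \geq 1$, and an induced subinstance $I$ of $\chase_{T_2}(D)$ with $|\adom(I)| \leq n$. Introduce a fresh variable $x_a$ for each element $a \in \adom(I)$, and form a CQ $q_I(\bar x)$ whose body contains $R(x_{a_1}, \dots, x_{a_k})$ for every atom $R(a_1, \dots, a_k) \in I|_\querysig$, with answer variables $\bar x$ enumerating those $x_a$ for which $a \in \adom(D)$. The assignment $x_a \mapsto a$ witnesses that the tuple $\bar c$ of corresponding database constants lies in $q_{I,T_2}(D)$. By CQ-conservativity, $\bar c \in q_{I,T_1}(D)$, yielding a homomorphism $g \colon q_I \to \chase_{T_1}(D)$ with $g(\bar x) = \bar c$; reading $g$ through the bijection $x_a \leftrightarrow a$ gives exactly a database-preserving $\querysig$-homomorphism $I \to \chase_{T_1}(D)$.

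Neither direction looks intrinsically hard: the result is essentially a reformulation of CQ-conservativity in instance terms. The one point to watch is that the right-hand side of the biconditional is defined via induced subinstances, not just via images of CQ homomorphisms, which is why the CQ $q_I$ above is constructed to contain every $\querysig$-atom present on its active domain. Apart from this bookkeeping, the proof is a direct application of the CQ/canonical-database correspondence restricted to $\querysig$.
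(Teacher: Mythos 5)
Your proof is correct and is exactly the standard CQ/canonical-database correspondence (restricted to $\querysig$ and with database constants treated as answer variables) that the paper implicitly relies on; the paper states Theorem~\ref{thm:firstchar} without an explicit proof, treating it as a routine reformulation. Your handling of the one real subtlety — building $q_I$ from an \emph{induced} subinstance so that every $\querysig$-fact over its domain is included, and making the $\adom(D)$-elements answer variables so that the resulting homomorphism is database-preserving — is precisely what is needed.
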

For triviality, the above subtleties do not arise.
\begin{lemma} \label{lem:trivhomcqcoincide} Let $T_1,T_2$ be sets of
  TGDs and $\Sigma_D, \Sigma_Q$
  schemas. Then $T_1$ and $T_2$ are $\Sigma_D,\Sigma_Q$-hom-trivial if
  and only if they are
  $\Sigma_D,\Sigma_Q$-CQ-trivial.  \end{lemma}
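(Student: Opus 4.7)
My plan is to reduce the lemma to the statement that, for any single set $T$ of TGDs, $T$ is $\Sigma_D,\Sigma_Q$-hom-trivial iff $T$ is $\Sigma_D,\Sigma_Q$-CQ-trivial; the displayed lemma then follows by applying this equivalence separately to $T_1$ and to $T_2$.

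The direction hom-trivial $\Rightarrow$ CQ-trivial will follow from the general observation, already noted after Definition~\ref{def:conservative}, that hom-conservativity implies CQ-conservativity: composing a database-preserving $\Sigma_Q$-homomorphism $\chase_T(D) \to D$ (using $\chase_\emptyset(D)=D$) with any homomorphism witnessing $\bar c \in q_T(D)$ for a $\Sigma_Q$-CQ $q$ immediately produces a witness of $\bar c \in q(D)$.

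For the converse, I will first apply Theorem~\ref{thm:firstchar} with $T_1 = \emptyset$ and $T_2 = T$ to rewrite CQ-triviality as $\chase_T(D) \rightarrow^{\lim}_{\Sigma_Q} D$ for every $\Sigma_D$-database $D$. The remaining task is to upgrade this family of partial $\Sigma_Q$-homomorphisms into a single total database-preserving $\Sigma_Q$-homomorphism $\chase_T(D) \to D$. The crucial ingredient that is absent from the general setting of Theorem~\ref{thm:firstchar} is that the target $D$ is now \emph{finite}. I will exploit this by a standard K\H{o}nig-type argument: enumerate $\adom(\chase_T(D)) = \{c_1, c_2, \ldots\}$, let $I_n$ be the induced subinstance on $\{c_1, \dots, c_n\}$, and form a tree whose depth-$n$ nodes are the database-preserving $\Sigma_Q$-homomorphisms $I_n \to D$, connecting $h$ at depth $n$ to $h'$ at depth $n{+}1$ exactly when $h'$ extends $h$. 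Finite branching holds because each new element admits at most $|\adom(D)|$ possible images, and the tree is infinite because $\rightarrow^n_{\Sigma_Q}$ supplies at least one depth-$n$ node for every $n$. K\H{o}nig's lemma then yields an infinite branch whose union $g$ is a total function $\adom(\chase_T(D)) \to \adom(D)$ that is the identity on $\adom(D)$ and maps every $\Sigma_Q$-fact of $\chase_T(D)$ to a fact of $D$, since any such fact lives inside some $I_n$ on which $g$ agrees with a $\Sigma_Q$-homomorphism into $D$.

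The only mildly delicate step will be the setup of the K\H{o}nig argument and the verification that the stitched-together limit really is a total $\Sigma_Q$-homomorphism; the decisive conceptual point is finiteness of the target $D$, which is precisely what distinguishes triviality from general CQ-conservativity and prevents the analogous strengthening of Theorem~\ref{thm:firstchar} in the setting with a nontrivial $T_1$.
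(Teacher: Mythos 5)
Your proposal is correct and follows essentially the same route as the paper: the easy direction via the general fact that hom-conservativity implies CQ-conservativity, and the converse via Theorem~\ref{thm:firstchar} (with $T_1=\emptyset$) followed by a compactness argument that upgrades $\chase_T(D)\rightarrow^{\lim}_{\Sigma_Q}D$ to $\chase_T(D)\rightarrow_{\Sigma_Q}D$ using finiteness of $D$. The paper packages this last step as Lemma~\ref{lem:skippinghomsgeneral}, proved by an explicit ``skipping homomorphisms'' subsequence extraction, which is just the unrolled form of your K\H{o}nig's-lemma argument on the finitely branching tree of partial homomorphisms.
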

Because of Lemma~\ref{lem:trivhomcqcoincide}, we from now on disregard
$\Sigma_D,\Sigma_Q$-CQ-triviality and refer to
$\Sigma_D,\Sigma_Q$-hom-triviality simply as
\emph{$\Sigma_D,\Sigma_Q$-triviality}.
Lemma~\ref{lem:trivhomcqcoincide} is an immediate consequence of
Theorem~\ref{thm:firstchar} and the following observation.
\begin{restatable}{lemma}{lemskippinghomsgeneral}
  \label{lem:skippinghomsgeneral} 
  Let $I_1,I_2$ be instances such that $I_1$ is countable and
  $I_2$ is finite, and let $\Sigma$ be a schema. If \mbox{$I_1
    \rightarrow^{\lim}_\Sigma I_2$}, then $I_1 \rightarrow_\Sigma
    I_2$.  
\end{restatable}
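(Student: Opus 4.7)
The plan is a standard König's lemma / compactness argument, exploiting that $I_2$ has a finite active domain while each atom of $I_1$ touches only finitely many elements.

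First I would dispose of trivial cases: if $\adom(I_1)$ is finite, then $I_1$ itself is an induced subinstance of itself of size $n = |\adom(I_1)|$, and the hypothesis $I_1 \rightarrow^n_\Sigma I_2$ gives the desired homomorphism directly. So assume $\adom(I_1)$ is countably infinite and fix an enumeration $a_1, a_2, \ldots$ of $\adom(I_1)$. Let $I_1^n$ denote the induced subinstance of $I_1$ on $\{a_1, \ldots, a_n\}$, so $|\adom(I_1^n)| \leq n$ and $I_1 = \bigcup_n I_1^n$.

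Next I would set up a search tree $\mathcal{T}$ whose nodes at level $n$ are exactly the database-preserving $\Sigma$-homomorphisms $f : I_1^n \to I_2$, with the root being the empty function and the parent of a level-$n$ node $f$ being its restriction $f \upharpoonright \{a_1, \ldots, a_{n-1}\}$. A routine check shows this restriction is itself a database-preserving $\Sigma$-homomorphism from $I_1^{n-1}$ to $I_2$, because $I_1^{n-1}$ is contained in $I_1^n$ as a set of facts (every atom of $I_1$ all of whose constants lie in $\{a_1,\ldots,a_{n-1}\}$ also lies in $\{a_1,\ldots,a_n\}$). Hence $\mathcal{T}$ is indeed a tree. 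Its branching is bounded by $|\adom(I_2)|$, which is finite since $I_2$ is finite, and every level is non-empty by the hypothesis $I_1 \rightarrow^n_\Sigma I_2$ applied to $I_1^n$.

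By König's lemma, $\mathcal{T}$ contains an infinite branch $(h_n)_{n \geq 0}$ with $h_n \subseteq h_{n+1}$ for every $n$. Define $h := \bigcup_n h_n : \adom(I_1) \to \adom(I_2)$. This $h$ is the desired map: it is database-preserving because each $h_n$ is; and for any atom $R(\bar c) \in I_1$ with $R \in \Sigma$, choose $n$ large enough that every component of $\bar c$ appears among $a_1, \ldots, a_n$, whence $R(\bar c) \in I_1^n$ and therefore $R(h_n(\bar c)) = R(h(\bar c)) \in I_2$.

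I do not anticipate a serious obstacle here. The only mildly delicate point is verifying that the parent operation is well-defined, i.e.\ that restrictions of homomorphisms of $I_1^n$ to the smaller active domain remain homomorphisms of $I_1^{n-1}$; this relies on using \emph{induced} subinstances, which is exactly what the definition of $\rightarrow^n_\Sigma$ provides.
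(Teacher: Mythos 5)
Your proof is correct. It is, at bottom, the same finite-branching compactness argument as the paper's, but packaged differently: the paper fixes the witnessing sequence $h_1,h_2,\dots$ of homomorphisms from the induced subinstances $I_1|_{A_i}$ to $I_2$ and extracts, by an iterated pigeonhole argument that it names ``skipping homomorphisms'', a pairwise-compatible subsequence $g_1,g_2,\dots$ whose union is the desired homomorphism; you instead apply K\"onig's lemma to the tree of \emph{all} partial database-preserving $\Sigma$-homomorphisms on initial segments of $\adom(I_1)$, using the hypothesis only to guarantee that every level of that tree is nonempty. For this lemma the two routes are interchangeable, and yours is arguably the more standard phrasing; your side remarks (the finite case, and the check that restricting a homomorphism of an induced subinstance to a smaller initial segment again yields a homomorphism) are both sound. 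What the paper's formulation buys is that it operates on a \emph{given} sequence of homomorphisms rather than on the tree of all of them, which is exactly the form needed in its later reuses (e.g.\ Lemma~\ref{lem:skippinghoms} and the proof of Lemma~\ref{lem:limtoinfinite}), where the sequence carries extra synchronization invariants that must be preserved and one cannot simply pass to an arbitrary infinite branch.
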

We sketch the proof of Lemma~\ref{lem:skippinghomsgeneral}, details
are in the appendix. If $I_1 \rightarrow^{\lim}_\Sigma I_2$, then we
find database-preserving $\Sigma$-homomorphisms $h_1,h_2,\dots$ from
finite subinstances \mbox{$J_1 \subseteq J_2 \subseteq \ldots$} of
$I_1$ to $I_2$ such that $I_1 = \bigcup_{i \geq 1} J_i$. If
$h_1,h_2,\dots$ are compatible in the sense that $h_i(c)=h_j(c)$
whenever $h_i(c),h_j(c)$ are both defined, then
$\bigcup_{i \geq 1} h_i$ is a $\Sigma$-homomorphism that witnesses
$I_1 \rightarrow_\Sigma I_2$.  If this is not the case, however, we
can still manipulate $h_1,h_2,\dots$ into a compatible sequence
$g_1,g_2,\dots$ by a technique that we call `skipping homomorphisms'
and which is used in several proofs in this paper.  We start with
$h_1$ and observe that since $J_1$ and $I_2$ are finite, there are
only finitely many homomorphisms $h$ from $J_1$ to $I_2$. Some such
homomorphism must occur infinitely often in the restrictions of
$h_1,h_2,\dots$ to $\mn{adom}(J_1)$ and thus we find a subsequence
$h'_1,h'_2,\dots$ of $h_1,h_2,\dots$ in which $h'_1$ is compatible
with all of $h'_2,h'_3,\dots$. We proceed in the same way for $h'_2$,
then for $h'_3$, ad infinitum, finding the desired sequence
$g_1,g_2,\dots$.

\smallskip

We consider the three decision problems \emph{hom-conservativity},
\emph{CQ-conservativity}, and \emph{triviality}, defined in the
obvious way. For instance, hom-conservativity means to decide, given
finite sets of TGDs $T_1$, $T_2$ and finite schemas
$\Sigma_D,\Sigma_Q$, whether $T_2$ is
$\Sigma_D,\Sigma_Q$-hom-conservative over $T_1$.



\section{Undecidability}
\label{sect:undec}

The aim of this section is to prove the following results.
\begin{theorem}\label{thm:undecidable} The following problems
  are undecidable:
  \begin{enumerate}

    \item hom-conservativity for linear TGDs; 

    \item CQ-conservativity for linear TGDs; 

    \item triviality for guarded TGDs.

  \end{enumerate}
\end{theorem}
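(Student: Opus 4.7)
The plan is to follow the authors' announced strategy: reduce from the convergence problem for Conway functions, in the geometry they call Pyramus and Thisbe, in which two parallel chases (or two branches of a single chase) run in isolation except for a narrow ``crack'' formed by the query schema $\Sigma_Q$ through which partial synchronisation can take place. Recall that a Conway function is a piecewise-affine function $\mathbb{N}\to\mathbb{N}$ whose case split is determined by residues modulo a fixed integer $d$, and whose orbit-halting problem is undecidable. From an instance $(f,n_0)$ of this problem I would construct in polynomial time an input to the relevant decision problem whose yes/no answer reflects (non-)halting, the quantification over potential starting values being absorbed into the built-in quantification over all $\Sigma_D$-databases $D$.

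For Point~1 (hom-conservativity for linear TGDs), I would take $T_1$ and $T_2$ to each generate, via linear rules, a tree-shaped $\chase$ rooted at the constants of $D$. Numbers are encoded by the lengths of chains, and the current residue class is threaded through the coordinates of higher-arity relational atoms, since each linear body consists of a single atom. The affine update $n\mapsto a_i n+b_i$ restricted to residue class $i$ is implemented by a fixed gadget of chained linear rules. The two chases produce nearly identical $\Sigma_Q$-traces of the Conway orbit, offset in a controlled way, such that a database-preserving $\Sigma_Q$-homomorphism $\chase_{T_2}(D)\to\chase_{T_1}(D)$ exists for every $\Sigma_D$-database $D$ iff the iteration of $f$ never reaches the distinguished value. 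For Point~2 (CQ-conservativity for linear TGDs), I would argue that on the instances produced by the reduction, hom- and CQ-conservativity coincide. This combines Theorem~\ref{thm:firstchar} with a skipping-homomorphisms argument in the spirit of Lemma~\ref{lem:skippinghomsgeneral}: since the chase has finite degree and its $\Sigma_Q$-reduct is sufficiently tree-like, compatibility of partial $\Sigma_Q$-homomorphisms from bounded subinstances can be bootstrapped into a global one. The undecidability of Point~1 then transfers to Point~2.

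For Point~3 (triviality for guarded TGDs), the condition to simulate is $\chase_{T_2}(D)\rightarrow_{\Sigma_Q} D$ for every $\Sigma_D$-database $D$, i.e.\ that $T_2$ introduces no genuinely new $\Sigma_Q$-consequences. Linearity is insufficient here, but guardedness allows rule bodies with several atoms sharing a common guard, which is exactly what is needed to synchronise two parallel counters within a single rule application. The two ``lovers'' are therefore realised as two cooperating branches of the same chase $\chase_{T_2}(D)$ emerging from common seed facts, and the $\Sigma_Q$-foldback onto $D$ exists iff the Conway orbit halts.

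The main obstacle in all three reductions is the same: designing the $\Sigma_Q$-interface so that it is expressive enough to detect any misalignment of the two simulated computations, yet coarse enough to admit the intended $\Sigma_Q$-homomorphism whenever the Conway orbit halts, while respecting the stringent syntactic restrictions imposed by linearity (one body atom) for Points~1--2 and by guardedness for Point~3. Faithfully implementing modular arithmetic through a single body atom, and making the Pyramus/Thisbe handshake survive these restrictions, is where the bulk of the technical work of the proof will go.
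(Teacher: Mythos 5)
There is a genuine gap: your proposal reproduces the strategy that the paper itself announces (reduce from the halting problem for Conway functions, let homomorphisms through a narrow $\querysig$-interface detect misalignment) but defers precisely the content that constitutes the proof. The heart of the reduction is the gadget consisting of the channel constant $c$ and the two eternities $e_1,e_2$ in $\mn{River}_\kappa$, which makes Observation~\ref{see} true: a database-preserving $\querysig$-homomorphism from $\chase_{T_\mn{myth}}(\mn{River}_\kappa)$ back into $\mn{River}_\kappa$ exists \emph{iff} some $t_m\neq p_{m+1}$, because the two walkers can only enter an eternity simultaneously at a misalignment. You never say how a homomorphism test can check the alignment condition $t_i=p_{i+1}$; ``offset in a controlled way'' is exactly the part that needs a construction. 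Two further points of your sketch are off. First, for Points~1 and~2 the paper takes $\dbsig=\emptyset$ and has $\chase_{T_1}()$ generate \emph{all} locally correct orbits as a massively overlapping (not disjoint) union, which requires its own argument (Lemmas~\ref{lem:b-dist}--\ref{lem:decomposition}); the quantification over candidate orbits is not ``absorbed into the quantification over databases'' as you propose. Second, your polarity for Point~3 is reversed: triviality holds iff $\digamma$ does \emph{not} stop, and guardedness is used not to synchronise two counters inside one rule application but to let rules guarded by fresh high-arity relations $\mathcal{S}_{\mathcal{R}}$ verify that the input database encodes a derivation of a locally correct river before the $T_\mn{myth}$ rules are permitted to fire.

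Your argument for Point~2 would also fail as stated. You want to show that on the constructed instances $\rightarrow^{\lim}_{\querysig}$ implies $\rightarrow_{\querysig}$ by a skipping-homomorphisms argument, but the hypotheses of such lemmas are not met: Lemma~\ref{lem:skippinghomsgeneral} needs a finite target, and the finite-degree variant needs both instances to have finite degree, whereas $\chase_{T_1}()$ is infinite and its channel constant $c$ occurs in infinitely many $\mn{Channel}$ facts. The paper argues in the opposite direction: if $\digamma$ stops, then the \emph{finite} instance $\mn{River}_{\kappa_F}\cup_G\ccc_m$ (a bounded prefix of the myth chase attached to the finite correct river) already fails to map into $\chase_{T_1}()$, yielding a finite CQ counterexample; if $\digamma$ does not stop, hom-conservativity holds and trivially implies CQ-conservativity. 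Without the gadget, the overlapping-union analysis, and this finite-witness argument, the proposal is a plan rather than a proof.
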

We give a single proof that establishes Points~1 and~2. To attain
Point~3, a non-trivial modification of that proof is necessary. We start with the
proof of Points~1 and~2, first highlighting the main mechanism
that we use in our reduction, and then spelling out more details
of the reduction itself.


\newcommand{\eop}{\hspace*{\fill}$\square$}
\newcommand{\pp}{+_\gamma}
\newcommand{\mm}{-_\gamma}
\newcommand{\WH}{\mn{WH}}
\newcommand{\Tsibe}{\mn{Thisbe}}
\newcommand{\rulesig}{\Sigma_T}
\newcommand{\ccc}{{\ensuremath{\mathcal C}}\xspace}








\subsection{The Main Mechanism}
\label{sec:engine}


Consider the set of rules $T_\mn{myth}$. 
It comprises three TGDs:\smallskip
$$
\begin{array}{rcl}
  \mn{Encounter}(p,t) &\rightarrow& \exists p',c,t' \; M(p,p',c,t',t)
  \\[1mm]
  M(p,p',c,t',t) & \rightarrow & \exists p'',c',t'' \;
  M(p',p'',c',t'',t') \\[1mm]
M(p,p',c,t',t) &\rightarrow& \mn{Pyramus}(p,p'),
                             \mn{Thisbe}(t,t'),
  \\[1mm]
  && \mn{Channel}(c,p'), \mn{Channel}(c,t').
\end{array}
$$ 

\begin{figure}
\begin{tikzpicture}[scale=0.08]
\usetikzlibrary{arrows.meta}
  
\newcommand{\nast}{\coordinate (d) at ($(g)+(d)$);}

\coordinate (d) at (0,0);
\coordinate (g) at (20,0);
\coordinate (m) at (-10,0);


\draw[color=red, very thick, densely dashed, -stealth] ($(d)+(0.5,25)$)--($(g)+(-1.2,30)$) ;
\draw[color=green,very thick, densely dotted, -stealth] ($(d)+(0.5,5)$)--($(g)+(-1.2,0)$);

\draw[color=zgnily, very thick, -stealth] ($(d)+(0,24.5)$)--($(d)+(0,6.0)$) ;
\draw [black, thick, fill=black ] ($(d)+(0,5)$) circle [radius=1];
\draw [black, thick, fill=black ] ($(d)+(0,25)$) circle [radius=1];

\nast
\draw[color=gray, thick, -stealth] ($(d)+(0.8,16)+(m)$)--($(g)+(-.8,29)$);
\draw[color=gray, thick, , -stealth] ($(d)+(0.8,14)+(m)$)--($(g)+(-.8,1)$);

\draw [black, very thick, fill=white ] ($(d)$) circle [radius=1];
\draw [black, very thick, fill=white ] ($(d)+(0,30)$) circle [radius=1];
\draw [black, very thick, fill=white ] ($(d)+(0,15)+(m)$) circle [radius=1];

\nast
\draw [black, very thick, fill=white ] ($(d)$) circle [radius=1];
\draw [black, very thick, fill=white ] ($(d)+(0,30)$) circle [radius=1];
\draw [black, very thick, fill=white ] ($(d)+(0,15)+(m)$) circle [radius=1];

\draw[color=gray, thick, , -stealth] ($(d)+(.8,16)+(m)$)--($(d)+(-.8,29)$);
\draw[color=gray, thick, , -stealth] ($(d)+(.8,14)+(m)$)--($(d)+(-.8,1)$);

\draw[color=red, very thick, densely dashed, -stealth] ($(d)+(1.3,30)-(g)$)--($(d)+(-1.2,30)$) ;
\draw[color=green,very thick, densely dotted, -stealth] ($(d)+(1.3,0)-(g)$)--($(d)+(-1.2,0)$);

\nast
\draw [black, very thick, fill=white ] ($(d)$) circle [radius=1];
\draw [black, very thick, fill=white ] ($(d)+(0,30)$) circle [radius=1];
\draw [black, very thick, fill=white ] ($(d)+(0,15)+(m)$) circle [radius=1];

\draw[color=gray, thick, , -stealth] ($(d)+(.8,16)+(m)$)--($(d)+(-.8,29)$);
\draw[color=gray, thick, , -stealth] ($(d)+(.8,14)+(m)$)--($(d)+(-.8,1)$);

\draw[color=red, very thick, densely dashed, -stealth] ($(d)+(1.3,30)-(g)$)--($(d)+(-1.2,30)$) ;
\draw[color=green,very thick, densely dotted, -stealth] ($(d)+(1.3,0)-(g)$)--($(d)+(-1.2,0)$);

\nast
\draw [black, very thick, fill=white ] ($(d)$) circle [radius=1];
\draw [black, very thick, fill=white ] ($(d)+(0,30)$) circle [radius=1];
\draw [black, very thick, fill=white ] ($(d)+(0,15)+(m)$) circle [radius=1];

\draw[color=gray, thick, , -stealth] ($(d)+(.8,16)+(m)$)--($(d)+(-.8,29)$);
\draw[color=gray, thick, , -stealth] ($(d)+(.8,14)+(m)$)--($(d)+(-.8,1)$);

\draw[color=red, very thick, densely dashed, -stealth] ($(d)+(1.3,30)-(g)$)--($(d)+(-1.2,30)$) ;
\draw[color=green,very thick, densely dotted, -stealth] ($(d)+(1.3,0)-(g)$)--($(d)+(-1.2,0)$);

\draw [black, thick, fill=black ] ($(d)+(10,0)$) circle [radius=0.2];
\draw [black, thick, fill=black ] ($(d)+(10,30)$) circle [radius=0.2];

\draw [black, thick, fill=black ] ($(d)+(15,0)$) circle [radius=0.2];
\draw [black, thick, fill=black ] ($(d)+(15,30)$) circle [radius=0.2];

\draw [black, thick, fill=black ] ($(d)+(20,0)$) circle [radius=0.2];
\draw [black, thick, fill=black ] ($(d)+(20,30)$) circle [radius=0.2];

 \draw[color=red, very thick, densely dashed, -stealth] ($(g)+(-10,-10)$)--($(g)+(5,-10) $) ;
 \draw[color=green,very thick, densely dotted, -stealth] ($(g)+(-10,-15)$)--($(g)+(5,-15) $) ;

 \draw[color=gray, thick, , -stealth] ($(g)+(30,-10)$)--($(g)+(45,-10) $) ;
 \draw[color=zgnily, very thick, -stealth] ($(g)+(30,-15)$)--($(g)+(45,-15) $) ;
 
  \node at ($(g)+(12,-15) $) {{\color{black} \small Thisbe}};
  \node at ($(g)+(13.5,-10) $) {{\color{black} \small Pyramus}};
  
  \node at ($(g)+(55.8,-15) $) {{\color{black} \small Encounter}};
  \node at ($(g)+(54,-10) $) {{\color{black} \small Channel}};
  
    \node at ($(0,28) $) {{\color{black} \small $c_0$}};
     \node at ($ (0,1)$) {{\color{black} \small $c_0'$}};

\end{tikzpicture}
\vspace{-.6cm}
\caption{Chase generated by $T_\mn{myth}$.}
\label{figA}
\end{figure}

Now consider the database $D=\{\mn{Encounter}(c_0,c'_0)\}$. The
instance $\chase_{T_\mn{myth}}(D)$, shown in Figure~\ref{figA}, will
play an important role. Its intuitive meaning is that `{\em after an
  initial brief encounter, Pyramus and Thisbe are not known to have
  ever met again, but remained forever able to connect via an
(indirect) channel}.' Notice that we do not explicitly 
show relation $M$ in
Figure \ref{figA} as $M$ is only a construction aid, needed to ensure
that the TGDs in $T_\mn{myth}$ are linear. 
As $\querysig$, we will use the set of relations symbols in
$T_\mn{myth}$ except $M$, plus a unary relation symbol
$\mn{Mouth}$. We advise the reader to not worry about the schema
$\dbsig$ at this point (it will actually be empty).

Let $\kappa=\langle [p_1,\ldots p_n], [t_1,\ldots,t_n]\rangle $ be a
pair of sequences of positive integers of the same length $n$. By
$\mn{River}_\kappa$, we mean the database that contains the following
facts, an example 
being displayed in
Figure~\ref{fig:river}:
\begin{itemize}

  \item There are 3 kinds of constants. The {\em eternities} are
    $e_1$ and~$e_2$. The \emph{channel} is $c$, not shown in the
    picture. All remaining constants are {\em worldly}.

  \item  For $1\leq i \leq n$ there is a $\mn{Pyramus}$ path of
    length $p_i$
    from $b_i$ to $b_{i-1}$ as well as a \mn{Thisbe} path of length $t_i$
    from $b_i$ to $b_{i-1}$.  Constants $b_i$ are called {\em
    bridges}. 

\item There is $\Tsibe(a,e_1)$ for each non-bridge constant $a$ on
  each of the $\mn{Thisbe}$-paths and there is $\mn{Pyramus}(a,e_2)$
  for each non-bridge constant $a$ on each of the
  $\mn{Pyramus}$-paths. There are also $\Tsibe(a,e_2)$ and
  $\mn{Pyramus}(a,e_1)$ for each bridge constant~$a$. In addition (and
  not in Figure~\ref{fig:river}), there are
  $\mn{Pyramus}(e_i,e_i)$ and $\mn{Thisbe}(e_i,e_i)$ for $i \in \{1,2\}$.

  \item  For each {\em worldly} constant $a$, there is
    $\mn{Channel}(c,a)$. Moreover, there are facts
    $\mn{Channel}(e_i,e_i)$, for $i\in \{1,2\}$. These
    facts are not shown in Figure~\ref{fig:river}. 

  \item  There are $\mn{Encounter}(b_n,b_{n-1})$ and
    $\mn{Mouth}(b_0)$. 

\end{itemize}
\begin{figure}
  \begin{tikzpicture}[scale=0.27]
\usetikzlibrary{arrows.meta}  

\newcommand{\tpunkt}[1]{
\draw[line width=0.5, densely dotted, color=green,-stealth] ($#1$) to [out=270,in=90] (el);
\draw [black, thick, fill=black ] ($#1$) circle [radius=0.2]; }

\newcommand{\ppunkt}[1]{
\draw[densely dashed, color=red,-stealth] ($#1$) to [out=90,in=270] (er); 
\draw [black, thick, fill=black ] ($#1$) circle [radius=0.2]; }

\newcommand{\bpunkt}[1]{
\draw[line width=0.6, densely dotted, color=green,-stealth] ($#1$) to [out=90,in=270] (er); 
\draw[ densely dashed, color=red,-stealth] ($#1$) to [out=270,in=90] (el);
\draw[black, thick, fill=black ] ($#1$) circle [radius=0.2]; }

 \newcommand{\redlajn}[2]{
 \draw[line width=1.2, densely dashed, color=red,-stealth] ($#1$) to ($#2+(0.2,0)$) ; }

 \newcommand{\grinlajn}[2]{
 \draw[line width=1.2, densely dotted, color=green,-stealth] ($#1$) to ($#2+(0.2,0)$) ; }

\coordinate (el) at (7,-12);
\coordinate (er) at (13,12);

\coordinate (b0) at (0,0);
\coordinate (b1) at (8,0);
\coordinate (b2) at (16,0);
\coordinate (b3) at (28,0);
\coordinate (b4) at (31,0);

\draw[-latex, ultra thick, color=zgnily,-stealth] (b4) to (b3); 
\draw[line width=1.2, color=red,densely dashed, -stealth] (b4) to [out=90,in=60] ($(b3)+(0.2,0.2)$);


\coordinate (p1) at (2,2);
\coordinate (p2) at (4,3);
\coordinate (p3) at (6,2);
\coordinate (p4) at (9,3);

\coordinate (p5) at (10,5);
\coordinate (p6) at (11,6);
\coordinate (p7) at (13,6);
\coordinate (p8) at (14,5);
\coordinate (p9) at (15,3);
\coordinate (p10) at (17,2.5);
\coordinate (p11) at (19,4);
\coordinate (p12) at (21,5);
\coordinate (p13) at (23,5);
\coordinate (p14) at (25,4);
\coordinate (p15) at (27,2.5);

\redlajn{(b3)}{(p15)};
\redlajn{(p15)}{(p14)};
\redlajn{(p14)}{(p13)};
\redlajn{(p13)}{(p12)};
\redlajn{(p12)}{(p11)};
\redlajn{(p11)}{(p10)};
\redlajn{(p10)}{(b2)};
\redlajn{(b2)}{(p9)};
\redlajn{(p9)}{(p8)};
\redlajn{(p8)}{(p7)};
\redlajn{(p7)}{(p6)};
\redlajn{(p6)}{(p5)};
\redlajn{(p5)}{(p4)};
\redlajn{(p4)}{(b1)};
\redlajn{(b1)}{(p3)};
\redlajn{(p3)}{(p2)};
\redlajn{(p2)}{(p1)};
\redlajn{(p1)}{(b0)};

\ppunkt{(p1)};
\ppunkt{(p2)};
\ppunkt{(p3)};
\ppunkt{(p4)};
\ppunkt{(p5)};
\ppunkt{(p6)};
\ppunkt{(p7)};
\ppunkt{(p8)};
\ppunkt{(p9)};
\ppunkt{(p10)};
\ppunkt{(p11)};
\ppunkt{(p12)};
\ppunkt{(p13)};
\ppunkt{(p14)};
\ppunkt{(p15)};

\coordinate (t1) at (1,-3);
\coordinate (t2) at (2,-5);
\coordinate (t3) at (3,-6);
\coordinate (t4) at (5,-6);
\coordinate (t5) at (6,-5);
\coordinate (t6) at (7,-3);
\coordinate (t7) at (10,-2);
\coordinate (t8) at (12,-3);
\coordinate (t9) at (14,-2);
\coordinate (t10) at (18,-2);
\coordinate (t11) at (20,-3.5);
\coordinate (t12) at (22,-4);
\coordinate (t13) at (24,-3.5);
\coordinate (t14) at (26,-2);
\coordinate (t15) at (29.5,-2);

\grinlajn{(b4)}{(t15)};
\grinlajn{(t15)}{(b3)};
\grinlajn{(b3)}{(t14)};
\grinlajn{(t14)}{(t13)};
\grinlajn{(t13)}{(t12)};
\grinlajn{(t12)}{(t11)};
\grinlajn{(t11)}{(t10)};
\grinlajn{(t10)}{(b2)};
\grinlajn{(b2)}{(t9)};
\grinlajn{(t9)}{(t8)};
\grinlajn{(t8)}{(t7)};
\grinlajn{(t7)}{(b1)};
\grinlajn{(b1)}{(t5)};
\grinlajn{(t5)}{(t4)};
\grinlajn{(t4)}{(t3)};
\grinlajn{(t3)}{(t2)};
\grinlajn{(t2)}{(t1)};
\grinlajn{(t1)}{(b0)};

\tpunkt{(t1)};
\tpunkt{(t2)};
\tpunkt{(t3)};
\tpunkt{(t4)};
\tpunkt{(t5)};
\tpunkt{(t6)};
\tpunkt{(t7)};
\tpunkt{(t8)};
\tpunkt{(t9)};
\tpunkt{(t10)};
\tpunkt{(t11)};
\tpunkt{(t12)};
\tpunkt{(t13)};
\tpunkt{(t14)};
\tpunkt{(t15)};

\bpunkt{(b0)};
\bpunkt{(b1)};
\bpunkt{(b2)};
\bpunkt{(b3)};
\bpunkt{(b4)};

\draw[black, thick, fill=black ] ($(er)+(0,0.3)$) circle [radius=0.2]; 
\draw[black, thick, fill=black ] ($(el)-(0,0.3)$) circle [radius=0.2]; 

\node at ($(el)-(1,0.3)$) {$e_1$}; 
\node at ($(er)+(1,0.3)$) {$e_2$}; 
\node at ($(b0)+(1.4,0)$) {$b_0$};
\node at ($(b1)+(1.4,0)$) {$b_1$};
\node at ($(b2)+(1.4,0)$) {$b_2$};
\node at ($(b3)-(1,0)$) {$b_3$};
\node at ($(b4)+(1,0)$) {$b_4$};

\end{tikzpicture}
  \vspace{-.6cm}
  \caption{The database $\mn{River}_{\kappa}$ for $\kappa=\langle 
  [4,7,7,1], [7,4,6,2]\rangle$.
}
\label{fig:river}
\end{figure}
It is easy to see that $\chase_{T_\mn{myth}}(\mn{River}_{\kappa})$ is
obtained from $\mn{River}_{\kappa}$ by adding a copy of the instance
shown in Figure~\ref{figA}, glueing the \mn{Encounter} fact to
the \mn{Encounter} fact in $\mn{River}_{\kappa}$ (and adding some
$M$-facts that are not important here).
%
%
Now, let us leave to our readers the pleasure to notice that:
\begin{observation}\label{see}
  There is a database-preserving $\querysig$-homomorphism from
  $\chase_{T_\mn{myth}}(\mn{River}_{\kappa})$ to $\mn{River}_{\kappa}$ if and only if 
  there exists $1\leq m <n$ such that $t_m\neq p_{m+1}$.
\end{observation}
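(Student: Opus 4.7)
The plan is to analyze database-preserving $\querysig$-homomorphisms $h$ from $\chase_{T_\mn{myth}}(\mn{River}_\kappa)$ to $\mn{River}_\kappa$ by tracking the infinite ``ladder'' that the chase attaches to the unique $\mn{Encounter}$-fact. That ladder consists of a Pyramus chain $P_0=b_n,P_1,P_2,\ldots$, a Thisbe chain $T_0=b_{n-1},T_1,T_2,\ldots$ and, for each $i\ge 1$, a channel null $C_i$ carrying $\mn{Channel}(C_i,P_i)$ and $\mn{Channel}(C_i,T_i)$. The $M$-facts of the chase can be ignored because $M\notin\querysig$, and the database facts $\mn{Encounter}(b_n,b_{n-1})$ and $\mn{Mouth}(b_0)$ are preserved automatically since they mention only constants from $\Cbf$ on which $h$ is the identity.

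Next I would classify the possible images of the two chains. Inspecting $\mn{River}_\kappa$ shows that the only outgoing Pyramus-edges from a worldly constant lead either to the next node on the directed path $b_n\to\cdots\to b_0$, or from a bridge to $e_1$, or from a non-bridge to $e_2$; moreover each eternity carries only a Pyramus self-loop. Hence $h$ must send $P_0,P_1,\ldots$ along the concatenated Pyramus path for some number of steps and then ``escape'' forever to $e_1$ (if it leaves from a bridge) or to $e_2$ (if it leaves from a non-bridge). The dual statement holds for the Thisbe chain, with bridge-to-$e_2$ and non-bridge-to-$e_1$.

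The crucial coupling comes from the Channel-facts. Every $\mn{Channel}$-fact of $\mn{River}_\kappa$ has the form $\mn{Channel}(c,\text{worldly})$ or $\mn{Channel}(e_j,e_j)$, so for each $i\ge 1$ either $h(P_i)$ and $h(T_i)$ are both worldly (and $h(C_i)=c$) or they coincide at a single eternity $e_j$ (and $h(C_i)=e_j$). Since both chains are infinite while the worldly part of $\mn{River}_\kappa$ is finite, both chains must escape, and they must do so at the same step $k^\ast$ and to the same eternity $e_j$. By the classification above, such a joint escape exists iff at step $K=k^\ast-1$ the points $h(P_K)$ and $h(T_K)$ are both worldly and \emph{exactly one} of them is a bridge (Pyramus at a bridge and Thisbe at a non-bridge for $e_j=e_1$, the dual for $e_j=e_2$).

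Finally I would translate the existence of such a $K$ into the claimed arithmetic. Let $A=\{0,p_n,p_n+p_{n-1},\ldots,p_1+\cdots+p_n\}$ and $B=\{0,t_{n-1},t_{n-1}+t_{n-2},\ldots,t_1+\cdots+t_{n-1}\}$ be the bridge-step sets of the two chains, and let $W=\min(p_1+\cdots+p_n,\;t_1+\cdots+t_{n-1})$ be the last index at which both chains can still be worldly. A valid $K$ exists iff $A\cap[1,W]\neq B\cap[1,W]$, and a direct element-wise comparison of these two sorted sequences shows that they agree iff $p_{m+1}=t_m$ for every $1\le m<n$ (in which case $W=W_T=W_P-p_1$ automatically). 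Negating yields the stated equivalence. The main obstacle lies in step~3: one must rule out ``clever'' partial collapses of the ladder onto itself, which requires exploiting carefully that the worldly Pyramus- and Thisbe-relations have no self-loops and no branching inside the paths, so that the Channel-edges genuinely force the simultaneous-escape condition.
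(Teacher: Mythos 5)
Your argument is correct and follows exactly the route the paper sketches in its hint: the two infinite chains must walk down their respective banks, the Channel-facts force them to escape to the same eternity at the same step, and this is possible precisely when one of them sits at a bridge and the other does not, which your cumulative-sum comparison of $A$ and $B$ correctly translates into the condition $t_m\neq p_{m+1}$ for some $m$. The paper leaves all of this to the reader, so your proposal is simply a worked-out version of its intended proof.
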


\noindent
{\em Hint:} As long as Pyramus and Thisbe walk down their respective
river banks they are connected via the constant $c$. But for their union to last forever they need, at some point, to enter one of the eternities. Since 
eternity has no channel with the worldly constants (and the two eternities are not connected by a channel either),
Pyramus and Thisbe both need to enter the same eternity, and they need to do it 
simultaneously. But this can only happen when one of them is in a
bridge constant and the other one in a non-bridge one. 
\eop

That's nice, isn't it? But where could any undecidability be lurking here?

\subsection{Conway Functions}\label{sec:source}

Let
$\gamma,\alpha_0, \beta_0, \ldots, \alpha_{\gamma-1},
\beta_{\gamma-1}$ be positive integers such that $\beta_k|\gamma$ and
$\beta_k|k\alpha_k$ for $0\leq k< \gamma$.  For a positive integer
$n$, define $\digamma(n)$ by setting $\digamma(n) = n\alpha_k/\beta_k$
for $k= n\ \mn{mod} \ \gamma$. Thus, the remainder of $n$ when
dividing by $\gamma$ determines the pair $(\alpha_k,\beta_k)$ used to
compute the value $\digamma(n)$.  Note that due to the two
divisibility conditions, the range of $\digamma$ contains only
positive integers.

The function $\digamma$ is called the \emph{Conway function defined by}
 $\gamma,\alpha_0, \beta_0, \ldots, \alpha_{\gamma-1},
 \beta_{\gamma-1}$. We say that $\digamma$ \emph{stops} if there
 exists an $n\in \mathbb N$
 such that $\digamma^n(2) = 1$, where $\digamma^n$
 is $\digamma$ composed with itself, $n$ times.
 There is no special meaning to the numbers 1 and 2 used here,
we could also choose otherwise.
The following is well-known, see also \cite{DBLP:conf/icalp/GogaczM14}.
 
 \begin{theorem}\label{conway}
   It is undecidable whether the Conway function defined by a given sequence
   $\gamma,\alpha_0, \beta_0, \ldots,$
   $\alpha_{\gamma-1}, \beta_{\gamma-1}$ stops.
 \end{theorem}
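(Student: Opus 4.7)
The plan is to reduce from the halting problem for two-counter Minsky machines, which is classically undecidable. Given such a machine $M$ with state set $\{q_0,\ldots,q_s\}$, where $q_0$ is a unique halting state reached only with both counters equal to zero (a harmless preprocessing of $M$), I would construct a Conway function $\digamma_M$ whose iterates from a fixed starting integer simulate $M$'s computation step by step and reach $1$ exactly when $M$ halts.

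The first step is the encoding. Fix distinct primes $\pi_0,\ldots,\pi_s,p_1,p_2$ with $\pi_{\mathrm{start}}=2$, and encode a configuration $(q_i,c_1,c_2)$ as the integer $n=\pi_i\cdot p_1^{c_1}p_2^{c_2}$, so that the initial configuration corresponds to $n=2$ and the halting configuration to $n=\pi_0$. Multiplying $n$ by a rational of the form $p_1^{a}p_2^{b}\pi_\ell/\pi_i$ then realizes one instruction of $M$: move from $q_i$ to $q_\ell$ and add $a$ (possibly negative) to counter $1$ and $b$ to counter $2$. The second step is to choose $\gamma$ so that the residue $k=n\bmod\gamma$ carries enough information about the configuration to determine which instruction is due next. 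Setting $\gamma=\pi_0\pi_1\cdots\pi_s\cdot p_1p_2$, the Chinese Remainder Theorem ensures that $k$ reveals both the unique $\pi_i$ dividing $n$ (the current state) and whether each $p_j$ divides $n$ (whether counter $j$ is zero), which is exactly the information on which a Minsky instruction may branch.

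For each residue $k$ that corresponds to a legal configuration, define $(\alpha_k,\beta_k)$ so that $\alpha_k/\beta_k$ realizes the instruction active in the state encoded by $k$: an increment of counter $j$ going to $q_\ell$ uses $p_j\pi_\ell/\pi_i$, a successful decrement uses $\pi_\ell/(p_j\pi_i)$, a branch-on-zero uses $\pi_\ell/\pi_i$, and the final halting rule at $q_0$ uses $1/\pi_0$, turning $n=\pi_0$ into $1$. For every remaining ``garbage'' residue, set $\alpha_k=\beta_k=1$. The condition $\beta_k\mid\gamma$ holds because every prime appearing in $\beta_k$ also appears in $\gamma$, and $\beta_k\mid k\alpha_k$ holds because, on legal residues, $\beta_k\mid n$ together with $\beta_k\mid\gamma$ forces $\beta_k\mid k$. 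One then verifies by induction on the length of the computation of $M$ that $\digamma_M^n(2)=1$ for some $n$ if and only if $M$ halts, which combined with the undecidability of Minsky halting yields the theorem.

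The main obstacle I expect is the bookkeeping around the garbage residue classes: one must verify that the orbit of $2$ under $\digamma_M$ never leaves the image of the encoding, so that garbage rules are never triggered. This follows from the invariant that after every step $n$ retains the form $\pi_i p_1^{c_1} p_2^{c_2}$, which in turn is ensured because each legal rule maps such an $n$ to another $n$ of the same form with updated $i,c_1,c_2$. A secondary subtlety is that branching on whether a counter is zero must happen within a single application of $\digamma_M$; this is handled by allocating different residue classes to the ``$c_j=0$'' and ``$c_j>0$'' cases, distinguished by whether $k\equiv 0\pmod{p_j}$, and programming both branches independently.
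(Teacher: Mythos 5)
The paper does not prove Theorem~\ref{conway} at all --- it imports it as a known result, citing Conway's work and Gogacz--Marcinkowski --- so there is no in-paper proof to compare against. Your argument is the standard one behind those references (a FRACTRAN-style simulation of two-counter Minsky machines, with configurations $(q_i,c_1,c_2)$ encoded as $\pi_i p_1^{c_1}p_2^{c_2}$ and the residue modulo $\gamma=\pi_0\cdots\pi_s p_1 p_2$ recovering exactly the state and the zero-tests), and it is correct: in particular the side conditions $\beta_k\mid\gamma$ and $\beta_k\mid k\alpha_k$ do hold for every residue class, since on a legal residue $k$ each prime of $\beta_k$ divides $n$ and divides $\gamma$, hence divides $k$, and on garbage residues $\beta_k=1$. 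The two points you flag as obstacles (the orbit never leaving the image of the encoding, and zero-branching within one step via the residue mod $p_j$) are indeed the only delicate points, and your treatment of both is adequate.
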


 Take a Conway function $\digamma$ defined by a sequence
 $\gamma,\alpha_0, \beta_0, \ldots, \alpha_{\gamma-1},
 \beta_{\gamma-1}$.  We may assume without loss of generality that
 $\digamma(2)=3$ and $\digamma(1)=1$. Point~1 of
 Theorem~\ref{thm:undecidable} will be proven once we construct 
 sets $T_1$ and $T_2$ of linear TGDs along with schemas
 $\dbsig,\querysig$ such that 
\begin{description}
  \item[$(\heartsuit.l)$]
    $\digamma$ does not stop if and only if $T_2$ is  $\dbsig,\querysig$-hom-conservative over $T_1$.
\end{description}
It will then be easy to observe that $T_2$ is
$\dbsig,\querysig$-hom-conservative over $T_1$ iff $T_2$ is
$\dbsig,\querysig$-CQ-conservative over $T_1$, which also yields
Point~2 of Theorem~\ref{thm:undecidable}.  For Point~3, it will be
enough to construct a set $T$ of guarded TGDs, and schemas
$\dbsig,\querysig$,
such that
\begin{description}
  \item[$(\heartsuit.g)$]
    $\digamma$ does not stop if and only if $T$ is  $\dbsig,\querysig$-trivial.
\end{description}

\subsection{The Reduction}\label{sec:proof-long}

   We say that  $\kappa=\langle [p_1,\ldots, p_n], [t_1,\ldots,t_n]\rangle $
(or $\mn{River}_\kappa$)
 is
 \begin{itemize}
 \item 
 {\em locally correct} if the following conditions hold:
 \begin{enumerate}
 \item $p_1=2$ and $p_n=1$;

 \item $\digamma(p_i)=t_i$  for $1\leq i < n $;
 
 \end{enumerate}
 \item \emph{correct} if it is locally correct and $t_i = p_{i+1}$ for $1\leq i
<n$.

 \end{itemize}
 The database $\mn{River}_\kappa$ shown in Figure~\ref{fig:river}
 is not locally correct because $p_1 \neq 2$ and $t_n \neq 1$ (which
 must be the case as we assume $\digamma(1)=1$). It is also not
 correct.

 Clearly, $\digamma$ does not stop if and only if every locally
 correct $\mn{River}_\kappa$ is incorrect, and by Observation
 \ref{see} this is the case if and only if for each locally correct
 sequence $\kappa$ there exists a database-preserving
 $\querysig$-homomorphism from
 $\chase_{T_\mn{myth}}(\mn{River}_{\kappa})$ to $\mn{River}_{\kappa}$.
  
 Now the plan is as follows.  Take $\dbsig =\emptyset$. We define
 $T_1$ such that $\chase_{T_1}()$ is the `disjoint union' of all
 locally correct databases $\mn{River}_{\kappa}$.  Our $T_2$ will be
 the union of $T_1$ and $T_\mn{myth}$.
 A careful reader can notice that if this plan succeeds, then the proof
 of
 Point~1 of Theorem~\ref{thm:undecidable} will be completed.
And it will indeed succeed, but not without one little nuance. This
is the reason why we used quotations mark around 
the term  `disjoint union' above.
 
 \subsection{Constructing $T_1$: Recursive Rules.} \label{sec:recursive}
 
 The set of TGDs $T_1$ is the union of a set of linear TGDs
 $T_\mn{rec}$ constructed in this section and another set
 $T_\mn{proj}$ constructed in the subsequent section. As intended, it
 constructs the union of all locally correct databases
 $\mn{River}_{\kappa}$. The announced nuance is that the
 union is not disjoint, but massively overlapping. Fortunately, this
 does not compromise correctness of the reduction.

 The rules of $T_\mn{rec}$ will not mention symbols from
 $\querysig$. They instead use a schema $\Sigma_\digamma$ that
 consists of high arity relation symbols used as construction aids.
 These symbols are then later on related to those in $\querysig$ by
 $T_\mn{proj}$.  More precisely, $\Sigma_\digamma$ contains relation
 symbols $\mn{Start}$ of arity~8, $\mn{End}$ of arity~5, $\mn{Bridge}$
 of arity~4, $\WH^i_k$ (for WorkHorse) of arity $\alpha_k+\beta_k+5$
 for $0\leq k,i<\gamma$ and $\mn{BH}_k$ (for BridgeHead) of arity
 $\alpha_k+\beta_k+5$ for $0\leq k<\gamma$. In what follows, we use
 $\dagger$ as shorthand for `$c,e_1,e_2$'.  With $\pp$ and $\mm$, we
 denote addition and subtraction in the ring ${\mathbb Z}_\gamma$.

Since $\dbsig=\emptyset$, first of all we need a rule that will
create something out of nothing:
$$\rightarrow\;\; \exists \dagger, b_0,x_1,y_1,y_2,b_1\; \mn{Start}(\dagger, b_0,x_1,y_1,y_2,b_1).$$
In the next section, $T_{\mn{proj}}$ will generate a Pyramus-path from
$b_1$ via $x_1$ to $b_0$ and a Thisbe-path from $b_1$ via $y_2$ and
$y_1$ to $b_0$, determining the lengths $p_1=2$ and $t_1=3$ of the
river. These are the intended lengths since local correctness
prescribes $p_1=2$ and we assume that
$\digamma(2)=3$.

Before anything more is produced, we need to know that $b_1$ is
a bridge:
$$
\mn{Start}(\dagger,b_0,x_1,y_1,y_2,b_1) \; \rightarrow \;\;
\mn{Bridge}(\dagger, b_1).$$
Now we are going to put our workhorses to work by adding, for
$0 \leq k <\gamma$:
$$
\begin{array}{r@{\;}c@{\;}l}
\mn{Bridge}(\dagger, b) &\rightarrow& \exists x_1,\ldots, x_{\beta_i}, y_1, \ldots, y_{\alpha_i}\;\\[1mm]
&&\WH^{\beta_k}_k(\dagger, b, x_1,\ldots, x_{\beta_i}, b, y_1,\ldots,
y_{\alpha_1} )
\end{array}
$$
and for $0 \leq k,i < \gamma$:
$$
\begin{array}{l}
 \WH^i_k(\dagger, x_0, x_1,\ldots, x_{\beta_k}, y_0, y_1,\ldots, y_{\alpha_k})
\rightarrow\\[1mm]
\hspace*{8mm}\exists z_1,\ldots, z_{\beta_k}, u_1,\ldots, u_{\alpha_k}\\[1mm]
\hspace*{8mm}\WH^{i\pp \beta_k}_k(\dagger, x_{\beta_k} , z_1,\ldots, z_{\beta_k}, y_{\alpha_k}, u_1,\ldots, u_{\alpha_k}).
\end{array}
$$
The above rules patiently produce Pyramus- and Thisbe-paths that lead
to~$b$.  Via $T_{\mn{proj}}$, applying a rule with relation
$\mn{WH}^i_k$ in the body produces a Pyramus-path of length $\beta_k$
from $x_{\beta_k}$ to $x_0$ and a Thisbe-path of length $\alpha_k$
from $y_{\alpha_k}$ to $y_0$.  The superscript $\cdot^i$ is used to
remember how many Pyramus-edges were produced since the last bridge,
modulo~$\gamma$, and the subscript $\cdot_k$ is used to choose a
remainder class, that is, it expresses the promise that the
Pyramus-path between the two bridges is of length~$n$, for some number
$n$ with $n\ \mn{mod} \ \gamma=k$.
 
Then, at some point, the next bridge can be reached:
$$
\begin{array}{l}
 \WH^{k\mm \beta_k}_k(\dagger, x_0, x_1,\ldots, x_{\beta_k}, y_0, y_1,\ldots, y_{\alpha_k})\rightarrow\\[1mm]
 \hspace*{8mm} \exists z_1,\ldots, z_{\beta_k-1}, u_1, \ldots, u_{\alpha_k-1},b\\[1mm]
  \hspace*{8mm} \mn{BH}_k(\dagger, x_{\beta_k} , z_1,\ldots,
  z_{\beta_k-1},b, y_{\alpha_k}, u_1,\ldots,
  u_{\alpha_k-1},b)
\end{array}
$$
and
$$
\begin{array}{l}
\mn{BH}_k(\dagger, x_{\beta_k} , z_1,\ldots, z_{\beta_k-1},b, y_{\alpha_k}, u_1,\ldots, u_{\alpha_k-1},b)\rightarrow\\[1mm]
 \hspace*{8mm} \mn{Bridge}(\dagger, b).
\end{array}
$$
Note that in the upper rule, relation $\WH^{k\mm \beta_k}_k$ indicates
that we have seen $m$ Pyramus-edges, for some $m$ with
$m\ \mn{mod} \ \gamma={k\mm \beta_k}$, and that $\mn{BH}_k$ will generate
$k$ more Pyramus-edges, thus arriving at the promised remainder
of~$k$. It is also easy to see that if the chosen remainder class was
$k$ and the length of the Pyramus-path between two bridges produced by
the above rules is $n$, then the length of the Thisbe-path is
$\digamma(n) = n\alpha_k/\beta_k$. Thus, Point~2 of local correctness
is satisfied. 

 Finally, we want to be able to produce the last\footnote{Orographically the first, as we produce the river from the 
 mouth to the source.} segment of the river:
$$
\mn{Bridge}(\dagger, b) \rightarrow \exists b' \; \mn{End}(\dagger,b,b')
$$
This will produce direct Pyramus- and Thisbe-edges from $b'$ to $b$
(recall that $\digamma(1)=1$).

\subsection{Constructing $T_1$: Projecting on  $\querysig$.}\label{sec:projections}

We now generate the actual rivers as projections of the template
produced by~$T_{\mn{rec}}$. We start at the mouth:
$$
\begin{array}{l}
    \mn{Start}(\dagger, b_0,x_1,y_1,y_2,b_1) \rightarrow \\[1mm]
\hspace*{8mm}
                                                           \mn{Mouth}(b_0),
  \\[1mm]
\hspace*{8mm}
                                                           \mn{Pyramus}(x_1,b_0), 
                                                           \mn{Pyramus}(b_1,x_1), 
                                                           \mn{Pyramus}(x_1,e_2),
  \\[1mm]
\hspace*{8mm}
                                                           \mn{Thisbe}(y_1,b_0), 
                                                           \mn{Thisbe}(y_2,y_1), 
                                                           \mn{Thisbe}(b_1,y_2),
  \\[1mm]
\hspace*{8mm}
                                                           \mn{Thisbe}(y_1,e_1), 
                                                           \mn{Thisbe}(y_2,e_1),
  \\[1mm]
\hspace*{8mm}
                                                           \mn{Channel}(c,x_1), 
                                                           \mn{Channel}(c,y_1), 
                                                           \mn{Channel}(c,y_1),
  \\[1mm]
  \hspace*{8mm} \mn{Channel}(e_1,e_1), \mn{Pyramus}(e_1,e_1), \mn{Thisbe}(e_1,e_1) 
  \\[1mm]
  \hspace*{8mm} \mn{Channel}(e_2,e_2), \mn{Pyramus}(e_2,e_2), \mn{Thisbe}(e_2,e_2).
\end{array}
$$
The rules for $\WH^i_k$ are then as expected:
$$
\begin{array}{l}
\WH^i_k(\dagger, x_0, x_1,\ldots, x_{\beta_k}, y_0, y_1,\ldots, y_{\alpha_k}) \rightarrow \\[1mm]
\hspace*{8mm}\mn{Pyramus}(x_1,x_0),\ldots, \mn{Pyramus}(x_{\beta_k}, x_{\beta_k-1}),\\[1mm]
\hspace*{8mm} \mn{Pyramus}(x_1,e_2),\ldots, \mn{Pyramus}(x_{\beta_k}, e_2),\\[1mm]
\hspace*{8mm} \Tsibe(y_1,y_0),\ldots, \Tsibe(y_{\alpha_k}, y_{\alpha_k-1}),\\[1mm]
\hspace*{8mm} \Tsibe(y_1,e_1), \ldots, \Tsibe(y_{\alpha_k},e_1), \\[1mm]
\hspace*{8mm} \mn{Channel}(c,x_1),\ldots, \mn{Channel}(c,x_{\beta_k} ),\\[1mm]
\hspace*{8mm} \mn{Channel}(c,y_1),\ldots, \mn{Channel}(c,y_{\alpha_k}
).
\end{array}
$$
%
Rules for the relations $\mn{BH}_i$ are analogous, so we skip them.
There are also rules for projecting relations
$\mn{Bridge}$ and  $\mn{End}$:
$$
\begin{array}{@{}r@{\;}c@{\;}l}
  \mn{Bridge}(\dagger, b) &\rightarrow& \mn{Channel}(c,b),
  \mn{Pyramus}(b,e_1), \Tsibe(b,e_2) \\[1mm]
\mn{End}(\dagger,b,b') &\rightarrow& \mn{Pyramus}(b',b), \Tsibe(b',b), \mn{Encounter}(b,b').
\end{array}
$$
In the appendix, we show that:
\begin{lemma}
  $\digamma$ stops iff $T_2=T_1\cup T_{\mn{myth}}$ is
  $\Sigma_Q,\Sigma_D$-hom-conservative over
  $T_1 = T_{\mn{rec}} \cup T_{\mn{proj}}$.
\end{lemma}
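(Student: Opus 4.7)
The plan is to establish the equivalence by pinning down the structure of both chases on the empty database (which is the only $\dbsig$-database, since $\dbsig = \emptyset$) and then applying Observation~\ref{see} river-by-river.

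First, by induction on rule applications, I would argue that the $\querysig$-reduct of $\chase_{T_1}(\emptyset)$ is the union, over all locally correct sequences $\kappa$, of the databases $\mn{River}_\kappa$ (up to renaming of nulls and the already-noted overlaps). The rules of $T_\mn{rec}$ generate $\Sigma_\digamma$-templates encoding Point~1 of local correctness (Start forces $p_1 = 2$ with $t_1 = 3 = \digamma(2)$; End forces $p_n = 1$) and Point~2 (a $\WH^i_k$-chain of length $m$ terminated by a $\mn{BH}_k$ step creates a Pyramus segment of length $n$ with $n \equiv k \pmod{\gamma}$ together with a matching Thisbe segment of length $n\alpha_k/\beta_k = \digamma(n)$, enforced by the arities $\beta_k, \alpha_k$ and the modulo-$\gamma$ bookkeeping in the superscripts of $\WH^i_k$). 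The rules of $T_\mn{proj}$ then emit exactly the $\querysig$-facts that, per the definition of $\mn{River}_\kappa$, assemble the river.

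Second, since $T_\mn{myth}$ fires on $\mn{Encounter}$-facts and $T_2 = T_1 \cup T_\mn{myth}$, the chase $\chase_{T_2}(\emptyset)$ extends $\chase_{T_1}(\emptyset)$ by glueing a fresh copy of the myth-cascade of Figure~\ref{figA} to each $\mn{Encounter}(b_n, b_{n-1})$ produced above. Because $\dbsig = \emptyset$, database-preservation is vacuous, so a $\querysig$-hom from $\chase_{T_2}(\emptyset)$ to $\chase_{T_1}(\emptyset)$ exists iff, for every locally correct $\kappa$, the cascade attached at the encounter of $\mn{River}_\kappa$ admits a $\querysig$-hom into $\chase_{T_1}(\emptyset)$. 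I would further argue by a rigidity argument that this hom must in fact land inside $\mn{River}_\kappa$ itself: the cascade is $\querysig$-reachable from the encounter fact of river $\kappa$, and the combination of bridges, worldly constants, channel, and the two eternities belonging to $\kappa$ forms a localized $\querysig$-component that facts from distinct rivers cannot substitute for.

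Third, combining with Observation~\ref{see}, this per-river hom exists iff $\mn{River}_\kappa$ is not correct. Chaining equivalences: $\digamma$ stops iff $\digamma^{n-1}(2) = 1$ for some $n$ iff some $\mn{River}_\kappa$ is correct (using $p_1 = 2$, $p_n = 1$ and $p_{i+1} = t_i = \digamma(p_i)$ inductively) iff some locally correct $\mn{River}_\kappa$ admits no $\querysig$-hom from its attached myth-cascade iff $T_2$ is not hom-conservative over $T_1$, which yields the equivalence the lemma asserts. The main obstacle is the rigidity step: ruling out that a $\querysig$-hom cleverly routes a correct river's cascade through the channel or eternities of a \emph{different} locally correct river. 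This will rely on the non-oblivious chase introducing fresh nulls for $c, e_1, e_2$ in each independent $T_\mn{rec}$-derivation (the $\Sigma_\digamma$-atoms force distinct witnesses at each existential step), so the `one channel, two eternities' mechanism behind the hint for Observation~\ref{see} remains strictly confined to a single river.
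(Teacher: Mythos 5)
Your overall architecture matches the paper's: describe $\chase_{T_1}()$ as the overlapping union of all locally correct rivers, describe $\chase_{T_2}()$ as that instance with a copy of the cascade \ccc glued to each $\mn{Encounter}$ fact, reduce the existence of a global homomorphism to a per-river condition, and finish with Observation~\ref{see}. (Your polarity, ``$\digamma$ does not stop iff hom-conservative'', is the one the appendix actually establishes as $(\heartsuit.l)$; the lemma as printed reverses it, apparently a typo.) However, the step you yourself flag as ``the main obstacle'' is resolved by an argument that does not work. You claim that the non-oblivious chase introduces fresh nulls for $c,e_1,e_2$ in each independent $T_\mn{rec}$-derivation, so that each river has its own private channel and eternities. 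This is false: $c,e_1,e_2$ are existentially created exactly once, by the body-free $\mn{Start}$ rule, and every other rule of $T_\mn{rec}$ carries $\dagger=c,e_1,e_2$ as \emph{frontier} variables from body to head, so the single channel and the two eternities are shared by \emph{all} rivers (as are the mouth and all common prefixes --- this is precisely the ``massively overlapping'' nuance the paper warns about). Consequently the shared channel does connect Pyramus walking down one river with Thisbe walking down another, and your proposed mechanism cannot confine the image of a cascade to its own river.

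The paper closes this gap with two lemmas that your proposal lacks. First, a rigidity lemma: the identity is the only $\querysig$-self-homomorphism of $\chase_{T_1}()$, proved by first fixing $c,e_1,e_2$ and the mouth and then arguing inductively that bridges are fixed because the immediately-upstream bridges of a given bridge are distinguished by their Pyramus-distances. This is what lets you (i) glue the per-$\mn{Encounter}$ homomorphisms into a single global one --- they all agree, being the identity, on the shared $\chase_{T_1}()$ part, so your unjustified ``iff'' between the global homomorphism and the per-river homomorphisms goes through --- and (ii) in the converse direction, force a global homomorphism to fix the correct river pointwise so that Observation~\ref{see} applies in its database-preserving form. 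Second, a closure lemma: the fact set $\mn{Ancestors}_{\querysig}(F)$ of a single river is closed under taking $\mn{Pyramus}$- and $\mn{Thisbe}$-successors, because each worldly null receives its unique outgoing (downstream) $\mn{Pyramus}$/$\mn{Thisbe}$ edges at the moment it is created; since the cascade \ccc only follows these relations forward from the $\mn{Encounter}$ fact, its image cannot leave the river. That directedness argument, and not freshness of $\dagger$, is the correct confinement mechanism; without it (or a substitute) your third step does not go through.
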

This establishes Point~1 of Theorem~\ref{thm:undecidable}. For the
``if'' direction, one shows that if
$\chase_{T_2}(\emptyset) \rightarrow_{\Sigma_Q}
\chase_{T_1}(\emptyset)$, then every locally correct river is
incorrect, and thus $\digamma$ stops.  Since rivers may be long, but
are finite, it actually suffices that
$\chase_{T_2}(\emptyset) \rightarrow^{\mn{lim}}_{\Sigma_Q}
\chase_{T_1}(\emptyset)$ for $\digamma$ to stop, which by
Theorem~\ref{thm:firstchar} gives Point~2 of
Theorem~\ref{thm:undecidable}.

For Point~3 of Theorem~\ref{thm:undecidable}, we again want to use the
toolkit above, in particular $T_\mn{myth}$ and Observation~\ref{see}.
But the situation is a bit different now. In the above reduction, we
had at our disposal $T_1$ which was able to produce, from nothing, all
the rivers we needed.  So we could afford to have $\dbsig=\emptyset$.
Now, however, we no longer have $T_1$, but only $T_2$, and our
strategy is as follows. Recall that $\digamma$ stops if and only if
there is a locally correct $\mn{River}_\kappa$ that is correct, and
that $\mn{River}_\kappa$ is correct if there is no database-preserving
$\querysig$-homomorphism from
$\chase_{T_\mn{myth}}(\mn{River}_{\kappa})$ to $\mn{River}_{\kappa}$.
We use the database $D$ to guess a $\mn{River}_{\kappa}$ that admits
no such homomorphism. More precisely, we design $T_2$ so that it
verifies the existence of a (single) locally correct river in $D$ and
only if successful generates a chase with $T_\mn{myth}$ at the
\mn{Encounter} fact of that river. Details are in the appendix.

\pagebreak

\section{Triviality for Linear TGDs}
   \label{sect:linear}

   We show that for linear TGDs, $\dbsig,\querysig$-triviality is
   decidable and \PSpace-complete, and that it is only \coNP-complete when
   the arity of relation symbols is bounded. The upper bounds
   crucially rely on the observation that non-triviality is always
   witnessed by a \emph{singleton database}, that is, a database that
   contains at most a single fact.
   This was first noted (for CQ-conservative extensions) in the
   context of the decription logic DL-Lite
   \cite{DBLP:conf/aaai/KonevKLSWZ11}.
   \begin{restatable}{lemma}{lemlinTGDsingleton}
     \label{lem:linTGDsingleton}
     Let $T$ be a set of linear TGDs and $\Sigma_D,\Sigma_Q$
     schemas. Then
%
%
     $T$ is $\Sigma_D,\Sigma_Q$-trivial iff
     $\chase_{T}(D) \rightarrow_{\Sigma_Q} D$ for all singleton
     $\Sigma_D$-databases $D$.
   \end{restatable}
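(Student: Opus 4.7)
The $\Rightarrow$ direction is immediate since singleton databases (including the empty database) are a special case of $\Sigma_D$-databases. For $\Leftarrow$, fix an arbitrary $\Sigma_D$-database $D$; by Lemma~\ref{lem:trivhomcqcoincide} it suffices to construct a database-preserving $\Sigma_Q$-homomorphism $h : \chase_T(D) \rightarrow D$. For each fact $f \in D$, the hypothesis applied to $\{f\}$ yields a database-preserving $\Sigma_Q$-homomorphism $h_f : \chase_T(\{f\}) \rightarrow \{f\}$, and the plan is to glue these together along~$D$.

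The crux is a source-tracing decomposition of the chase for linear TGDs. Since every TGD body is a single atom, each chase step is triggered by a single fact and introduces fresh nulls. By induction on the chase sequence one verifies: (a) every fact $g \in \chase_T(D) \setminus D$ traces back to a unique source fact $f(g) \in D$, and the nulls occurring in facts traced to $f$ are disjoint from those traced to any $f' \neq f$; and (b) writing $I_f$ for the sub-instance consisting of $f$ together with all facts traced to $f$, there is a homomorphism $g_f : I_f \rightarrow \chase_T(\{f\})$ that is the identity on $\mn{adom}(f)$, obtained by replaying the derivation of $I_f$ inside $\chase_T(\{f\})$.

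Using (a) and (b), define $h$ to be the identity on $\mn{adom}(D) \cap \Cbf$ and, for each null $n$, set $h(n) = h_{f(n)}(g_{f(n)}(n))$; well-definedness follows from uniqueness of $f(n)$, and $h$ is database-preserving by construction. Any $\Sigma_Q$-fact $R(\bar c) \in \chase_T(D)$ either lies in $D$ (and is then fixed by $h$) or in some $I_f$ (in which case $g_f(R(\bar c)) \in \chase_T(\{f\})$ is mapped by $h_f$ into $\{f\} \subseteq D$). Hence $h$ is the required database-preserving $\Sigma_Q$-homomorphism from $\chase_T(D)$ to $D$.

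The main obstacle is establishing~(b) in the presence of the non-oblivious chase: a rule application performed in $\chase_T(\{f\})$ may be skipped in the derivation of $I_f$ because a witness provided by another branch already exists in $\chase_T(D)$ (or vice versa). I would handle this by coordinating the two chase sequences so that $\chase_T(\{f\})$ simulates the $f$-branch of the $D$-chase step by step, replaying each $D$-chase step verbatim and producing any extra witnesses of $\chase_T(\{f\})$ separately; the resulting correspondence between intermediate instances yields $g_f$ directly.
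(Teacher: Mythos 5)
Your proposal is correct and follows essentially the same route as the paper: the paper's proof (via its Lemma on the structure of linear chases) also decomposes $\chase_T(D)$ by tracing each derived fact to a unique source fact $\alpha \in D$, obtains a homomorphism from each piece $\chase_T(D)|^\downarrow_\alpha$ into $\chase_T(\{\alpha\})$ by replaying chase steps, and composes with the homomorphisms $h_\alpha$ given by the hypothesis to assemble the global database-preserving $\Sigma_Q$-homomorphism. The non-oblivious-chase subtlety you flag is exactly what the paper's auxiliary lemma absorbs, handled the same way (a skipped application is replaced by mapping to the already-present witness).
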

   So an important part of deciding triviality is to decide, given a set
   of TGDs $T$ and a singleton database $D$, whether
   $\chase_T(D) \not\rightarrow_{\Sigma_Q} D$. The basis for this is the
   subsequent lemma. 
%
   \begin{restatable}{lemma}{lemtwofacts}
     \label{lem:twofacts}
     Let $T$ be a set of linear TGDs and $D$ a singleton
     database. Then
     $\chase_T(D) \not\rightarrow_{\Sigma_Q} D$ implies
     that there is a connected database
     $C \subseteq \chase_T(D)$ that contains at most two facts
     and such that $C \not\rightarrow_{\Sigma_Q} D$.
   \end{restatable}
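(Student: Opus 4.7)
I argue the contrapositive: assuming that every one-fact sub-instance $\{F\} \subseteq \chase_T(D)$ and every connected two-fact sub-instance $\{F_1,F_2\} \subseteq \chase_T(D)$ admits a database-preserving $\Sigma_Q$-homomorphism to $D$, the plan is to show that $\chase_T(D) \rightarrow_{\Sigma_Q} D$. First I reduce to the finite case: since $\chase_T(D)$ has finite degree and $\adom(D)$ is finite, the space $\adom(D)^{\adom(\chase_T(D))}$ of candidate assignments is compact in the product topology, with each constraint ``$h$ maps $F$ correctly'' (for a $\Sigma_Q$-fact $F$) and ``$h(c)=c$'' (for $c \in \adom(D)$) depending on only finitely many coordinates and hence clopen. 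A finite-intersection-property argument then yields $\chase_T(D) \rightarrow_{\Sigma_Q} D$ iff every finite sub-instance $J \subseteq \chase_T(D)$ does, so it suffices to produce $h$ for an arbitrary finite $J$.

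Next, write $D = \{R_0(\bar c)\}$. Because $D$ is a singleton, mapping any $\Sigma_Q$-fact to $D$ in a database-preserving way requires that fact to use the relation $R_0$, so any $\Sigma_Q$-fact of $\chase_T(D)$ using some $R \in \Sigma_Q \setminus \{R_0\}$ is itself a one-fact witness; hence one may assume that every $\Sigma_Q$-fact of $\chase_T(D)$ has the form $R_0(\bar a_F)$. Since chasing introduces only nulls from $\Nbf$, the $\Cbf$-constants in $\chase_T(D)$ are exactly those of $\adom(D)$, and database-preservation forces $h(c_k)=c_k$ on these. Mapping $R_0(\bar a_F)$ to $R_0(\bar c)$ then amounts to $h(a_{F,i}) = c_i$ at every position $i$.

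The combinatorial heart of the argument is a pigeonhole on positions. For each $a \in \adom(J)$, let $\mathsf{pos}(a)$ collect the positions at which $a$ occurs in the $\Sigma_Q$-facts of $J$. I claim that $c_i = c_j$ for all $i,j \in \mathsf{pos}(a)$, and that $a = c_i$ for all $i \in \mathsf{pos}(a)$ whenever $a \in \adom(D)$. If $a$ occurs in positions $i,j$ of a single fact $F$, the one-fact assumption on $\{F\}$ yields a homomorphism sending $a$ to both $c_i$ and $c_j$, forcing $c_i = c_j$; if $a$ occurs in position $i$ of $F_1$ and position $j$ of a distinct $F_2$, then $\{F_1,F_2\}$ is connected via $a$, and the two-fact assumption yields the same conclusion. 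Database-preservation of the same local homomorphism pins down $a = c_i$ when $a \in \adom(D)$. Setting $h(a) = c_i$ for any chosen $i \in \mathsf{pos}(a)$, and $h(a) = a$ (resp. arbitrary) when $\mathsf{pos}(a) = \emptyset$ and $a \in \adom(D)$ (resp. $a \notin \adom(D)$), then gives a well-defined database-preserving $\Sigma_Q$-homomorphism $h : J \rightarrow D$, which via the compactness reduction finishes the proof.

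The main subtlety is the reduction step that forces every relevant $\Sigma_Q$-fact to use the single relation $R_0$; once in place, the pigeonhole is essentially immediate, because any conflict in the position-assignments of a single constant already produces a one- or two-fact witness. Linearity of $T$ plays no role in the combinatorial core of the argument; it enters only through properties of $\chase_T(D)$ recorded in the preliminaries, namely that $\chase_T(D)$ has finite degree and that no new $\Cbf$-constants arise during chasing.
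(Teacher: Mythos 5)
Your proof is correct and follows essentially the same route as the paper: both exploit that each $R_0$-fact maps in a unique way onto the single fact of $D$ (after disposing of facts over other relation symbols via a one-fact witness), and both use the connected two-fact assumption to glue these uniquely determined local homomorphisms into a global one. The compactness reduction to finite subinstances is superfluous — your position-based definition of $h$ is already well-defined on all of $\chase_T(D)$ directly, which is exactly how the paper proceeds.
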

   Lemmas~\ref{lem:linTGDsingleton} and~\ref{lem:twofacts} provide us
   with a decision procedure for triviality for linear TGDs.  Given a
   finite set of linear TGDs $T$ and finite schemas $\Sigma_D$ and
   $\Sigma_Q$, all we have to do is iterate over all singleton
   $\Sigma_D$-databases $D$ and over all $C \subseteq \chase_T(D)$ that
   contain at most two facts and check (in polynomial time) whether
   $C \rightarrow_{\Sigma_Q} D$. To identify the sets~$C$, we can iterate
   over all exponentially many candidates and check for each of them
   whether $D,T \models q_C$ where $q_C$ is $C$ viewed as a Boolean
   CQ. This entailment check is possible in \PSpace
   \cite{DBLP:conf/ijcai/GottlobMP15}. 
   This yields the \PSpace upper bound in the following result.
   \begin{theorem}
     For linear TGDs, triviality is \PSpace-complete.  It is
     \coNP-complete if the arity of relation symbols is bounded by a
     constant.
   \end{theorem}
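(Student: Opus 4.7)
The plan is to complete the result by establishing the \coNP upper bound under bounded arity and the matching \PSpace and \coNP lower bounds; the \PSpace upper bound was already argued in the discussion preceding the statement.

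For the \coNP upper bound, the point is that bounded arity makes every object in a witness for non-triviality polynomial in size. A singleton $\Sigma_D$-database consists of a single fact of bounded arity, and a set $C \subseteq \chase_T(D)$ of at most two facts uses only a constant number of nulls beyond $\adom(D)$. A nondeterministic algorithm guesses such $D$ and $C$, verifies $D, T \models q_C$ in NP using the bounded-arity version of the algorithm of~\cite{DBLP:conf/ijcai/GottlobMP15}, and checks $C \not\rightarrow_{\Sigma_Q} D$ in polynomial time. This places non-triviality in NP and hence triviality in \coNP.

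For the lower bounds we reduce from Boolean CQ entailment under linear TGDs, which is \PSpace-complete in general and NP-complete in the bounded-arity case. Given an instance $(T, D_0, q)$ we construct $(T^*, \Sigma_D, \Sigma_Q)$ such that $T^*$ fails to be $\Sigma_D, \Sigma_Q$-trivial iff $D_0, T \models q$. Let $\Sigma_D = \{S\}$ with $S$ a fresh unary predicate and $\Sigma_Q = \{W\}$ with $W$ a fresh relation symbol of suitable arity; the intended witness database is the singleton $\{S(a)\}$. We extend $T$ with a linear rule $S(x) \rightarrow \exists \bar z\, \varphi_{D_0}(\bar z)$ that freshly instantiates $D_0$ from the seed atom, include $T$ itself so that its chase runs over the fresh copy, and add further linear rules whose cumulative effect is to derive a designated \emph{goal} atom exactly when $q$ is entailed. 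This last step is the main obstacle, since $q$ may combine several atoms while a linear TGD admits only one body atom per rule; we resolve it by importing the standard reduction of CQ entailment to atomic-query entailment for linear TGDs, which supplies precisely such a set of rules signalling entailment by a single predicate.

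The goal atom finally triggers one linear rule producing a fact $W(\bar n)$ with pairwise distinct fresh nulls $\bar n$. If $q$ is entailed then $W(\bar n) \in \chase_{T^*}(\{S(a)\})$, and since $\{S(a)\}$ contains no $W$-fact there is no database-preserving $\Sigma_Q$-homomorphism from $\chase_{T^*}(\{S(a)\})$ into $\{S(a)\}$; by Lemma~\ref{lem:linTGDsingleton}, $T^*$ is not trivial. Conversely, if $q$ is not entailed then no $W$-fact ever appears, the $\Sigma_Q$-reduct of $\chase_{T^*}(D)$ is empty for every singleton $\Sigma_D$-database $D$, and the constant map into $\adom(D)$ is a database-preserving $\Sigma_Q$-homomorphism; Lemma~\ref{lem:linTGDsingleton} then yields triviality. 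The reduction is polynomial-time and preserves bounded arity when $T$, $D_0$ and $q$ do, yielding the \coNP lower bound alongside the general \PSpace one.
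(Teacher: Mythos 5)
Your \coNP upper bound is exactly the paper's argument (guess a singleton $\Sigma_D$-database $D$ and a two-fact $C$, check $C \not\rightarrow_{\Sigma_Q} D$ in polynomial time and $D,T \models q_C$ in \NPclass), and your homomorphism-gadget for the lower bound is sound --- indeed, by taking $\Sigma_Q$ disjoint from $\Sigma_D$ you can even use a single $W$-fact, whereas the paper uses $\Sigma_D = \Sigma_Q = \{R\}$ and the path $R(x,y),R(y,z)$, which cannot fold into the single fact $R(c,c')$.

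The gap is in the middle of your lower-bound reduction. You reduce from \emph{Boolean CQ} entailment and then need a set of \emph{linear} TGDs that derives a designated goal atom exactly when $q$ is entailed; you defer this to ``the standard reduction of CQ entailment to atomic-query entailment for linear TGDs.'' No such black box is available: the standard reduction adds the rule $\mn{body}(q) \rightarrow \mn{Goal}$, whose body has as many atoms as $q$ and is therefore not linear, and it is not at all clear that a polynomial-time, linearity-preserving replacement exists --- constructing one would be the entire difficulty of the proof, which you have identified as ``the main obstacle'' and then assumed away. The paper avoids the issue by starting from a problem that is already hard for \emph{atomic} queries: entailment of $\exists x\, A(x)$ under linear TGDs is \PSpace-hard in general \cite{DBLP:journals/jcss/CasanovaFP84} and \NPclass-hard under bounded arity, so the goal atom is simply $A(u)$ and the triggering rule $A(u) \rightarrow \exists x \exists y \exists z\, R(x,y), R(y,z)$ is linear. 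Restating your reduction with source problem $D,T \models \exists x\, A(x)$ (and the rule $\rightarrow q_{D_0}$ to create the database from nothing, as you already do) closes the gap and recovers the paper's proof.
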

   For the \coNP upper bound, the crucial observation is that when the
   arity of relation symbols is bounded by a constant, then the
   entailment check `$D,T \models q_C$' is in \NPclass
   \cite{DBLP:journals/ai/GottlobKKPSZ14}. To decide non-triviality, we
   may thus guess $D$ and $C$ and verify in polynomial time that
   $C \not\rightarrow_{\Sigma_Q} D$ and in \NPclass that
   $D,T \models q_C$.  For the lower bounds, we reduce entailments of the
   form $D,T \models \exists x \, A(x)$, with $T$ a set of linear TGDs,
   to non-triviality for linear TGDs. This problem is \PSpace-hard in general
   \cite{DBLP:journals/jcss/CasanovaFP84} and \NPclass-hard when the
   arity of relation symbols is bounded by a constant. The reduction goes
   as follows. Let $D$, $T$, and $\exists x \, A(x)$ be given. Introduce
   a fresh binary relation symbol~$R$, set
   $\Sigma_D = \Sigma_Q = \{ R \}$, and let $T'$ be the extension of $T$
   with the TGDs
   $$
   \begin{array}{rcl}
     &\rightarrow& q_D \\[1mm]
     A(u) &\rightarrow& \exists x \exists y \exists z \, R(x,y), R(y,z)
   \end{array}
   $$
   where $q_D$ is $D$ viewed as a Boolean CQ. Note that there is no
   homomorphism from $R(x,y),R(y,z)$ into the singleton
   $\Sigma_D$-database $\{ R(c,c') \}$. Based on this, it is easy to verify that $T'$
   is $\Sigma_D,\Sigma_Q$-trivial iff $D,T \not\models \exists x \, A(x)$.

%

   \section{Frontier-One TGDs}
   \label{sect:frontierone}

   The purpose of this section is to show the following.
   \begin{theorem} \label{thm:frontierone}  
     For frontier-one TGDs, CQ-conservativity and hom-conservativity are
     decidable in \ThreeExpTime (and \TwoExpTime-hard).
   \end{theorem}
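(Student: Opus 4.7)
The plan is to reduce both problems to emptiness of two-way alternating tree automata (2ATA) running on tree-shaped encodings of candidate witness databases, and then apply the standard fact that 2ATA emptiness is decidable in exponential time in the automaton size. The starting point is the structure of frontier-one TGDs: since each rule head shares a single variable with its body, the chase $\chase_T(D)$ decomposes uniformly into $D$ together with, for each $c \in \adom(D)$, a tree-like attachment $\chase_T(D)|^\downarrow_c$ whose shape depends only on the set of unary atoms derivable for $c$ in the chase. Each attachment has bounded branching and is \emph{regular}: up to isomorphism, it contains only exponentially many distinct subtrees. This regularity is what makes automata-theoretic techniques applicable.

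For hom-conservativity, I would first establish a bounded-treewidth property: if some $\Sigma_D$-database $D$ witnesses $\chase_{T_2}(D) \not\rightarrow_{\Sigma_Q} \chase_{T_1}(D)$, then some such $D$ has treewidth bounded by the maximum arity occurring in $T_1 \cup T_2$. The argument is an unravelling through guarded tuples, using that the tree attachments of the chase are already tree-shaped so that the entire non-conservativity obstruction can be moved onto a tree decomposition of bounded width. One then encodes candidate databases together with both of their chases as labelled trees of bounded width, and constructs a 2ATA $\mathcal{A}_{\text{hom}}$ that accepts precisely the encodings of non-hom-conservativity witnesses. The universal quantification ``no database-preserving $\Sigma_Q$-homomorphism exists'' is handled by alternation, with states tracking subtree types of attachments; the automaton has doubly exponential size in $T_1,T_2$, and its emptiness check yields the \ThreeExpTime bound.

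The CQ-conservativity case is the main obstacle, because Example~\ref{ex:basic} shows it is strictly weaker than hom-conservativity, and the characterization via $\rightarrow^{\lim}_{\Sigma_Q}$ from Theorem~\ref{thm:firstchar} is not directly a reachability condition that an automaton can check. Following the \ELI strategy of \cite{DBLP:journals/jair/JungLMS20}, I would push the use of limits into the regular tree attachments: any witness $D$ of non-CQ-conservativity can be arranged so that the ``finite'' obstruction sits near $\adom(D)$ in $\chase_{T_2}(D)$, while the genuinely infinitary homomorphism-limit phenomenon occurs inside the attachments $\chase_{T_i}(D)|^\downarrow_c$. The crucial auxiliary lemma is that for two regular infinite trees $I_1,I_2$ of this kind, the relation $I_1 \rightarrow^{\lim}_{\Sigma_Q} I_2$ is equivalent to a reachability condition on the finite graph whose nodes are pairs of subtree types and whose edges record that a partial $\Sigma_Q$-homomorphism can be extended by one more level; the equivalence is proved by combining König's lemma with the skipping-homomorphisms technique already used in the proof sketch of Lemma~\ref{lem:skippinghomsgeneral}. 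Once this characterization is in place, the full condition compiles into a second 2ATA $\mathcal{A}_{\text{CQ}}$ of doubly exponential size, giving the \ThreeExpTime upper bound; the \TwoExpTime lower bound is inherited directly from the \ELI fragment \cite{DBLP:conf/ijcai/Gutierrez-Basulto18,DBLP:journals/jair/JungLMS20}.
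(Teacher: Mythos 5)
Your overall strategy coincides with the paper's: reduce to tree-shaped witnesses, decide hom-conservativity by a \ata{} of (roughly) doubly exponential size, and handle CQ-conservativity by pushing homomorphism limits into the regular, chase-generated parts and precomputing a finite number of limit tests. Two of your concrete steps, however, would not go through as stated. First, for the bounded-treewidth property you unravel ``through guarded tuples'' and bound the width by the maximum arity in $T_1\cup T_2$. Frontier-one TGDs are not guarded: a body may be, say, $R(x,y),R(y,z),R(z,x)$ with a single frontier variable, so it has no guard and its variable count exceeds the maximum arity. A guarded unraveling then fails to reproduce matches of such bodies, and the chase of the unraveling no longer simulates the chase of $D$ (this simulation is exactly what the correctness argument needs). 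The paper instead unravels with bags of size up to the \emph{body width} of the TGDs, i.e.\ the number of variables in a rule body, which is what makes the ``chase of the unraveling absorbs the chase of $D$'' lemma work.

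Second, and more importantly, your key lemma for CQ-conservativity --- that $I_1 \rightarrow^{\lim}_{\Sigma_Q} I_2$ between regular trees is ``a reachability condition on the finite graph of pairs of subtree types whose edges record one-level extensions of partial homomorphisms'' --- does not identify the actual mechanism. The obstruction to upgrading $\rightarrow^{\lim}$ to $\rightarrow$ is that the image of a fixed source constant may have to escape to infinity in the target: in Example~\ref{ex:basic}, every finite prefix of the backward $R$-chain embeds into the forward $R$-chain only by starting ever deeper, so no forward reachability statement about type pairs captures the limit. What the paper proves (Lemma~\ref{lem:limtoinfinite}) is that $I \rightarrow^{\lim}_\Sigma \chase_{T_1}(\widehat t\,)$ holds iff $I$ admits an \emph{ordinary} homomorphism into some member of a class of relaxations of $\chase_{T_1}(\widehat t\,)$, namely type-labelled instance trees over directed \emph{pseudo-trees} that may have no root, i.e.\ may contain an infinite ancestor path in which each bag is justified by the type of its child. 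In type-graph language this is the existence of an infinite backward chain (a lasso), not reachability; and establishing the equivalence requires the full ``skipping homomorphisms'' construction of a rootless witness tree, not just K\H{o}nig's lemma. This is the one genuinely new ingredient beyond the \ELI case, and without pinning it down your automaton $\mathcal{A}_{\text{CQ}}$ has no concrete condition to verify. (You also omit, as details, the reduction to connected CQs of arity $0$ or $1$ and the observation that only finitely many head fragments of $T_2$ and $T_1$-types need to be paired in the precomputation, but those are routine once the limit characterization is in place.)
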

   \TwoExpTime lower bounds carry over from the description
   logic~\ELI, see \cite{DBLP:conf/ijcai/Gutierrez-Basulto18} for
   hom-conservativity and \cite{DBLP:journals/jair/JungLMS20} for
   CQ-conservativity. They already apply when only unary and binary
   relation symbols are admitted. In the remainder of the section, we
   thus concentrate on upper bounds.

   Both in the case of hom-conservativity and CQ-conservativity, we
   first provide a suitable model-theoretic characterization and then
   use it to find a decision procedure based on tree
   automata. The case of CQ-conservativity is significantly
   more challenging because of the appearance of homomorphism limits.
   
   \subsection{Deciding Hom-Conservativity}
   \label{subsect:homchar}

   We show that to decide hom-conservativity, it suffices to consider
   databases of bounded treewidth. Instead of using the
   standard notion of a tree decomposition, however, it is more
   convenient for us to work with what we call tree-like databases. We
   define these first.

%
%
   A \emph{$\Sigma$-instance tree} is a triple $\mathcal{T}=(V,E,B)$ with
   $(V,E)$ a directed tree and $B$ a function that assigns a
   $\Sigma$-database $B(v)$ to every $v\in V$ such that the following
   conditions are satisfied:
   \begin{enumerate}

     \item for every $a\in\bigcup_{v\in V}\adom(B(v))$, the restriction of
       $(V,E)$ to the nodes $v \in V$ such that $a \in \adom(B(v))$ is a
       tree of depth at most one;

     \item for every $(u,v)\in E$, $\adom(B(u))\cap
       \adom(B(v))$ contains at most one constant.

   \end{enumerate}
   The \emph{width} of the instance tree is the supremum of the
   cardinalities of $\adom(B(v))$, $v \in V$.  A $\Sigma$-instance tree
   \Tmc defines an associated instance
   $I_{\mathcal{T}} = \bigcup_{v \in V} B(v)$.
   A $\Sigma$-instance $I$ is \emph{tree-like of width $k$} if there is
   a $\Sigma$-instance tree $\mathcal{T}$ of width $k$ with $I=I_{\mathcal{T}}$.

   Instance trees of width $k$ are closely related to tree
   decompositions of width $k$ in which the bags overlap
   in at most one constant. For example, Condition~1 strengthens the usual
   connectedness condition and is
   more closely tailored towards our purposes.

   %
   \begin{restatable}{theorem}{thmhomchar}\label{thm:homchar}
     Let $T_1$ and $T_2$ be sets of frontier-one TGDs, and $\dbsig$ and
     $\querysig$ schemas. Let $k$ be the body width of~$T_1$. Then the following are equivalent: 
     \begin{enumerate}

       \item $T_1 \models^{\text{hom}}_{\dbsig,\querysig} T_2$;

       \item $\chase_{T_2}(D) \rightarrow_{\Sigma_Q} \chase_{T_1}(D)$,
	 for all tree-like $\dbsig$-databases $D$ of width at
	 most~$k$. 

     \end{enumerate}
   \end{restatable}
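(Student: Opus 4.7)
The direction $(1) \Rightarrow (2)$ is immediate, since tree-like $\Sigma_D$-databases of width at most $k$ are a special case of $\Sigma_D$-databases. I focus on the converse.

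For $(2) \Rightarrow (1)$ I argue contrapositively: assuming there is a $\Sigma_D$-database $D$ with $\chase_{T_2}(D) \not\to_{\Sigma_Q} \chase_{T_1}(D)$, I construct a tree-like $\Sigma_D$-database $D^*$ of width at most $k$ with the same failure. The construction proceeds via a tree-like \emph{unraveling} of $D$: a (generally infinite) tree-like $\Sigma_D$-instance $\tilde{D}$ of width $\le k$ equipped with a folding homomorphism $\mu: \tilde{D}\to D$. The underlying tree has nodes labeled by bags of size $\le k$, consisting of fresh copies of elements of $\adom(D)$, arranged so that adjacent bags share exactly one element. Because $k$ is the body width of $T_1$, each $T_1$-rule body instance in $D$ is realized inside a single bag of $\tilde{D}$; and because both $T_1, T_2$ are frontier-one, $\mu$ extends canonically to homomorphisms $\bar{\mu}_i : \chase_{T_i}(\tilde{D}) \to \chase_{T_i}(D)$ for $i \in \{1,2\}$.

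The central claim is that $\chase_{T_2}(\tilde{D}) \not\to_{\Sigma_Q} \chase_{T_1}(\tilde{D})$ as a database-preserving homomorphism. Assuming for contradiction that $\tilde{h}$ is such a homomorphism, one folds it to a database-preserving $\Sigma_Q$-homomorphism $h: \chase_{T_2}(D) \to \chase_{T_1}(D)$, defined by $h(x) = \bar{\mu}_1(\tilde{h}(x'))$ for a chosen preimage $x' \in \bar{\mu}_2^{-1}(x)$. The frontier-one structure organizes each $\chase_{T_i}(D)$ as $D$ plus null-forests rooted at elements of $\adom(D)$, which allows preimage choices to be made coherently within each null-tree, so that $h$ is well defined and a homomorphism, contradicting the failure at $D$. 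Finally, I pass from $\tilde{D}$ to a finite tree-like prefix $D^* \subseteq \tilde{D}$ of width $\le k$: since $\chase_{T_1}(\tilde{D})$ has finite degree under the non-oblivious chase, a compactness argument in the style of the skipping-homomorphism technique of Lemma~\ref{lem:skippinghomsgeneral} shows that if every finite tree-like prefix of $\tilde{D}$ admitted a database-preserving $\Sigma_Q$-homomorphism, these partial maps could be assembled into one on all of $\tilde{D}$, contradicting the central claim.

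The main obstacle is the coherent folding in the central claim. Matching the bag width to the body width of $T_1$ is essential: it ensures that $T_1$-firings in $D$ correspond to $T_1$-firings within single bags of $\tilde{D}$, which is what makes $\bar{\mu}_1$ carry enough structural information for the folded map to be a homomorphism. The frontier-one property then confines each null to a single rooted null-tree, enabling preimage choices to be made locally per null-tree rather than globally.
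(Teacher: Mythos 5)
Your overall architecture coincides with the paper's: unravel $D$ into a width-$k$ instance tree $\tilde D$, show the unraveling is still a counterexample by transferring a hypothetical homomorphism back to $D$, and then cut down to a finite prefix by compactness. The problem is that your justification of the central claim points at the wrong map. The folding homomorphism $\bar{\mu}_1:\chase_{T_1}(\tilde D)\to\chase_{T_1}(D)$ exists unconditionally by universality of the chase; no width condition is needed for it, and whether ``$T_1$-firings in $D$ are realized inside single bags'' is irrelevant to that direction. Where the width of the bags actually bites is in the step you leave unproved: the existence of coherent preimages $x'\in\bar{\mu}_2^{-1}(x)$, i.e., of a homomorphism from each null-tree $\chase_{T_2}(D)|^\downarrow_c$ \emph{into} $\chase_{T_2}(\tilde D)$ fixing a copy of $c$. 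This is the paper's Lemma~\ref{lem:chase1}, proved by induction on chase steps: one must show that every body match of a rule of the TGD set being chased (here $T_2$, not $T_1$) in an intermediate stage of $\chase_{T_2}(D)$ can be replicated in $\chase_{T_2}(\tilde D)$, and the role of the bags is that the $\adom(D)$-part of such a match sits isomorphically inside a single bag, while the null parts are handled inductively using that frontier-one chase steps never add facts joining two distinct database constants. Your proposal asserts this coherence, flags it as the main obstacle, and then offers an explanation that does not support it.

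The compactness step is also not carried by the tools you invoke. Lemma~\ref{lem:skippinghomsgeneral} requires a finite target, and the anchored skipping argument of Lemma~\ref{lem:skippinghoms} requires a $\querysig$-connected source with a fixed anchor; $\chase_{T_2}(\tilde D)$ has $\querysig$-connected components containing no database constant, and the homomorphisms on finite prefixes have varying domains \emph{and} codomains, so a generic assembly does not go through as described. The paper avoids this entirely: it localizes the failure to a single null-tree $\chase_{T_2}(U)|^\downarrow_c$, observes via Lemma~\ref{lem:typedet} that this null-tree is determined up to homomorphic equivalence by the $T_2$-type of $c$, and uses compactness of first-order logic to find a finite $U'\subseteq U$ realizing the same type at $c$. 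Both of your gaps can be repaired along these lines, so the plan is salvageable, but as written the two load-bearing steps are not established.
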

   The ``$1 \Rightarrow 2$''-direction is a direct consequence of the definition
   of hom-conservativity. For the ``$2 \Rightarrow 1$''-direction, let $D$ be a
   $\dbsig$-database witnessing
   \mbox{$T_1 \not\models^{\text{hom}}_{\dbsig,\querysig} T_2$}, that is,
   $\chase_{T_2}(D) \not \rightarrow_{\Sigma_Q} \chase_{T_1}(D)$. We
   show in the appendix that the unraveling $U$ of $D$ into a
     tree-like $\dbsig$-instance of width $k$ also satisfies
   $\chase_{T_2}(U) \not \rightarrow_{\Sigma_Q}
   \chase_{T_1}(U)$. Compactness then yields 
   a finite subset $U'$ of $U$ that still satisfies $\chase_{T_2}(U')
   \not \rightarrow_{\Sigma_Q} \chase_{T_1}(U')$.

   \smallskip We show in the appendix how Theorem~\ref{thm:homchar}
   can be used to reduce $\Sigma_D,\Sigma_Q$-hom-conservativity to the
   (\ExpTime-complete) emptiness problem of two-way alternating tree
   automata ({\ata}s) and in this way obtain a 3\ExpTime upper
   bound. Here, we only give a sketch. Let $T_1$ and $T_2$ be sets of
   frontier-one TGDs, $\dbsig$ and $\querysig$ schemas, $k$ the body
   width of $T_1$, and $\ell$ the head width of~$T_1$.

   The \ata works on input trees that encode a tree-like database $D$
   of width at most $k$ along with a tree-like model $I_0$ of $D$ and
   $T_1$ of width at most $\max\{k,\ell\}$. It verifies that
   $\chase_{T_2}(D) \not\rightarrow_{\Sigma_Q} I_0$. If such an $I_0$
   is found, then
   $\chase_{T_2}(D) \not\rightarrow_{\Sigma_Q} \chase_{T_1}(D)$ because
   $\chase_{T_1}(D) \rightarrow I_0$.  The converse is also true since
   $\chase_{T_1}(D)$ is tree-like of width $\max\{k,\ell\}$. In fact,
   the instance $\chase_{T_1}(D)|^\downarrow_c$ that the chase
   generates below each $c \in \mn{adom}(D)$ (see
   Section~\ref{sect:prelim}) is tree-like of width~$\ell$.

   Since our homomorphisms are database-preserving and $T_2$ is a set
   of frontier-one TGDs,
   $\chase_{T_2}(D) \not\rightarrow_{\Sigma_Q} I_0$ 
   if and only if there is a $c \in \mn{adom}(D)$ such that
   \mbox{$\chase_{T_2}(D)|^\downarrow_c \not\rightarrow_{\Sigma_Q}
   I_0$}. The {\ata} may thus check this
   latter condition, which it does by relying on the
   notion of a type. Since types also play a role in the subsequent
   sections, we make this precise.



   Let $T$ be a set of frontier-one TGDs. We use $\bodyCQ(T)$ to
   denote the set of unary or Boolean CQs that can be obtained by
   starting with the Boolean CQ
      $\exists x \exists\bar y\, \phi(x,\bar y)$ with
   $\phi(x,\bar y)$ the body of some TGD in $T$, 
   then dropping any number of atoms, and then identifying
   variables.  Finally, we may choose a variable as the answer
   variable or stick with only quantified variables. A
   \emph{$T$-type} is a subset $t \subseteq
   \bodyCQ(T)$ such that for some instance $I$ that is a model of
   $T$ and some $c \in \adom(I)$,
 \begin{enumerate}

   \item $q(x) \in t$ iff $c \in q(I)$ for all unary $q(x) \in
     \bodyCQ(T)$~and

   \item $q \in t$ iff $I \models q$ for all Boolean $q \in \bodyCQ(T)$.

 \end{enumerate}
 We then also use $\tp_T(I,c)$ to denote $t$.  We assume that every
 type contains the additional formula $\mn{true}(x)$.  We may then
 view $t$ 
 as a unary CQ with free variable
 $x$ and thus as a (canonical) database. For brevity, we use
 $t$ also to denote both of these.
 $\TP(T)$ is the set of all
 $T$-types. Note that the number of types is double exponential
 in $||T||$. The type
 $\tp_T(\chase_T(D),c)$ tells us everything we need to know about
 $c$ in the chase of database~$D$ with $T$, as follows.
\begin{lemma}
  \label{lem:typedet}
  Let $T$ be a set of frontier-one TGDs, $I$ an instance, and
  $c \in \mn{adom}(I)$. Then $\chase_T(I)|^\downarrow_c$ and
  $\chase_T(J)|^\downarrow_c$ are homomorphically equivalent,
  where $J$ is obtained from $\mn{tp}_T(\chase_T(I),c)$ by replacing the
  free variable $x$ with~$c$.
\end{lemma}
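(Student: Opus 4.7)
The plan is to establish the homomorphic equivalence by constructing $\Sigma$-homomorphisms in both directions, where $\Sigma$ is the full signature. The forward direction uses the universal model property of the chase together with the tree-like structure produced by frontier-one TGDs; the reverse direction requires an inductive construction with a strengthened invariant about types.

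For $\chase_T(J)|^\downarrow_c \to \chase_T(I)|^\downarrow_c$, let $t = \tp_T(\chase_T(I),c)$. By the definition of $t$, every CQ $q(x) \in t$ has a homomorphism into $\chase_T(I)$ sending $x$ to $c$, so $J$ (obtained from the union of all such CQs by replacing $x$ with $c$) admits a homomorphism into $\chase_T(I)$ that is the identity on $c$. Since $\chase_T(I)$ is a model of $T$, universality of the chase extends this to a homomorphism $h : \chase_T(J) \to \chase_T(I)$ preserving $c$. I would build $h$ step by step along the chase order: at each trigger firing in $\chase_T(J)$, frontier-oneness guarantees that any new witnesses hang off $h(\sigma(x))$, the image of the unique frontier variable, and inductively that image lies in $\chase_T(I)|^\downarrow_c$, so the newly created witnesses land in $\chase_T(I)|^\downarrow_c$ as well. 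Since $\chase_T(J)$ is itself entirely rooted at $c$ (so $\chase_T(J) = \chase_T(J)|^\downarrow_c$), this yields the desired homomorphism.

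For $\chase_T(I)|^\downarrow_c \to \chase_T(J)|^\downarrow_c$, I would build a homomorphism $g$ by induction on the stage at which elements of $\chase_T(I)|^\downarrow_c$ are produced by the chase, starting with $g(c) = c$. Given a null $n$ introduced by firing a TGD $\vartheta = \phi(x,\bar y) \to \exists \bar z\, \psi(x,\bar z)$ under a trigger $\sigma$ with $\sigma(x) = e$, I want to fire a matching trigger in $\chase_T(J)$ at $g(e)$ and take its witnesses as $g$-images of the new nulls in $\psi$. The crucial strengthened invariant is $\tp_T(\chase_T(I),e) \subseteq \tp_T(\chase_T(J),g(e))$ for every $e$ in the domain of $g$. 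The base case $e = c$ holds by construction: every CQ in $t$ literally appears as atoms of $J$ and is hence satisfied at $c$ in $\chase_T(J)$. Assuming the invariant, the unary CQ obtained from $\phi(x,\bar y)$ by promoting $x$ to the answer variable (and performing whatever variable identifications/atom deletions are required to land in $\bodyCQ(T)$) is satisfied at $e$ in $\chase_T(I)$, hence lies in $\tp_T(\chase_T(I),e) \subseteq \tp_T(\chase_T(J),g(e))$, so it is satisfied at $g(e)$ in $\chase_T(J)$; $\vartheta$ then fires there and produces a witness to use as $g(n)$.

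The main obstacle is maintaining the type-containment invariant along the induction. A body CQ satisfied at a null $e$ below $c$ in $\chase_T(I)$ can in principle match some of its variables to elements lying outside $\chase_T(I)|^\downarrow_c$, reaching upward through $e$'s ancestors and ultimately out through $c$ into the rest of $\chase_T(I)$. The frontier-one property is essential here: in the chase, any element outside $\chase_T(I)|^\downarrow_c$ reached from $e$ must factor through some ancestor of $e$, and by the inductive type containment at that ancestor, an analogous match of the same body CQ exists inside $\chase_T(J)$ (built from the types of the ancestors). Formalising this requires a careful joint induction on chase stage and on the depth of the body match, but that is essentially where the content of the lemma resides.
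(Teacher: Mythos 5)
Your overall strategy --- reproducing the chase steps of $\chase_T(I)$ inside $\chase_T(J)$ and vice versa, with a type-containment invariant $\tp_T(\chase_T(I),e)\subseteq\tp_T(\chase_T(J),g(e))$ driving the harder direction --- is exactly the approach the paper announces for this lemma; the paper itself only says the proof is ``straightforward by reproducing chase steps \dots and vice versa'' and omits all details. Your identification of the upward-reaching body matches as the place where the real work sits is also accurate: the decomposition you would need there is essentially the splitting argument the paper deploys elsewhere (cf.\ the proof of Lemma~\ref{lem:datalog-rewriting}).

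One step does not go through as written, and it is worth naming because it bites in both directions. You assert that when a trigger of $\chase_T(J)$ is replayed at $h(\sigma(x)) \in \chase_T(I)|^\downarrow_c$, ``the newly created witnesses land in $\chase_T(I)|^\downarrow_c$ as well.'' That is true only if the replayed trigger actually fires. The chase used in this paper is restricted (non-oblivious): if the head is already satisfied at $h(\sigma(x))$ in $\chase_T(I)$, no new nulls are created, and the pre-existing witnesses may lie outside $\chase_T(I)|^\downarrow_c$, e.g.\ they may be elements of $I$ itself with a different $\src$-value. Concretely, for $T=\{A(x)\to\exists y\, R(x,y)\wedge B(y)\}$ and $I=\{A(c),R(c,d),B(d)\}$ one has $\chase_T(I)=I$ and $\chase_T(I)|^\downarrow_c=\{A(c)\}$, while $\chase_T(J)|^\downarrow_c$ contains a fresh $R$-edge, so there is no homomorphism in that direction at all. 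The symmetric problem arises in your reverse direction: the required witnesses in $\chase_T(J)$ may be constants of $J$ stemming from quantified variables of the type CQs, which again have sources other than $c$. This is arguably a defect of the lemma as stated rather than of your argument --- it disappears if one uses an oblivious chase, or if the claim is weakened to homomorphisms into the full instances $\chase_T(I)$ resp.\ $\chase_T(J)$ that fix $c$ --- but a complete proof has to confront the ``trigger already satisfied'' case explicitly rather than assert that the witnesses stay inside the restriction.
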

The proof of Lemma~\ref{lem:typedet} is straightforward by
reproducing chase steps from the construction of $\chase_T(I)$
in $\chase_T(J)$ and vice versa. Details are omitted.

So to verify that $\chase_{T_2}(D) \not\rightarrow_{\Sigma_Q}
I_0$, a \ata may guess a constant $c$ in the database
$D$ represented by the input tree, and it may also guess the type
$\tp_{T_2}(\chase_{T_2}(D),c)$. It then goes on to verify that
$\tp_{T_2}(\chase_{T_2}(D),c)$ was guessed correctly (which is not
entirely trivial as
$\chase_{T_2}(D)$ is \emph{not} encoded in the input). Exploiting
Lemma~\ref{lem:typedet}, it then starts from type
$\tp_{T_2}(\chase_{T_2}(D),c)$ to construct `in its states' the
instance
$\chase_{T_2}(D)|^\downarrow_c$, simultaneously walking through the
instance 
$I_0$ encoded by the input tree to verify that, as desired,
$\chase_{T_2}(D)|^\downarrow_c \not\rightarrow_{\Sigma_Q}
I_0$ (we actually build a \ata for verifying
$\chase_{T_2}(D)|^\downarrow_c \rightarrow_{\Sigma_Q}
I_0$ and then complement).

   \subsection{Deciding CQ-Conservativity}
   \label{subsect:boundedhoms}

   We start with showing that, also for deciding CQ-conservativity,
   it suffices to consider tree-like databases. In addition, 
   it suffices to consider CQs $q$ of arity $0$ or~$1$.
%
   \begin{restatable}{theorem}{thmcharone} \label{thm:cqcharone} Let
     $T_1$ and $T_2$ be sets of frontier-one TGDs, and $\dbsig$ and
     $\querysig$ schemas. Let $k$ be the body width of~$T_1$.
     Then the following are equivalent:
     \begin{enumerate}

       \item $T_1 \models^{\text{CQ}}_{\dbsig,\querysig} T_2$;

       \item \mbox{$q_{T_2}(D) \subseteq q_{T_1}(D)$}, for all
	 tree-like $\dbsig$-databases $D$ of width at most $k$ and
	 connected $\querysig$-CQs $q$ of arity 0 or~1.

     \end{enumerate} \end{restatable}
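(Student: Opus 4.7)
The direction $(1) \Rightarrow (2)$ is immediate since tree-like $\dbsig$-databases of width at most $k$ are $\dbsig$-databases and connected $\querysig$-CQs of arity $0$ or $1$ are $\querysig$-CQs. For $(2) \Rightarrow (1)$, we argue by contraposition: starting from a counterexample $(D, q(\bar x), \bar c)$ with $\bar c \in q_{T_2}(D) \setminus q_{T_1}(D)$, we produce a tree-like $\dbsig$-database of width at most~$k$ and a connected $\querysig$-CQ of arity $0$ or $1$ witnessing failure. The reduction proceeds in three stages.

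\emph{Reducing to connected CQs.} Split $q$ into its connected components and partition $\bar x, \bar c$ accordingly. Since $\bar c$ is a $T_i$-answer to $q$ iff each sub-tuple is a $T_i$-answer to the corresponding component (for both $i \in \{1,2\}$), some component inherits the witness, and we may henceforth assume $q$ connected.

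\emph{Reducing arity to $0$ or $1$.} If $q$ still has $r \geq 2$ answer variables, exploit that $T_2$ is frontier-one: $\chase_{T_2}(D)$ decomposes as $D$ together with the tree-like branches $\chase_{T_2}(D)|^\downarrow_a$ glued at each $a \in \adom(D)$, pairwise null-disjoint. A witness homomorphism $h_2 : q \to \chase_{T_2}(D)$ with $h_2(\bar x)=\bar c$ partitions the variables of $q$ into those mapped into $\adom(D)$ and those mapped to nulls below some specific~$a$. Since no atom can span two distinct branches, $q$ splits into a database part $q_D$ (atoms with all variables mapped into $D$) and, for each $a$, a unary sub-query $q_a$ with answer variable $y_a$ satisfying $h_2(y_a)=a$. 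The database part is satisfied in both $T_1$ and $T_2$, so the failure $\bar c \notin q_{T_1}(D)$ must stem from some branch: $a \in q_{a,T_2}(D) \setminus q_{a,T_1}(D)$. Passing to a connected component of $q_a$ produces a connected $\querysig$-CQ $\tilde q$ of arity $0$ (component avoiding $y_a$) or $1$ (component containing $y_a$) that preserves the witness, and by construction the witness homomorphism maps $\tilde q$ entirely into the single branch $\chase_{T_2}(D)|^\downarrow_a$.

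\emph{Reducing to tree-like $D$ of width $\le k$.} Construct a tree-like unraveling $U$ of $D$ of width $k$ together with a $\dbsig$-homomorphism $g : U \to D$, designed so that for every $u \in \adom(U)$ and every bodyCQ $q'(x)$ of $T_1$, $u \in q'(U)$ iff $g(u) \in q'(D)$; equivalently, $\tp_{T_1}(\chase_{T_1}(U), u) = \tp_{T_1}(\chase_{T_1}(D), g(u))$, whence by Lemma~\ref{lem:typedet} the corresponding $T_1$-chase branches are homomorphically equivalent. Pick $a' \in \adom(U)$ with $g(a')=a$. The $T_1$-non-answer transfers for free: any homomorphism $\tilde q \to \chase_{T_1}(U)$ sending the answer variable to $a'$ would compose with the canonical lift of $g$ to a homomorphism $\tilde q \to \chase_{T_1}(D)$ contradicting the non-answer over $D$. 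For the $T_2$-answer, design the unraveling to additionally retain enough $T_2$-type information at $a$, so that by Lemma~\ref{lem:typedet} the witness homomorphism $h_2$ lifts to a homomorphism into $\chase_{T_2}(U)|^\downarrow_{a'}$. Finally, compactness extracts a finite tree-like $\dbsig$-database $D' \subseteq U$ of width at most $k$ that still witnesses non-conservativity.

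The hardest step is the last one: an unraveling whose width is tied only to the body width of $T_1$ must nevertheless preserve enough $T_2$-type information at the relevant constant to transfer the $T_2$-answer. The arity reduction is essential here, since it confines $h_2$ to a single frontier-one branch, reducing the problem to controlling one type rather than the full $T_2$-chase.
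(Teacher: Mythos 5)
Your overall route---restrict to a connected component, use the frontier-one branch structure of $\chase_{T_2}(D)$ to cut the query down to arity $0$ or $1$, then unravel $D$ into a width-$k$ instance tree and finish by compactness---is exactly the paper's. Within the arity reduction there is a fixable but real inaccuracy: you put into the ``database part'' every atom all of whose variables are mapped into $\adom(D)$ and claim that part is satisfied under both $T_1$ and $T_2$. That fails for atoms whose image is a fact over a \emph{single} database constant that was added by the chase (e.g.\ a head atom over the frontier variable yields $A(a)\in\chase_{T_2}(D)\setminus D$); such a fact need not hold in $\chase_{T_1}(D)$. The paper instead groups variables by $\src(h(x))$, so that these atoms land in the branch query at $a$, and only atoms whose variables have two distinct sources are guaranteed to be genuine $D$-facts. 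Relatedly, to glue the per-branch $T_1$-homomorphisms back into one, the sub-queries must be pairwise variable-disjoint with a single anchored answer variable each; the paper arranges this by first identifying variables with equal $h$-images and promoting database-mapped quantified variables to answer variables. Your sketch omits this bookkeeping but the idea is the same.

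The substantive gap is in the last step. You correctly note that the $T_1$-non-answer transfers to the unraveling $U$ for free via the homomorphism $U\to D$, and you correctly identify the $T_2$ side as the hard part---but ``design the unraveling to additionally retain enough $T_2$-type information at $a$'' is a restatement of the goal, not an argument. This is exactly where the paper's Lemma~\ref{lem:chase1} does the work: by induction on the chase sequence it constructs, for any copy $a'$ of $a$ in the \emph{standard} $k$-unraveling, a homomorphism from $\chase_{T_2}(D)|^\downarrow_a$ to $\chase_{T_2}(U)$ sending $a$ to $a'$. The induction step replays each $T_2$-chase step in $\chase_{T_2}(U)$, using that the body image restricted to $\adom(D)$ spans few enough constants to have an isomorphic copy in a single bag of $U$ incident to $a'$, and that a frontier-one chase never creates non-reflexive non-unary facts over several database constants. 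Without this lemma (or an equivalent), your step is unproven; note also that your target of exact $T_2$-type equality at $a'$ is stronger than necessary---a homomorphism from the single branch into $\chase_{T_2}(U)$ suffices, which is why no special ``design'' of the unraveling is needed beyond the one used for Theorem~\ref{thm:homchar}.
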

   The proof of Theorem~\ref{thm:cqcharone} first concentrates on
   restricting the shape of the database, using unraveling and compactness
   as in the proof of Theorem~\ref{thm:homchar}.  In a second step,
   it is then not difficult to restrict also the shape of the CQ.


   \smallskip
   
 The following refinement of Theorem~\ref{thm:firstchar} is a straightforward consequence of Theorem~\ref{thm:cqcharone}.
 \begin{theorem}
   \label{thm:intermediatechar}
   Let $T_1$ and $T_2$ be sets of frontier-one TGDs, $\dbsig$ and
   $\querysig$ schemas, and $k$ the body width of $T_1$. Then
   the following are equivalent:
   \begin{enumerate}
   \item 
     $T_1 \models^{\text{CQ}}_{\dbsig,\querysig} T_2$;
     
   \item $\chase_{T_2}(D) \rightarrow^{\lim}_{\querysig} \chase_{T_1}(D)$,
   for all
   tree-like $\dbsig$-databases $D$ of width at most $k$.
   \end{enumerate}
 \end{theorem}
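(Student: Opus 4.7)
The plan is to deduce the theorem from two already-established results: Theorem~\ref{thm:firstchar}, which gives the unrestricted characterisation of CQ-conservativity in terms of $\rightarrow^{\lim}_{\querysig}$, and Theorem~\ref{thm:cqcharone}, which shows that for frontier-one TGDs it suffices to test CQ-conservativity on tree-like $\dbsig$-databases of width at most $k$ together with connected $\querysig$-CQs of arity $0$ or $1$. The new theorem arises by combining the right-hand side of the first with the restricted database class of the second.

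The direction from (1) to (2) is immediate: by Theorem~\ref{thm:firstchar}, $T_1 \models^{\text{CQ}}_{\dbsig,\querysig} T_2$ entails $\chase_{T_2}(D) \rightarrow^{\lim}_{\querysig} \chase_{T_1}(D)$ for every $\dbsig$-database $D$, and in particular for every tree-like one of width at most~$k$.

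For the converse (2)$\,\Rightarrow\,$(1), I would invoke Theorem~\ref{thm:cqcharone}, so that it suffices to show $q_{T_2}(D) \subseteq q_{T_1}(D)$ for every tree-like $\dbsig$-database $D$ of width at most~$k$ and every connected $\querysig$-CQ $q(\bar x)$ of arity $0$ or $1$. Fix such $D$ and $q$, and pick $\bar c \in q_{T_2}(D)$, witnessed by a $\querysig$-homomorphism $h$ from $q$ to $\chase_{T_2}(D)$ with $h(\bar x) = \bar c$. Set $n = |\var(q)|$ and let $I$ be the induced subinstance of $\chase_{T_2}(D)$ on the finite set $h(\var(q))$, so that $|\adom(I)| \leq n$. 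Assumption~(2) yields $\chase_{T_2}(D) \rightarrow^n_{\querysig} \chase_{T_1}(D)$, hence a database-preserving $\querysig$-homomorphism $g$ from $I$ to $\chase_{T_1}(D)$. The composition $g \circ h$ is then a $\querysig$-homomorphism from $q$ to $\chase_{T_1}(D)$ with $(g \circ h)(\bar x) = g(\bar c) = \bar c$, since $\bar c \subseteq \adom(D) \subseteq \Cbf$ and $g$ is the identity on $\Cbf$; hence $\bar c \in q_{T_1}(D)$.

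No real obstacle arises; the only routine check is that $\bar c \subseteq \adom(I)$, so that $g$ is indeed defined on $\bar c$. This holds because every answer variable occurs in some atom of $q$ and is therefore mapped by $h$ into a fact of $I$. The theorem is thus a straightforward consequence of Theorems~\ref{thm:firstchar} and~\ref{thm:cqcharone}.
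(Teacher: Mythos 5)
Your proof is correct and follows exactly the route the paper intends: the paper states the theorem as ``a straightforward consequence of Theorem~\ref{thm:cqcharone}'', and your argument supplies precisely the routine details — direction $(1)\Rightarrow(2)$ via Theorem~\ref{thm:firstchar}, and $(2)\Rightarrow(1)$ by reducing to connected CQs of arity $0$ or $1$ over tree-like databases via Theorem~\ref{thm:cqcharone} and then using $\rightarrow^{n}_{\querysig}$ with $n=|\var(q)|$ on the induced subinstance carrying the image of the witnessing homomorphism. The check that $g$ is defined on $\bar c$ and fixes it (being database-preserving) is handled correctly.
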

 Although Theorem~\ref{thm:intermediatechar} looks very similar to
 Theorem~\ref{thm:homchar}, it does not directly suggest a decision
 procedure. In particular, it is not clear how tree automata can deal
 with homomorphism limits. We next work towards a more operative
 characterization that pushes the use of homomorphism limits to parts
 of the chase that are $\Sigma_Q$-disconnected from the database and
 regular in shape. As we shall see, this allows us to get to grips
 with homomorphism limits.

 \smallskip

 For a database $D$, with $\chase_{T}(D)|^\con_{\Sigma}$ we
 denote the union of all maximally $\Sigma$-connected components of
 $\chase_{T}(D)$ that contain at least one constant from
 $\adom(D)$.
 \begin{restatable}{theorem}{thmcharsecond}
   \label{thm:charsecond}
   Let $T_1$ and $T_2$ be sets of frontier-one TGDs, $\dbsig$ and
   $\querysig$ schemas, and $k$ the body width of $T_1$.  Then
   $T_1 \models^{\text{CQ}}_{\dbsig,\querysig} T_2$ iff for all
   tree-like $\dbsig$-databases $D$ of width at most $k$, the
   following holds:
   \begin{enumerate}
     \item $\chase_{T_2}(D)|^\con_{\querysig} \rightarrow_{\querysig}
       \chase_{T_1}(D)$;

     \item for all maximally $\querysig$-connected components $I$ of
       $\chase_{T_2}(D) \setminus \chase_{T_2}(D)|^\con_{\querysig}$, one
       of the following holds:
       \begin{enumerate}

	 \item $I \rightarrow_{\querysig} \chase_{T_1}(D)$;

	 \item $I \rightarrow^{\lim}_{\querysig}
           \chase_{T_1}(D)|^\downarrow_c$ 
           for
	   some $c \in \adom(D)$. 

       \end{enumerate}

   \end{enumerate} \end{restatable}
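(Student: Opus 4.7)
The plan is to derive Theorem~\ref{thm:charsecond} from Theorem~\ref{thm:intermediatechar}, which already characterizes $T_1 \models^{\text{CQ}}_{\dbsig,\querysig} T_2$ as the existence of a database-preserving $\querysig$-homomorphism limit from $\chase_{T_2}(D)$ to $\chase_{T_1}(D)$ for every tree-like $\dbsig$-database $D$ of width at most $k$. I fix such a $D$ and show that $\chase_{T_2}(D) \rightarrow^{\lim}_{\querysig} \chase_{T_1}(D)$ is equivalent to the conjunction of Conditions~1 and~2.

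For the direction from Conditions~1 and~2 to the limit homomorphism, I proceed by decomposition. Given a finite subinstance $J \subseteq \chase_{T_2}(D)$, each maximally $\querysig$-connected component of $J$ lies entirely inside either $\chase_{T_2}(D)|^\con_{\querysig}$ or a single maximally $\querysig$-connected component $I$ of its complement. For the first kind, Condition~1 supplies a database-preserving $\querysig$-homomorphism into $\chase_{T_1}(D)$ and I restrict. For the second kind, Condition~2(a) gives such a homomorphism outright, while Condition~2(b) together with the definition of $\rightarrow^{\lim}_{\querysig}$ yields a $\querysig$-homomorphism of the finite piece $J\cap I$ into some $\chase_{T_1}(D)|^\downarrow_c \subseteq \chase_{T_1}(D)$; database preservation is vacuous here since $I$ contains no constants from $\adom(D)$. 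Glueing the pieces witnesses the required finite $n$-homomorphism.

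The converse is the hard direction. For Condition~1, I upgrade the limit to a full homomorphism by a König-style compactness argument. The instance $\chase_{T_2}(D)|^\con_{\querysig}$ is countable, and because homomorphisms are database-preserving and $\chase_{T_1}(D)$ has finite degree, each element at $\querysig$-distance $d$ from $\adom(D)$ is forced into a finite $\querysig$-ball in $\chase_{T_1}(D)$; I then iterate the skipping-homomorphism trick from Lemma~\ref{lem:skippinghomsgeneral}, now against finite ball targets rather than a finite total target, to stitch together compatible finite homomorphisms. For Condition~2, I fix a maximally $\querysig$-connected component $I$ of the complement and enumerate finite $\querysig$-connected approximations $I_1 \subseteq I_2 \subseteq \cdots$ with $\bigcup_n I_n = I$. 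The limit hypothesis provides $\querysig$-homomorphisms $h_n: I_n \to \chase_{T_1}(D)$, and $\querysig$-connectedness forces $h_n(I_n)$ to sit inside a single maximally $\querysig$-connected component of $\chase_{T_1}(D)$. I split on whether infinitely many $h_n$ land in $\chase_{T_1}(D)|^\con_{\querysig}$ or in disconnected components: in the first case, the finite-degree König argument again yields $I \rightarrow_{\querysig} \chase_{T_1}(D)$, giving~2(a); in the second case, each target component is, because $T_1$ is frontier-one, embedded in some $\chase_{T_1}(D)|^\downarrow_c$, and extracting a fixed $c$ by pigeonhole plus a final skipping pass delivers the limit homomorphism in~2(b).

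The main obstacle is the second case of Condition~2: the $h_n$ may land in many different $\chase_{T_1}(D)|^\downarrow_c$, so I must pigeonhole both on the choice of $c$ and on the choice of target component inside that $\chase_{T_1}(D)|^\downarrow_c$ while simultaneously preserving compatibility along the chain $I_1 \subseteq I_2 \subseteq \cdots$. The finiteness of $\adom(D)$ handles the choice of $c$; the regular structure of $\chase_{T_1}(D)|^\downarrow_c$ for frontier-one $T_1$, together with the bounded $T_1$-type information stored in each node, is what controls the second pigeonhole and makes the repeated skipping argument terminate consistently. Verifying that the bookkeeping goes through at the level of $\querysig$-connectivity rather than full connectivity is where I expect most of the care to be required.
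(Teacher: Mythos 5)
Your easy direction (Conditions~1 and~2 imply the limit homomorphism, hence CQ-conservativity via Theorem~\ref{thm:intermediatechar}) and your treatment of Condition~1 in the hard direction are sound and essentially match the paper: Condition~1 is obtained by anchoring each component of $\chase_{T_2}(D)|^\con_{\querysig}$ at a database constant, which is fixed by database-preservation, and then applying the finite-degree skipping argument (the paper's Lemma~\ref{lem:skippinghoms}).

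The gap is in Condition~2 of the hard direction: your case split is on the wrong property. You split on whether the images $h_n(I_n)$ land in $\chase_{T_1}(D)|^\con_{\querysig}$ or in a $\querysig$-disconnected component, and in the first case you conclude 2(a) by the ``finite-degree K\H{o}nig argument''. But $I$ contains no constants from $\adom(D)$, so there is no anchor fixed by database-preservation, and $\chase_{T_1}(D)|^\con_{\querysig}$ is in general infinite, so the images can drift to infinity inside it. Concretely: let $\Sigma_Q=\{R\}$, let $T_1$ generate from $D=\{A(c)\}$ an infinite forward $R$-path starting at $c$ (this path lies in $\chase_{T_1}(D)|^\con_{\querysig}$), and let $T_2$ generate an infinite backward $R$-path of nulls that is $\Sigma_Q$-disconnected from $c$. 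Every finite piece of the backward path maps into the forward path, so all $h_n$ land in $\chase_{T_1}(D)|^\con_{\querysig}$ and your first case applies, yet $I\not\rightarrow_{\querysig}\chase_{T_1}(D)$; only 2(b) holds. The correct dichotomy, which the paper uses, fixes an anchor $c_0\in\adom(I)$ and asks whether some $d_0$ satisfies $h_n(c_0)=d_0$ for infinitely many $n$. If yes, skip homomorphisms to make this hold for all $n$ and apply Lemma~\ref{lem:skippinghoms} to get 2(a). If no, pigeonhole on $\src(h_n(c_0))$ over the finite set $\adom(D)$ to fix some $a_0$, and exploit the fact that $h_n(c_0)$ escapes to infinity inside $J=\chase_{T_1}(D)|^\downarrow_{a_0}$: for each radius $i$, choose a homomorphism whose image of the $i$-ball around $c_0$ lies entirely inside $J$ (the image is $\querysig$-connected, has radius at most $i$ around $h_n(c_0)$, and can only leave $J$ through $a_0$), which yields $I\rightarrow^{\lim}_{\querysig}J$, i.e.\ 2(b). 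Your second case is close to this, but without the stabilization-versus-escape split the argument does not go through; also, no pigeonhole on target components inside $\chase_{T_1}(D)|^\downarrow_c$ or on $T_1$-types is needed here --- that machinery belongs to Lemma~\ref{lem:limtoinfinite}, not to this theorem.
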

The subsequent example illustrates the theorem.
\begin{example}\label{ex:charsnd}
  Consider the sets of TGDs $T_1,T_2$ and the schemas $\dbsig,
  \querysig$ from Example~\ref{ex:basic}. Recall that $T_2$ is
  $\dbsig,\querysig$-CQ-conservative over $T_1$. Since $\dbsig$
  contains only the unary relation $A$, we may w.l.o.g.\ concentrate
  on the $\dbsig$-database $D=\{A(c)\}$. Clearly,
  Point~1 of
  Theorem~\ref{thm:charsecond} is satisfied.

  For Point~2, observe that $\mn{chase}_{T_2}(D) \setminus
  \mn{chase}_{T_2}(D)|^\con_{\querysig}$
  contains only one maximally $\querysig$-connected component,
  which is of the form
  \[I = \{ R(c_1,c_0), R(c_2,c_1),\dots \}.\]
  Moreover, 
  $I\to_{\querysig}^\lim \chase_{T_1}(D)|^\downarrow_c$  and thus Point~2(b)
  is satisfied. Point~2(a) is not satisfied since
  $I\not\to_{\querysig}\chase_{T_1}(D)$.
  %
\end{example}
%


The easier `if' direction of the proof of Theorem~\ref{thm:charsecond}
relies on the fact that, as per Theorem~\ref{thm:cqcharone}, we can
concentrate on connected CQs of arity~0 or~1.  The interesting
direction is `only if', distinguishing several cases and using several
`skipping homomorphism' arguments (see 
Lemma~\ref{lem:skippinghomsgeneral}).

  \smallskip

  Points~1 and~2(a) of Theorem~\ref{thm:charsecond} are amenable to
  the same tree automata techniques that we have used for
  hom-conservativity.  Point~2(b) achieves the desired expulsion of
  homomorphism limits, away from the database $D$ to instances of
  regular shape. In fact, the number of possible $T_1$-types is
  independent of $D$ and thus by Lemma~\ref{lem:typedet} the number of
  instances
  $\chase_{T_1}(D)|^\downarrow_c$
  in Point 2(b) that have
  to be considered is also independent of~$D$. Moreover, these
  instances are purely chase-generated and thus regular in shape. The
  same is true for the instances $I$ in Point~2. We next take a closer
  look at the latter.

  \smallskip 
  
  Let $T$ be a set of frontier-one TGDs.
    A \emph{$T$-labeled
    database} is a pair $A=(D,\mu)$ with $D$ a database and
  $\mu: \adom(D) \rightarrow \TP(T)$. 
  We associate $A$ with a database $D_A$ that is obtained by starting
  with $D$ and then adding, for each $c \in \adom(D)$, a disjoint copy
  $D'$ of the type $\mu(c)$ viewed as a database and glueing the copy of $x$ in
  $D'$ to $c$ in $D_A$. We use $T$-labeled databases
  to describe fragments of chase-generated instances, and thus
  assume that $D_A$ contains only null constants.
  We also associate $A$ with a Boolean CQ $q_A$,
  obtained by viewing $D_A$ as such a CQ. With
  $\chase_T(D_A)|^\mn{con}_\Sigma$, we denote the union of all maximally
  $\Sigma$-connected components of $\chase_{T}(D_A)$ that contain
  a constant from $\mn{adom}(D)$.

%
%
%
%
%

  \smallskip 

    A \emph{labeled $\Sigma$-head fragment of $T_2$} is a
    $T_2$-labeled database $(F,\mu)$ such that $F$ can be obtained by
    choosing a TGD $\phi(x,\bar y) \rightarrow \exists \bar z \,
    \psi(x,\bar z) \in T_2$ and taking a maximally $\Sigma$-connected
    component of $\psi$ that does not contain the frontier variable.
    The following lemma follows from an easy analysis of the chase
    procedure.  Proof details are omitted.
\begin{lemma} \label{lem:finiteI}
  Let $I$ be a maximally $\Sigma_Q$-connected component of
  $\chase_{T_2}(D) \setminus \chase_{T_2}(D)|^\con_{\querysig}$, as in
  Point~2 of Theorem~\ref{thm:charsecond}.  Then for some labeled
  $\querysig$-head fragment $A=(F,\mu)$ of $T_2$,
  %
  \begin{enumerate}

    \item $\chase_{T_2}(D) \models q_A$, and 
    
    \item $I$ is homomorphically equivalent to
      $\chase_T(D_A)|^\mn{con}_\querysig$.


  \end{enumerate}
\end{lemma}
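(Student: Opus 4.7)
The plan is to trace the chase of $D$ back to the earliest atoms that contribute to $I$, extract the labeled head fragment $(F,\mu)$ from that location using the frontier-one property, and then invoke Lemma~\ref{lem:typedet} subtree-by-subtree for the homomorphic equivalence in Point~2.

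Because $T_2$ is frontier-one, each null in $\chase_{T_2}(D)$ is introduced by a unique chase step whose frontier image is a single constant $c$, and the head of such a step produces atoms only among $c$ and the fresh nulls. Crucially, no later chase step can introduce a $\querysig$-atom between two pre-existing nulls, since heads of frontier-one TGDs involve only the frontier and new existentials. Hence, once a chase step creates two $\querysig$-disconnected components in its head that avoid the frontier, these components can never be merged into one $\querysig$-component of $\chase_{T_2}(D)$. Using this, I would single out, for the given $I$, the earliest chase step that contributes atoms to $I$: it applies some TGD $\phi(x,\bar y)\to \exists\bar z\,\psi(x,\bar z)\in T_2$ with frontier image $c\notin \adom(I)$, and its atoms lying in $I$ form a maximally $\querysig$-connected component of $\psi$ not containing $x$. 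I take this component as $F$ and set $\mu(z):=\tp_{T_2}(\chase_{T_2}(D),z)$ for every $z\in\adom(F)$; then $(F,\mu)$ is a labeled $\querysig$-head fragment of $T_2$ by construction.

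For Point~1 I build a homomorphism $h\colon D_A\to\chase_{T_2}(D)$ by mapping $F$ identically---its atoms were produced by the chase step identified above and therefore already belong to $\chase_{T_2}(D)$---and by mapping each glued-in copy of $\mu(z)$ into $\chase_{T_2}(D)$ using the definition of type, which guarantees that every CQ in $\mu(z)$ is satisfied at $z$ in $\chase_{T_2}(D)$. This yields $\chase_{T_2}(D)\models q_A$. For Point~2 I exhibit $\querysig$-homomorphisms in both directions between $\chase_{T_2}(D_A)|^{\mn{con}}_{\querysig}$ and $I$. In one direction, I extend $h$ to all of $\chase_{T_2}(D_A)$ using that $\chase_{T_2}(D)$ is a model of $T_2$, and observe that the image of any $\querysig$-component of $\chase_{T_2}(D_A)$ meeting $\adom(D_A)$ must be $\querysig$-connected to $F$ in $\chase_{T_2}(D)$, hence lies in $I$. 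In the other direction, I induct on distance from $F$ in the chase forest: every element of $I$ either lies in $F$ or lives in the subtree of $\chase_{T_2}(D)$ generated below some $z\in\adom(F)$, and by Lemma~\ref{lem:typedet} this subtree is homomorphically equivalent to the corresponding subtree of $\chase_{T_2}(D_A)$ below the copy of $z$ there.

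The main obstacle I anticipate lies in Point~2, specifically controlling which $\querysig$-components of $\chase_{T_2}(D_A)$ intersect $\adom(D_A)$ and hence belong to $\chase_{T_2}(D_A)|^{\mn{con}}_{\querysig}$: the atoms encoded by the types in $\mu$ may use relations outside $\querysig$, so Gaifman-connectivity in $D_A$ does not mirror its $\querysig$-connectivity. I would handle this by always tracking each relevant $\querysig$-component back to an ancestor in $\adom(F)$ via the chase forest of $\chase_{T_2}(D)$ and then invoking Lemma~\ref{lem:typedet} to transfer containment in $I$ to the corresponding component in $\chase_{T_2}(D_A)|^{\mn{con}}_{\querysig}$.
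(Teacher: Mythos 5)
The paper gives no proof of this lemma (it is declared to ``follow from an easy analysis of the chase procedure''), so your argument has to stand on its own. Your general plan is sound: anchor a head fragment at a particular chase step, label its constants with their $T_2$-types in $\chase_{T_2}(D)$, and obtain the homomorphic equivalence in Point~2 from Lemma~\ref{lem:typedet}; your key structural observation, that a frontier-one chase step never creates an atom joining two distinct pre-existing constants, is also correct. The gap is in where you anchor the fragment.

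You pick ``the earliest chase step that contributes atoms to $I$'' and assert that its frontier image lies outside $\adom(I)$ and that its atoms inside $I$ instantiate a maximally $\querysig$-connected head component avoiding the frontier variable. This fails already on the paper's own Example~\ref{ex:basic}/Example~\ref{ex:charsnd}: for $D=\{A(c)\}$, $\querysig=\{R\}$ and $I=\{R(c_1,c_0),R(c_2,c_1),\dots\}$, the null $c_0$ is created by $A(x)\to\exists y\,S(x,y),B(y)$, whose head contributes no $\querysig$-atom to $I$ at all; the earliest step that does contribute an atom of $I$ is the application of $B(x)\to\exists y\,R(y,x),B(y)$ \emph{at} $c_0$, whose frontier image $c_0$ lies in $\adom(I)$ and whose only $\querysig$-connected head component, $\{R(y,x)\}$, contains the frontier variable --- so your recipe produces no admissible head fragment in precisely the situation the lemma is meant to handle. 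The hidden faulty inference is that membership of the frontier image in $\adom(I)$ would force some earlier step to have contributed an atom of $I$; it does not, since the atom witnessing that membership may be created at that very step. The correct anchor is the chase step that \emph{creates} the shallowest null of $\adom(I)$ in the chase forest (well defined, since by the block-tree structure you describe all minimal-depth elements of $\adom(I)$ are fresh nulls of a single step, and the frontier image of that step, being strictly shallower, cannot lie in $\adom(I)$); that step may contribute nothing to $I$, and $F$ is then the head component containing the corresponding existential variables --- possibly one without any $\querysig$-atoms, whose entire contribution to $I$ is regenerated from the types in $\mu$. With the anchor repaired, your arguments for Points~1 and~2 go through essentially as you sketch them.
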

Clearly, the number of labeled $\Sigma$-head fragments of $T_2$ is
independent of $D$, just like the number of $T_1$-types. It thus
follows from Lemma~\ref{lem:finiteI} and what was said before it that
the number of checks
`$I \rightarrow^{\lim}_{\querysig} \chase_{T_1}(D)|^\downarrow_c$' in
Point~2(b) of Theorem~\ref{thm:charsecond} does not depend on $D$:
there is at most one such check for every labeled $\Sigma$-head
fragment of $T_2$ and every $T_1$-type. We can do all these checks in
a preprocessing step, before starting to build {\ata}s 
for
CQ-conservativity that implement the characterization provided by
Theorem~\ref{thm:charsecond}. Whenever the \ata needs
to carry out a check
`$I \rightarrow^{\lim}_{\querysig} \chase_{T_1}(D)|^\downarrow_c$' to
verify Point~2(b), we
can simply look up the precomputed result and let the \ata reject
immediately if it is negative.
%
Thus, the automata are completely freed
from dealing with homomorphism limits. 

\subsection{Precomputation}

It remains to show how to actually achieve the precomputation of the tests
`$I \rightarrow^{\lim}_{\querysig} \chase_{T_1}(D)|^\downarrow_c$' in
Point~2(b) of Theorem~\ref{thm:charsecond}. This is where we finally
deal with homomorphism limits.
%
The following theorem makes precise the problem that we actually have to
decide. 
%
\begin{theorem} \label{thm:decidelimit} Given two sets of frontier-one
  TGDs $T_1$ and~$T_2$, a schema $\Sigma$, 
  a labeled $\Sigma$-head fragment $A=(D,\mu)$ for $T_2$,
  and a $T_1$-type $\widehat t$, it can be
  decided in time triple exponential in $||T_1||+||T_2||$ 
  whether
  $\chase_{T_2}(D_A)|^\con_\Sigma \rightarrow^{\lim}_\Sigma
  \chase_{T_1}(\widehat t)$. 
\end{theorem}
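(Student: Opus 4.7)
The plan is to exploit the regularity of both structures. Write $I := \chase_{T_2}(D_A)|^\con_\Sigma$ and $J := \chase_{T_1}(\widehat t)$. By Lemma~\ref{lem:typedet} and the frontier-one property, the subtree $\chase_{T_i}(\cdot)|^\downarrow_c$ generated below every null $c$ is determined up to homomorphic equivalence by $\tp_{T_i}(\cdot,c)$, and $|\TP(T_i)|$ is double-exponential in $||T_i||$. Consequently both $J$ and the chase-generated trees hanging off the finite seed $D_A$ in $I$ admit a regular description in terms of a type graph of double-exponential size, and they have finite degree because the chase is not oblivious. In particular, every finite subinstance of $I$ is contained in some $n$-truncation of $I$, so $I \rightarrow^{\lim}_\Sigma J$ iff every such $n$-truncation admits a $\Sigma$-homomorphism into $J$.

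The approach is then to build two alternating tree automata on a suitable tree encoding of finite $\Sigma$-databases of bounded degree. The first automaton, $\mathcal{A}$, recognises exactly the finite truncations of $I$: it walks the type graph of $I$ in lockstep with the input, starting from the (bounded-size) fragment $D_A$ and descending along type-graph successors for the chase-generated subtrees. The second automaton, $\mathcal{B}$, reads a finite tree-like $\Sigma$-database $T$ and accepts iff $T$ admits a $\Sigma$-homomorphism into $J$; it guesses at every node of $T$ a $T_1$-type representing the image in $J$, and checks at every edge of $T$ that the guessed types of its endpoints are locally compatible in the type graph of $J$ --- namely, that the two endpoints are either collapsed to a single node of $J$ (possible only if $J$ contains a corresponding loop) or mapped to types joined by a chase-generated atom matching the $\Sigma$-atom of the input. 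Both automata have state sets of size polynomial in $|\TP(T_1)|+|\TP(T_2)|$, hence double-exponential in $||T_1||+||T_2||$.

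Then $I \rightarrow^{\lim}_\Sigma J$ reduces to the language-inclusion check $L(\mathcal{A}) \subseteq L(\mathcal{B})$. By complementing $\mathcal{B}$ and intersecting with $\mathcal{A}$, this becomes the emptiness of a 2ATA whose state space is polynomial in those of the originals, so it is decidable in time exponential in the sizes of $\mathcal{A}$ and $\mathcal{B}$. Combined with the double-exponential bound on the automata, this yields the claimed triple-exponential upper bound.

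The main obstacle is the correctness of $\mathcal{B}$: unlike for full homomorphisms between tree-like structures, the image of a node of $T$ in $J$ is not locally determined by the image of its parent, two siblings of $T$ can be mapped to completely unrelated parts of $J$, and the hom may ``walk backwards'' along chase edges or identify large portions of $T$. The key observation that rescues the construction is that in the tree-like $J$ any two elements joined by a fact are adjacent in $J$'s Gaifman graph, so the compatibility of guessed types across each edge of $T$ is a purely local condition on $J$'s type graph, which an alternating automaton can verify. Soundness of $\mathcal{B}$ follows by composing the images guessed at each node into a homomorphism; completeness uses the fact that any homomorphism of a finite tree-like database into $J$ can be read off as such a per-node $T_1$-type labelling.
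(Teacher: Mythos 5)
Your reduction of the homomorphism limit to the statement ``every finite truncation of $I=\chase_{T_2}(D_A)|^\con_\Sigma$ admits a $\Sigma$-homomorphism into $J=\chase_{T_1}(\widehat t)$'' is fine, and the automaton $\mathcal{A}$ generating the truncations is unproblematic. The gap is in $\mathcal{B}$: you need an automaton deciding, for an arbitrary finite tree-like input $T$, whether $T\rightarrow_\Sigma J$, and you propose to do this by guessing one $T_1$-type per node of $T$ and checking an existential compatibility condition per fact. This is not sound. A $T_1$-type records only which CQs from $\bodyCQ(T_1)$ hold at an element; it determines the subtree below that element only up to homomorphic equivalence (Lemma~\ref{lem:typedet}) and says essentially nothing about which arbitrary $\Sigma$-patterns---in particular patterns built from the heads of $T_2$, whose relation symbols need not occur in any body of $T_1$---are realized at or around it. Two elements of $J$ with the same type therefore need not support the same incident facts, so a node of $T$ occurring in several facts may satisfy each per-edge condition via \emph{different} witnesses in $J$, with no single element realizing all of them. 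The claimed ``composition of the guessed images into a homomorphism'' has no basis: what is guessed are types, not elements, and the existential witnesses at a shared node need not coincide. Repairing this would require a per-node abstraction of $J$ fine enough that ``the part of $T$ hanging at $c$ maps into $J$ with $c\mapsto d$'' depends only on the abstraction of $d$, uniformly over all (unboundedly large) truncations; $T_1$-types are far too coarse, and proving that some finite refinement suffices is essentially the whole difficulty of the theorem, which your proposal thus leaves unaddressed.

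The paper avoids the ``homomorphism into the infinite chase'' direction entirely. By Lemma~\ref{lem:limtoinfinite}, $I\rightarrow^{\lim}_\Sigma J$ is equivalent to the existence of an \emph{ordinary} homomorphism from $I$ into some member of the relaxed class $\Rmc(T_1,\widehat t)$ of $\widehat t$-proper $T_1$-labeled instance trees, which may be rootless (e.g.\ contain an infinite backward path) and need not themselves map into $J$. The \ata then takes such a candidate target as its \emph{input}, checks properness locally, and verifies a homomorphism \emph{from} the regular source $\chase_{T_2}(D_A)|^\con_\Sigma$ \emph{into} the input---the direction that {\ata}s handle well (as in $\Amf_4$). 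Note also that for a single finite $T$ one has $T\rightarrow J$ iff $\widehat t,T_1\models q_T$ by universality of the chase, so each individual truncation check is mere CQ entailment; the entire difficulty is making this uniform over all truncations, and that is exactly what the detour through $\Rmc(T_1,\widehat t)$ accomplishes.
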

We invite the reader to compare the decision problem formulated in
Theorem~\ref{thm:decidelimit} with Point~2(b) of
Theorem~\ref{thm:charsecond} in the light of Lemmas~\ref{lem:typedet}
and~\ref{lem:finiteI}.  The decision procedure used to prove
Theorem~\ref{thm:decidelimit} is again based on tree automata. To
enable their use, however, we first rephrase the
decision problem in Theorem~\ref{thm:decidelimit} in a way that
replaces homomorphism limits with unbounded homomorphisms.

Let $T_1$, $T_2$, $\Sigma$, $A=(D,\mu)$, and $\widehat t$ be as in
Theorem~\ref{thm:decidelimit}.
%
Recall that $A$ is associated with a database $D_A$ and a Boolean CQ
$q_A$. Here, we additionally use unary CQs~$q^c_A$, for every
$c \in \mn{adom}(D)$, which are defined exactly like $q_A$ except that
$c$ is now the answer variable.

The main idea for proving Theorem~\ref{thm:decidelimit} is to replace
homomorphism limits into $\chase_{T_1}(\widehat t)$ with homomorphisms
into a class of instances $\Rmc(T_1,\widehat t)$ whose disjoint union
should be viewed as a 
relaxation of $\chase_{T_1}(\widehat t)$. In particular, this
relaxation admits a homomorphism limit to $\chase_{T_1}(\widehat t)$,
but not a homomorphism. Let us make this precise.

We again use instance trees. This time, however, they are
not based on directed trees, but on \emph{directed pseudo-trees}, that
is, finite or infinite directed graphs $G=(V,E)$ such that every node
$v \in V$ has at most one incoming edge and $G$ is connected
and contains no (directed) cycle. Note that infinite directed
pseudo-trees need not have a root. For example, a two-way infinite
path qualifies as a directed pseudo-tree.

A \emph{$T_1$-labeled instance tree} has the form $\Tmc=(V,E,B,\mu)$
with $\Tmc'=(V,E,B)$ an instance tree (based on a directed
pseudo-tree) and
$\mu:\adom(I_{\Tmc'})\to \TP(T_1)$ a function that assigns a
$T_1$-type to every element in~$I_{\Tmc'}$. For $v\in V$, we use
$\mu_v$ to denote the restriction of $\mu$ to $\adom(B(v))$. Moreover,
we set $I_{\Tmc}=I_{\Tmc'}$.  We say that \Tmc is
\emph{$\widehat t$-proper} if the following conditions are satisfied:
\begin{enumerate}

  \item for every $v\in V$, one of the following holds:
    \begin{enumerate}

    \item $v$ is the root of $(V,E)$, $B(v)$ has the form
      $\{\mn{true}(c_0) \}$, and $\mu(c_0)=\widehat t$;

\item there is a TGD $\vartheta$ in $T_1$
    such that $B(v)$ is isomorphic to the 
    head of $\vartheta$ and $\widehat t,T_1 \models q_{(B(v),\mu_v)}$;

  \end{enumerate}
  
  
\item for every $(u,v)\in E$ such that $B(u) \cap B(v)$ contains a
  (single) constant $c$, we have
  $\mu_u(c),T_1\models q_{(B(v),\mu_v)}^{c}(x)$. That is: the constant
  $x$ from the type $\mu_u(c)$ viewed as a database is an answer
  to unary CQ  $q_{(B(v),\mu_v)}^{c}$ w.r.t.\ $T_1$.

\end{enumerate}
%
The announced class $\Rmc(T_1,\widehat t)$ consists of all instances
$I$ such that $I=I_\Tmc$ for some $\widehat t$-proper $T_1$-labeled
instance tree~\Tmc. It is easy to see that
$\chase_{T_1}(\widehat t) \in \Rmc(T_1,\widehat t)$ as there is a
$\widehat t$-proper $T_1$-labeled instance tree \Tmc such
that $I_\Tmc=\chase_{T_1}(\widehat
t)$. 
However, there are also instances $I \in \Rmc(T_1,\widehat t)$ that do
not admit a homomorphism to $\chase(\widehat t,T_1)$. The following
example illustrates their importance.
%
\begin{example} 
  Consider $T_1,T_2,\dbsig,\querysig$ from
  Example~\ref{ex:basic} and $I$ and $\widehat t$ from
  Example~\ref{ex:charsnd}. Then $I \not\rightarrow \mn{chase}_{T_2}(\widehat t)$.  However, we find a $\widehat t$-proper $T_1$-labeled instance tree
  $\Tmc=(V,E,B,\mu)$ such that $I \rightarrow I_\Tmc$.

  We may construct \Tmc by starting with a
  single node~$v_0$, $B(v_0)=\{R(c_1,c_0)\}$, and
  $$\mu(c_0)=\mu(c_1)=\{ B(x), \exists y \, R(x,y), B(y) \}.$$
  Then, repeatedly add a predecessor $v_{i+1}$ of $v_i$,
  with $B(v_{i+1})=\{R(c_{i+1},c_i)\}$ and
  $$\mu(c_{i+1})=\{ B(x), \exists y \, R(x,y), B(y) \},$$ ad
  infinitum. The resulting tree \Tmc is $\widehat t$-proper and satisfies
  $I_\Tmc=I$. Note that it does not have a root.
\end{example}

\smallskip

The next lemma is the core ingredient to the proof of
Theorem~\ref{thm:decidelimit}. Informally, it states that when
replacing $\chase_{T_1}(\widehat t)$ with instances from
$\Rmc(T_1,\widehat t)$, we may also replace homomorphism limits with
homomorphisms.
\begin{restatable}{lemma}{lemlimtoinfinite}
  \label{lem:limtoinfinite}
  Let $I$ be a countable $\Sigma$-connected instance 
  such that
  $\mn{adom}(I)$ contains only nulls.  Then
  $I \rightarrow^{\lim}_\Sigma \chase_{T_1}(\widehat t)$ iff there is
  an $\widehat{I} \in \Rmc(T_1,\widehat t)$ with \mbox{$I\rightarrow
  \widehat I$}.
\end{restatable}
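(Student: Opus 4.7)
We prove both directions; the ``only if'' direction requires the bulk of the work.

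For the ``if'' direction, fix a homomorphism $\pi : I \to \widehat I$ with $\widehat I = I_\Tmc$ for some $\widehat t$-proper $\Tmc$. Given a finite $J \subseteq I$, the image $\pi(J)$ lives in the bags of finitely many nodes of $\Tmc$, joined by a unique finite subtree $\Tmc_0$. We build a $\Sigma$-homomorphism $\eta : I_{\Tmc_0} \to \chase_{T_1}(\widehat t)$ by induction on $\Tmc_0$ oriented toward its topmost node $v^*$: if $v^*$ is the root of $\Tmc$, we map the distinguished constant $c_0$ to the distinguished element of $\chase_{T_1}(\widehat t)$; otherwise condition~1(b) of $\widehat t$-properness provides a homomorphism from $D_{(B(v^*),\mu_{v^*})}$, and thus from $B(v^*)$, into $\chase_{T_1}(\widehat t)$. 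To extend $\eta$ from a node $u$ to a child $v$ sharing a constant $c$, edge condition~2 produces a homomorphism $D_{(B(v),\mu_v)} \to \chase_{T_1}(\mu(c))$ that sends $c \mapsto x$; we transport it into $\chase_{T_1}(\widehat t)|^\downarrow_{\eta(c)}$ via Lemma~\ref{lem:typedet}, using that $\mu(c) \subseteq \tp_{T_1}(\chase_{T_1}(\widehat t),\eta(c))$, which holds because the homomorphism at $u$ already sent the $\mu(c)$-atoms attached to $c$ into $\chase_{T_1}(\widehat t)$. Composing $\eta$ with $\pi|_J$ witnesses $J \to_\Sigma \chase_{T_1}(\widehat t)$.

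For the ``only if'' direction, enumerate $\adom(I) = \{a_1,a_2,\dots\}$ and pick an ascending sequence $J_n$ of finite subinstances of $I$ with union $I$ and $\{a_1,\dots,a_n\} \subseteq \adom(J_n)$. By hypothesis there are homomorphisms $h_n : J_n \to \chase_{T_1}(\widehat t)$. Since $\TP(T_1)$ is finite, a diagonal application of the skipping technique from Lemma~\ref{lem:skippinghomsgeneral}, carried out on types rather than on images, extracts a subsequence along which $\tp_{T_1}(\chase_{T_1}(\widehat t),h_n(a_i))$ stabilizes for each fixed $i$. The stable values define a function $\mu : \adom(I) \to \TP(T_1)$ such that every finite $S \subseteq \adom(I)$ is realized by infinitely many of the chosen $h_n$ with the correct types. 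We then construct $\Tmc$ so that $I_\Tmc$ contains $I$: for each atom $R(\bar a) \in I$ we choose a bag $B(v)$ isomorphic to the head of the TGD that creates $R(h_n(\bar a))$ in $\chase_{T_1}(\widehat t)$ for some sufficiently late $n$, instantiate its frontier and existentials by $\bar a$ and fresh nulls, label all constants via $\mu$, and stitch bags together along shared constants according to the chase-step genealogy in $\chase_{T_1}(\widehat t)$. Conditions~1(b) and~2 of $\widehat t$-properness are then witnessed directly by the chosen $h_n$'s in $\chase_{T_1}(\widehat t)$, and the identity on $\adom(I)$ is a homomorphism $I \to I_\Tmc$.

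The main obstacle is the construction of $\Tmc$ in the ``only if'' direction. The bags must be arranged into a directed pseudo-tree such that (a) $I$ embeds into $I_\Tmc$, (b) overlapping bags share at most one constant, and (c) $\widehat t$-properness holds globally. Since $I$ need not itself be tree-shaped, we must unfold it along its $\Sigma$-connected Gaifman structure guided by the chase-step skeleton of $\chase_{T_1}(\widehat t)$; the assumption that $\adom(I)$ consists only of nulls is critical, as it lets us identify constants across bags at frontier positions without breaking database-preservation. The type-stabilization step is finite in nature thanks to the bounded size of $\TP(T_1)$, but gluing the resulting type-labeled bags together coherently, possibly into a rootless pseudo-tree, requires careful bookkeeping.
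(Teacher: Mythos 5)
Your ``if'' direction is essentially the paper's argument: restrict to the finite subtree of $\Tmc$ spanned by the image of a finite $J\subseteq I$, seed a homomorphism at the topmost node via Condition~1 of $\widehat t$-properness, and propagate along edges via Condition~2 (and, as the paper notes, via the fact that $\chase_{T_1}(\widehat t)$ is a model of $T_1$). One small omission: adjacent bags may share \emph{no} constant, in which case the propagation step has nothing to propagate and you must re-seed via Condition~1 at that child as well; this is easily repaired.

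The ``only if'' direction, however, has a genuine gap exactly where you yourself locate ``the main obstacle''. Stabilizing the types $\tp_{T_1}(\chase_{T_1}(\widehat t),h_n(a_i))$ is not enough information to build $\Tmc$: to stitch bags together you must know, for each pair of atoms of $I$ sharing a constant $c$, how the bags of $\chase_{T_1}(\widehat t)$ containing their $h_n$-images are related (same bag, predecessor, successor, or siblings), and this relative position can change with $n$. The paper therefore stabilizes, by skipping, the isomorphism type of the whole triple consisting of the bag, its type labeling, and the position of the image tuple inside the bag, and it maintains for each finite approximation $\Tmc_i$ an infinite family of embeddings $f_{i,j},\iota_{i,j}$ of $\Tmc_i$ into the canonical instance tree of $\chase_{T_1}(\widehat t)$ together with the invariant $h_j=\iota_{i,j}\circ g_i$ on $\adom(I_i)$; the case distinction on where the next bag lands (inside the range of $f_{i,k}$, or outside as predecessor, successor, or sibling of an existing bag) is precisely what forces the possibly rootless pseudo-tree. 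Your recipe of choosing, per atom, a bag ``for some sufficiently late $n$'' and stitching ``according to the chase-step genealogy'' is not coherent as stated: atoms sharing a constant would need compatible choices of $n$, no single $n$ serves all of the infinitely many atoms, and the genealogy itself varies with $n$. Without the synchronization invariant and the per-step skipping on bag \emph{positions} (not just on types), the construction does not assemble, so the core of the laborious direction is missing.
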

In the proof of Lemma~\ref{lem:limtoinfinite}, the laborious direction
is `only if', 
where one assumes that
$I \rightarrow^{\lim}_\Sigma \chase_{T_1}(\widehat t)$ and then uses
finite subinstances $J_1\subseteq J_2\subseteq \cdots$ of $I$ with
$I = \bigcup_{i\geq 1}J_i$ and homomorphisms $h_i$ from $J_i$ to
$\chase_{T_1}(\widehat t)$, $i \geq 1$, to identify the desired
instance $\widehat I \in \Rmc(T_1,\widehat t)$. This again involves several
`skipping homomorphisms' type of arguments.

Using Lemma~\ref{lem:limtoinfinite}, we give a decision procedure
based on {\ata}s that establishes Theorem~\ref{thm:decidelimit}. The
\ata accepts input trees that encode an instance
$I\in \Rmc(T_1,\widehat t)$ that admits a $\Sigma$-homomorphism from
$\chase_{T_2}(D_A)|^\con_{\Sigma}$. 

\section{Future Work}

It would be interesting to determine the exact complexity of
hom- and CQ-conservativity for frontier-one TGDs. We tend to
think 
that these problems are 3\ExpTime-complete. Note that in the
description logic \ELI, they are \TwoExpTime-complete \cite{DBLP:journals/jair/JungLMS20}.

It would also be interesting to study conservative extensions and
triviality for other classes of TGDs that have been proposed in
the literature. Of course, it would be of particular interest to
identify decidable cases. Classes for which undecidability does not
follow from the results in this paper include acyclic and sticky
TGDs, which exist in several forms, see for instance
\cite{DBLP:journals/pvldb/CaliGP10}.

\bibliographystyle{kr}
\bibliography{local}

\cleardoublepage
\appendix

\section{The Chase}

We introduce the chase. Let $I$ be an instance and $T$ a set of TGDs.  A 
TGD 
$\phi(\bar x,\bar y) \rightarrow \exists \bar z \, \psi(\bar x,\bar z) 
\in T$ is \emph{applicable} at a tuple $\bar c$ of constants 
in $I$ if $\phi(\bar c,\bar c') \subseteq I$ for some $\bar c'$ and there is no
homomorphism $h$ from $\psi(\bar x,\bar z)$ to $I$ such that
$h(\bar x)=\bar c$. In this case, the {\em 
  result of applying the TGD in $I$ at $\bar c$} is the instance 
$I \cup \{\psi(\bar c,\bar c'')\}$ where $\bar c''$ is the tuple 
obtained from $\bar z$ by simultaneously replacing each variable $z$
with a fresh null, that is, a null that does not occur in~$I$. We also 
refer to such an application as a \emph{chase step}.  

A {\em chase sequence for $I$ with $T$} is a sequence of instances
$I_0,I_1,\dots$ such that $I_0=I$ and each $I_{i+1}$ is the result of
applying some TGD from $T$ at some tuple $\bar c$ of
constants in $I_i$. The \emph{result} of this chase sequence is the
instance $J = \bigcup_{i \geq 0} I_i$. The chase sequence is
\emph{fair} if whenever a TGD from $T$ is applicable to a tuple
$\bar c$ in some $I_i$, then this application is a chase
step in the sequence. Every fair chase sequence for $I$ with $T$ has
the same result, up to homomorphic equivalence. Since for our purposes
all results are equally useful, we use $\chase_T(I)$ to denote the
result of an arbitrary, but fixed chase sequence for $I$ with $T$
and call  $\chase_T(I)$ the \emph{result of chasing $I$ with $T$.}
\begin{lemma}\label{pro:chase}
%
  Let $T$ be a finite set of TGDs and $I$ an instance. Then 
  for every model $J$ of $T$
  with $I \subseteq J$, there is a homomorphism $h$ from 
  $\chase_T(I)$ to $J$ that is the identity on $\adom(I)$. 
%
  %
\end{lemma}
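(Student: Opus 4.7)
The plan is to build the homomorphism step-by-step along a fair chase sequence, using the model $J$ as a target and exploiting, at each chase step, the fact that $J$ satisfies the TGD that is being applied. Let $I_0,I_1,\dots$ be the fair chase sequence that defines $\chase_T(I)$, so $I_0=I$ and $\chase_T(I)=\bigcup_{i\geq 0}I_i$. I will construct homomorphisms $h_i:\adom(I_i)\to \adom(J)$ such that (a) $h_i$ is the identity on $\adom(I)$, (b) $h_i$ extends $h_{i-1}$ on $\adom(I_{i-1})$ for $i\geq 1$, and (c) $h_i$ is a homomorphism from $I_i$ to $J$. Taking $h=\bigcup_{i\geq 0}h_i$ then gives the desired homomorphism.

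For the base case, let $h_0$ be the identity on $\adom(I)$. Because $I\subseteq J$, this is trivially a homomorphism from $I_0=I$ to $J$ that is the identity on $\adom(I)$. For the inductive step, suppose $h_i$ has already been defined. The chase step producing $I_{i+1}$ applies some TGD $\phi(\bar x,\bar y)\to \exists \bar z\,\psi(\bar x,\bar z)\in T$ at a tuple $\bar c$ with witnesses $\bar c'$, so $\phi(\bar c,\bar c')\subseteq I_i$ and $I_{i+1}=I_i\cup\{\psi(\bar c,\bar c'')\}$, where $\bar c''$ is a tuple of fresh nulls obtained from $\bar z$. Applying $h_i$ to $\phi(\bar c,\bar c')\subseteq I_i$ yields $\phi(h_i(\bar c),h_i(\bar c'))\subseteq J$. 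Since $J\models T$, there exists a tuple $\bar d$ of elements of $\adom(J)$ with $\psi(h_i(\bar c),\bar d)\subseteq J$. Define $h_{i+1}$ to agree with $h_i$ on $\adom(I_i)$ and to send each fresh null in $\bar c''$ to the corresponding component of $\bar d$; this is well defined because the nulls in $\bar c''$ are fresh and hence not in $\adom(I_i)$. By construction $h_{i+1}$ extends $h_i$, is the identity on $\adom(I)$, and maps every fact of $I_{i+1}$ into $J$, so it is a homomorphism.

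Finally, $h=\bigcup_{i\geq 0}h_i$ is a well-defined function on $\adom(\chase_T(I))=\bigcup_{i\geq 0}\adom(I_i)$ because the $h_i$ form a chain of extensions. It is the identity on $\adom(I)$ by property~(a), and it is a homomorphism from $\chase_T(I)$ to $J$ because any fact $R(\bar a)\in \chase_T(I)$ lies in some $I_i$, on which $h$ agrees with $h_i$, so $R(h(\bar a))\in J$.

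The proof is essentially routine once the inductive schema is set up; the only point that deserves care is the freshness of the nulls introduced in each chase step, which guarantees that extending $h_i$ to $h_{i+1}$ causes no conflict. There is no real obstacle beyond this bookkeeping, since the statement only mentions the fixed fair chase sequence used to define $\chase_T(I)$, so no additional argument about independence from the choice of chase sequence is required.
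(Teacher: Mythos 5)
Your proof is correct and is the standard universality-of-the-chase argument; the paper itself states this lemma without proof (it is the classical result from the chase literature it cites), and your induction along the fair chase sequence, extending the homomorphism at each step by using that $J$ satisfies the applied TGD, is exactly the argument that is implicitly being relied upon. The one point worth care --- well-definedness of the extension on the fresh nulls --- is handled correctly.
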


\section{Proofs for Section~\ref{sect:consext}}

\lemskippinghomsgeneral*
\begin{proof} Assume that $I_1 \rightarrow^{\lim}_\Sigma I_2$.  We
  need to find a database-preserving $\Sigma$-homomorphism $h$ from
  $I_1$ to $I_2$. Let $\adom(I_1)=\{c_1,c_2,\dots\}$ (finite or
  infinite) and for $i \geq 1$, let $A_i$ be the set of the first
  $\min\{i,|\adom(I_1)|\}$ constants from the sequence $c_1,c_2,\dots$.
  Since $I_1 \rightarrow^{\lim}_\Sigma I_2$, we find for each
  $i \geq 0$ a database-preserving $\Sigma$-homomorphism $h_i$ from
  $I_1|_{A_i}$ to $I_2$. For $A \subseteq A_i$, we use $h_i|_A$ to
  denote the restriction of $h_i$ to domain $A$. We would be done if
  we knew that the sequence $h_1,h_2,\dots$ satisfied the following
  uniformity condition:
  \begin{itemize}

  \item[($*$)] $h_{j}|_{A_i}=h_i$ for $j > i > 0$,

  \end{itemize}
  as then we could simply take $h=\bigcup_{i \geq 1} h_i$. We show how
  to extract from $h_1,h_2,\dots$ a sequence $g_1,g_2,\dots$ that
  satisfies Condition~($*$), and then define
  $h=\bigcup_{i \geq 1} g_i$. During the construction of the sequence
  $g_1,g_2,\dots$, we shall take care that for every $g_i$, there are
  infinitely many $j > i$ with $h_j|_{A_i}=g_i$.

    \smallskip
  
    To start, we need a homomorphism from $I_1|_{A_1}$ to $I_2$. Since
    $A_1$ and $I_2$ are finite, there are only finitely many mappings
    $f:A_1 \rightarrow \adom(I_2)$ and thus there must be such a
    mapping $f$ such that $h_i|_{A_1}=f$ for infinitely many
    $i \geq 1$. Set $g_1=f$.

  Now assume that $g_1,\dots,g_n$ have already been defined.
  Then there is an infinite set $\Gamma$ of indices $j>i$ 
  such that for all $j \in \Gamma$, $h_j|_{A_n}=g_n$. Since $I_2$
  is finite, there are only finitely many extensions $f: A_{n+1}
  \rightarrow \adom(I_2)$ of $g_n$ an thus there must be such
  an extension $f$ such that $h_j|_{A_{n+1}}=f$ for infinitely
  many $j \in \Gamma$. Set $g_{n+1}=f$.

  \smallskip
  
  The resulting sequence clearly satisfies ($*$) and thus the proof
  is done.
\end{proof}

\section{Proofs for Section~\ref{sect:undec}}

\subsection{ Analysing $\chase_{T_1}()$ and $\chase_{T_2}()$.}\label{sec:proofT1}

As we have already said in Section \ref{sec:recursive}, $T_1$ is now
defined as $T_{\mn{rec}}\cup T_{\mn{proj}}$ and $T_2$ is  $T_1\cup T_{\mn{myth}}$.
It is easy to notice that:

\begin{lemma}
  $\chase_{T_2}()=\chase_{T_{\mn{myth}}}(\chase_{T_1}())$.
\end{lemma}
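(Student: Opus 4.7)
\smallskip
\noindent\textbf{Proof plan.}
My plan is to exhibit a fair chase sequence for $T_2$ starting from $\emptyset$ that demonstrably factors as ``first a fair $T_1$-chase from $\emptyset$, then a fair $T_{\mn{myth}}$-chase from the resulting instance''. Because $\chase_T(I)$ is, by the paper's convention, the result of an arbitrary but fixed fair chase sequence for $I$ with $T$, exhibiting such a factored sequence for $T_2$ lets us take it to compute $\chase_{T_2}(\emptyset)$ and delivers the equality with $\chase_{T_{\mn{myth}}}(\chase_{T_1}(\emptyset))$.

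Concretely, I would take a fair $T_1$-chase sequence $\sigma_1$ from $\emptyset$, whose result is $\chase_{T_1}(\emptyset)$, followed by a fair $T_{\mn{myth}}$-chase sequence $\sigma_2$ from $\chase_{T_1}(\emptyset)$, whose result is $\chase_{T_{\mn{myth}}}(\chase_{T_1}(\emptyset))$. Since every rule used by either sub-sequence belongs to $T_1 \cup T_{\mn{myth}} = T_2$, the concatenation $\sigma_1;\sigma_2$ is a valid $T_2$-chase sequence. The only nontrivial point is that this concatenation is \emph{fair} for $T_2$, i.e.\ that no rule of $T_1$ becomes applicable during $\sigma_2$ and is then never fired. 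This I would reduce to two non-interference observations. First, the relation symbols appearing in the heads of $T_{\mn{myth}}$-rules, namely $M$, $\mn{Pyramus}$, $\mn{Thisbe}$, $\mn{Channel}$, do not occur in the body of any rule of $T_1 = T_{\mn{rec}} \cup T_{\mn{proj}}$: inspection of Sections~\ref{sec:recursive} and~\ref{sec:projections} shows that all such bodies use only the auxiliary symbols of $\Sigma_\digamma = \{\mn{Start}, \mn{Bridge}, \WH^i_k, \mn{BH}_k, \mn{End}\}$. Hence no $T_{\mn{myth}}$-step can make the body of a $T_1$-rule newly satisfied. Second, the chase is monotone in the set of facts, so any head-homomorphism already witnessing that a $T_1$-rule is satisfied at the end of $\sigma_1$ persists through any number of subsequent $T_{\mn{myth}}$-steps. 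Together these imply that a $T_1$-rule applicable at some point in $\sigma_2$ was already applicable at the start of $\sigma_2$, contradicting fairness of $\sigma_1$.

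The main obstacle is essentially cosmetic: $\sigma_1$ is typically infinite, so the literal concatenation $\sigma_1;\sigma_2$ has order type beyond $\omega$. I would handle this by picking a diagonal interleaving of the two phases that is fair for $T_2$; the non-interference observations above guarantee that the order of interleaving does not affect the final instance, which still equals $\chase_{T_{\mn{myth}}}(\chase_{T_1}(\emptyset))$ fact by fact. This is the sense in which the statement is ``easy to notice'': once the two rule sets are seen to share no relevant symbols, $T_1$ and $T_{\mn{myth}}$ commute along the chase, and the factorization is immediate.
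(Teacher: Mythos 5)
Your proposal is correct and rests on exactly the same observation as the paper's (very terse) proof: the relation symbols that $T_{\mn{myth}}$ can produce do not occur in any rule body of $T_1$, whose bodies use only $\Sigma_\digamma$, so the two rule sets do not interfere and the $T_2$-chase factors through the $T_1$-chase. Your write-up merely spells out the fairness and interleaving details that the paper leaves implicit.
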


\begin{proof} 
  The claim follows since all the facts $T_{\mn{myth}}$ can possibly produce are from $\querysig$. And all the
  facts in all the bodies of rules from $T_1$ are from $\Sigma_\digamma$.
\end{proof}

Now we are going to study $\querysig$-homomorphisms from $\chase_{T_2}()$ to $\chase_{T_1}()$.
First of all notice that:

\begin{lemma}\label{lem:b-dist}
  Identity is the only $\querysig$-homomorphisms from
  $\chase_{T_1}()$ to $\chase_{T_1}()$.
\end{lemma}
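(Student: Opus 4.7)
The plan is to show rigidity of $\chase_{T_1}()$ under $\querysig$-homomorphisms by first fixing a small set of distinguished constants through unique structural features and then propagating this fixity to every other null. For the anchor set, I would first observe that $b_0$ is the only constant in $\chase_{T_1}()$ satisfying $\mn{Mouth}$, because the only rule producing $\mn{Mouth}$ is the projection of the $\mn{Start}$ rule, and the $\mn{Start}$ fact is generated by a single application of the empty-body rule. Hence any $\querysig$-homomorphism $h$ satisfies $h(b_0)=b_0$. The eternities $e_1,e_2$ are the only constants carrying both $\mn{Pyramus}$- and $\mn{Thisbe}$-self-loops; the asymmetry of their roles (e.g., $e_1$ is the $\mn{Thisbe}$-target of every non-bridge worldly constant and the $\mn{Pyramus}$-target of every bridge, while $e_2$ plays the dual roles), combined with the fact that no non-bridge carries both a $\mn{Pyramus}$ and a $\mn{Thisbe}$ outgoing edge (a non-bridge lies on either the Pyramus- or the Thisbe-side of a chain), forces $h(e_1)=e_1$ and $h(e_2)=e_2$. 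Finally, $c$ is the unique non-eternity acting as a $\mn{Channel}$-source, so $h(c)=c$.

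To propagate fixity, I would exploit that many constants are uniquely characterised by their Pyramus/Thisbe-neighbourhoods relative to already-fixed constants. Since $b_0$ occurs only in the projections of $\mn{Start}$, its unique $\mn{Pyramus}$-predecessor in the chase is $x_1$, giving $h(x_1)=x_1$; analogous uniqueness arguments yield $h(y_1)=y_1$, $h(y_2)=y_2$, and $h(b_1)=b_1$. The remainder of the argument is an induction on the depth of a bridge in the tree generated by $\chase_{T_1}()$. Assuming $h$ fixes a bridge $b$ at depth $d$, its $\mn{Encounter}$-partner $b'$ (created by the $\mn{End}$ rule and unique to $b$) is also fixed. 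Each of the $\gamma$ workhorse chains emanating from $b$ yields a Pyramus path of a specific length $n_k$ and a Thisbe path of length $\digamma(n_k)$ terminating at a new bridge $b^k$ at depth $d{+}1$, and the internal nulls of each chain carry distinctive $\mn{Pyramus}$/$\mn{Thisbe}$/$\mn{Channel}$ profiles that, together with the already-fixed anchors $c,e_1,e_2$ and the previously-fixed bridge $b$, will pin down each null's image uniquely.

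The hard part will be to rule out two pathological possibilities. First, $h$ might permute the $\gamma$ chains emanating from an already-fixed bridge $b$; I would exclude this by tracing such a permutation down to deeper bridges, where it eventually requires a $\mn{Pyramus}$- or $\mn{Thisbe}$-path of a length that the chase does not provide between the correspondingly fixed bridges, contradicting the induction hypothesis. Second, $h$ might collapse a long path onto a shorter one by folding intermediate nulls onto an eternity (which has self-loops in all the relevant relations). I would rule this out by invoking the $\mn{Channel}$ constraint: every worldly null $a$ satisfies $\mn{Channel}(c,a)$, but neither $\mn{Channel}(c,e_1)$ nor $\mn{Channel}(c,e_2)$ belongs to $\chase_{T_1}()$, so no worldly null can map to an eternity. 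Combined with the previous observations, this will force every worldly null to map to a worldly null of the same ``type'' (bridge, Pyramus-side non-bridge, Thisbe-side non-bridge), and inductively to itself, so $h$ must be the identity.
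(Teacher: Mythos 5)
Your proposal is correct and follows essentially the same route as the paper's proof: fix the anchors $c,e_1,e_2$ and the $\mn{Mouth}$ constant, use the $\mn{Channel}$ facts (and the edges into the eternities) to show that worldly constants map to worldly constants and bridges to bridges, and then induct over the bridge tree using the fact that consecutive bridges are joined by Pyramus-paths whose lengths identify them uniquely. The only cosmetic differences are that the paper phrases the inductive step via a distance function $dist(b',b)$ on consecutive bridges rather than by ruling out permutations of chains, and that each workhorse chain actually spawns infinitely many bridges rather than a single one at the next depth --- neither of which affects the argument.
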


\begin{proof} 
  Suppose $h$ is a $\querysig$-homomorphisms from $\chase_{T_1}()$  to $\chase_{T_1}()$.
  It is easy to notice that $h(x)=x$ if (*) $x$ is one of the special
  constants $c,e_1,e_2$, or if (**) $\mn{Mouth}(x)$ holds.

  Then we use (*) to prove that $x$ is a bridge if and only if $h(x)$
  is a bridge or, in other words, that (***) $h$ maps bridges to
  bridges and non-bridges to non-bridges.

  As the next step imagine two bridges $b$ and $b'$, such that there
  is a Pyramus path from $b'$ to $b$ which does not visit any other
  bridge. For such $b$ and $b'$ define $dist(b',b)$ as the length of
  this Pyramus path. Now it follows from the construction of $T_1$
  that if for bridges $b,b',b''$ it holds that
  $dist(b',b)=dist(b'',b)$ then $b'=b''$. Since (in our case) distance
  is preserved by homomorphisms, this implies that (****) if $x$ is a
  bridge then $h(x)=x$ (the detailed proof would be by induction, with
  (**) serving as induction basis and (***) used in the induction
  step).

  Then it easily follows from (****) that the lemma also holds true
  for the non-bridge constants.
\end{proof}

Let us denote with \ccc the chase of $\{\mn{Encounter(c_0,c_0'}\}$
shown in Figure~\ref{figA}. For an instance $I$ and a fact
$F=\mn{Encounter(c,c'})\in I$, we denote 
with $I\cup_F \ccc$ the instance
obtained from $I$ by adding 
a disjoint copy of \ccc to $I$ and identifying $c_0$ with $c$ and $c_0'$ with
$c'$. It is now easy to see that:
\begin{lemma}\label{lem:t2-union}
  Let $E = \chase_{T_1}()|_{\{\mn{Encounter}\}}$  be the set of all
  facts for the relation symbol $\mn{Encounter}$ in $\chase_{T_1}$. Then:
\[\chase_{T_2}() = \bigcup_{F\in
E} \chase_{T_1}() \cup_F \ccc\]
\end{lemma}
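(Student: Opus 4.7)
The plan is to reduce the statement to a direct analysis of what chasing $T_\mn{myth}$ does when started from $\chase_{T_1}()$, using the preceding lemma that $\chase_{T_2}() = \chase_{T_\mn{myth}}(\chase_{T_1}())$. The key observation is that the bodies of rules in $T_\mn{myth}$ mention only the relation symbols $\mn{Encounter}$ and $M$. Since $M$ does not occur in any rule of $T_1$, the instance $\chase_{T_1}()$ contains no $M$-facts at all; hence the only $T_\mn{myth}$-rule applicable in $\chase_{T_1}()$ is the first one, and it is applicable at exactly the tuples $(c,c')$ with $\mn{Encounter}(c,c') \in E$.

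First I would show that for each $F = \mn{Encounter}(c,c') \in E$, firing rule~1 at $F$ is necessary (the chase is not oblivious, but no $M(c,\_,\_,\_,c')$-fact is present before the application, so the head is not satisfied). After this step, the only way to trigger a further application of rule~1 is via some other Encounter fact in $E$, so one application of rule~1 occurs for each $F \in E$, producing independent fresh nulls $p', c, t'$. The rule introduces genuinely fresh nulls, so the newly created $M$-fact shares only $c$ and $c'$ with the rest of the instance.

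Next I would argue that from each newly created $M$-fact, rule~2 generates an infinite chain of $M$-facts, each step introducing three fresh nulls, and rule~3 projects these into the $\mn{Pyramus}$/$\mn{Thisbe}$/$\mn{Channel}$ pattern seen in Figure~\ref{figA}. Again non-obliviousness is harmless: since each step uses fresh nulls in the positions corresponding to $p'', c', t''$, no head can be satisfied in advance, and the projections in rule~3 contain fresh nulls in argument positions $p'$, $t'$, $c$, so the facts they add do not pre-exist in $\chase_{T_1}()$ either. Consequently, the subinstance produced below each $F$ is isomorphic to $\ccc$ (Figure~\ref{figA}) and shares with the rest of $\chase_{T_1}() \cup \bigcup_{F' \neq F} \ccc$ only the two constants of $F$.

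The one subtlety—which I expect to be the main obstacle only in the sense of requiring explicit statement—is verifying that chase steps for different Encounter facts do not interact: because all existentials introduced are fresh nulls, the subtrees produced at distinct $F,F' \in E$ are constant-disjoint except possibly at the two arguments of each Encounter fact itself, and crucially the distinct Encounter facts in $E$ themselves do not share constants in a way that causes overlap of the chase fragments (one can inspect the $T_1$-rules to see that each $\mn{Encounter}$-fact produced by $T_1$ uses a pair of constants that does not serve as $(p, p')$ of any $M$-chain belonging to another fact). Combining these observations, chasing $T_\mn{myth}$ on $\chase_{T_1}()$ adds exactly one disjoint copy of $\ccc$ glued along each $F \in E$, yielding the claimed equality $\chase_{T_2}() = \bigcup_{F \in E} \chase_{T_1}() \cup_F \ccc$.
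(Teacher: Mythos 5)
Your proposal is correct and follows exactly the route the paper intends: the paper derives this lemma from the preceding observation that $\chase_{T_2}()=\chase_{T_{\mn{myth}}}(\chase_{T_1}())$ and then declares the rest "easy to see," offering no further proof. Your elaboration — rule~1 of $T_\mn{myth}$ fires exactly once per $\mn{Encounter}$-fact because $\chase_{T_1}()$ contains no $M$-facts, and freshness of the introduced nulls keeps the resulting copies of $\ccc$ pairwise disjoint except at the glued constants — is precisely the argument the paper leaves implicit.
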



The lemma means that $\chase_{T_2}()$ is $\chase_{T_1}()$ with one new copy of the chase $\ccc$ 
from Figure \ref{figA}, attached 
to every fact $F$ of the relation $\mn{Encounter}$ in $\chase_{T_1}()$.

It now follows easily from Lemma \ref{lem:t2-union} and Lemma \ref{lem:b-dist} that:

\begin{lemma}\label{lem:decomposition}
  The following two conditions are equivalent:
  \begin{enumerate}[label=(\roman*)]

    \item There exists a $\querysig$-homomorphism $h$ from
      $\chase_{T_2}()$  to $\chase_{T_1}()$.

    \item For each fact $F$ of the relation $\mn{Encounter}$ in
      $\chase_{T_1}()$ there exists a $\querysig$-homomorphism $h_F$
      from $\chase_{T_1}() \cup_F \ccc$ to $\chase_{T_1}()$.

  \end{enumerate}

\end{lemma}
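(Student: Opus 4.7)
The plan is to combine Lemma~\ref{lem:t2-union}, which decomposes $\chase_{T_2}()$ as the union over all $\mn{Encounter}$-facts $F$ of $\chase_{T_1}()\cup_F\ccc$ (with the copies of $\ccc$ only overlapping on the common part $\chase_{T_1}()$), with the rigidity statement of Lemma~\ref{lem:b-dist}, which forces every $\querysig$-homomorphism from $\chase_{T_1}()$ to itself to be the identity. Note that since $\dbsig=\emptyset$ all constants are nulls, so the database-preservation requirement is vacuous throughout.

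\textbf{(i) $\Rightarrow$ (ii).} Assume $h:\chase_{T_2}()\to\chase_{T_1}()$ is a $\querysig$-homomorphism. By Lemma~\ref{lem:t2-union}, for every fact $F$ of the relation $\mn{Encounter}$ in $\chase_{T_1}()$, the instance $\chase_{T_1}()\cup_F\ccc$ is a subinstance of $\chase_{T_2}()$. Defining $h_F$ to be the restriction of $h$ to $\adom(\chase_{T_1}()\cup_F\ccc)$ yields a $\querysig$-homomorphism of the required type.

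\textbf{(ii) $\Rightarrow$ (i).} Suppose for each $\mn{Encounter}$-fact $F$ in $\chase_{T_1}()$ we are given $h_F:\chase_{T_1}()\cup_F\ccc \to \chase_{T_1}()$. The restriction $h_F\!\upharpoonright\!\adom(\chase_{T_1}())$ is a $\querysig$-homomorphism from $\chase_{T_1}()$ to $\chase_{T_1}()$, so by Lemma~\ref{lem:b-dist} it is the identity. In particular, if $F=\mn{Encounter}(c,c')$, then $h_F(c)=c$ and $h_F(c')=c'$, so $h_F$ agrees with the identity on the two constants that the copy of $\ccc$ attached at $F$ shares with $\chase_{T_1}()$. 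Moreover, for distinct $F\neq F'$, the copies of $\ccc$ attached at $F$ and $F'$ are disjoint except for constants already in $\chase_{T_1}()$. Hence we can define a single map $h$ on $\adom(\chase_{T_2}())$ by setting $h$ to be the identity on $\adom(\chase_{T_1}())$ and $h(x)=h_F(x)$ for every $x$ belonging to the copy of $\ccc$ attached at $F$; the previous observation shows $h$ is well-defined.

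\textbf{Verification that $h$ is a $\querysig$-homomorphism.} Using the decomposition from Lemma~\ref{lem:t2-union}, every fact in $\chase_{T_2}()$ lies either in $\chase_{T_1}()$ (where $h$ is the identity, mapping the fact to itself) or in $\chase_{T_1}()\cup_F\ccc$ for some $F$ (where $h$ coincides with $h_F$ on its constants, so the fact is preserved because $h_F$ is a homomorphism). Thus $h$ is a $\querysig$-homomorphism from $\chase_{T_2}()$ to $\chase_{T_1}()$, establishing~(i). The only place requiring any care is the well-definedness of $h$ at the gluing points, and this is exactly what Lemma~\ref{lem:b-dist} delivers for free.
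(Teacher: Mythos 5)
Your proof is correct and follows essentially the same route as the paper: the forward direction is obtained by restriction, and the converse uses Lemma~\ref{lem:b-dist} to force all the $h_F$ to agree (indeed to be the identity) on $\chase_{T_1}()$, so that their union, justified by the decomposition of Lemma~\ref{lem:t2-union}, is a well-defined $\querysig$-homomorphism. You merely spell out the well-definedness and homomorphism checks that the paper leaves implicit.
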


\begin{proof} 
  The $\textit{(i)}\Rightarrow \textit{(ii)}$ direction is trivial.
  For the opposite direction, assume (ii), consider all the
  homomorphisms $h_F$ and notice that (by Lemma \ref{lem:b-dist}) they
  all agree on $\chase_{T_1}()$. Now  take $h$ as the union of all
  $h_F$.
\end{proof}

Let us now try to analyze the structure of $\chase_{T_1}()$.  Notice
that for each\footnote{With the obvious exception of the single fact
of the relation $\mn{Start}$ in $\chase_{T_1}()$, which has no parent.  }
fact $F_1$ of $\chase_{T_1}()\arrowvert_{\Sigma_\digamma}$  there
exists exactly one {\em parent of} $F_1$, a fact    $F_0$ of
$\chase_{T_1}()\arrowvert_{\Sigma_\digamma}$ such that $F_1$ was
created directly from $F_0$ by an application of a TGD from
$T_{\mn{rec}}$.  We will write $F_0\rightarrow F_1$ to say that $F_0$ is
a parent of $F_1$. By $\stackrel{\ast}{\rightarrow}$, we will denote
the reflexive and transitive closure of $\rightarrow$.
Notice that if $F$ is a fact of relation $\mn{End}$ then $F\rightarrow G$ is never true.

For a fact $F\in \chase_{T_1}()\arrowvert_{\Sigma_\digamma}$ define: 
\[\mn{Ancestors}(F) =\{G\in \chase_{T_1}()\arrowvert_{\Sigma_\digamma}:
G\stackrel{\ast}{\rightarrow} F\}\]
In natural language, $\mn{Ancestors}(F)$ comprises all the facts of
$\chase_{T_1}()\arrowvert_{\Sigma_\digamma}$ which were necessary to
produce $F$ (with fact $F$ included). 

Finally, let $\mn{Ancestors}_{\querysig}(F)$ be the set of all
$\querysig$-facts which can be produced from some fact in
$\mn{Ancestors}(F)$ by using a rule from $T_{\mn{proj}}$.

Now, it follows from the construction of $T_1$ that:

\begin{lemma} \label{lem:str-of-T1-chase}
  $\chase_{T_1}()$ satisfies the following: 
  \begin{enumerate}[label=(\roman*)]

    \item For a fact $F\in \chase_{T_1}()\arrowvert_{\Sigma_\digamma}$
      the database $\mn{Ancestors}_{\querysig}(F)$ is a {\em locally
      correct} river if and only if $F$ is an fact of the relation
      $\mn{End}$.

    \item For each locally correct sequence $\kappa$ there exists a
      fact $F\in \chase_{T_1}()$  (of the relation $\mn{End}$) such
      that $\mn{Ancestors}_{\querysig}(F)$ is (isomorphic to)
      $\mn{River}_\kappa$.

\end{enumerate}
\end{lemma}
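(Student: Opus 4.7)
The proof proceeds by a structural analysis of $\chase_{T_1}()$. Because $\dbsig = \emptyset$ and every rule of $T_{\mn{rec}}$ introduces fresh nulls in its head, the facts of $\chase_{T_1}()|_{\Sigma_\digamma}$ form a tree under the parent relation $\rightarrow$, with the unique $\mn{Start}$ fact as root. Hence $\mn{Ancestors}(F)$ is precisely the set of facts on the unique root-to-$F$ path. I would open the proof by describing this tree explicitly: $\mn{Start}$ has a single $\mn{Bridge}$ child (for $b_1$); every $\mn{Bridge}$ fact has $\gamma + 1$ children, namely an $\mn{End}$ fact and, for each $k \in \{0,\ldots,\gamma-1\}$, a WorkHorse head $\WH^{\beta_k}_k$; every $\WH^i_k$ has an extension child $\WH^{i \pp \beta_k}_k$ and, when $i = k \mm \beta_k$, also a transition child $\mn{BH}_k$; every $\mn{BH}_k$ has a unique $\mn{Bridge}$ child; and $\mn{End}$ facts are $\Sigma_\digamma$-leaves.

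For Part~(i), the ``if'' direction follows by walking the root-to-$F$ path for an $\mn{End}$ fact $F$ and reading off the induced river via $T_{\mn{proj}}$. $\mn{Start}$ contributes the mouth segment with $p_1 = 2$, $t_1 = 3$ (consistent with $\digamma(2) = 3$); each maximal WorkHorse chain with subscript $k_i$, consisting of the Bridge-to-WH step, $m_i$ extensions, and a single transition, generates exactly $(m_i + 2)\beta_{k_i}$ Pyramus and $(m_i + 2)\alpha_{k_i}$ Thisbe edges in the $i$-th segment; and $\mn{End}$ contributes the last segment with $p_n = t_n = 1$. The transition rule fires precisely at superscript $k_i \mm \beta_{k_i}$, which forces $(m_i + 2)\beta_{k_i} \equiv k_i \pmod{\gamma}$, so $k_i = p_i \bmod \gamma$ and therefore $t_i = p_i\alpha_{k_i}/\beta_{k_i} = \digamma(p_i)$, as required for local correctness. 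For the ``only if'' direction, I argue that the only rule in $T_{\mn{proj}}$ producing a direct Pyramus edge between two bridges is the $\mn{End}$ projection, so $p_n = 1$ forces $F$ to be an $\mn{End}$ fact. A short case distinction over the alternatives shows that otherwise either the last completed segment has length strictly greater than~$1$ (if $F$ is $\mn{Start}$ or a $\mn{Bridge}$) or the projection ends mid-segment, no closing bridge having been introduced (if $F$ is a $\WH^i_k$ or a $\mn{BH}_k$); in either case $\mn{Ancestors}_\querysig(F)$ is not isomorphic to any $\mn{River}_\kappa$.

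For Part~(ii), given a locally correct $\kappa = \langle [p_1,\ldots,p_n],[t_1,\ldots,t_n] \rangle$, I construct $F$ by specifying a chase path explicitly: fire the $\mn{Start}$ rule; for each $i$ from $2$ to $n-1$, at $\mn{Bridge}(b_{i-1})$ fire the Bridge-to-WH rule with $k_i = p_i \bmod \gamma$, apply the extension rule the right number of times to reach superscript $k_i \mm \beta_{k_i}$, and fire the transition to obtain $\mn{BH}_{k_i}$ and then $\mn{Bridge}(b_i)$; finally fire the $\mn{End}$ rule at $\mn{Bridge}(b_{n-1})$. The length computation from Part~(i) then certifies that the $T_{\mn{proj}}$-projection of the generated path is isomorphic to $\mn{River}_\kappa$.

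The main technical obstacle I anticipate is twofold. First, the modular bookkeeping: one must check that, for every locally correct $\kappa$, the required transition superscript $k_i \mm \beta_{k_i}$ is actually reachable and the number of extensions in each segment is a well-defined non-negative integer using only the assumptions $\beta_k \mid \gamma$ and $\beta_k \mid k\alpha_k$; this is where the specific shape of Conway sequences starting at $2$ is expected to cooperate with the divisibility constraints. Second, the ``massive overlap'' flagged in the paper: the chase reuses the worldly nulls $c$, $e_1$, $e_2$, and even bridges and intermediate constants, across distinct branches of the chase tree, so extracting an isomorphism from $\mn{Ancestors}_\querysig(F)$ to the canonical $\mn{River}_\kappa$ requires care in restricting to the subtree above $F$ and confirming that no sibling-branch facts sneak in via shared constants.
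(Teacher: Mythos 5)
The paper offers no proof of this lemma at all --- it is introduced with ``it follows from the construction of $T_1$'' --- so your proposal is necessarily an elaboration rather than a reconstruction. The elaboration you give is the intended one and is essentially right where it is worked out: the $\Sigma_\digamma$-facts do form a tree under the parent relation (every existential rule introduces fresh nulls, and the two full rules producing $\mn{Bridge}$ facts act on distinct bridge constants), so $\mn{Ancestors}(F)$ is a root-to-$F$ path; your segment-length bookkeeping $(m_i+2)\beta_{k_i}$ for Pyramus and $(m_i+2)\alpha_{k_i}$ for Thisbe is correct; the firing condition of the transition rule at superscript $k\mm\beta_k$ does force $p_i\equiv k_i \pmod\gamma$ and hence $t_i=\digamma(p_i)$; and the ``only if'' of (i) is correctly reduced to the observation that only the $\mn{End}$ projection produces an $\mn{Encounter}$ fact and a length-one closing segment.

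There is, however, a genuine gap in your treatment of part (ii), and you have located it yourself without closing it. Realizing a middle segment of length $p_i$ requires choosing $k_i=p_i\bmod\gamma$ (forced by the transition condition) and then having $m_i=p_i/\beta_{k_i}-2$ be a non-negative integer; moreover the superscripts actually visited are exactly the multiples of $\beta_{k_i}$ in $\mathbb Z_\gamma$, so the transition superscript $k_i\mm\beta_{k_i}$ is reachable iff $\beta_{k_i}\mid k_i$. Neither $\beta_{k_i}\mid p_i$ nor $\beta_{k_i}\mid k_i$ follows from the stated hypotheses $\beta_k\mid\gamma$ and $\beta_k\mid k\alpha_k$ (take $\gamma=\beta_2=6$, $\alpha_2=3$: then $6\mid 2\cdot 3$ but $6\nmid 2$, and no segment of length $\equiv 2\pmod 6$ is ever produced). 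One genuinely needs an additional normalization of the Conway function --- e.g.\ $\gcd(\alpha_k,\beta_k)=1$, which together with $\beta_k\mid k\alpha_k$ yields $\beta_k\mid k$ and hence $\beta_k\mid p$ for every $p\equiv k\pmod\gamma$ --- and even then lengths $p_i$ with $p_i=k_i=\beta_{k_i}<2\beta_{k_i}$ remain unrealizable, so part (ii) as literally stated (for \emph{all} locally correct $\kappa$) needs either this caveat or a restriction to the correct sequence, which is the only case the paper actually uses. Writing ``this is where the Conway sequence is expected to cooperate with the divisibility constraints'' names the obstacle but does not discharge it; a complete proof must either add the normalization explicitly or verify the divisibility along the orbit of $2$.
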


Let us remind the Reader that in order to prove
Theorem~\ref{thm:undecidable} Point~1, we need to to prove the
equivalence~$(\heartsuit.l)$ from Section~\ref{sec:source}, restated
here for convenience: 
\begin{description} \item[$(\heartsuit.l)$] $\digamma$ does not stop
      if and only if $T_2$ is  $\dbsig,\querysig$-hom-conservative
      over $T_1$.  \end{description}
The ``$\Rightarrow$''-direction of this equivalence is now easy to show.
If $\digamma$ does not stop then every locally correct river is
incorrect. This means (using Lemma~\ref{lem:str-of-T1-chase} (i) and
Observation~\ref{see}) that for each fact $G$ of relation
$\mn{Encounter}$ in $\chase_{T_1}()$, which was created by projecting
some fact $F$ of the relation $\mn{End}$, the instance
$\chase_{T_1}() \cup_G \ccc$ (connected to $\chase_{T_1}()$ via $G$) can be
homomorphically embedded in $\mn{Ancestors}_{\querysig}(F)$. So, by Lemma
\ref{lem:decomposition}, there exists a  $\querysig$-homomorphism $h$
from $\chase_{T_2}()$  to $\chase_{T_1}()$.

What about the  ``$\Leftarrow$''-direction? Suppose $\digamma$ stops.
Then we know, from Lemma \ref{lem:str-of-T1-chase} (ii), that there is
an $F$ in $\chase_{T_1}()$ such that $\mn{Ancestors}_{\querysig}(F)$ is
$\mn{River}_{\kappa_F}$ for the correct sequence $\kappa_F$. Hence
(using Observation \ref{see} once again) we know that there is no
homomorphism from $\mn{Ancestors}_{\querysig}(F)\cup_G \ccc$ to
$\mn{Ancestors}_{\querysig}(F)$ (where again, $G$ is a fact of relation
$\mn{Encounter}$ resulting from projecting $F$). By
Lemma~\ref{lem:decomposition}, it suffices to prove is that there is
no homomorphism from $\chase_{T_1}() \cup_G \ccc$  to
$\chase_{T_1}()$. Now, observe that $\chase_{T_1}()$ is a union of all
possible locally correct instances $\mn{River}_\kappa$ but it is
\textit{not} a disjoint union: they all share the mouth, and there is
a lot of overlap between them. So maybe it is possible that one could
homomorphically embed, in $\chase_{T_1}()$, the copy of $\ccc$
attached to $G$, using facts of $\chase_{T_1}()$ which are not in
$\mn{Ancestors}_{\querysig}(F)$? 

Our last lemma says that there is no such embedding:

\begin{lemma}
  If there exists a homomorphism from $\mn{Ancestors}_{\querysig}(F)\cup_G
  \ccc$  to $\chase_{T_1}()$ then there exists a homomorphism from
  $\mn{Ancestors}_{\querysig}(F)\cup_G \ccc$ to $\mn{Ancestors}_{\querysig}(F)$.
\end{lemma}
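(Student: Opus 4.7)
Proof plan. The plan is to take an arbitrary $\querysig$-homomorphism $h : \mn{Ancestors}_{\querysig}(F) \cup_G \ccc \to \chase_{T_1}()$ and argue that $h$ itself, after restricting the codomain, is already a homomorphism into $\mn{Ancestors}_{\querysig}(F)$. This is done in two steps: pinning $h$ on $\mn{Ancestors}_{\querysig}(F)$, and then showing that $h$ is forced to map the attached copy of $\ccc$ into $\mn{Ancestors}_{\querysig}(F)$ as well.

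First I would show that $h$ is the identity on $\mn{Ancestors}_{\querysig}(F)$. The fact $\mn{Mouth}(b_0)$ is unique in $\chase_{T_1}()$, so $h(b_0)=b_0$. From there, walking backwards (against the direction of Pyramus) from $b_0$ along the unique Pyramus-in-edges provided by the Start rule, each worldly constant and in particular $b_1$ is pinned. The Thisbe path behaves analogously. For the bridges $b_i$ with $i\geq 2$, one uses that the workhorse branches springing from a given bridge all introduce pairwise disjoint fresh constants, so the specific workhorse branch occurring inside $R:=\mn{Ancestors}_{\querysig}(F)$ is uniquely selected by the pair of path-lengths $(p_i,t_i)$. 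In particular $h(b)=b$ and $h(b')=b'$ at the source Encounter fact $G$.

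Next I would analyse how $h$ maps the copy of $\ccc$. The Pyramus reduct of $\chase_{T_1}()$ is forest-like: every worldly constant has exactly two outgoing Pyramus edges, one to its unique downward neighbour on its river (toward $b_0$) and one to $e_1$ if it is a bridge, $e_2$ otherwise; the eternities carry only self-loops. Hence the infinite Pyramus chain of $\ccc$ emanating from $b$ must walk some finite distance along $R$'s Pyramus path and then be trapped forever at $e_1$ or $e_2$. The same holds for the Thisbe chain from $b'$. The channel constants of $\ccc$ constrain these two walks to transit into the \emph{same} eternity simultaneously, since in $\chase_{T_1}()$ the only Channel-facts are $(c,\text{worldly})$-pairs and the self-loops $\mn{Channel}(e_j,e_j)$.

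Since the downward Pyramus path from $b$ and the downward Thisbe path from $b'$ both lie entirely inside $\mn{Ancestors}_{\querysig}(F)$, and the shared constants $e_1,e_2,c$ lie in $\mn{Ancestors}_{\querysig}(F)$ as well (they are produced by the Start and Bridge projection rules already present in $\mn{Ancestors}(F)$), the image of $h$ is contained in $\mn{Ancestors}_{\querysig}(F)$, and restricting the codomain gives the required homomorphism. The main difficulty will be the rigidity step, because $\chase_{T_1}()$ contains many bridges at each depth from $b_0$: each bridge launches a workhorse rule for every $k\in\{0,\dots,\gamma-1\}$, and repeated applications of the recursion and BridgeHead rules then generate further bridges at various Pyramus-distances. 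Pinning $h$ therefore requires using $R$'s Pyramus- and Thisbe-path-lengths in concert, together with the uniqueness of the chase-produced constants on each individual workhorse branch.
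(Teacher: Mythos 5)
Your plan is correct and matches the paper's argument in substance: the paper likewise pins the homomorphism on the river part (this is the content of its earlier rigidity lemma showing that the only $\querysig$-endomorphism of $\chase_{T_1}()$ is the identity, proved via the injectivity of Pyramus-distances between consecutive bridges, exactly the ``main difficulty'' you flag) and then observes that the image of the attached copy of $\ccc$ only follows \emph{outgoing} $\mn{Pyramus}$/$\mn{Thisbe}$/$\mn{Channel}$ edges from the $\mn{Encounter}$ fact, under which $\mn{Ancestors}_{\querysig}(F)$ together with $c,e_1,e_2$ is closed. One cosmetic slip: not every worldly constant has two outgoing $\mn{Pyramus}$ edges (Thisbe-path constants have none), but this does not affect the argument since only constants in the image of $\ccc$'s Pyramus chain matter.
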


\begin{proof}
  Note that the homomorphism of $\ccc$ into $\chase_{T_1}()$
  starts at some fact of relation \mn{Encounter}, and then follows
  essentially facts $\mn{Pyramus}(c,c')$ or $\mn{Thisbe}(c,c')$ in
  this direction. So it remains to observe that sets
  $\mn{Ancestors}_{\querysig}(F)$ are closed under taking successors
  along relations $\mn{Pyramus}$ and $\mn{Thisbe}$, by construction of
  $T_1$.

  More precisely, suppose $x$ is a constant of $\mn{Ancestors}_{\querysig}(F)$ and $y$
  is a constant of $\chase_{T_1}()$.  Suppose also that the fact
  $\mn{Pyramus}(x,y)$ or the fact $\mn{Thisbe}(x,y)$ is in
  $\chase_{T_1}()$. Then $y$ is a constant of $\mn{Ancestors}_{\querysig}(F)$.
\end{proof}

In consequence, the ``$\Leftarrow$''-direction of $(\heartsuit.l)$ also
holds, and Point~1 of Theorem~\ref{thm:undecidable} is proven.


\subsection{Proof of Point~2 of Theorem~\ref{thm:undecidable}}

In order to prove Point~2 of Theorem \ref{thm:undecidable} let us just
notice that we can simply reuse the proof of Point~1 from the previous
section.

Clearly, if $\digamma$ does not stop then $T_2$ is
$\Sigma_D,\Sigma_Q$-hom-conservative over $T_1$ so it is also
$\Sigma_D,\Sigma_Q$-CQ-conservative. We need to notice that if
$\digamma$ stops then $T_2$ is not $\Sigma_D,\Sigma_Q$-CQ-conservative
over $T_1$. To this end, it will be enough to find a \emph{finite}
subinstance $Q$ of $\chase_{T_2}()$ which cannot be homomorphically
embedded in $\chase_{T_1}()$.

So suppose  $\digamma$ stops. Let $\mn{River}_{\kappa_F}$ be the
correct river and let $m\in \mathbb N$ be any natural number bigger
than the longest $\Tsibe$ or $\mn{Pyramus}$ path in
$\mn{River}_{\kappa_F}$ which does not visit an eternity.

Let $\ccc_m$ be the fragment of the chase $\ccc$ resulting from $m$
applications of existential TGDs in $T_{\mn{myth}}$,
(and then using the datalog projections)\footnote{The instance from Figure \ref{figA} can be seen as $\ccc_4$.}.
Let also $\mn{River}_{\kappa_F} \cup_G \ccc_m $ be the union of the
two instances, with the only fact of
the $\mn{Encounter}$ relation in $\mn{River}_{\kappa_F}$ identified with the only  $\mn{Encounter}$ fact in $\ccc_m $.

We can use the arguments that were used in the proof of the 
``$\Leftarrow$''-direction in the previous section to show that
$\mn{River}_{\kappa_F} \cup_G \ccc_m $ is the $Q$ we need.

\subsection{Proof of Point~3 of Theorem \ref{thm:undecidable}}
\label{sec:hmm}


We use the same schema $\querysig$ as before, except that
$\mn{Encounter}$ is not in $\querysig$. We will define $\dbsig$
so that
$\querysig \setminus \{\mn{Encounter}\}\subseteq \dbsig$.  Also,
$\mn{Start}$ is now in $\dbsig$. And, for each rule $\cal R$ from
$T_\mn{rec}$ of the form
 $$P(\bar{x}, \bar{y})\rightarrow \exists \bar z\; Q(\bar{y}, \bar{z})$$
 $\Sigma_D$ contains a new relation symbol ${\cal S}_{\cal R}$ of
 arity $|\bar{x}|+|\bar{y}|+|\bar{z}|$ in $\dbsig$.
 
 We now define a set of TGDs $T_0$ whose triviality we are interested
 in. $T_0$ contains all rules from $T_\mn{myth}$ as well as the new
 rule
 $$\mn{End}(\dagger,b,b'), \mn{Pyramus}(b',b), \Tsibe(b',b) \rightarrow \mn{Encounter}(b,b').$$
 Finally, for each rule $\cal R$ from $T_\mn{rec}$, as above,
 $T_0$ contains the following  rule:
 \begin{description}

 \item[$(\clubsuit)$] \mbox{${\cal S}_{\cal R}(\bar{x},\bar{y},\bar{z}), P(\bar{x}, \bar{y}), \chase_{T_\mn{proj}}(\{P(\bar{x}, \bar{y})\}) 
\,\rightarrow \, Q(\bar{y}, \bar{z})$.}
   
 \end{description}
 Notice that this rule is indeed guarded. This is because
 $\chase_{T_\mn{proj}}(\{P(\bar{x}, \bar{y})\})$ is finite
 \footnote{The set of all facts which can be produced from
   $P(\bar{x}, \bar{y})$ by projections from $T_\mn{proj}$.}, and
 contains only variables from $\bar{x} \cup \bar{y}$.

 Notice that the only $\querysig$-facts created when chasing with
 $T_0$ are the ones created by $T_\mn{myth}$ and, in order for them to
 be created, some $\mn{Encounter}$ fact $\mn{Encounter}(c,c')$ must
 be produced first. It is now easy to see that:

\begin{lemma}\label{lem:lastone}
  Let $D$ be a $\dbsig$-database and suppose that for each fact
  $F=\mn{Encounter}(c,c')$ in $\chase_{T_0}(D)$, there exists a
  database-preserving $\querysig$-homomorphism $h$ from $T_\mn{myth}(\{F\})$ to $D$, such
  that $h(c)=c$ and $h(c')=c'$.  Then there exists a
  database-preserving 
  $\querysig$-homomorphism from $\chase_{T_\mn{myth}}(D)$ to $D$.
\end{lemma}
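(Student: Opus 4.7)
The plan is to mimic the decomposition used in Section~\ref{sec:proofT1}, in particular Lemma~\ref{lem:t2-union} and Lemma~\ref{lem:decomposition}, applied now to $\chase_{T_\mn{myth}}(D)$ in place of $\chase_{T_2}(\emptyset)$. The key observation is that every TGD in $T_\mn{myth}$ is triggered by either an $\mn{Encounter}$-atom or an $M$-atom, and every freshly created $M$-fact descends from exactly one $\mn{Encounter}$-fact. Consequently, letting $\mathcal{E}$ denote the set of $\mn{Encounter}$-facts of $D$, the chase decomposes as
$$\chase_{T_\mn{myth}}(D)\;=\;D\;\cup\;\bigcup_{F\in\mathcal{E}}\chase_{T_\mn{myth}}(\{F\}),$$
where the pieces $\chase_{T_\mn{myth}}(\{F\})$ use pairwise disjoint fresh nulls and are glued to $D$ only along the two constants occurring in $F$.

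Next I would observe that $\mathcal{E}\subseteq \chase_{T_0}(D)$, so the hypothesis provides, for every $F=\mn{Encounter}(c,c')\in\mathcal{E}$, a database-preserving $\querysig$-homomorphism $h_F:\chase_{T_\mn{myth}}(\{F\})\to D$ with $h_F(c)=c$ and $h_F(c')=c'$. The desired homomorphism $h:\chase_{T_\mn{myth}}(D)\to D$ is then defined by gluing: let $h$ be the identity on $\adom(D)$, and put $h(a):=h_F(a)$ on every null $a$ introduced during the chase of $\{F\}$. Since the chases of distinct $F\in\mathcal{E}$ introduce disjoint nulls, and each $h_F$ already fixes the boundary constants $c$ and $c'$, the pieces fit together without conflict, so $h$ is well-defined and database-preserving.

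It remains to verify that $h$ is a $\querysig$-homomorphism. Any $\querysig$-fact of $\chase_{T_\mn{myth}}(D)$ either lies in $D$, where $h$ is the identity, or lies in some piece $\chase_{T_\mn{myth}}(\{F\})$, where $h$ coincides with $h_F$; in both cases it is mapped to a fact of~$D$. I expect the only delicate point to be the decomposition claim in the first paragraph, which I would establish by a routine induction on a fair chase sequence, using that every rule body in $T_\mn{myth}$ sits entirely inside the subtree descending from a single $\mn{Encounter}$-fact: indeed, rule bodies only mention $\mn{Encounter}$ or $M$, and $M$-atoms propagate strictly within one $\mn{Encounter}$-subtree because the head of the second rule of $T_\mn{myth}$ reuses the existential witnesses of its single body atom. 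Given this structural fact, the rest is just bookkeeping.
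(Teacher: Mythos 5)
Your overall strategy --- decompose the chase into a copy of $\chase_{T_\mn{myth}}(\{F\})$ hanging off each $\mn{Encounter}$-fact $F$, then glue the given homomorphisms $h_F$ to the identity on $\adom(D)$ --- is exactly the argument the paper has in mind (it offers no written proof beyond the remark preceding the lemma), and your gluing step is sound: the pieces share no nulls, each $h_F$ is database-preserving, and every $\querysig$-fact lies either in $D$ or wholly inside one piece.

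There is, however, one concrete misstep: you take $\mathcal{E}$ to be the set of $\mn{Encounter}$-facts \emph{of $D$}. In this reduction $\mn{Encounter}$ is deliberately excluded from $\dbsig$, so a $\dbsig$-database contains no $\mn{Encounter}$-facts at all; your $\mathcal{E}$ is empty, your decomposition collapses to $\chase_{T_\mn{myth}}(D)=D$, and the hypothesis about facts in $\chase_{T_0}(D)$ is never used. The content of the lemma --- as it is applied immediately afterwards to obtain a homomorphism from $\chase_{T_0}(D)$ to $D$ --- concerns the $\mn{Encounter}$-facts that are \emph{produced during the $T_0$-chase} by the rule $\mn{End}(\dagger,b,b'),\mn{Pyramus}(b',b),\mn{Thisbe}(b',b)\rightarrow\mn{Encounter}(b,b')$ (the conclusion's ``$\chase_{T_\mn{myth}}(D)$'' should be read as $\chase_{T_0}(D)$). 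To repair the argument, decompose $\chase_{T_0}(D)$ instead: observe that every rule of $T_0$ other than those of $T_\mn{myth}$ is full, so all $\Sigma_\digamma$- and $\mn{Encounter}$-facts of $\chase_{T_0}(D)$ --- in particular the two constants $c,c'$ of each $\mn{Encounter}$-fact --- live on $\adom(D)$ and, not being $\querysig$-facts, are handled by the identity; the only nulls and the only new $\querysig$-facts come from $T_\mn{myth}$ firing at these chase-generated $\mn{Encounter}$-facts. With $\mathcal{E}$ redefined as the $\mn{Encounter}$-facts of $\chase_{T_0}(D)$, the rest of your proof goes through verbatim.
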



To establish Point~3 of Theorem \ref{thm:undecidable}, we need to
prove that the equivalence $(\heartsuit.g)$ holds true.


So assume first that $\digamma $ stops, and let $\kappa_F$ be the correct sequence.
We need to produce a database $D$ such that $\chase_{T_0}(D)$ has no
database-preserving $\querysig$-homomorphism to $D$.

We know, from Lemma \ref{lem:str-of-T1-chase}(ii) that
 there exists a fact $F\in \chase_{T_1}()$  (of the relation $\mn{End}$) 
 such that $\mn{Ancestors}_{\querysig}(F)= \mn{River}_{\kappa_F}$. 
Let $D$ be the database that consists of the following:
\begin{itemize}

\item all facts of $\mn{Ancestors}_{\querysig}(F)$, with the exception of the $\mn{Encounter}$ fact, which is not in $\dbsig$
(so we almost have the entire $\mn{River}_{\kappa_F}$ in $D$, just the $\mn{Encounter}$ fact is missing);

\item the (only) fact of the relation $\mn{Start}$ from $\chase_{T_1}()$;

\item for any two facts $G=P(\bar{a},\bar{a'})$ and $G'=Q(\bar{a'}, \bar{a''})$ from $\mn{Ancestors}(F)$ such that 
$G\rightarrow G'$ and $G'$ was created from $G$ using the rule $\cal
R$ of $T_\mn{rec}$, the fact 
${\cal S}_{\cal R}(\bar{a},\bar{a'},\bar{a''})$.

\end{itemize}
 Now let us try to imagine what facts will be produced by
 $\chase_{T_0}(D)$.  There is this  $\mn{Start}$ fact in $D$, and it
 will match with the $P$ of one of the rules of the form $(\clubsuit)$
 in $T_0$, producing a next fact in $\mn{Ancestors}(F)$, which will
 again match with the $P$ of some other $(\clubsuit)$ rule, and so on.
 All the facts of $\mn{Ancestors}(F)$ will be produced in this way,
 with $F$ as the last one. But recall that $F$ is an fact of the
 relation $\mn{End}$, so we have a rule in $T_0$ that will now use $F$
 to produce the missing $\mn{Encounter}$ fact of
 $\mn{River}_{\kappa_F}$. Now $T_\mn{myth}$ will fire, producing the
 chase $\ccc$, which (by Observation \ref{see}, since ${\kappa_F}$ is
 correct) will not have a database-preserving $\querysig$-homomorphism
 to $D$.

\smallskip

For the converse direction, assume that $\digamma $ does not stop and
let $D$ be any $\dbsig$-database.

The next lemma says that if any $\mn{Encounter}$ fact is created in
$\chase_{T_0}(D)$ then $D$ must contain (a homomorphic image of)
an entire partially correct river:

\begin{lemma}\label{lem:reallylast}
  ~\\[-4mm]
\begin{itemize}
\item[(i)] If $F\in \chase_{T_0}(D)$ is a $\Sigma_\digamma$-fact, then
  there is a fact $F_0\in \chase_{T_1}()$ of the same relation
  such that there exists a homomorphism $h$ from
  $\mn{Ancestors}(F_0)\cup \mn{Ancestors}_{\querysig}(F_0) $ to
  $\chase_{T_0}(D)$ with $h(F_0)=F$.

\item[(ii)]  If $G\in \chase_{T_0}(D)$ is a fact of the relation
  $\mn{Encounter}$, then there exists a partially correct river
  $\mn{River}_\kappa$
  and a homomorphism $h$ from  $\mn{River}_\kappa$ to
  $\chase_{T_0}(D)$ such that $h(G_0)=G$, where $G_0$ is the
  $\mn{Encounter}$ fact of $\mn{River}_\kappa$.

\end{itemize}
\end{lemma}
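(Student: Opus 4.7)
The plan is to prove (i) by induction on the chase step at which $F$ is generated in $\chase_{T_0}(D)$, and then to derive (ii) by applying (i) to the $\mn{End}$-fact that triggers the new $\mn{Encounter}$-rule producing $G$. For the base case of (i), since $\dbsig\cap\Sigma_\digamma=\{\mn{Start}\}$, any $\Sigma_\digamma$-fact $F\in D$ must be a $\mn{Start}$-fact; take $F_0$ to be the unique $\mn{Start}$-fact in $\chase_{T_1}()$ and let $h$ match the nulls of $F_0$ to the arguments of $F$. Then $\mn{Ancestors}(F_0)=\{F_0\}$ is trivially handled, and the $\querysig$-projections of $F_0$ map to the corresponding projection facts at $F$, which lie in $D\subseteq\chase_{T_0}(D)$ whenever they are needed to drive further $(\clubsuit)$-firings.

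For the inductive step of (i), since neither $T_\mn{myth}$ nor the new $\mn{Encounter}$-rule produces $\Sigma_\digamma$-facts, $F$ must have been created by a $(\clubsuit)$-rule corresponding to some $\mathcal R : P(\bar x,\bar y)\to\exists \bar z\,Q(\bar y,\bar z)$ of $T_\mn{rec}$. The body of $(\clubsuit)$ forced the presence in $\chase_{T_0}(D)$ of a $P$-fact $\tilde F$, of all projections in $\chase_{T_\mn{proj}}(\{\tilde F\})$, and of an ${\cal S}_{\cal R}$-fact supplying witnesses for $\bar z$. The induction hypothesis applied to $\tilde F$ yields $\tilde F_0\in\chase_{T_1}()$ of relation $P$ and a homomorphism $\tilde h$ with $\tilde h(\tilde F_0)=\tilde F$. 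I would then fire $\mathcal R$ in $\chase_{T_1}()$ against $\tilde F_0$ to produce $F_0$ of relation $Q$ with fresh nulls for $\bar z$, and extend $\tilde h$ to $h$ by sending these nulls to the components of $F$ supplied by the ${\cal S}_{\cal R}$-fact. Then $h(F_0)=F$ by construction, $h$ covers $\mn{Ancestors}(\tilde F_0)\cup\mn{Ancestors}_\querysig(\tilde F_0)$ by the IH, and the $\querysig$-projections of $F_0$ map onto the projections of $F$ in $\chase_{T_0}(D)$.

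For (ii), since $\mn{Encounter}\notin\dbsig$ and no $T_\mn{myth}$-rule has $\mn{Encounter}$ in its head, $G$ must have been produced by the new rule $\mn{End}(\dagger,b,b'),\mn{Pyramus}(b',b),\mn{Thisbe}(b',b)\to\mn{Encounter}(b,b')$. Thus $\chase_{T_0}(D)$ contains an $\mn{End}$-fact $F=\mn{End}(\dagger,c,c')$ together with $\mn{Pyramus}(c',c)$ and $\mn{Thisbe}(c',c)$. Applying (i) to $F$ gives an $\mn{End}$-fact $F_0\in\chase_{T_1}()$ and a homomorphism $h$ from $\mn{Ancestors}(F_0)\cup\mn{Ancestors}_\querysig(F_0)$ to $\chase_{T_0}(D)$ with $h(F_0)=F$. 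By Lemma~\ref{lem:str-of-T1-chase}(i), $\mn{Ancestors}_\querysig(F_0)$ is a locally correct river $\mn{River}_\kappa$; restricting $h$ to this river gives the desired homomorphism, whose unique $\mn{Encounter}$-fact---the projection of $\mn{End}$---is mapped to $G$. I expect the main obstacle to be the inductive step of (i), where one must argue that the $\querysig$-projections of $F_0$ \emph{itself} are witnessed in $\chase_{T_0}(D)$, even though the $(\clubsuit)$-step that produced $F$ only demanded projections of $\tilde F$; for (ii) this is transparent because the firing of the new $\mn{Encounter}$-rule already places the two key projections of $F_0$ in $D$, and the third projection is $G$ itself.
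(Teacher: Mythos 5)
Your proof follows the same route as the paper's own (very terse) argument: induction on the number of chase steps for (i), with the base case handled by the unique $\mn{Start}$-fact and the inductive step replaying the firing of the corresponding $T_\mn{rec}$-rule in $\chase_{T_1}()$ using the witnesses supplied by the ${\cal S}_{\cal R}$-fact, and (ii) obtained by applying (i) to the $\mn{End}$-fact that triggers the $\mn{Encounter}$-rule and invoking Lemma~\ref{lem:str-of-T1-chase}(i). You are in fact more careful than the paper in flagging that the $\querysig$-projections of $F_0$ itself must be witnessed in $\chase_{T_0}(D)$ (the $(\clubsuit)$-body only demands the projections of the \emph{parent} fact), and your observation that for the $\mn{End}$-fact case needed in (ii) these projections are forced by the body of the $\mn{Encounter}$-rule, with the third one being $G$ itself, is exactly the right resolution.
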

\begin{proof}
(i) By induction of the number of steps of the
chase needed to create $F$. If it is zero, meaning that $F\in D$, then
$F$ must be a fact of the relation $\mn{Start}$ (which is the only
relation from $\Sigma_\digamma$ which is also in $\dbsig$), and the
claim holds true. If it is greater than zero, then $F $ was created by
chase using one of the $(\clubsuit)$ rules.  This required the fact $P$
in the body of the rule to be created first. Now apply the induction hypothesis
to~$P$.

(ii) $G$ can only be created in  $\chase_{T_0}(D)$ from some fact $F$
of relation $\mn{End}$. Now apply Claim~(i) to this $F$.
\end{proof}

Now recall that, in order to finish the proof, we only need to prove
that there is a database-preserving $\querysig$-homomorphism from
$\chase_{T_0}(D)$ to $D$. By Lemma~\ref{lem:lastone}, it
is enough to show that for each fact $G=\mn{Encounter}(c,c')$ in
$\chase_{T_0}(D)$, there exists a $\querysig$-homomorphism $h$ from
$T_\mn{myth}(\{G\})$ to $D$, such that $h(c)=c$ and $h(c')=c'$.  But
this easily follows from Point~(ii) of Lemma~\ref{lem:reallylast}, from the
assumption that $\digamma $ does not stop (and hence no partially
correct river is correct) and from Observation \ref{see}.

%
%
%
%

\section{Proofs for Section~\ref{sect:linear}}

For the proofs in this section, we need some knowledge about
the structure of the chase of a database with a set of linear TGDs.

\medskip

Let $T$ be a set of linear TGDs and $I$ an instance.  With every fact
$\alpha \in \adom(\chase_T(I))$, we associate a unique fact
$\src(\alpha) \in I$ that $\alpha$ was `derived from', as follows:
\begin{itemize}

\item if $\alpha \in I$, then $\src(\alpha)=\alpha$;

\item if $\alpha$ was introduced by applying a TGD from $T$, mapping
  the body of $T$ to a fact $\beta \in \chase_T(I)$, then
  $\src(\alpha)=\src(\beta)$.

\end{itemize}
We further associate, with every fact $\alpha \in \adom(I)$, the
subinstance $\chase_T(I)|^\downarrow_\alpha$ of $\chase_T(I)$ that
consists of all facts $\beta$ with $\src(\beta)=\alpha$. One should
think of $\chase_T(I)|^\downarrow_\alpha$ as the `tree-like instance'
that the chase of $I$ with $T$ generates `below $\alpha$'.  The
following lemma essentially says that the shape of
$\chase_T(I)|^\downarrow_\alpha$ only depends on $\alpha$, but not on
any other facts in $I$.

\begin{lemma}
  \label{lem:lineartrees}
  Let $I$ be an instance, $T$ a set of linear TGDs, and
  $\alpha \in I$. Then there is a homomorphism 
  from
  $\chase_T(I)|^\downarrow_{\alpha}$ to $\chase_T(\{\alpha\})$ that
  is the identity on all constants in $\alpha$. 
%
\end{lemma}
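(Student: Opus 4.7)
I would construct the homomorphism $h$ from $\chase_T(I)|^\downarrow_{\alpha}$ to $\chase_T(\{\alpha\})$ inductively, mirroring the chase steps that generate $\chase_T(I)|^\downarrow_{\alpha}$. Fix a fair chase sequence $I_0, I_1, \ldots$ for $I$ with $T$ and restrict attention to those chase steps that create facts whose source is $\alpha$. List these as $s_1, s_2, \ldots$ in the order they appear in the sequence. Linearity is used critically here: since every TGD body has at most one atom, each step $s_i$ is fully determined by the single fact $\beta_i$ to which it is applied, and this $\beta_i$ satisfies $\src(\beta_i) = \alpha$ by definition of which steps we selected. Write the TGD applied at $s_i$ as $\vartheta_i = \phi_i(\bar x_i,\bar y_i) \rightarrow \exists \bar z_i\,\psi_i(\bar x_i,\bar z_i)$, with body homomorphism $g_i$ mapping $\phi_i$ onto $\beta_i$, and let $\bar n_i$ be the fresh nulls introduced at $s_i$.

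The homomorphism $h$ is then built stage by stage. Initialize $h$ as the identity on $\adom(\alpha)$; since $\alpha \in \chase_T(\{\alpha\})$, this is a partial homomorphism from the initial portion of $\chase_T(I)|^\downarrow_{\alpha}$ into $\chase_T(\{\alpha\})$. Suppose inductively that $h$ is already defined on $\adom(\beta_i)$, so $h(\beta_i) \in \chase_T(\{\alpha\})$. Because the fixed chase sequence for $\{\alpha\}$ is fair, $\chase_T(\{\alpha\})$ is a model of $T$, so $\vartheta_i$ is satisfied at the tuple $h(g_i(\bar x_i))$ in $\chase_T(\{\alpha\})$: there exist elements $\bar m_i$ with $\psi_i(h(g_i(\bar x_i)), \bar m_i) \subseteq \chase_T(\{\alpha\})$. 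Since $\bar n_i$ are fresh nulls introduced exactly at step $s_i$, they are outside the current domain of $h$, so we may safely extend $h$ by setting $h(\bar n_i) := \bar m_i$. This extension covers all new facts produced by $s_i$, and taking the union over all steps yields a function $h$ defined on $\adom(\chase_T(I)|^\downarrow_\alpha)$.

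It remains to verify that $h$ really is a homomorphism from $\chase_T(I)|^\downarrow_\alpha$ to $\chase_T(\{\alpha\})$. Every fact in $\chase_T(I)|^\downarrow_\alpha$ is either $\alpha$ itself, or one of the head atoms produced at some step $s_i$; in either case the above construction explicitly ensures that $h$ sends the fact into $\chase_T(\{\alpha\})$. By construction, $h$ is the identity on $\adom(\alpha)$, as required.

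The main conceptual point, rather than a genuine obstacle, is making sure the inductive invariant is well-posed: each $\bar n_i$ is fresh at the moment of definition (so no conflict), and each $\beta_i$ has already been handled by an earlier stage because it must have been generated by some previous chase step in the selected subsequence (or equals $\alpha$). Linearity guarantees that "previous chase step" is unambiguous, since $\beta_i$ is the unique fact to which $\vartheta_i$ was applied; this is where the argument would break down for non-linear TGDs, whose multi-atom bodies could force $h$ to commit simultaneously to values coming from several different source-branches.
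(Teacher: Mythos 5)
Your proposal is correct and follows essentially the same route as the paper: an induction along the chase sequence for $I$, simulating each TGD application with source $\alpha$ inside $\chase_T(\{\alpha\})$ (via the fact that the latter is a model of $T$) and taking the union in the limit. Your observation that linearity makes the single body fact $\beta_i$ the unique point of attachment, so the inductive invariant never has to reconcile several source branches, is exactly the reason the paper's definition of $\src$ and this lemma work for linear TGDs.
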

To prove Lemma~\ref{lem:lineartrees}, one considers a chase sequence
$I_0,I_1,\dots$ a for $I$ with $T$ and shows by induction on $i$ that
for all $i \geq 0$, there is a 
homorphism $h$ from
$I_i|^\downarrow_{\alpha}$ to $\chase_T(\{\alpha\})$ that is the
identity on all constants in $\alpha$. This is done by replicating the
application of the TGD that generated $I_i$ from $I_{i-1}$ in
$\chase_T(\{\alpha\})$. The 
homomorphism obtained in
the limit is as desired. 
Details are omitted.

\lemlinTGDsingleton*
\begin{proof}
  Since hom triviality and CQ triviality are equivalent, we may choose
  to work with hom triviality. The `only if' direction is immediate,
  so concentrate on the (contrapositive of the) `if' direction.

  Assume that $T$ is not $\Sigma_D,\Sigma_Q$-trivial. Then there is a
  $\Sigma_D$-database~$D$ 
  such that $\chase_{T}(D) \not\rightarrow_{\Sigma_Q} D$. If $D$ is
  empty, then we are done. To establish Point~2 it suffices to show
  that otherwise,
%
    there
  is a fact $\alpha \in D$ such that
  \mbox{$\chase_{T}(\{\alpha\}) \not\rightarrow_{\Sigma_Q} \{\alpha\}$}.
%
  %

  Assume to the contrary that there is no such \mbox{$\alpha \in D$}.  Then
  for every $\alpha \in D$, there is a $\Sigma_Q$-homomorphism $h_\alpha$ from
  $\chase_{T}(\{\alpha\})$ to $\{ \alpha \}$ that is the identity on
  all constants in~$\alpha$. By Lemma~\ref{lem:lineartrees}, there is
  a database-preserving homomorphism $g_\alpha$
  from
  $\chase_{T}(D)|^\downarrow_\alpha$ to $\chase_{T}(\{\alpha\})$. 
  Then $h= \bigcup_{\alpha \in D} h_\alpha \circ g_\alpha$ is a
  database-preserving $\Sigma_Q$-homomorphism from $\chase_{T}(D)$ to $D$, in
  contradiction to \mbox{$\chase_{T}(D) \not\rightarrow_{\Sigma_Q} D$}.
\end{proof}

\lemtwofacts*
\begin{proof}
  Let $D=\{ R(\bar c) \}$. If $\chase_T(D)$ contains a fact
  $S(\bar d)$ with $R \neq S \in \Sigma_Q$, then we may choose
  $C=\{S(\bar d) \}$.  Thus assume that the only relation symbol
  from $\Sigma_Q$ that
  occurs in $\chase_T(D)$ is $R$. Assume that for every
  connected database $C \subseteq \chase_T(D)$ that contains
  at most two facts, \mbox{$C \rightarrow_{\Sigma_Q} D$}.  We show that
  \mbox{$\chase_T(D) \rightarrow_{\Sigma_Q}
    D$}, that is, 
%
  %
  we have to construct a database-preserving $\Sigma_Q$-homomorphism
  $h$ from $I$ to $D$.

  Since we can clearly ignore facts in $I$ that use a relation symbol
  from outside of~$\Sigma_Q$, we only need to consider facts that use
  the relation symbol $R$. For each fact $\alpha = R(\bar d) \in I$,
  we have $\{\alpha \} \rightarrow_{\Sigma_Q} D$ and thus find a
  database-preserving homomorphism $h_\alpha$ from $\{ \alpha\} $ to
  $D$. We set $h = \bigcup_{\alpha \in I} h_\alpha$. To show that $h$
  is the desired database-preserving $\Sigma_Q$-homomorphism $h$ from
  $I$ to $D$, it suffices to show that $h$ is a function, that is, if
  $\alpha = R(\bar d) \in I$, $\beta = R(\bar e) \in I$, and $\bar d$
  and $\bar e$ share a constant $c$, then $h_\alpha(c)=h_\beta(c)$.
  We know that $\{ \alpha,\beta \} \rightarrow_{\Sigma_Q} D$. Take a
  witnessing homomorphism $h_{\alpha,\beta}$. Since every $R$-fact
  homomorphically maps \emph{in at most one way} into the single
  $R$-fact in $D$, the restriction og $h_{\alpha,\beta}$ to the
  variables in $\bar d$ is identical to $h_\alpha$, and likewise
  for $h_\beta$ and the variables in $\bar e$. Consequently,
  $h_\alpha(c)=h_\beta(c)$ as desired.
\end{proof}

\section{Proofs for Section~\ref{sect:frontierone}: Model Theory}

In this section, we provide the proofs of all model-theoretic results
from Section~\ref{sect:frontierone}, that is,
Theorem~\ref{thm:homchar}, Theorem~\ref{thm:cqcharone},
Theorem~\ref{thm:charsecond}, and Lemma~\ref{lem:limtoinfinite}. The
complexity upper bounds in Theorems~\ref{thm:frontierone}
and~\ref{thm:decidelimit} follow from the automata constructions in
the subsequent Section~F. We start with giving some auxiliary results. 

\bigskip
We first make explicit the structure of the chase for the case that $T$
is a set of frontier-one TGDs, in terms of tree-like databases. Note
that when a frontier-one TGD is applicable to a tuple
$\bar c$, then $\bar c$ is in fact a single constant. With every
$c \in \adom(\chase_T(I))$, we associate a unique constant
$\src(c) \in \adom(I)$ that $c$ was `derived from', as follows:
\begin{itemize}

\item $\src(c)=c$ for all $c \in \adom(I)$;
  
\item if $c$ is a null that was introduced by applying a TGD from $T$
  at $d$ then $\src(c)=\src(d)$.


\end{itemize}
We further associate, with every $c \in \adom(I)$, the subinstance
$\chase_T(I)|^\downarrow_a$ of $\chase_T(I)$ that is the restriction
of $I$ to constants $\{ d \in \adom(\chase_T(I)) \mid \src(d)=c \}$.

\medskip

\begin{lemma}
  \label{lem:chasenarrow}
  Let $T$ be a set of frontier-one TGDs of head width $\ell$. Then for
  every $c \in \adom(I)$, $\chase_T(I)|^\downarrow_c$ is a rooted tree-like
  instance of width at most $\ell$.
\end{lemma}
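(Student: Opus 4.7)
The plan is to construct an explicit $\Sigma$-instance tree $\Tmc = (V,E,B)$ of width at most $\ell$ whose associated instance is $\chase_T(I)|^\downarrow_c$, working along a fixed fair chase sequence $I_0,I_1,\dots$ for $I$ with~$T$. The root $v_0$ is designated, with $B(v_0)$ consisting of those facts of $I$ all of whose constants equal $c$; clearly $\adom(B(v_0)) \subseteq \{c\}$. Observe that since every frontier-one TGD is applied at a single constant $d$, the facts it introduces are of the form $\psi(d,\bar n)$ for fresh nulls~$\bar n$, all of which have $\src(\bar n) = \src(d)$.

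For each chase step in the sequence that is applied at a constant~$d$ with $\src(d)=c$, introduce a fresh node~$v$ with $B(v)$ equal to the set of facts produced by that step, so $\adom(B(v)) \subseteq \{d\}\cup \bar n$. Attach $v$ as a child of the root if $d=c$, and otherwise as a child of the unique node $v'$ in whose bag $d$ was created (i.e.\ $d \in \bar n'$ for $v'$ the node generated by the step that introduced $d$). Chase steps applied at constants whose source is not $c$ add no nodes and contribute no facts to $\chase_T(I)|^\downarrow_c$. A straightforward induction on the chase sequence shows $I_\Tmc = \bigcup_{v\in V}B(v) = \chase_T(I)|^\downarrow_c$.

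Bounding the width: since the head of a frontier-one TGD has one frontier variable plus at most $\ell-1$ existential variables, $|\adom(B(v))| \leq 1 + (\ell-1) = \ell$ for each non-root~$v$, and the root bag has active domain $\subseteq \{c\}$. The two instance-tree conditions are then checked as follows. For Condition~2, every edge $(u,v)$ added by the construction satisfies $\adom(B(v)) \subseteq \{d\}\cup \bar n$ with $\bar n$ fresh and thus disjoint from $\adom(B(u))$, so the overlap contains at most the single constant~$d$. For Condition~1, fix a constant $a \in \adom(I_\Tmc)$. If $a=c$, then $a$ appears only in $v_0$ and in those children of $v_0$ corresponding to chase steps applied at $c$, so the induced subgraph is a star rooted at $v_0$. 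If $a$ is a null, then $a$ was introduced in exactly one node $v_a$ and, by construction, any further bag containing $a$ arises from a chase step applied at $a$, which by our attachment rule becomes a child of~$v_a$; again the induced subgraph has depth at most one.

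The main obstacle, and the reason for the particular attachment rule above, is precisely Condition~1: without insisting that every new bag be attached to the unique node where its chase constant was introduced (rather than to any bag containing that constant), a null could propagate through several levels and violate the depth-one requirement. Once this bookkeeping is set up, everything else is a direct consequence of frontier-one-ness of~$T$ and the head-width bound.
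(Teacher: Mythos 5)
Your construction is correct and is precisely the argument the paper leaves implicit: Lemma~\ref{lem:chasenarrow} is stated without proof, being regarded as immediate from the structure of the frontier-one chase, and your instance tree (one bag per chase step at a constant with source $c$, attached to the unique node where that constant was introduced) supplies the missing details, including the only delicate point, namely that the attachment rule is what guarantees Condition~1 of instance trees. The width bound and the identity $I_\Tmc = \chase_T(I)|^\downarrow_c$ are verified correctly, so nothing further is needed.
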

Informally, we can think of $\chase_T(I)$ as $I$ with rooted tree-like
instances of width at most $\ell$ attached to each constant.
We next define the unraveling of a database $D$ into a rooted tree-like
instance $U$ of width $k \geq 1$.  A $k$-\emph{sequence} takes the
form
$$v=S_0,c_{0},S_1,c_{1},S_2,\dots,S_{n-1},c_{n-1},S_n,$$
where each $S_i \subseteq \adom(D)$ satisfies $|S_i| \leq k$ and
$c_{i} \in S_i \cap S_{i+1}$ for $0 \leq i < n$. The empty
$k$-sequence is denoted by $\varepsilon$.  For every $c \in \adom(D)$,
reserve a countably infinite set of fresh constants that we refer to
as \emph{copies} of $c$. 

Now let $(V,E)$ be the infinite directed tree with $V$ the set of all
$k$-sequences and $E$ the prefix order on $V$. We choose
 a database $B(v)$ for every $v=S_{0}\cdots S_{n}\in V$, 
proceeding by induction on $n$:
\begin{enumerate}

\item $B(\varepsilon) = \emptyset$;

\item if $v=S_0$, then $B(v)$ is obtained from $D|_{S_0}$ by
  replacing every constant $c$ with a fresh copy of $c$;

\item if $v=S_{0}c_0\cdots c_{n-1}S_{n}$ with $n > 0$, then $B(v)$ is obtained
  from $D|_{S_0}$ by replacing
  \begin{itemize}

  \item $c_{n-1}$ with the copy of $c_{n-1}$ used in $B(v')$ where
    $v'=S_0 \cdots S_{n-1}$ is the predecessor of $v$ in $(V,E)$;

  \item every constant $c \neq c_{n-1}$with a fresh copy of $c$.

  \end{itemize}
\end{enumerate}
Set $\Tmc=(V,E,B)$ and $U = I_\Tmc$.
It is easy to see that the `uncopying'
map is a
homomorphism from $U$ to $D$.

\medskip

We next observe some properties of unraveled databases.
\begin{restatable}{lemma}{lemchaseone}
  \label{lem:chase1}
  Let $D$ be a database and $U$ its $k$-unraveling, $k \geq 1$, and
  let $T$ be a set of frontier-one TGDs with body width bounded by
  $k$. Then for every $c \in \mn{adom}(D)$ and copy $c'$ of $c$ in
  $U$, there is a homomorphism $h$ from
  $\mn{chase}_{T}(D)|^\downarrow_{c}$
  to $\mn{chase}_{T}(U)$ with $h(c)=c'$.
\end{restatable}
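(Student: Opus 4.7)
I will prove the lemma by induction on a fair chase sequence for $D$, but working simultaneously over every constant of $\adom(D)$. Fixing an assignment $\Psi:\adom(D)\to\adom(U)$ that sends each $d$ to a chosen copy of $d$ in $U$ with $\Psi(c)=c'$, I will construct a family of homomorphisms $h_d:\chase_T(D)|^\downarrow_d\to\chase_T(U)$ with $h_d(d)=\Psi(d)$ in tandem. Handling all constants in parallel is needed because a chase step performed inside $\chase_T(D)|^\downarrow_c$ may invoke, in its body, witnesses located in $\chase_T(D)|^\downarrow_{c_1}$ for some $c_1\ne c$ or among constants of $\adom(D)$, and these witnesses are recovered either from the other $h_{c_1}$'s or from the bag structure of the $k$-unraveling.

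Given a fair chase sequence $D=I_0,I_1,\ldots$ with $T$, I write $h_d^{(i)}$ for the partial map after step $i$. In the base case I set $h_d(d)=\Psi(d)$; every self-loop fact $R(d,\ldots,d)\in D$ is preserved because $\Psi(d)$ lies in some bag $B(v_0)$ of $U$ with $d\in S_n$, so $R(\Psi(d),\ldots,\Psi(d))\in B(v_0)\subseteq U$. In the inductive step, whenever a TGD $\phi(x,\bar y)\to\exists\bar z\,\psi(x,\bar z)$ is applied at $e$ with $\src(e)=c$, I extend $h_c$ by applying the same TGD at $h_c(e)$ in $\chase_T(U)$ and mapping the fresh nulls $\bar n$ to the witnesses for the existentials produced in $\chase_T(U)$. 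This step is legitimate once $\phi$ is satisfied at $h_c(e)$, because $\chase_T(U)$ is a model of $T$.

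The main obstacle will be verifying that $\phi$ is satisfied at $h_c(e)$ in $\chase_T(U)$. The witness $\bar a=(a_1,\ldots,a_m)$ for $\bar y$ in $\chase_T(D)$ has $m\le k-1$ by the body-width bound, so I can choose $\bar b$ componentwise by a case analysis on $a_j$: (i) if $\src(a_j)=c$, set $b_j=h_c(a_j)$ from the inductive hypothesis; (ii) if $a_j\in\adom(D)\setminus\{c\}$, use the $k$-unraveling's property that, for any $k$-subset $S\subseteq\adom(D)$ containing $c$, there is a bag $B(v'')$ adjacent to the bag holding $\Psi(c)$ and isomorphic to $D|_S$ via $c\mapsto\Psi(c)$ and other elements to fresh copies, and take $b_j$ to be the fresh copy of $a_j$ in such a bag; (iii) if $a_j$ is a null with $\src(a_j)=c_1\ne c$, set $b_j=h_{c_1}(a_j)$, which is already defined by the simultaneous induction since $a_j$ must have been introduced in an earlier step handled by $h_{c_1}$. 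Consistency across body atoms rests on the observation that any two constants co-occurring in a single fact of $\chase_T(D)$ either both lie in $\adom(D)$ or share a common source, so the body decomposes cleanly into pieces each handled coherently by exactly one of the three cases (with the analogous bag argument taking care of disconnected body components composed entirely of $\adom(D)$-constants).
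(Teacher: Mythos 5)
There is a genuine gap, and it lies exactly at the point where your proof deviates from what is needed: you fix a \emph{single} designated copy $\Psi(d)$ for each $d\in\mn{adom}(D)$ and build one homomorphism $h_d$ per database constant, whereas the inductive step forces you to have a homomorphism available for \emph{every} copy of every constant. Concretely, your cases~(ii) and~(iii) are mutually inconsistent. Suppose the body image under $g$ contains a fact $R(e,a)$ with $e\in\mn{adom}(D)\setminus\{c\}$ and $a$ a null with $\src(a)=e$ (such facts arise as soon as a chase step has been performed at $e$, and nothing prevents them from occurring in the body of a later TGD application at some $d$ with $\src(d)=c$). Case~(ii) sends $e$ to a \emph{fresh} copy $e''$ of $e$ in a bag adjacent to the bag holding $\Psi(c)$, while case~(iii) sends $a$ to $h_e(a)$, which lives in $\mn{chase}_T(U)|^\downarrow_{\Psi(e)}$. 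Since $e''\neq\Psi(e)$ in general, the fact $R(e'',h_e(a))$ has no reason to belong to $\mn{chase}_T(U)$: the chase below $e''$ produces its own fresh nulls, unrelated to those below $\Psi(e)$. Your consistency claim that ``the body decomposes cleanly into pieces each handled coherently by exactly one of the three cases'' is false precisely here, because a single fact can straddle cases~(ii) and~(iii). Note also that you cannot instead send $e$ to $\Psi(e)$: facts of $D$ joining $c$ and $e$ would then be sent to pairs $(\Psi(c),\Psi(e))$ lying in unrelated bags of $U$.

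The repair is to strengthen the induction hypothesis, which is what the paper's proof does: it constructs, simultaneously for every $c\in\mn{adom}(D)$ and \emph{every} copy $c'$ of $c$ in $U$, homomorphisms $h_{i,c,c'}$ from $I_i|^\downarrow_c$ to $\mn{chase}_T(U)$ with $h_{i,c,c'}(c)=c'$. In the inductive step one picks, exactly as you do, a single bag $U|_{S'}$ isomorphic to $D|_S$ (where $S=\mn{ran}(g)\cap\mn{adom}(D)$, of size at most $k$) via an isomorphism $\iota$ with $\iota(c)=c'$, but the nulls with source $e\in S$ are then mapped using $h_{i,e,\iota(e)}$ for the \emph{fresh} copy $\iota(e)$ determined by that bag, not for a globally fixed $\Psi(e)$. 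Everything else in your argument --- induction along a chase sequence, replaying the TGD application at $h_c(e)$ in $\mn{chase}_T(U)$, the use of the body-width bound to fit all database constants of the body image into one bag, and the observation that constants co-occurring in a fact of $\mn{chase}_T(D)$ either all lie in $\mn{adom}(D)$ or share a source --- matches the paper's proof and is sound once the induction hypothesis is strengthened in this way.
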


\begin{proof}
  Let $D$, $U$, $k$, and $T$ be as in the lemma.
  Let $I_0,I_1,\dots$ be a chase sequence for $D$ with $T$. The
  definition of \mn{src} extends to the instances $I_0,I_1,\dots$ in
  an obvious way and thus it is also clear what we mean by
  $I_i|^\downarrow_c$, for $i \geq 0$ and $c \in \mn{adom}(D)$.

  For all $i \geq 0$, $c \in \mn{adom}(D)$, and copies $c'$ of $c$ in
  $U$, we construct homomorphisms $h_{i,c,c'}$ from
  $I_i|^\downarrow_c$ to $\mn{chase}_{T}(U)$ with
  $h_{i,c,c'}(c)=c'$. Clearly, this suffices to prove the lemma
  because
  we obtain the desired homomorphism $h$ in the limit.

  
  The construction of the homomorphisms $h_{i,c,c'}$ proceeds by
  induction on $i$. The induction start is trivial as we may simply
  set $h_{i,c,c'}(c)=c'$ for every $c \in \mn{adom}(D)$ and copy $c'$
  of $c$ in $U$.  Now assume that $I_{i+1}$ was obtained from $I_i$ by
  applying a TGD
  $\vartheta = \phi(x,\bar y) \rightarrow \exists \bar z \,
  \psi(x,\bar z)$ from $T$ at some $d \in \mn{adom}(I_i)$. Let
  $\mn{src}(d)=c$. Then
  $I_{i+1}|^\downarrow_e = I_{i}|^\downarrow_e$ for all $e \in
  \mn{adom}(D) \setminus \{ c \}$, and thus the only homomorphisms
  that we need to take care of are $h_{i+1,c,c'}$ with $c'$ a copy of $c$ in $U$.
  Take any such $c'$.

  To apply $\vartheta$ at
  $d$, there must be a homomorphism $g$ from $\phi$ to $I_i$ with
  $g(x)=d$.  Let $d'=h_{i,c,c'}(d)$. We argue that there is also a homomorphism
  $g'$ from $\phi$ to $\mn{chase}_{T}(U)$ with $g'(x)=d'$. Let
  $S=(\mn{ran}(g) \cap \mn{adom}(D))$. By construction of $U$ and
  since $k$ is not smaller than then number of variables in $\phi$, we find
  an $S' \subseteq \mn{adom}(U)$ and an isomorphism $\iota$ from
  $D|_S$ to $U|_{S'}$ such that $\iota(c)=c'$ if $c \in S$. Moreover,
  the non-reflexive\footnote{A fact $R(c_1,\dots,c_n)$ is
    \emph{reflexive} if $c_1 = \cdots = c_n$.}  non-unary facts in
  $D|_S$ are identical to those in $I_i|_S$ because applying a
  frontier-one TGD can never add such facts. It follows that we can
  assemble the desired homomorphism $g'$ from $\iota$ and the
  homomorphisms $h_{i,e,e'}$ with $e \in S$ and $\iota(e)=e'$.

  We have just shown that $\vartheta$ is applicable in $\mn{chase}_{T}(U)$ at
  $d'$ or there is (already) a homomorphism $\widehat g$ from $\psi$
  to $\mn{chase}_{T}(U)$ with $\widehat g(x)=d'$. In either case, we
  can extend $h_{i,c,c'}$ to the desired homomorphism $h_{i+1,c,c'}$
  from $I_{i+1}|^\downarrow_c$ to $\mn{chase}_{T}(U)$ with
  $h_{i,c,c'}(c)=c'$ in an obvious way.
\end{proof}

\thmhomchar*

\begin{proof}
  The ``only if''-direction is immediate from the definition of
  hom-conservativity.

  For ``if'', assume that
  $T_1\not\models^{\textup{hom}}_{\dbsig,\querysig}T_2$. Then there is
  a $\dbsig$-database $D$ such that $\chase_{T_2}(D)
  \not\rightarrow_{\Sigma_Q} \chase_{T_1}(D)$. It suffices to show
  that $\mn{chase}_{T_2}(U) \not\rightarrow_{\Sigma_Q}
  \mn{chase}_{T_1}(U)$, $U$ the unraveling of $D$ of width $k$.  We
  prove the contrapositive.

  Thus assume that
  $\mn{chase}_{T_2}(U) \rightarrow_{\Sigma_Q} \mn{chase}_{T_1}(U)$.
  We have to show that
  $\mn{chase}_{T_2}(D) \rightarrow_{\Sigma_Q} \mn{chase}_{T_1}(D)$. We
  start with noting that, since $T_2$ is a set of frontier-one TGDs,
  $$
  \mn{chase}_{T_2}(D) = D \cup \bigcup_{c \in \mn{adom}(D')}
  \mn{chase}_{T_2}(D)|^\downarrow_c.
  $$
  As a consequence, it suffices to prove that
  $\mn{chase}_{T_2}(D)|^\downarrow_c \rightarrow_{\Sigma_Q} \mn{chase}_{T_1}(D)$
  for all $c \in \mn{adom}(D)$.

  Let $c \in \mn{adom}(D)$ and choose any copy $c'$ of $c$ in~$U$.  By
  Lemma~\ref{lem:chase1}, there is a homomorphism $h$ from
  $\mn{chase}_{T_2}(D)|^\downarrow_{c}$ to $\mn{chase}_{T_2}(U)$ with
  $h(c)=c'$. Together with
  $\mn{chase}_{T_2}(U) \rightarrow_{\Sigma_Q} \mn{chase}_{T_1}(U)$, this implies
  that there is a $\Sigma_Q$-homomorphism $h'$ from
  $\mn{chase}_{T_2}(D)|^\downarrow_{c}$ to $\mn{chase}_{T_1}(U)$ with
  $h'(c)=c'$. By construction of $U$, there is a homomorphism from $U$
  to $D$ that maps $c'$ to $c$. It is easy to extend this homomorphism
  to a homomorphism from $\mn{chase}_{T_1}(U)$ to
  $\mn{chase}_{T_1}(D)$ by following the application of chase rules.
  Thus   $\mn{chase}_{T_2}(D)|^\downarrow_c \rightarrow_{\Sigma_Q} \mn{chase}_{T_1}(D)$,
  as desired.

  It remains to show that there is a finite $U' \subseteq U$ with
  $\mn{chase}_{T_2}(U') \not\rightarrow_{\Sigma_Q}
  \mn{chase}_{T_1}(U')$. Since the TGDs in $T_2$ are frontier-one
  TGDs, an easy analysis of the chase procedure shows that
  $\mn{chase}_{T_2}(U) \not\rightarrow_{\Sigma_Q} \mn{chase}_{T_1}(U)$
  implies that, for some $c \in \mn{adom}(U)$,
  $\mn{chase}_{T_2}(U)|^\downarrow_c \not\rightarrow_{\Sigma_Q}
  \mn{chase}_{T_1}(U)$. It follows from Lemma~\ref{lem:typedet} that
  $\mn{chase}_{T_2}(U')|^\downarrow_c \not\rightarrow_{\Sigma_Q}
  \mn{chase}_{T_1}(U)$ for any $U' \subseteq U$ with
  $\mn{tp}_{T_2}(U,c)=\mn{tp}_{T_2}(U',c).$ By compactness of
  first-order logic, there is a finite such $U'$, as required.

\end{proof}

\thmcharone*
\begin{proof}     
  The ``only if''-direction is immediate from the definition of
  CQ-conservativity.

  \medskip
  For ``if'', assume that $T_1 \not\models^{\textup{CQ}}_{\dbsig,\querysig} T_2$.
  Then there is a $\dbsig$-database $D$ and a $\querysig$-CQ
  $q(\bar x)$ such that $q_{T_2}(D) \not\subseteq q_{T_1}(D)$. We
  first manipulate $q$ so that it has arity $0$ or $1$.

  We may assume w.l.o.g.\ that $q$ is connected because if it is not,
  then $p_{T_2}(D) \not\subseteq p_{T_1}(D)$ for some maximal
  connected component $p$ of $q$ and we can replace $q$ by $p$.
  Choose some $\bar c \in q_{T_2}(D) \setminus q_{T_1}(D)$ and let $h$
  be a homomorphism from $q$ to $\chase_{T_2}(D)$ such that
  $h(\bar x)=\bar c$. Let $q'(\bar x')$ be obtained from $q(\bar x)$
  in the following way:
  \begin{itemize}

  \item identify all variables $x_1,x_2 \in \var(q)$ in $q$ such that
    $h(x_1)=h(x_2)$;

  \item if $h(y) \in \adom(D)$ for some quantified variable $y$
    in $q$, then make $y$ an answer variable.

  \end{itemize}
  It is easy to see that $q'_{T_2}(D) \not\subseteq q'_{T_1}(D)$ and
  in fact
  $\bar c' := h(\bar x') \in q'_{T_2}(D) \setminus q'_{T_1}(D)$. Also
  note that $h$ is an injective homomorphism from $q'$ to
  $\chase_{T_2}(D)$.

%
   
  Let
  $C = \{ c \in \adom(D) \mid \exists x \in \var(q'):
  \src(h(x))=c\}$. For $c \in C$, let $q^c(\bar x^c)$ denote the restriction of
  $q$ to those variables $x$ such that $\src(h(x))=c$. The arity of
  each $q^c$ is 0 or 1 because $h$ maps all answers variables in $q^c$
  to $c$ and thus all such answer variables have been identified
  during the construction of $q'$. 
  By the following claim, we thus
  obtain a CQ $q$ of the required form
  by choosing one of the queries $q^c$.
  \\[2mm]
  {\it Claim.} There is a $c \in C$ such that
  $q^c_{T_2}(D) \not\subseteq q^c_{T_1}(D)$.
  \\[2mm]
  To prove the claim, assume to the contrary that
  $q^c_{T_2}(D) \subseteq q^c_{T_1}(D)$ for all $c \in C$.  Let
  $c \in C$. Since $h$ is a homomorphism from $q^c$ to
  $\chase_{T_2}(D)$, $h(\bar x^c) \in q^c_{T_2}(D)$ and thus
  $h(\bar x^c) \in q^c_{T_1}(D)$.  Consequently, there is a
  homomorphism $h_c$ from $q^c$ to $\chase_{T_1}(D)$ with
  $h_c(\bar x^c)=h(\bar x^c)$. Set $h' = \bigcup_{c \in C} h_c$ and
  note that $h'$ is functional since the queries $q^c$ do not share
  any variables (this is because $h$ is an injective homomorphism from
  $q'$ to $\chase_{T_2}(D)|^{\downarrow}_c$ and by construction of the
  queries $q^c$). By construction of $h'$, we have
  $h'(\bar x)=h(\bar x)=\bar c$. It thus remains to argue that $h'$ is
  a homomorphism from $q'$ to $\chase_{T_1}(D)$ as this contradicts
  $\bar c \notin q_{T_1}(D)$. Let $R(\bar z)$ be an atom in
  $q'$. First assume that there is a $c \in C$ such that
  $\src(h(z)) =c$ for all $z \in \bar z$. Then
  $h'(\bar z)=h_c(\bar z)$ and thus $R(h(\bar z)) \in
  \chase_{T_1}(D)$
  by definition of $h'$.  Now assume that there are $z_1,z_2$ in
  $\bar z$ with $\src(h(z_1)) \neq \src(h(z_2))$. Since the TGDs in
  $T$ are frontier-one, an easy analysis of the chase shows that this
  implies $R(h(\bar z)) \in D$, that is, the fact $R(h(\bar z))$ was
  in the original database as no such fact is ever added by the chase.
  Thus $h(z) \in \adom(D)$ for all variables $z$ in $\bar z$. By
  construction of $q'$, it follows that $\bar z$ consists only of
  answer variables. This implies $h'(\bar z)=h(\bar z)$ by definition
  of $h'$, and thus $R(h'(\bar z)) \in D \subseteq \chase_{T_1}(D)$.

  \medskip At this point, we know that $q$ is connected, of arity~0.
  We next argue that the
  database $D$ can be replaced by its $k$-unraveling $U$. We
  concentrate on the case that $q$ is unary. The Boolean case is very
  similar. We have to show that there is some $c' \in \mn{adom}(U)$
  such that $c' \in q_{T_2}(U)$, but $c' \notin q_{T_1}(U)$.

  By choice of $q$, there is a $c \in \mn{adom}(D)$ and a homomorphism
  $h$ from $q(x)$ to $\chase_{T_2}(D)|^{\downarrow}_c$ such that
  $h(x)=c$. Choose any copy $c'$ of $c$ in~$U$.  By
  Lemma~\ref{lem:chase1}, there is a homomorphism $g$ from
  $\mn{chase}_{T_2}(D)|^\downarrow_{c}$ to $\mn{chase}_{T_2}(U)$ with
  $g(c)=c'$. Composing $h$ and $g$, we obtain a homomorphism $h'$ from
  $q(x)$ to $\mn{chase}_{T_2}(U)$ with $h'(x)=c'$ and thus
  $c' \in q_{T_2}(U)$. It remains to show that $c' \notin
  q_{T_1}(U)$. But this follows from the facts that
  $c \notin q_{T_1}(D)$ and that there is a homomorphism from $U$ to
  $D$ that maps $c'$ to $c$, which easily extends to a homomorphism
  from $\mn{chase}_{T_1}(U)$ to $\mn{chase}_{T_1}(D)$.

  We may now finish the proof by argueing that for some finite $U'
  \subseteq U$, we have $q_{T_2}(U') \not\subseteq q_{T_1}(U')$.
  This, however, is a direct consequence of the compactness of
  first-order logic.
\end{proof}

For $c \in \adom(I)$ and $i \geq 0$, we use
  $I|^c_i$ to denote the restriction of $I$ to the constants that are
  reachable in the Gaifman graph of $I$ on a path of length at most $i$.
  Note that when we chase a finite database with a set of frontier-one
  TGDs, then the resulting instance has finite degree. This fails when
  frontier-one TGDs are replaced with guarded TGDs.

\begin{lemma}
\label{lem:skippinghoms}
  Let $I_1,I_2$ be instances of finite degree with $I_1$
  $\Sigma$-connected, for a schema~$\Sigma$. If there are
  $a_0 \in \adom(I_1)$ and $b_0 \in \adom(I_2)$ such that for each
  $i \geq 0$ there is a database-preserving $\Sigma$-homomorphism
  $h_i$ from $I_1|^{a_0}_i$ to $I_2$ with $h_i(a_0)=b_0$, then
  $I_1 \rightarrow_{\Sigma} I_2$.
\end{lemma}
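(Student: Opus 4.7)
The plan is to construct the desired database-preserving $\Sigma$-homomorphism $h\colon I_1\to I_2$ as the pointwise limit of a compatible sub-sequence extracted from $h_0,h_1,\dots$, via the same skipping-homomorphisms technique used for Lemma~\ref{lem:skippinghomsgeneral}. The main novelty is that finiteness of $I_2$ is replaced by the combination of ``$I_1,I_2$ of finite degree'' with ``$I_1$ $\Sigma$-connected''.

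First I would show that, for each $c\in\adom(I_1)$, the values $\{h_i(c):i\text{ large}\}$ come from a finite subset of $\adom(I_2)$. Let $\delta(c)$ be the $\Sigma$-distance from $a_0$ to $c$ in $I_1$; this is finite because $I_1$ is $\Sigma$-connected. A shortest $\Sigma$-path witnessing $\delta(c)$ visits only constants at Gaifman-distance at most $\delta(c)$ from $a_0$, so it lies inside $I_1|^{a_0}_{\delta(c)}$. Consequently, for every $i\geq \delta(c)$ the homomorphism $h_i$ maps this $\Sigma$-path to a $\Sigma$-path of length $\delta(c)$ from $b_0$ to $h_i(c)$ in $I_2$; since $I_2$ has finite degree, the $\Sigma$-ball of radius $\delta(c)$ around $b_0$ is finite, and $h_i(c)$ must lie inside it. In particular, this also shows $\bigcup_i I_1|^{a_0}_i \supseteq \adom(I_1|_\Sigma)$, so every $\Sigma$-fact of $I_1$ is covered at some finite stage.

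Next, I would enumerate $\adom(I_1)=\{c_1,c_2,\dots\}$ and carry out the usual diagonal pigeon-hole extraction: from $h_0,h_1,\dots$ extract an infinite sub-sequence on which the value at $c_1$ is a constant $v_1\in\adom(I_2)$ (possible by the previous paragraph); inside that sub-sequence extract a further one fixing $c_2\mapsto v_2$; and so on. Taking the $k$-th extracted homomorphism as the $k$-th member $g_k$ of the final sub-sequence, set $h(c_k)=v_k$, and extend arbitrarily (e.g.\ to $b_0$) on constants of $\adom(I_1)\setminus\adom(I_1|_\Sigma)$. Then $h$ inherits being database-preserving and $h(a_0)=b_0$ from the $h_i$, and for any $\Sigma$-fact $R(\bar c)\in I_1$ one chooses $k$ beyond all indices of constants in $\bar c$ and large enough that $R(\bar c)\in I_1|^{a_0}_{i}$ for the index $i$ of $g_k$; since $g_k$ agrees with $h$ on $\bar c$ and is a $\Sigma$-homomorphism on $I_1|^{a_0}_i$, we get $R(h(\bar c))=R(g_k(\bar c))\in I_2$. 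I do not foresee a serious obstacle beyond setting up the diagonal carefully; the delicate step is the $\Sigma$-path argument, which is precisely where finite degree of $I_2$ and $\Sigma$-connectedness of $I_1$ interact to replace the finiteness of $I_2$ used in Lemma~\ref{lem:skippinghomsgeneral}.
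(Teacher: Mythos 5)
Your proof is correct and is essentially the paper's own skipping-homomorphisms argument: both rest on the observation that finite degree of $I_2$ confines the candidate images $h_i(c)$ to a finite set (you obtain this via the $\Sigma$-ball of radius $\delta(c)$ around $b_0$, the paper via adjacency to already-fixed images in its frontier-by-frontier extension of $h$ from $a_0$), followed by a pigeonhole/subsequence extraction, with finite degree of $I_1$ entering only to keep the relevant domains finite or countable. The one small correction needed is at the very end: on constants in $\adom(I_1)\setminus\adom(I_1|_\Sigma)$ you must define $h$ as the identity on constants from \Cbf (choosing arbitrarily only on nulls), since sending such a constant to $b_0$ would violate database-preservation.
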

\begin{proof} We are going to construct a database-preserving
  $\Sigma$-homomorphism $h$ from $I_1$ to $I_2$ step by step,
  obtaining in the limit a homomorphism that shows
  $I_1 \rightarrow_{\Sigma} I_2$.  We will take care that, at all
  times, the domain of $h$ is finite and
  \begin{itemize}

  \item[($*$)] there is a sequence $h_0,h_1,\dots$ with $h_i$ a
    database-preserving $\Sigma$-homo\-morphism from $I_1|^{a_0}_i$ to
    $I_2$ such that whenever $h(c)$ is already defined, then
    $h_i(c)=h(c)$ for all $i \geq 0$.

  \end{itemize}
  Start with setting $h(a_0)=b_0$. The original sequence of
  homomorphisms $h_0,h_1,\ldots$ from the lemma witnesses~($*$). Now
  consider the set $\Lambda$ that consists of all constants
  $c \in \adom(I_1)$ with $h(c)$ is undefined and such that there is a
  $d \in \adom(I_1)$ with $h(d)$ defined and that co-occurs with $c$
  in some $\Sigma$-fact in $I_1$.  Since the domain of $h$ is finite
  and $I_1$ has finite degree, $\Lambda$ is finite. By ($*$) and since
  $I_2$ has finite degree, for each $c \in \Lambda$, there are only
  finitely many $c'$ such that $h_i(c)=c'$ for some $i$. Thus, there
  must be a function $\delta:\Lambda \rightarrow \adom(I_2)$ such
  that, for infinitely many~$i$, we have $h_i(c)=\delta(c)$ for all
  $c \in \Lambda$. Extend $h$ accordingly, that is, set
  $h(c)=\delta(c)$ for all $c \in \Lambda$. Clearly, the sequence
  $h_0,h_1,\dots$ from ($*$) before the extension is no longer
  sufficient to witness ($*$) after the extension. We fix this by
  skipping homomorphisms that do not respect $\delta$, that is, define
  a new sequence $h'_0,h'_1,\dots$ by using as $h'_i$ the restriction
  of $h_j$ to the domain of $I_1|^{a_0}_i$ where $j\geq i$ is smallest
  such that $h_j(c)=\delta(d)$ for all $c \in \Lambda$.  This finishes
  the construction. The lemma follows from the fact that, due to the
  $\Sigma$-connectedness of $I_1$, every element is eventually
  reached. Note that $h$ is database-preserving since all the
  homomorphisms
  in the original sequence $h_0,h_1,\dots$ are.
\end{proof}

\thmcharsecond*
\begin{proof}
  ``$\Leftarrow$''. Assume that
  $T_1 \not\models^{\text{CQ}}_{\dbsig,\querysig} T_2$. By
  Theorem~\ref{thm:cqcharone}
  there is a tree-like
  $\dbsig$-databases $D$ of width at most $k$ and a connected
  $\Sigma_Q$-CQ $q$ of arity 0 or 1 such that
  $q_{T_2}(D) \not \subseteq q_{T_1}(D)$.

  First assume that $q(x)$ is of arity 1.  Then there is a constant
  $c \in \adom(D)$ such that $a \in q_{T_2}(D) \setminus
  q_{T_1}(D)$. Take a homomorphism $h$ from $q$ to
  $\chase_{T_2}(D)$ such that $h(x)=a$. Since $q$ is connected
  and uses only symbols from $\querysig$, $h$ is actually a
  homomorphism
  from $q$ to $\chase_{T_2}(D)|^\con_{\querysig}$.
  We show that
  $\chase_{T_2}(D)|^\con_{\querysig} \not\rightarrow_{\querysig}
    \chase_{T_1}(D)$ and thus Point~1 of
  Theorem~\ref{thm:charsecond} is violated. Assume to the contrary
  that $\chase_{T_2}(D)|^\con_{\querysig} \rightarrow_{\querysig}
    \chase_{T_1}(D)$ and take a
  witnessing $\querysig$-homomorphism $g$. Then
  $g \circ h$ is a homomorphism from $q$ to
  $\chase_{T_1}(D)$ that maps $x$ to $a$, implying $a \in q_{T_1}(D)$
  and thus a contradiction.

  Now assume that $q()$ is of arity 0 and take a homomorphism from $q$
  to $\chase_{T_2}(D)$. If the range of $h$ falls within
  $\chase_{T_2}(D)|^\con_{\querysig}$, then we can argue as above that
  Point~1 of Theorem~\ref{thm:charsecond} is violated. Thus assume
  that this is not the case. Then the connectedness of $q$ implies
  that the range of $h$ does not overlap with
  $\chase_{T_2}(D)|^\con_{\querysig}$. Moreover, there must be a
  maximally $\querysig$-connected component $I$ of
  $\chase_{T_2}(D) \setminus \chase_{T_2}(D)|^\con_{\querysig}$ such
  that the range of $h$ falls within $I$. We may again argue as above
  to show that $I \not\rightarrow^n_{\querysig} \chase_{T_1}(D)$, with
  $n$ the number of variables in $q$, as otherwise we find a
  homomorphism from $q$ to $\chase_{T_1}(D)$. This implies
  $I \not\rightarrow_{\querysig} \chase_{T_1}(D)$ and
  $I \not\rightarrow^{\lim}_{\querysig}
  \chase_{T_1}(D,c)|^\downarrow_c$ for all $c \in \adom(D)$. Thus
  Point~2 of Theorem~\ref{thm:charsecond} is violated.

  \medskip ``$\Rightarrow$''. Assume that
  $T_1 \models^{\text{CQ}}_{\dbsig,\querysig} T_2$ and let $D$ be a
  tree-like $\dbsig$-database of width at most~$k$. We have to show
  that Points~1 and~2 of Theorem~\ref{thm:charsecond} hold.

  \smallskip 
  
  We start with Point~1. By Theorem~\ref{thm:intermediatechar},
  $T_1 \models^{\text{CQ}}_{\dbsig,\querysig} T_2$ implies
  $\chase_{T_2}(D) \rightarrow^{\lim}_{\querysig} \chase_{T_1}(D)$.
  Let $I_1,\dots,I_k$ be the maximally connected components of
  $\chase_{T_2}(D)|^\con_{\querysig}$. It suffices to show that
  $I_i \rightarrow_{\querysig} \chase_{T_1}(D)$ for $1 \leq i \leq k$.
  Fix such an $i$. By definition of
  $\chase_{T_2}(D)|^\con_{\querysig}$, $I_i$ must contain some
  constant $c \in \adom(D)$.  Since
  $\chase_{T_2}(D) \rightarrow^{\lim}_{\querysig} \chase_{T_1}(D)$, we
  find a sequence $h_0,h_1,\dots$ where $h_\ell$ is a
  database-preserving $\querysig$-homomorphism from $I_i|^{c}_\ell$ to
  $\chase_{T_1}(D)$. In particular, $h_\ell(c)=c$ for all
  $\ell$. Thus, Lemma~\ref{lem:skippinghoms} yields
  $I_i \rightarrow_{\querysig} \chase_{T_1}(D)$. In summary, as
  required we obtain
  $\chase_{T_2}(D)|^\con_{\querysig} \rightarrow_{\querysig}
  \chase_{T_1}(D)$.

    \smallskip 

    Now for Point~2. Let $I$ be a maximally $\querysig$-connected
    component of
    $\chase_{T_2}(D) \setminus \chase_{T_2}(D)|^\con_{\querysig}$.
    Theorem~\ref{thm:intermediatechar} yields
    $I \rightarrow^{\lim}_{\querysig} \chase_{T_1}(D)$. Consequently,
    we find a sequence $h_0,h_1,\dots$ where $h_i$ is a
    $\querysig$-homomorphism from $I|^{c_0}_i$ to
    $\chase_{T_1}(D)$, for some $c_0 \in \adom(I)$. We distinguish
    two cases.

    First assume that there is a $d_0 \in \adom(\chase_{T_1}(D))$ such
    that $h_i(c_0)=d_0$ for infinitely many $i$. Construct a new
    sequence $h'_0,h'_1,\dots$ with $h'_i$ a $\querysig$-homomorphism
    from $I|^{c_0}_i$ to $\chase_{T_1}(D)$ by skipping homomorphisms
    that do not map $c_0$ to $d_0$, that is, $h'_i$ is the restriction
    of $h_j$ to the domain of $I|^{c_0}_i$ where $j\geq i$ is smallest
    such that $h_j(c_0)=d_0$.  Lemma~\ref{lem:skippinghoms} yields
    $I \rightarrow_{\querysig} \chase_{T_1}(D)$ and thus Point~2a is
    satisfied.

    Otherwise, there is no $d_0 \in \adom(\chase_{T_1}(D))$ such that
    $h_i(c_0)=d_0$ for infinitely many~$i$. We can assume that there
    is an $a_0 \in \adom(D)$ such that $\src(h_i(c_0))=a_0$ for all
    $i$; in fact, there must be an $a_0$ such that
    $\src(h_i(c_0))=a_0$ for infinitely many $i$ and we can again skip
    homomorphisms to achieve this for all $i$.  For brevity, let
    $J= \chase_{T_1}(D)|^\downarrow_{a_0}$. By
    Lemma~\ref{lem:chasenarrow}, $J$ is tree-like of width $\ell$,
    where $\ell$ is the head width of $T_1$. Thus, there is a rooted
    instance tree $\mathcal{T} = (V,E,B)$ of width $k$ that is
    finitely branching and satisfies $I_\Tmc=J$. Since there is no
    $d_0 \in \adom(\chase_{T_1}(D))$ such that $h_i(c_0)=d_0$ for
    infinitely many~$i$, it follows that for all $i,n \geq 0$ we must
    find a $j\geq i$ such that $h_j(c_0)$ is a domain element whose
    distance from $a_0$ in the Gaifman graph of $J$ exceeds~$n$.
    Based on this observation, we construct a sequence of
    homomorphisms $h'_0,h'_1,\dots$ as follows.  For all $i\geq 0$,
    let $h'_i$ be the restriction of $h'_j$ to the domain of
    $I|^{c_0}_i$ where $j \geq i$ is smallest such that the distance
    of $h_j(c_0)$ from $a_0$ exceeds $i$. Note that each $h_i'$ is a
    $\querysig$-homomorphism from $I|^{c_0}_i$ to $J$.  Since $I$ is
    connected, it is not hard to verify that this implies
    $I \rightarrow^{\lim}_{\querysig} J$. 
    Thus Point~2b is satisfied.
\end{proof}

The proof of the following lemma is somewhat technical. We recommend to 
read it with Example~\ref{ex:basic} in mind.

\lemlimtoinfinite*
\begin{proof}
  $(\Leftarrow)$ Assume that $I\rightarrow_{\Sigma} \widehat I$ for
  some $\widehat I\in \Rmc(T_1,\widehat t)$, that is,
  $I\rightarrow_\Sigma I_{\Tmc}$ for some
  $\widehat t$-proper $T_1$-labeled instance tree $\Tmc=(V,E,B,\mu)$. To show that
  $I \rightarrow^{\lim}_\Sigma \chase_{T_1}(\widehat t)$, it
  clearly suffices to prove that $I_{\Tmc}\rightarrow^{\lim}
  \chase_{T_1}(\widehat t)$.  

  Let $n \geq 1$ and $I'$ an induced subinstance of $I_\Tmc$ with
  $|\adom(I')| \leq n$. We have to show that
  $I' \rightarrow \chase_{T_1}(\widehat t)$.  Let $V'$ be the minimal
  subset of $V$ such that $v\in V'$ whenever $\adom(B(v)) \cap \adom(I') \neq \emptyset$
  and for \mbox{$E'=E \cap (V' \times V')$}, the graph $(V',E')$ is
  connected (and thus a tree). Let $\Tmc'=(V',E',B')$ with $B'$ the
  restriction of $B$ to $V'$. It is enough to prove that
  $I_{\Tmc'} \rightarrow \chase_{T_1}(\widehat t)$. 

  We start to construct a homomorphism $h$ from $I_{\Tmc'}$ to
  $\chase_{T_1}(\widehat t)$ as follows. Let $v$ be the root of
  $(V',E')$ or a non-root such that
  $\adom(B(v))\cap\adom(B(v')) = \emptyset$, $v'$ the predecessor of
  $v$. We know from Condition~1 of properness that $B(v)$ has the form
  $\{\mn{true}(c_0) \}$ and $\mu(c_0)=\widehat t$ or there is a TGD
  $\vartheta$ in $T_1$ such that $B(v)$ is isomorphic to
  the 
  head of $\vartheta$ and $\widehat t,T_1 \models q_{(B(v),\mu_v)}$.
  In both cases, we find a homomorphism $h_v$ from $D_{(B(v),\mu_v)}$
  to $\chase_{T_1}(\widehat t)$. The initial $h$ is the union of all
  the homomorphisms $h_v$.

  We now extend $h$ in a step-wise fashion. Let $(v,v') \in E'$ such
  that $h$ already covers $\adom(B'(v))$, but not $\adom(B'(v'))$.
  Then $h$ is a homomorphism from $D_{(B(v),\mu_{v})}$ to
  $\chase_{T_1}(\widehat t)$. Since $h$ does not yet cover
  $\adom(B'(v'))$, $\adom(B(v))\cap\adom(B(v')) \neq \emptyset$.  By
  Condition~2 of properness and because $\chase_{T_1}(\widehat t)$ is
  a model of $T_1$, we can extend $h$ to
  $\adom(D_{(B(v'),\mu_{v'})})$.

  \medskip $(\Rightarrow)$ Let $I$ be a countable $\Sigma$-connected
  instance 
  such that $I\rightarrow^{\lim}_\Sigma \chase_{T_1}(\widehat t)$, and
  let $\alpha_0, \alpha_1,\dots$
  be a (finite or infinite) enumeration of the non-unary facts in $I$ (we assume that
  there is at least one such fact). Consider the (finite or infinite)
  sequence of instances
  $$I_0 \subseteq I_1 \subseteq \cdots$$ with $I_i$ the restriction
  of $I$ to the constants in $\{ \alpha_0,\dots,\alpha_i \}$.  Since
  $I$ is $\Sigma$-connected, we may clearly choose
  $\alpha_0, \alpha_1,\dots$ so that $I_i$ is $\Sigma$-connected for
  all $i \geq 0$. 
  Once more since $I$ is $\Sigma$-connected (and thus there are no
  isolated unary facts),  \mbox{$\bigcup_{i\geq0} I_i = I$}.

  Since $I\rightarrow^{\lim}_\Sigma \chase_{T_1}(\widehat t)$ there is
  a sequence
  $$ h_0,h_1,\dots $$ with $h_i$ a homomorphism from
  $I_i$ to $\chase_{T_1}(\widehat t)$ for all  $i\geq 0$. 
  %
  %
  We have to identify a $\widehat t$-proper $T_1$-labeled instance tree $\widehat \Tmc$
  with $I\rightarrow I_{\widehat \Tmc}$. We do this by identifying a
  sequence
  $$
  \Tmc_0,\Tmc_1,\dots \text{ with } \Tmc_i =(V_i,E_i,B_i,\mu_i)
  $$
  of finite $\widehat t$-proper $T_1$-labeled instance trees that are monotonically growing
  in the sense that for all $i \geq 0$, $V_{i} \subseteq V_{i+1}$,
  $E_{i} \subseteq E_{i+1}$, and $B_{i}(v)=B_{i+1}(v)$ as well as
  $\mu_{i}(v)=\mu_{i+1}(v)$ for all $v \in V_i$. The desired instance
  tree $\widehat \Tmc$ is then obtained in the limit. In particular,
  each $\Tmc_i$ is constructed such that $I_i \rightarrow I_{\Tmc_i}$
  and along with the construction of the trees $\Tmc_i$ we construct
  a sequence of homomorphisms
  $$
  g_0,g_1,\dots
  $$
  witnessing this. Also this sequence is monotonically growing in the
  sense that $g_0 \subseteq g_1 \subseteq \cdots$ and we obtain the
  desired homomorphism $g$ from $I$ to $I_{\widehat \Tmc}$ in the
  limit.

  As a `guide' for the construction of the two
  sequences $\Tmc_0,\Tmc_1,\dots$ and $g_0,g_1,\dots$, we use the
  homomorphisms $h_i$ from $I_i$ to $\chase_{T_1}(\widehat{t})$.
  During the process, we also uniformize the sequence 
  $h_0,h_1,\dots$ be removing `unsuitable' homomorphisms from it, 
  similar to what has been done in the proof of 
  Lemma~\ref{lem:skippinghoms}. 
  For the construction, we make more precise the synchronization
  between the sequence $\Tmc_0,\Tmc_1,\dots$ and the sequence
  $h_0,h_1,\dots$.
  As described after the definition of $T_1$-labeled instance trees, the construction of $\chase_{T_1}(\widehat{t})$
  gives rise to a ($\widehat t$-proper) $T_1$-labeled instance tree
  $\Tmc = (V,E,B,\mu)$ such that $I_\Tmc = \chase_{T_1}(\widehat{t})$.
  Moreover, the width of \Tmc is bounded by the head width of TGDs
  in $T_1$.
  An \emph{embedding} of a $T_1$-labeled instance tree
  $\Tmc_i$ into $\Tmc$ is a pair of mappings
  $f:V_i\to V,\iota:\adom(I_{\Tmc_i})\to\adom(I_{\Tmc})$ such that $f$
  is an injective homomorphism from $(V_i,E_i)$ to $(V,E)$ and $\iota$
  satisfies the following conditions:
  \begin{enumerate}

    \item for every $v\in V_i$, the restriction of $\iota$ to
      $\adom(B_i(v))$ is an isomorphism from $B_i(v)$ to
      $B(f(v))$;

    \item for every $c\in \adom(I_{\Tmc_i})$, we have
      $\mu_i(c)=\mu(\iota(c))$.

  \end{enumerate}
  %
%
  We remark for further use that actually
  \begin{enumerate}

  \item[3.]  $\iota$ is an isomorphism from $I_{\Tmc_i}$ to
    $\bigcup_{u \in \mn{ran}} B(u)$.

  \end{enumerate}
  This is easy to show using the injectivity of $f$ and the definition
  of instance trees.
  
  Now, along with the sequences $\Tmc_0,\Tmc_1,\dots$ and
  $g_0,g_1,\dots$, we also construct embeddings
  $$
    f_{i,j}, \iota_{i,j} \text{ with } 0 \leq i \leq j
  $$
  where each $f_{i,j}, \iota_{i,j}$ is an embedding of $\Tmc_i$ into
  \Tmc. Note that there are infinitely many embeddings
  $f_{i,j}, \iota_{i,j}$ for each $i$ instead of only a single
  one. The reason is that these embeddings achieve a synchronization
  of each $\Tmc_i$ with the entire sequence $h_i,h_{i+1},\dots$ in
  the sense that, for $0 \leq i \leq j$, we shall take care that
  \begin{itemize}

  \item[$(\dagger)$] $h_j(c) = \iota_{i,j} \circ g_i(c)$ for all
    $c \in \mn{adom}(I_i)$.

%
%
%
%

  \end{itemize}
  Informally, $(\dagger)$ states that all homomorphisms $h_j$,
  $j\geq i$, map $I_i$ into $\chase_{T_1}(\widehat t)$ in the same way as
  $g_i$ maps $I_i$ into $\Tmc_i$.

  \smallskip 

  Now for the actual construction.  To define $\Tmc_0$ and~$g_0$,
  choose for every $k \geq 0$ a node $u_k \in V$ from~\Tmc such that
  $h_k(I_0) \subseteq B(u_k)$. Such a $u_k$ must exist since all constants in
  $I_0$ co-occur in a single fact in $I_0$. Consider the sequence
  $$(B_0,\lambda_0,\bar d_0), (B_1,\lambda_1,\bar d_1),\dots$$ with
  $(B_i,\lambda_i,\bar d_i)=(B(u_k),\mu(u_k),h_k(\bar c))$. Since the width of
  \Tmc is bounded, there are only finitely many isomorphism types of
  these triples, where $(B_i,\lambda_i,\bar d_i)$ and
  $(B_j,\lambda_j,\bar d_k)$
  are isomorphic if there is an isomorphism $\iota$ from $B_i$ to
  $B_j$ with $\iota(\bar d_i)=\bar d_j$ and
  $\lambda_i(c)=\lambda_j \circ \iota(c)$ for all $c \in \adom(B_i)$.
  Thus we may choose an isomorphism type that occurs infinitely
  often. We skip all homomorphisms $h_i$ such that
  $(B_i,\lambda_i,\bar d_i)$
  is not of that type, that is, we replace each $h_i$ with $h_j$ where
  $j\geq i$ is minimal such that $(B_j,\lambda_j,\bar d_j)$ is of the chosen
  isomorphism type.  Now define $\Tmc_0$ by taking
  \[V_0 = \{v_0\}, \quad E_0=\emptyset, \quad B_0(v_0) = B_0,\quad \mu_0
  = \lambda_0, \]
and set $g_0(c) = h_0(c)$ for all constants $c \in \adom(I_0)$ and
$f_{0,j}(v_0)=u_{k}$ for all $j \geq 0$. As $\iota_{0,j}$, we use the
isomorphism that witness that $(B_0,\lambda_0,\bar d_0)$ and
$(B_j,\lambda_j,\bar d_j)$ have the same isomorphism type. Based on
this choice, it can be verified that ($\dagger$) is satisfied.

  \smallskip For the inductive step, assume that we have already
  constructed $\Tmc_i$, $g_i$, as well as the embeddings
  $f_{i,j},\iota_{i,j}$ for all $j \geq i$.
  We obtain $\Tmc_{i+1},
  g_{i+1}$ from $\Tmc_i,g_i$ by starting with $\Tmc_{i+1} = \Tmc_i$
  and $g_{i+1} = g_i$ and then extending as follows. 

  By construction of $I_{i+1}$, there is a non-unary fact
  $R(\bar c) \in I_{i+1}$ such that $I_{i+1}$ is the restriction of
  $I$ to $\adom(I_i) \cup \bar c$. For every $k > i$, there is thus a
  $u_k\in V$ with $R(h_{k}(\bar c))\in B(u)$. In fact, each $u_k$ is
  unique by definition of instance trees. By skipping homomorphisms
  from $h_{i+1},h_{i+2},\dots$ along with the associated functions
  $f_{i,i+1},f_{i,i+2},\dots$ and isomorphisms
  $\iota_{i,i+1},\iota_{i,i+2},\dots$, we can achieve that one of the
  following two cases applies:
  \begin{enumerate}

    \item $u_k$ is in the range of $f_{i,k}$ for all $k>i$, or

    \item $u_k$ is not in the range of $f_{i,k}$ for all $k>i$.

  \end{enumerate}
  In Case~1, since $V_i$ is finite we can once more
  skip homomorphisms and
  achieve that there is some $v\in V_i$ such that $f_{i,k}(v)=u_k$, for
  all $k>i$. For every $k>i$, by Property~3 of embeddings we
  may define
  \[\bar d_k = \iota_{i,k}^-(h_k(\bar c)).\] 
  %
  The choice of $v$ and Property~1 of embeddings yield
  $\bar d_k\subseteq \adom(B_i(v))$ for all $k > i$.  Since
  $\adom(B_i(v))$ is finite, there are only finitely many possible
  choices for the $\bar d_k$. By skipping homomorphism, we may thus
  achieve that they are all identical. Extend $g_{i+1}$ by setting
  $g_{i+1}(c) = \iota_{i,k}^-(h_k(c))$ for some (equivalently: all)
  $k> i$, for every $c\in \bar c$ such that $g_{i+1}(c)$ is not yet
  defined. Define $f_{i+1,j}=f_{i,j}$ and $\iota_{i+1,j}=\iota_{i,j}$
  for all $j > i$.  One may verify that~$(\dagger)$ is satisfied.

  We argue that, as required, $g_{i+1}$ is a homomorphism from
  $I_{i+1}$ to $I_{\Tmc_{i+1}}$. Take any fact
  $S(\bar e) \in I_{i+1}$. 
  The definition of $g_{i+1}$ and
  ($\dagger$) yield $g_{i+1}(c) = \iota_{i,i+1}^-(h_{i+1}(c))$
  for all $c \in \bar e$. It remains to note that $h_{i+1}$ is
  a homomorphism from $I_{i+1}$ to $\mn{chase}_{T_1}(\widehat t)$,
  thus $S(h_{i+1}) \in \mn{chase}_{T_1}(\widehat t)$, and
  $\iota_{i,i+1}^-$ is an isomorphism.

  \smallskip
  
  We consider now Case~2, that is, $u_k$ is not in the range of $f_{i,k}$,
  for all $k>i$. Now, observe that since $I_{i+1}$ is connected, there
  is some $d\in \bar c\cap \adom(I_i)$, and thus $g_i(d)$ is already
  defined. Choose $v\in V_i$ with $g_i(d)\in\adom(B(v))$ and consider
  the sequence
  \[w_k = f_{i,k}(v),\quad k>i.\]
  Since $u_k$ is not in the range of $f_{i,k}$, we have $u_k\neq w_k$, for
  all $k$. However, $h_k(d)\in \adom(B(w_k))\cap \adom(B(u_k))$. 
  Due to Property~1 from the definition of instance trees, we can
skip homomorphisms to reach one of the following situations:
  \begin{enumerate}[label=(\alph*)]

    \item $u_k$ is the predecessor of $w_k$, for all $k>i$, 

    \item $u_k$ is a successor of $w_k$, for all $k>i$, or

    \item $u_k,w_k$ are siblings, for all $k>i$.

  \end{enumerate}
  In all cases, we start as follows. Similarly to the induction start,
  consider the sequence $$(B_{i+1},\lambda_{i+1},\bar d_{i+1}),
  (B_{i+2},\lambda_{i+2},\bar d_{i+2}),\dots$$ with $(B_k,\lambda_k,\bar
  d_k)=(B(u_k),\mu(u_k),h_k(\bar c))$, for all $k>i$. Since the width of \Tmc is
  bounded, there are only finitely many isomorphism types of these
  triples.  Thus we may choose an isomorphism type that occurs
  infinitely often. By skipping homomorphisms, we can achieve that all
  $(B_k,\lambda_k,\bar d_k)$ are of the same type. We further select a
  triple $(B,\lambda,\bar d)$ that is of the same isomorphism type to
  be used as a bag in the tree $\Tmc_{i+1}$. We make this choice such
  that $g_i(d)\in \adom(B)$ and $\adom(B)\setminus\{g_i(d)\}$ consists
  only of fresh constants, that is, constants not used in
  $I_{\Tmc_i}$.  Define $\tau_{k}$ to be an isomorphism that witnesses
  that $(B,\lambda,\bar d)$ and $(B_k,\lambda_k,\bar d_k)$ have the
  same isomorphism type, for all $k > i$.

  We now extend $\Tmc_{i+1},g_{i+1}$ distinguishing Cases~(a)--(c). 

  \smallskip In Case~(a), let us first argue that $v$ is the root of
  $\Tmc_i$. If not, then $f_{i,k}$ maps the predecessor $v'$ to $u_k$,
  for all $k>i$, in contradiction to the fact that $u_k$ is not in the
  range of $f_{i,k}$. Then, add a predecessor $v'$ of $v$ to
  $\Tmc_{i+1}$, set
  \[B_{i+1}(v') = B,\quad \mu_{i+1} = \mu_{i+1}\cup \lambda, \]
  and set, for all $c\in \adom(I_{i+1})$ such that $g_{i+1}(c)$ is not
  yet defined, $g_{i+1}(c) = \iota_{i+1,k}^-(h_k(c))$, for some
  (equivalently: all) $k>i$. It can be verified that setting, for all
  $j>i$, 
  \begin{align*}
    f_{i+1,j} & =f_{i,j}\cup \{(v',u_j)\}, \text{ and} \\
    \iota_{i+1,j} & = \iota_{i,j}\cup \tau_j
  \end{align*}
  witnesses $(\dagger)$ for $i+1$.

  \smallskip In Case~(b), we do exactly the same as in Case~(a) with
  $v'$ a fresh \emph{successor} of $v$ (instead of predecessor). 
  
  \smallskip In Case~(c), we make a final case distinction. If
  $v$ has a predecessor $v_0$ in $\Tmc_i$, then proceed exactly as in
  Case~(a), but make $v'$ a fresh successor of $v_0$. Otherwise, let
  $u_k'$ be the predecessor of $u_k$, for all $k>i$, and 
%
%
%
  consider the sequence $$(B_{i+1}',\lambda_{i+1}',d_{i+1}),
  (B_{i+2}',\lambda_{i+2}', d_{i+2}),\dots$$ with $(B_k',\lambda_k',
  d_k)=(B(u_k'),\mu(u_k'),h_k(d))$, for all $k>i$ (recall that we
  fixed $d$ in the beginning of Case~2). We can again
  skip homomorphisms and achieve that all the $(B_k',\lambda_k',
  d_k)$ are of the same isomorphism type. We further select a
  triple $(B',\lambda',d')$ that is of the same isomorphism type to
  be used as a bag in the tree $\Tmc_{i+1}$. We make this choice such
  that $g_i(d)\in \adom(B')$ and $\adom(B')\setminus\{g_i(d)\}$ consists
  only of fresh constants, that is, constants not used in
  $I_{\Tmc_i}$.  Define $\tau_{k}'$ to be an isomorphism that witnesses
  that $(B',\lambda',d')$ and $(B_k',\lambda_k',d_k)$ have the
  same isomorphism type, for all $k > i$. Then, add a predecessor
  $v'$ of $v$ and a fresh successor $v''$ of $v'$ to $\Tmc_{i+1}$, set 
  \[B_{i+1}(v') = B', \quad B_{i+1}(v'') = B, \quad \mu_{i+1} =
  \mu_{i+1}\cup \lambda\cup \lambda', \]
  and set, for all $c\in \adom(I_{i+1})$ such that $g_{i+1}(c)$ is not
  yet defined, $g_{i+1}(c) = \iota_{i+1,k}^-(h_k(c))$, for some
  (equivalently: all) $k>i$. It can be verified that setting, for all
  $j>i$, 
  \begin{align*}
    f_{i+1,j} & =f_{i,j}\cup \{(v',u_j'),(v'',u_j)\}, \text{ and} \\
    \iota_{i+1,j} & = \iota_{i,j}\cup \tau_j\cup \tau_j'
  \end{align*}
  witnesses $(\dagger)$ for $i+1$.

  \medskip This finishes the construction of the sequences
  $\Tmc_0,\Tmc_1,\dots$ and $g_0,g_1,\dots$. Recall that both the 
  $\Tmc_i$ and the $g_i$ are monotonically growing and that we are interested in the
  limits $\widehat \Tmc$ and $g$ of the sequences, that is, \[\widehat \Tmc =
  \left(\bigcup_{i\geq 0} V_i,\bigcup_{i\geq 0} E_i,\bigcup_{i\geq
      0}B_i, \bigcup_{i\geq 0}\mu_i\right)  \] and
$$
g = \bigcup_{i \geq 0} g_i.
$$
Since each $g_i$ is a homomorphism from $I_i$ to $I_{\Tmc_i}$, it is
clear that $g$ is a homomorphism from $I$ to $I_{\widehat \Tmc}$.
Moreover, $\widehat \Tmc$ is $\widehat t$-proper since each $\Tmc_i$ is.
\end{proof}

\section{Proofs for Section~\ref{sect:frontierone}: Decision Procedures}

We prove the decidability results from Section~\ref{sect:frontierone}
using the characterizations provided in that section and tree
automata. More precisely, to prove the \ThreeExpTime upper bounds for hom-conservativity and
CQ-conservativity in Theorem~\ref{thm:frontierone}, we show how to
construct, given sets $T_1,T_2$ of frontier-one TGDs and signatures
\dbsig and \querysig, a tree automaton \Amf such that $L(\Amf) \neq \emptyset$
iff $T_1 \not\models_{\dbsig,\querysig}^{\textup{hom}} T_2$ resp.\
$T_1 \not\models_{\dbsig,\querysig}^{\textup{CQ}} T_2$. The use of
tree automata is sanctioned by the characterizations of
hom-conservativity and CQ-conservativity in terms of tree-shaped
witnesses provided by Theorem~\ref{thm:homchar} and
Theorem~\ref{thm:charsecond}.

We start with giving the necessary details on tree automata.

\subsection{Tree Automata}

A \emph{tree} is a non-empty (and potentially infinite) set of words $W \subseteq (\Nbbm
\setminus 0)^*$ closed under prefixes.  
We assume that trees
are finitely branching, that is, for every $w \in W$, the set $\{ i >0
\mid w \cdot i \in W \}$ is finite.
For $w \in (\Nbbm \setminus 0)^*$, set $w \cdot
0 := w$. For $w=n_0n_1 \cdots n_k$, $k>0$, set $w
\cdot -1 := n_0 \cdots n_{k-1}$, and call $w$ a \emph{successor} of
$w\cdot -1$ and $w\cdot -1$ a \emph{predecessor} of $w$.  For an alphabet $\Theta$, a
\emph{$\Theta$-labeled tree} is a pair $(W,L)$ with $W$ a tree and
$L:W \rightarrow \Theta$ a node labeling function.

A \emph{two-way alternating tree automaton (\ata)} is a tuple
$\Amf = (Q,\Theta,q_0,\delta,\Theta)$ where $Q$ is a finite set of
{\em states}, $\Theta$ is the {\em input alphabet}, $q_0\in Q$ is the
{\em initial state}, $\delta$ is a {\em transition function}, and
$\Theta:Q\to \mathbb{N}$ is a {\em priority function}.  The transition
function $\delta$ maps every state $q$ and input letter $a \in \Theta$
to a positive Boolean formula $\delta(q,a)$ over the truth constants
$\mn{true}$ and $\mn{false}$ and \emph{transition atoms} of the form
$q$, $\Diamond^- q$, $\Box^- q$, $\Diamond q$ and $\Box q$.  A
transition $q$ expresses that a copy of \Amf is sent to the current
node in state $q$; $\Diamond^- q$ means that a copy is sent in state
$q$ to the predecessor node, which is required to exist; $\Box^- q$
means the same except that the predecessor node is not required to
exist; $\Diamond q$ means that a copy of $q$ is sent to some successor
and $\Box q$ means that a copy of $q$ is sent to all successors.  The
semantics of \ata is given in terms of runs as usual.

Let $(W,L)$ be a $\Theta$-labeled tree and 
$\Amf=(Q,\Theta,q_0,\delta,\Omega)$ a \ata. A {\em run of \Amf over 
  $(W,L)$} is a $W\times Q$-labeled tree $(W_r,r)$ such that 
$\varepsilon\in W_r$, $r(\varepsilon)=(\varepsilon,q_0)$, and for all 
$y\in W_r$ with $r(y)=(x,q)$ and $\delta(q,V(x))=\theta$, there is an 
assignment $v$ of truth values to the transition atoms in $\theta$
such that $v$ satisfies $\theta$ and:
    \begin{itemize}

      \item if $v(q')=1$, then $r(y')=(x,q')$ for some successor 
        $y'$ of $y$ in $W_r$;

      \item if $v(\Diamond^- q')=1$, then $x \neq \varepsilon$ and 
        $r(y')=(x\cdot -1,q')$ for some successor $y'$ of $y$ in $W_r$;

      \item if $v(\Box^- q')=1$, then $x=\varepsilon$ or 
        $r(y')=(x\cdot -1,q')$ for some successor $y'$ of $y$ in $W_r$;

      \item if $v(\Diamond q')=1$, then there is some $j$
        and a successor $y'$ of $y$ in $W_r$ with 
        $r(y')=(x\cdot j,q')$;

      \item if $v(\Box q')=1$, then for all successors $x'$ of $x$, there is a 
        successor $y'$ of $y$ in $W_r$ with 
        $r(y')=(x',q')$. 


\end{itemize}
Let $\gamma=i_0i_1\cdots$ be an infinite path in $W_r$ and denote, for
all $j\geq 0$, with $q_j$ the state such that $r(i_j)=(x,q_j)$. The
path $\gamma$ is {\em accepting} if the largest number $m$ such that
$\Omega(q_j)=m$ for infinitely many $j$ is even.  A run $(W_r,r)$ is
accepting, if all infinite paths in $W_r$ are accepting.  \Amf accepts
a tree if \Amf has an accepting run over it.  We use $L(\Amf)$ to
denote the set of $\Theta$-labeled trees accepted by $\Amf$. 

It is not hard to show that \ata are closed under intersection
and that the intersection automaton can be constructed in polynomial
time, see for example \cite{tata2007}. 
The \emph{emptiness problem} for \ata
means to decide, given a \ata~\Amf, whether
$L(\Amf)=\emptyset$. Emptiness of 2ATA can be solved in time
single exponential in the number of states and the maximal priority, and
polynomial in all other components. This was proved for 2ATAs on
ranked trees in~\cite{Vardi98} and it was shown in
\cite{DBLP:journals/jair/JungLMS20} that the result carries over to
the particular version of 2ATAs used here, which run on trees of
arbitrary finite degree.

\subsection{Upper Bound for Hom-Conservativity}

To decide hom-conservativity via Theorem~\ref{thm:homchar} it suffices
to devise a \ata \Amf such that
\begin{itemize}

  \item[$(\ast_\Amf)$] $\Amf$ accepts all tree-like instances $I$ of width
    $\max(k,\ell)$ that are models of $T_1$ and some tree-like
    $\querysig$-databases $D$ of width $k$ such that
    $\chase_{T_2}(D)\not\rightarrow_{\querysig} I$, where $k$ and
    $\ell$ are the body and head width of $T_1$. 

\end{itemize}
However, 2ATAs cannot run directly on tree-like databases or instances
because the potential labels of the underlying trees (the bags) may
use any number of constants and do not constitute a finite alphabet.
We therefore use an appropriate encoding of
tree-like databases that reuses constants so that we can make do with
finitely many constants overall, similar to what has been done, for
example, in~\cite{GradelW99}.

\paragraph{Encoding of tree-like instances.} Let $m=\max(k,\ell)$ with $k$ the
body width and $\ell$ the head width of $T_1$. Fix a set $\Delta$ of
$2m$ constants and define $\Theta_0$ to be the set of all
$\fullsig$-databases $B$ with
\mbox{$\adom(B)\subseteq \Delta$} and $|\adom(B)|\leq m$, where $\fullsig$ is
the union of $\dbsig$ and~$\mn{sig}(T_1)$, that is, all relation
symbols that occur in~$T_1$.

Let $(W,L)$ be a $\Theta_0$-labeled tree. For convenience, we use
$B_w$ to refer to the database $L(w)$ at node $w$. For a constant
$c\in \Delta$, we say that $v,w\in W$ are \emph{$c$-equivalent} if
$c\in\adom(B_u)$ for all $u$ on the unique shortest path from $v$
to~$w$. Informally, this means that $c$ represents the same constant in
$B_v$ and in $B_w$. In case that $c\in \adom(B_w)$, we use $[w]_c$ to
denote the set of all $v$ that are $c$-equivalent to~$w$.  We call
$(W,L)$ \emph{well-formed} if it satisfies the following counterparts
of Conditions~1 and~2 of instance trees:
\begin{enumerate}[label=\arabic*$'$.]

  \item for every $w\in W$ and every $c\in\adom(B_w)$, the restriction
    of $W$ to $[w]_c$ is a tree of depth at most $1$;

  \item for every $w\in W$ and successor $v$ of $w$, $\adom(B_w)\cap
    \adom(B_v)$ contains at most one constant.

\end{enumerate}
Each well-formed $\Theta_0$-labeled tree $(W,L)$ \emph{represents}
a $\fullsig$-instance tree $\Tmc_{W,L}=(V,E,B)$ as follows. 
The underlying tree $(V,E)$ is the tree (described by) $W$. 
The active domain of $I_{\Tmc_{W,L}}$ is the
set of all equivalence classes $[w]_c$ with $w\in W$ and
$c\in\adom(B_w)$ and the labeling $B$ is defined by taking 
\[R([w]_{c_1},\ldots,[w]_{c_k})\in B(w)\quad \text{ iff \quad
}R(c_1,\ldots,c_k)\in B_w,\]
for all $w\in W$ and $c\in \adom(B_w)$. As a shorthand, we use
$I_{W,L}$ to denote the instance $I_{\Tmc_{W,L}}$.

Conversely, for every $\dbsig$-instance $I$ such that $I=I_\Tmc$ for a
instance tree $\Tmc=(V,E,B)$ of width $m$, we can find a
$\Theta_0$-labeled tree $(W,L)$ that represents $I$ in the sense that
$I_{W,L}$ is isomorphic to $I$.  Since $\Delta$ is of size $2m$, it is possible
to select a mapping $\pi:\adom(D)\to \Delta$ such that for each
$(v,w)\in E$ and each $d,e\in \adom(B(w))\cup\adom(B(v))$, we have
$\pi(d) = \pi(e)$ iff $d=e$.  Define the $\Theta_0$-labeled tree
$(W,L)$ by setting $W=(V,E)$, and for all $w\in W$, $B_w$ to the image
of $B(w)$ under $\pi$. Clearly, $(W,L)$ satisfies the desired
properties.

\paragraph{Automata Constructions}

We construct a \ata~\Amf that satisfies~$(\ast_\Amf)$, for given
$T_1,T_2,\dbsig,\querysig$. We may assume without loss of generality
that all symbols from \dbsig and \querysig occur in $T_1$. The desired
{\ata} runs over $\Theta$-labeled trees with
$\Theta=\Theta_0\times \Theta_0\times \Theta_1$ where $\Theta_0$ is
defined as above, and $\Theta_1$ is the set of all mappings
$\mu:\Delta'\to \TP(T_2)$ for some $\Delta'\subseteq \Delta$ with
$|\Delta'|\leq m$. Intuitively, the first component will represent a
$\dbsig$-database~$D$, the second component will represent a model $I$
of $T_1$ and~$D$, and the last component will represent the $T_2$
chase of $D$, restricted to $\adom(D)$.

For a $\Theta$-labeled tree $(W,L)$, we set
$L(w)=(L_0(w),L_1(w),L_2(w))$ for all $w\in W$ and thus may use
$L_i(w)$ to refer to the $i$-th component of the label of $w$, for $i
\in \{0,1,2\}$. For the sake of readability, we may use $\mu_w$ to
denote $L_2(w)$. A $\Theta$-labeled tree $(W,L)$ is called
\emph{well-typed} if, for all $w\in W$: 
\begin{enumerate}

  \item the domain of $\mu_w$ is $\adom(L_0(w))$ and 

  \item for every successor $v$ of $w$ and every $c\in
    \adom(L_0(w))\cap\adom(L_0(v))$, we have $\mu_w(c)=\mu_v(c)$.

\end{enumerate}
The desired \ata \Amf is constructed as the intersection of the five
{\ata}s $\Amf_0,\Amf_1,\Amf_2,\Amf_3,\Amf_4$ provided by the following lemma.
\begin{lemma} \label{lem:hom-automata}
  There are {\ata}s $\Amf_0,\Amf_1,\Amf_2,\Amf_3,\Amf_4$ such that:
  \begin{itemize}

    \item[--] $\Amf_0$ accepts $(W,L)$ iff it is well-typed and
      $(W,L_0)$ and $(W,L_1)$ are
      well-formed;

    \item[--] $\Amf_1$ accepts $(W,L)$ iff $I_{W,L_0}$ is a
      $\dbsig$-database of width~$k$; 

    \item[--] $\Amf_2$ accepts $(W,L)$ iff $I_{W,L_1}$ is a model of
      $I_{W,L_0}$ and $T_1$;

    \item[--] $\Amf_3$ accepts $(W,L)$ iff 
      for every $w\in W$ and every $c\in \adom(L_0(w))$, 
      \[\mu_w(c) = \tp_{T_2}(\chase_{T_2}(I_{W,L_0}),[w]_c).\]

    \item[--] $\Amf_4$ accepts $(W,L)$ iff
      $\mn{chase}_{T_2}(I_{W,L_1}) \not\rightarrow I_{W,L_1}$.

  \end{itemize}
  The number of states of
  \begin{itemize}

    \item[--] $\Amf_0$ is exponential in $||T_1||$ (and independent of
      $T_2$);

    \item[--] $\Amf_1$ does not depend on the input;

    \item[--] $\Amf_2$ is exponential in $||T_1||$ (and independent of
      $T_2$);

    \item[--] $\Amf_3$ is exponential in $||T_2||$ (and independent of~$T_1$);

      \item[--] $\Amf_4$ is double exponential in $||T_2||$ (and independent
	of $T_1$).

    \end{itemize}
    All automata can be constructed in time triple exponential in
    $||T_1||+||T_2||$
    and have
    maximum priority one. 
  %
\end{lemma}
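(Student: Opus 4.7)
The plan is to construct the five automata separately, exploiting the locality of the tree-like input and, for chase-related checks, the type-based description of the chase provided by Lemma~\ref{lem:typedet}.

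For $\Amf_0$, both well-typedness and the well-formedness of $(W,L_0)$ and $(W,L_1)$ are essentially local. Well-typedness amounts to verifying at each node $w$ that $\mu_w$ has domain $\adom(L_0(w))$ and that $\mu_w(c)=\mu_v(c)$ for every successor $v$ and every shared constant~$c$, which is a constant-state check. Well-formedness needs Condition~2 (local) and the depth-one cluster property of Condition~1, which I enforce by tracking, for each $c\in\Delta$, whether the current node is the cluster-root of $c$ or a child of one; since $|\Delta|=2m$ and $m=\max(k,\ell)=O(||T_1||)$, this uses $2^{O(||T_1||)}$ states. $\Amf_1$ simply verifies $|\adom(L_0(w))|\leq k$ and that $L_0(w)$ uses relations from $\dbsig$ at each node, with a constant number of states.

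For $\Amf_2$, $I_{W,L_0}\subseteq I_{W,L_1}$ is a local comparison. The model condition $I_{W,L_1}\models T_1$ is handled TGD by TGD: for each $\vartheta:\phi(x,\bar y)\to\exists\bar z\,\psi(x,\bar z)\in T_1$, universally detect all matches of $\phi$ and existentially extend each one to a match of $\psi$. Since the body width is $\leq k$ and the instance tree has width $\leq m$, every body match spans a bounded connected region of bags that the automaton can traverse using its two-way transitions; head matching is analogous. Enumerating over $T_1$ gives $2^{O(||T_1||)}$ states, independent of $T_2$. For $\Amf_3$, I verify that $\mu_w(c)$ equals the true type by checking, for every $q\in\bodyCQ(T_2)$, that ``$q\in\mu_w(c)$ iff $q$ has a homomorphism into $\chase_{T_2}(I_{W,L_0})$ sending its free variable to $[w]_c$''. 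A sub-automaton matches $q$ symbolically; when it must enter a sub-chase $\chase_{T_2}(I_{W,L_0})|^\downarrow_d$, Lemma~\ref{lem:typedet} lets it use the labeled type of $d$ as a certificate and continue inside an unfolding of that type. Iterating over $q$ yields $2^{O(||T_2||)}$ states, independent of $T_1$.

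Finally, for $\Amf_4$ I first build a 2ATA $\Amf_4'$ that accepts when a $\querysig$-homomorphism from $\chase_{T_2}(I_{W,L_0})$ to $I_{W,L_1}$ exists, and then complement by dualization. $\Amf_4'$ places the homomorphism incrementally: on the database part it traverses $I_{W,L_1}$ directly, and whenever it must enter a sub-chase $\chase_{T_2}(I_{W,L_0})|^\downarrow_c$, it symbolically unfolds that sub-chase from the labeled type $\mu_w(c)$ via Lemma~\ref{lem:typedet}, using $T_2$-types in its state component. Because $|\TP(T_2)|$ is doubly exponential in $||T_2||$, both $\Amf_4'$ and $\Amf_4$ end up with doubly exponentially many states, independent of $T_1$. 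All five automata use only safety or B\"uchi-style acceptance, hence maximum priority one, and each construction runs in time polynomial in its own state set. The main obstacle is $\Amf_4$: the chase is not on the input tape and is generally infinite, so the type-based symbolic unfolding combined with the frontier-one tree-likeness of sub-chases (Lemma~\ref{lem:chasenarrow}) is what enables the construction.
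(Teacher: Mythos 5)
Your constructions of $\Amf_0$, $\Amf_1$, $\Amf_2$, and $\Amf_4$ follow essentially the same route as the paper: local checks for well-formedness and well-typedness, CQ matching decomposed bag-by-bag via the two-way moves for the model check, and, for $\Amf_4$, unfolding $\chase_{T_2}(\mu_w(c))$ symbolically in the states starting from the (separately verified) type label and then complementing by dualization. The genuine gap is in $\Amf_3$. There you propose to certify $q\in\mu_w(c)$ by matching $q$ into the chase and, whenever the match must descend into a sub-chase $\chase_{T_2}(I_{W,L_0})|^\downarrow_d$, to ``use the labeled type of $d$ as a certificate'' via Lemma~\ref{lem:typedet}. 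But the correctness of the labels $\mu_{w'}(d)$ is exactly what $\Amf_3$ is supposed to establish, and Lemma~\ref{lem:typedet} describes the sub-chase below $d$ only under the assumption that the supplied type is the true one. As described, your automaton accepts any labeling that is a self-supporting fixpoint, not necessarily the least one. Concretely, for $T_2=\{A(x)\rightarrow B(x),\ B(x)\rightarrow A(x)\}$ and the database $\{R(a,a)\}$ with no $A$- or $B$-fact, the labeling with $A(x),B(x)\in\mu(a)$ justifies itself (the sub-chase reconstructed from that type does contain $A(a)$ and $B(a)$), yet neither fact is in the actual chase. So the soundness half of your ``iff'' fails.

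This circularity is precisely what the paper's $\Amf_3$ is built to break: it rewrites $T_2$ into a monadic datalog program $T_2'$ (Lemma~\ref{lem:datalog-rewriting}) whose rules operate directly on the input database $I_{W,L_0}$, and the automaton guesses a well-founded \emph{derivation} of $A_{q(x)}([w]_c)$, with the priorities forcing the derivation to be finite; non-derivability is handled by dualizing that sub-automaton. The existential part of the chase is absorbed into the entailment tests $D_{q^\downarrow},T_2\models p(\bar x)$ performed while building $T_2'$, and these tests are also where the triple-exponential construction time comes from. For the same reason your claim that each automaton is constructible in time polynomial in its own state set is too strong: the transition functions of $\Amf_3$ and $\Amf_4$ require deciding CQ entailment w.r.t.\ frontier-one TGDs on exponential-size inputs, which is what pushes the construction to triple exponential time. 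To repair your $\Amf_3$ you must replace the ``consult the neighbour's label'' step by some well-founded certificate of derivability over the database alone; the datalog rewriting is one way to obtain it.
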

It can be verified that \Amc satisfies ($\ast_\Amf$) and thus
$L(\Amf)\neq\emptyset$ iff
$T_1 \not\models_{\dbsig,\querysig}^{\textup{hom}} T_2$.
The rest of
this section is devoted to proving Lemma~\ref{lem:hom-automata}.

\subsubsection*{Automaton $\Amf_0$.} This automaton is straightforward
to construct.

\subsubsection*{Automaton $\Amf_1$.} This automaton simply verifies
that all databases $L_0(w)$ use only symbols from $\dbsig$ and at most
$k$ constants, and that on every path there are only finitely many
non-empty databases.  Constantly many states suffice for this purpose.

\subsubsection*{Automaton $\Amf_2$.} First note that $I_{W,L_1}$ is a
model of $I_{W,L_0}$ iff $L_0(w)$ is a subset of $L_1(w)$, for every
$w\in W$. This check can easily be done by a \ata with constantly many
states. In order to verify that $I_{W,L_1}$ is a model of $T_1$, it is
essential to realize that the employed encoding allows a \ata to do the
following:
\begin{itemize}

  \item[$(\dagger)$] given some $w\in W$ and $c\in \adom(L_1(w))$, and
    a unary CQ $q(x)$, verify that there is a homomorphism
    $h$ from $q$ to $I_{W,L_1}$ with $h(x)=[w]_c$. 

\end{itemize}
Since (parts of) \ata{}s can easily be complemented by dualization,
they are also able to verify that there is no such homomorphism.  The
\ata $\Amf_2$ may thus visit all $w\in W$ and all
$c\in \adom(L_1(w))$ and verify that, for every TGD
$\phi(x,\bar y)\to \exists \bar z\, \psi(x,\bar z)$ in $T_1$, there is
no homomorphism $h$ from $q_\phi(x)$ to $I_{W,L_1}$ with $h(x)=[w]_c$
or there is a homomorphism $g$ from $q_\psi(x)$ to $I_{W,L_1}$ with
$g(x)=[w]_c$.

Informally, a \ata can achieve $(\dagger)$ by memorizing (in its
state) a CQ $p$ for which it still has to check the existence of a
homomorphism, plus the target constant of the free variable of $p$ (if
any). If the automaton visits a given node $w\in W$ in such a state,
it guesses the variables $y_1,\ldots,y_n$ that the homomorphism will
map to $\adom(L_1(w))$ and also the corresponding homomorphism targets
$e_1,\ldots,e_n \in \adom(L_1(w))$.  It verifies that the guess indeed
give rise to a partial homomorphism to database $L_1(w)$ and proceeds
with the parts of $p$ that have not been mapped to the current
database $L_1(w)$.

To formalize this idea, we use \emph{instantiated CQs} in which all answer
variables are replaced with constants, writing $q(\bar c)$ to indicate
that $\bar c$ are precisely the constants that occur in $q$ and that
all variables are quantified.  We will mostly drop the word
`instantiated' and only speak of CQs.  

Let $q(\bar c)$ be an (instantiated) CQ.  A \emph{$\Delta$-splitting}
of $q(\bar c)$ is obtained by first replacing any number of variables
in $q(\bar c)$ with constants\footnote{Different variables may be
  replaced
  with the same constant.} from $\Delta$ and then partitioning the
(atoms of the) resulting CQ into CQs
$q_0(\bar c_0),q_1(\bar c_1),\ldots,q_n(\bar c_n)$ such
that:
\begin{enumerate}

\item $q_0$ has no quantified variables;

  \item for all $i>0$, $\bar c_i$ is empty or a single constant
    from $\bar c_0$;

  \item for all $j>i>0$, $q_i$ and $q_j$ share no variables.

\end{enumerate}
For a set $T$ of frontier-one TGDs, the \emph{$\Delta$-closure
  $\mn{cls}(T,\Delta)$ of $T$} is the smallest set of CQs such that: 
\begin{itemize}

  \item For every TGD $\phi(x,\bar y)\to\exists \bar z\,\psi(x,\bar z)
    \in T$ and every $c\in \Delta$, the CQs $q_\phi(c)$ and
    $q_{\psi}(c)$ are contained in $\mn{cls}(T,\Delta)$; 

  \item if $q(\bar c)\in \mn{cls}(T,\Delta)$ and
    $q_0(\bar c_0),q_1(\bar c_1),\ldots,q_n(\bar c_n)$ is a $\Delta$-splitting of
    $q(\bar c)$, then
    $q_1(\bar c_1),\ldots,q_k(\bar c_k)\in \mn{cls}(T,\Delta)$.

\end{itemize}

It is important to note that:  

\begin{lemma} \label{lem:cls-size}
  The cardinality of $\mn{cls}(T,\Delta)$ is bounded by
  $|T|\cdot 2^m\cdot (|\Delta|+1)^{m}$, with $m$ the maximum of body and head
  width of $T$.
\end{lemma}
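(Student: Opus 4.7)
The plan is to show that every element of $\mn{cls}(T,\Delta)$ admits a canonical encoding as a tuple $(\vartheta,\epsilon,c,\sigma,V)$, where $\vartheta\in T$, $\epsilon\in\{\textup{body},\textup{head}\}$ selects the body or head of $\vartheta$, $c\in\Delta$ is an answer constant substituted for the frontier variable, $\sigma$ is a partial function from the (at most $m-1$) non-frontier variables of the selected body or head to $\Delta$, and $V$ is a subset of the non-frontier variables not in $\mn{dom}(\sigma)$. The encoded CQ is obtained from $q_\phi(c)$ or $q_\psi(c)$ by substituting each variable in $\mn{dom}(\sigma)$ by its $\sigma$-image and then retaining only the atoms whose remaining quantified variables all lie in $V$. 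Once such an encoding is shown to exist for every element of $\mn{cls}(T,\Delta)$, a routine count yields the stated bound.

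I would establish the encoding by induction on the inductive construction of $\mn{cls}(T,\Delta)$. The base case is immediate: the seed $q_\phi(c)$ is encoded by $(\vartheta,\textup{body},c,\emptyset,V_{\textup{all}})$, with $V_{\textup{all}}$ the full set of quantified variables of $\phi$, and analogously for $q_\psi(c)$. For the induction step, assume $q(\bar c)\in\mn{cls}(T,\Delta)$ is already represented by $(\vartheta,\epsilon,c,\sigma,V)$ and let $q_0',q_1,\ldots,q_n$ be a $\Delta$-splitting of $q(\bar c)$. Each $q_i$ with $i>0$ is obtained from $q(\bar c)$ by first extending $\sigma$ to some $\sigma'$ that further substitutes a subset of the variables in $V$ by constants from $\Delta$, and then retaining one block of the atom partition; by the disjoint-variables condition on splittings, each block is precisely the sub-family of atoms supported on some subset $V'\subseteq V\setminus\mn{dom}(\sigma')$ of the remaining variables. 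Thus $q_i$ is encoded by $(\vartheta,\epsilon,c,\sigma',V')$, and the induction goes through.

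Finally, I would count the tuples: at most $|T|$ choices for $\vartheta$, $2$ for $\epsilon$, $|\Delta|$ for $c$, $(|\Delta|+1)^{m-1}$ for $\sigma$ (each of at most $m-1$ non-frontier variables is either unassigned or sent to one of $|\Delta|$ constants), and $2^{m-1}$ for $V$; this yields at most $2|T|\cdot|\Delta|\cdot(|\Delta|+1)^{m-1}\cdot 2^{m-1}\leq |T|\cdot 2^m\cdot(|\Delta|+1)^m$, as required. The only real obstacle is the inductive step: verifying that a single block $q_i$ of a $\Delta$-splitting of an already-encoded CQ is determined by the surviving variables as a sub-CQ of the (further-substituted) seed. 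This hinges on the fact that the blocks of a splitting are pairwise variable-disjoint, so within each block the atoms are exactly those of the substituted seed whose variables are entirely contained in the block; consequently, no block can introduce any structure that is not already traceable to the original seed, and the family of encodable CQs is closed under $\Delta$-splitting.
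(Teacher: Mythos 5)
Your proposal is correct and follows essentially the same route as the paper, whose one-sentence proof says exactly this: every query in $\mn{cls}(T,\Delta)$ arises from the body or head of a TGD in $T$ by restricting to a subset of its variables and replacing variables with constants from $\Delta$; your tuple $(\vartheta,\epsilon,c,\sigma,V)$ merely packages that observation and supplies the induction over the closure that the paper leaves implicit. The one point your added detail surfaces---a block $q_i$ of a $\Delta$-splitting is not literally determined by its surviving variable set, since atoms that become variable-free after substitution may be placed either in $q_0$ or in a block $q_i$, so the map from tuples to blocks is not obviously surjective---is glossed over by the paper's own proof in exactly the same way, so the two arguments stand or fall together.
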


\begin{proof}
  Every query in $\mn{cls}(T,\Delta)$ can be obtained by starting with
  a Boolean CQ $q()\leftarrow \exists x\, q_{\psi}(x,\bar z)$ or
  $q()\leftarrow \exists x\, q_{\phi}(x,\bar y)$ 
  for some TGD $\phi(x,\bar y)\to\exists \bar z\,\psi(x,\bar z) \in
  T$, restricting it to some subset of its variables,
  and then possibly replacing any number of variables with constants
  from $\Delta$.
\end{proof}
We can now describe the \ata achieving~$(\dagger)$ more
formally. It uses all members of $\mn{cls}(T_1,\Delta)$ as states. If
it visits $w\in W$ in state $q(\bar c)$, it non-deterministically
chooses a $\adom(L_1(w))$-splitting $q_0(\bar c_0),q_1(\bar c_1),\ldots,q_n(\bar c_n)$
of $q(\bar c)$, verifies that $q_0(\bar c)$ (viewed as a database) is
contained in $L_1(w)$ and, for each $i$ with $1\leq i\leq n$: 
\begin{itemize}

  \item if $q_i(\bar c_i)$ is unary and $\bar c_i=c$, then the \ata sends a copy in
    state $q_i(\bar c_i)$ to some $v\in [w]_c$; 

  \item if $q_i(\bar c_i)$ is Boolean, then the \ata sends a copy in
    state $q_i(\bar c_i)$ to some $v\in W$.

\end{itemize}
Using the priorities we can make sure that the process terminates,
that is, at some point the splitting takes the form of $q_0(\bar
c_0)=q(\bar c)$. Using Lemma~\ref{lem:cls-size} we can verify that
$\Amf_2$ uses exponentially many states.

\subsubsection*{Automaton $\Amf_3$.} Before we can describe the idea,
we need to establish some necessary preliminaries. A TGD $\phi(\bar
x,\bar y)\to\exists \bar z\,\psi(\bar x,\bar z)$ is \emph{full} if
$\bar z$ is empty. A \emph{monadic datalog program} is a set of full
frontier-one TGDs. In such programs, however, we also admit nullary
relation symbols and empty frontiers.

Let $T$ be a set of frontier-one TGDs. We construct from $T$ a monadic
datalog program $T'$ as follows. Recall that all CQs in $q \in
\bodyCQ(T)$ are Boolean or unary.  For every CQ $q(\bar x) \in
\bodyCQ(T)$, introduce a relation symbol $A_{q(\bar x)}$ of arity
$|\bar x|$.  With $\bodyCQ^+(T)$, we denote the set of (Boolean or
unary) CQs that can be obtained from a CQ $q \in \bodyCQ(T)$ by adding
any number of atoms $A_{p(\bar x)}(\bar y)$ with $p(\bar x)\in
\bodyCQ(T)$ and $\bar y$ a tuple of variables from $q$ of length
$|\bar x|$.

Given a CQ $q \in \bodyCQ^+(T)$, we denote with $q^\downarrow$ the
CQ obtained from $q$ by replacing every nullary atom $A_{p(\bar{x})}$
with a copy $p'(\bar x)$ of $p(\bar x)$ that uses only fresh variable
names. In addition, if $\bar x=x$ is non-empty, the copy of $x$ in
$p'(\bar x)$ is identified with $x$.
Now, $T'$ consists of all rules

\begin{enumerate}[label=(\roman*)]

    \item $q(\bar x) \to A_{p(\bar x)}(\bar x)$ such that
      $q(\bar x) \in \bodyCQ^+(T)$ and $p(\bar x) \in \bodyCQ(T)$ have the same
      arity and $D_{q^\downarrow},T\models p(\bar x)$, 

    \item $q(x)\to A_{p(\bar x)}(\bar x)$ such that
      $p(\bar x) \in \bodyCQ(T)$ and $q$ is a conjunction of nullary
      atoms $A_{p'}$ and unary atoms $A_{p'}(x)$, $p' \in \bodyCQ(T)$,
      such that $D_{q^\downarrow},T\models p(\bar x)$.

\end{enumerate}

\begin{lemma} \label{lem:datalog-rewriting} Let $T$ be a set of
  frontier-one TGDs and $T'$ the corresponding monadic datalog
  program. Then, for every database $D$, $q(\bar x)\in
  \bodyCQ(T)$, and every $\bar c\in\adom(D)^{|\bar x|}$,
  \[D,T\models q(\bar c)\quad\text{iff}\quad D,T'\models A_{q(\bar
  x)}(\bar c).\]
\end{lemma}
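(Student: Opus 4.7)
The plan is to prove the two directions separately. For soundness ($\Leftarrow$), I would induct on the length of the shortest $T'$-derivation of $A_{q(\bar x)}(\bar c)$. The final rule applied, of type~(i) or~(ii), has the form $\psi \to A_{p(\bar x)}(\bar x)$ and comes with a homomorphism $h$ from the (mixed relational/$A$-atom) body $\psi$ into $D$ together with the previously derived $A$-atoms, satisfying $h(\bar x) = \bar c$. Applying the induction hypothesis to each atom $A_{r(\bar y)}(\bar d)$ hit by $h$ yields $D, T \models r(\bar d)$, and hence a homomorphism from $r$ into $\chase_T(D)$ sending its answer variables to $\bar d$. Glueing these with the restriction of $h$ to the relational atoms of $\psi$ gives a homomorphism from $\psi^\downarrow$ into $\chase_T(D)$ mapping $\bar x$ to $\bar c$; by the universality of the chase (a standard consequence of Lemma~\ref{pro:chase}) together with the side condition $D_{\psi^\downarrow}, T \models p(\bar x)$, I conclude $D, T \models p(\bar c)$.

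For completeness ($\Rightarrow$), I would assume $D, T \models q(\bar c)$ and fix a witness $h \colon q \to \chase_T(D)$ with $h(\bar x) = \bar c$. By Lemma~\ref{lem:chasenarrow}, $\chase_T(D)$ decomposes as $D$ together with tree-like subinstances $\chase_T(D)|^\downarrow_d$ attached at each $d \in \adom(D)$ and sharing only the root $d$ with the rest. I would proceed by well-founded induction on $\sum_{y \in \var(q)} \mn{level}(h(y))$, where $\mn{level}(c)$ is the chase-generation of constant~$c$. If every $h(y)$ lies in $\adom(D)$, then $q \in \bodyCQ^+(T)$ already, $q^\downarrow = q$, the precondition $D_q, T \models q(\bar x)$ holds via the identity homomorphism, and rule~(i) with $p = q$ fires at $\bar c$ via $h$ to derive $A_{q(\bar x)}(\bar c)$. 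Otherwise, I would pick $d \in \adom(D)$ such that $h$ reaches a null in $\chase_T(D)|^\downarrow_d$, isolate the maximal subquery $q_d$ of $q$ whose image under $h$ sits in $\chase_T(D)|^\downarrow_d$ (necessarily attached to the rest of $q$ through a single variable $y$ with $h(y) = d$), note that $q_d(y)$ is isomorphic to some unary $r(x) \in \bodyCQ(T)$ satisfied at $d$, apply the induction hypothesis on the strictly shallower $h$-restriction to derive $D, T' \models A_{r(x)}(d)$, and replace $q_d$ inside $q$ by the atom $A_{r(y)}$ to obtain a shallower member of $\bodyCQ^+(T)$ on which rule~(i), or rule~(ii) for Boolean components entirely inside some $\chase_T(D)|^\downarrow_d$, applies.

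The main obstacle is making the completeness induction go through cleanly when $h$ dips into several subtrees simultaneously and when the unary queries $r(x)$ emerging along the way themselves come from $T$'s rule bodies only after dropping atoms and identifying variables. The closure of $\bodyCQ(T)$ under these two operations, together with the fact that $\bodyCQ^+(T)$ is closed under replacing a subquery by an $A$-atom, is precisely what keeps the depth measure strictly decreasing and guarantees that the rule used in each inductive step belongs to $T'$.
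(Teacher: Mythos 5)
Your soundness direction ($\Leftarrow$) is fine: the induction on derivation length, the gluing of the homomorphisms supplied by the induction hypothesis into a single homomorphism from $D_{\psi^\downarrow}$ to $\chase_T(D)$, and the appeal to chase universality together with the side condition $D_{\psi^\downarrow},T\models p(\bar x)$ is a correct, direct rendering of what the paper establishes contrapositively (by reading the relations $A_{p(\bar x)}$ off $\chase_T(D)$ and checking that the resulting structure is a model of $T'$).

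The completeness direction ($\Rightarrow$) has a genuine gap, located exactly where your measure is supposed to do its work. First, the base case is wrong: for frontier-one TGDs the chase can add unary and reflexive facts over $\adom(D)$ (e.g., via $A(x)\to B(x)$ or $A(x)\to R(x,x)$), so ``every $h(y)$ lies in $\adom(D)$'' does not imply that $h$ maps $q$ into $D$, and the type-(i) rule with body $q$ need not fire; the derivation must instead go through a \emph{different} body $q'$ that maps into $D$ plus already-derived $A$-atoms and merely semantically entails $q$. Second, and more seriously, the measure $\sum_{y}\mn{level}(h(y))$ does not decrease in precisely the recursive calls you need: when the peeled-off subquery $q_d$ is all of $q$ (the witness lives entirely in $\chase_T(D)|^\downarrow_d$), or when $q_d$ maps entirely to reflexive chase-added facts at $d$ itself (all levels $0$), the recursive obligation $A_{r(x)}(d)$ has equal measure. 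That is exactly the situation the type-(ii) rules exist for, and discharging it amounts to showing that all atoms of $\tp_T(\chase_T(D),d)$ are derivable --- a mutual recursion among type atoms at $d$ and at its neighbours that a well-founded induction on the depth of the image of $h$ does not resolve; one would have to induct on the chase stage at which the witnessing \emph{facts} are created, which is a different argument from the one you propose. The paper sidesteps the construction of derivations entirely: it argues contrapositively, takes an arbitrary model $I$ of $D$ and $T'$ refuting $A_{q(\bar x)}(\bar c)$, glues $\chase_T(t_d)$ onto every $d\in\adom(I)$, and proves (Claims~1 and~2 in its proof) that the result is a model of $T$ refuting $q(\bar c)$; the fixpoint character of the type computation is absorbed into the hypothesis that $I$ already satisfies every rule of $T'$.
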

\begin{proof} For the ``if''-direction, suppose that $D,T\not\models
  q(\bar c)$, for some database $D$, $q(\bar x)\in
  \bodyCQ(T)$, and $\bar c\in\adom(D)^{|\bar x|}$, 
  that is,
  $\bar c\notin q(\chase_T(D))$.  Obtain an instance $I$ from
  $\chase_{T}(D)$ by interpreting the fresh symbols $A_{p(\bar y)}$ in
  the expected way, that is, for all $p(\bar y)\in \bodyCQ(T)$, and 
  $\bar d\in
  \adom(\chase_T(D))^{|\bar y|}$, we have: 
  \[ A_{p(\bar y)}(\bar d)\in I\quad\text{ iff }\quad\bar d\in
  p(\chase_T(D)).\]
  It is readily verified that $I$ is a model of $D$ and $T'$. But
  since $\bar c\notin q(\chase_T(D))$, we have $A_{q(\bar
  x)}(\bar c)\notin I$ and thus $D,T'\not\models A_{q(\bar
  x)}(\bar c)$.
  
  For the ``only if''-direction, let $D,T'\not\models A_{q(\bar
  x)}(\bar c)$, for some database $D$, $q(\bar x)\in \bodyCQ(T)$, and
  $\bar c\in\adom(D)^{|\bar x|}$, that is, $A_{q(\bar x)}(\bar
  c)\notin I$ for some model $I$ of $D$ and $T'$.  We associate with
  every $d\in \adom(I)$ a $T$-type $t_d$ by taking:
  \begin{align*}
    t_d =~{} & \{p(x)\in \bodyCQ(T)\mid A_{p(x)}(d)\in I\}\cup{}\\ &
    \{p()\in\bodyCQ(T) \mid A_{p()}\in I\}.
  \end{align*}
  Now obtain an instance $I'$ from $I$ by adding, for every $d\in
  \adom(I)$ a disjoint copy of $\chase_T(t_d)$ identifying its root
  with $d$. It remains to show that $I'$ is a model of $T$ and that
  $\bar c\notin q(I')$. For both statements, we will need the
  following auxiliary claim.

  \smallskip\noindent\textit{Claim~1.} For all $p(\bar x)\in
  \bodyCQ(T)$ and all $\bar d\in \adom(I)^{|\bar x|}$, we have
  \[ \bar d\in p(I')\quad\text{ iff }\quad A_{p(x)}(\bar d)\in I.\]
      
  \smallskip\noindent\textit{Proof of Claim~1.}  The ``if''-direction
  is immediate from the construction of $I'$, so we concentrate on
  ``only if''. The proof is by induction on the number of variables in
  $p$.

  Let $\bar d\in p(I')$, that is, there is a
  homomorphism $h$ from $p$ to $I'$ with $h(\bar x) = \bar d$.  Let us
  define $h^\downarrow(x)=e$ in case $h(x)$ is in the copy of
  $\chase_T(t_e)$, for all variables $x$ in $p$.  Set
  $H^\downarrow = \{h^\downarrow(x)\mid \text{ $x$ variable in
  $p$}\}$. We decompose $p$ guided by $h$ into queries $p_0$, $p_e$ with
  $e\in H^\downarrow$ as follows: 
  \begin{itemize}

    \item For every $e\in H^\downarrow$, $p_e$ is the restriction of
      $p$ to all variables $y$ in $p$ with $h^\downarrow(y)=e$. 

   \item $p_0$ consists of the remaining atoms and has answer variable
     $x$ in case $p$ has an answer variable $x$.

  \end{itemize}
  Note that all obtained queries are contained in $\bodyCQ(T)$. We
  distinguish three cases. Observe that Case~1 applies if $p$ contains
  at most one variable and thus establishes the
  induction base.

  \smallskip\textit{Case 1: $p_e=p$, for some $e\in H^\downarrow$.} If
  $p$ is Boolean, then there is a homomorphism $g$ from $p$ to
  $\chase_T(t_e)$. If $p$ has answer variable $x$, then $\bar d = e$
  and $g(x)$ is the constant corresponding to the free variable of
  $t_e$.
  Thus, $t_e,T\models p(\bar x)$ and we find in $T'$ the rule $\widehat q(x)\to
  A_{p(\bar x)}(\bar x)$ where $\widehat q$ is the conjunction of all
  $A_{p'(x)}(x)$ with $A_{p'(x)}(e)\in I$ and all $A_{p'(x)}\in I$.
  By definition of $\widehat q$, $e\in \widehat q(I)$ and hence $A_{p(\bar x)}(\bar d)\in I$.

  \smallskip\textit{Case 2: $p_0=p$.} Then $h$ witnesses
  that $\bar d\in p(I)$. Since trivially $D_{p^\downarrow},T\models p(\bar x)$, $T'$ contains the rule
  $p(\bar x)\to A_{p(\bar x)}(\bar x)$. This implies that $A_{p(\bar
  x)}(\bar d)\in I$.

  \smallskip\textit{Case 3: Otherwise.} Then all obtained queries
  $p_0$, $p_e$ have less variables than $p$. We obtain a CQ $\widehat p$
  from $p_0$ by doing the following, for every $e\in H^\downarrow$:
  \begin{itemize}

    \item if $p_0$ and $p_e$ do not share any variable, then add the
      nullary atom $A_{p_e}$, and

    \item if $p_0$ and $p_e$ share a variable, then pick such a
      variable $x_e$, make it an answer variable in $p_e$, and add the
      unary atom $A_{p_e(x_e)}(x_e)$. 

  \end{itemize}
  Observe that $h$ witnesses that $\bar d\in \widehat p(I)$ since
  $\bar d\in p_0(I)$ and, for all $e\in H^\downarrow$, we
  have: 
  \begin{itemize}

    \item If $\widehat p$ contains $A_{p_e}$, then $h$ witnesses that 
      $()\in p_e(I')$, and thus, by induction, $A_{p_e}\in I$. 

    \item If $\widehat p$ contains $A_{p_e(x_e)}(x_e)$, then $h$
      witnesses that $h(x_e)\in p_e(I')$, and thus, by induction,
      $A_{p_e(x_e)}(h(x_e))\in I$. 

  \end{itemize}
  Moreover, it should be clear that $D_{\widehat
  p^\downarrow},T\models p(\bar x)$, and thus we find the rule $\widehat
  p(\bar x)\to A_{p(\bar x)}(\bar x)$ in $T'$, hence \mbox{$A_{p(\bar
  x)}(\bar d)\in I$.}
 
  \medskip This finishes the proof of Claim~1. Claim~1 immediately
  implies that $\vec c\notin q(I')$ since $A_{q(\bar x)}(\bar c)\notin
  I$. It remains to verify the following. 

  \smallskip\noindent\textit{Claim~2.} $I'$ is a model of $T$.
 
  \smallskip\noindent\textit{Proof of Claim~2.} To see that $I'$ is a
  model of $T$ let $\vartheta = \phi(x,\bar y)\to\exists \bar z\,\psi(x,\bar z)$ be
  a TGD in $T$ and suppose that $d\in q_\phi(I')$, that is, there is a
  homomorphism $h$ from $q_{\phi}$ to $I'$ with $h(x) = d$. We
  show that $d\in q_\psi(I')$. We distinguish cases.

  \smallskip\textit{Case~1: $d\in\adom(I)$.} In this case, Claim~1
  implies that $A_{q_\phi}(d)\in I$, and hence $q_{\phi}(x)\in t_d$.
  Since $\chase_T(t_d)$ is a model of both $q_{\phi}(x)$ and
  $\vartheta$, we have that $x\in q_\psi(\chase_T(t_d))$.
  The construction of $I'$ ensures that $x\in q_\psi(I')$.

  \smallskip\textit{Case~2: $d \notin \adom(I)$.} Then $d$ is in
  the copy of $\chase_T(t_e)$, for some $e\in \adom(I)$.  We obtain
  CQs $q_1,q_2$ from 
  $q_\phi$ as follows:
  \begin{itemize}

    \item $q_1$ is the restriction of $q_\phi$ to all variables $y$ such
      that $h(y)$ is in the copy of $\chase_T(t_e)$.

    \item $q_2$ is obtained by starting from the remaining atoms and then
      identifying all variables shared with $q_1$. (Note that every
      such variable $y$ satisfies $h(y)=e$.) If there is none
      such variable, then $q_2$ is Boolean. Otherwise, the variable
      obtained in the identification process is the answer variable. 

  \end{itemize}
  The homomorphism $h$ witnesses that $e\in q_2(I')$ if $q_2$ is unary
  and $()\in q_2(I')$ otherwise.  If $q_2$ is unary, Claim~1 yields
  that $A_{q_2(y)}(e)\in I$ and thus $q_2(x)\in t_e$.  Otherwise,
  Claim~1 yields that $A_{q_2}\in I$ and thus $q_2\in t_e$.  By
  definition of $\chase_T(t_e)$ we find a homomorphism $g$ from $q_2$
  to $\chase_T(t_e)$ that maps the answer variable of $q_2$ (if any)
  to the constant corresponding to the free variable of $t_e$.  Let
  $h'$ be the copy of $h$ that maps $q_1$ to $\chase_T(t_e)$ (instead
  of the copy of $\chase_T(t_e)$ in $I'$). But then $g\cup h'$ is a
  homomorphism from $q_\phi$ to $\chase_T(t_e)$, and hence $d'\in
  q_\phi(\chase_T(t_e))$ where $d'$ is the copy of $d$ in
  $\chase_T(t_e)$. Since $\chase_T(t_e)$ is a model of $\vartheta$, we
  have $d'\in q_\psi(\chase_T(t_e))$ and thus $d\in q_\psi(I')$.  This
  finishes the proof of Claim~2.
\end{proof}

\begin{lemma} \label{lem:datalog-rewriting-cmp}
  Let $T$ be a set of frontier-one TGDs of body width $k$. Then, $T'$ consists of: 
  \begin{itemize}

    \item at most exponentially many rules of type~(i), and 

    \item at most double exponentially many rules of type~(ii).
      
  \end{itemize}
  Moreover, rules of type~(i) have at most $k$ variables and rules of
  type~(ii) have only one variable. $T'$ can be computed in
  time triple exponential in $||T||$. 
\end{lemma}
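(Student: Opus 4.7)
My plan is to carry out a counting argument for $\bodyCQ(T)$ and for each rule type, and then to bound the cost of verifying the entailment side condition. First I would bound $|\bodyCQ(T)|$: every CQ in $\bodyCQ(T)$ is obtained from the body of a TGD in $T$ (at most $|T|$ bodies, each with at most $k$ variables and at most $||T||$ atoms) by dropping atoms, identifying variables, and optionally selecting an answer variable. A routine count gives $|\bodyCQ(T)| \le 2^{O(||T||)}$, with each member of width at most $k$. The width bound for type-(i) rules then follows, because bodies in $\bodyCQ^+(T)$ do not introduce variables beyond those of the underlying element of $\bodyCQ(T)$.

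For type-(ii), the body $q(x)$ is a subset of $\{A_{p'} : p' \in \bodyCQ(T)\} \cup \{A_{p'(x)}(x) : p' \in \bodyCQ(T)\}$, a set of single-exponential size; hence there are double-exponentially many bodies, and multiplying by the single-exponentially many heads yields the claimed double-exponential bound, with every rule in the single variable $x$. For type-(i), I would exploit the monotonicity of the side condition $D_{q^\downarrow}, T \models p(\bar x)$ in $q$: if $q_0 \in \bodyCQ(T)$ makes the condition hold, then so does $q_0 \cup S$ for any set $S$ of added $A$-atoms, so every type-(i) rule is already subsumed by one whose body lies in $\bodyCQ(T)$; up to this subsumption there are at most $|\bodyCQ(T)|^2 = 2^{O(||T||)}$ rules of type (i).

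Finally, for the computation of $T'$, I would for each candidate rule check $D_{q^\downarrow}, T \models p(\bar x)$. Because $q^\downarrow$ expands $q$ by replacing $A$-atoms with copies of CQs in $\bodyCQ(T)$, we have $||q^\downarrow|| = 2^{O(||T||)}$. Since CQ answering with respect to frontier-one TGDs is in \TwoExpTime in the combined input size \cite{BMRT11,DBLP:conf/ijcai/Gutierrez-Basulto18}, each such check runs in time $2^{2^{2^{O(||T||)}}}$, i.e., triple exponential in $||T||$; with at most double-exponentially many candidates to enumerate, the overall cost stays triple exponential. The main obstacle is the type-(i) count, since the naive bound through $|\bodyCQ^+(T)|$ is already double exponential, and getting down to single exponential rests on the monotonicity observation that forces type-(i) rules with added $A$-atoms in the body to be already subsumed by rules with body in $\bodyCQ(T)$.
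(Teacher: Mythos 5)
Your bounds on $|\bodyCQ(T)|$, the variable counts for both rule types, the double-exponential count for type-(ii) rules, and the triple-exponential cost analysis of the entailment checks $D_{q^\downarrow},T\models p(\bar x)$ all agree with the paper's proof. The gap is in the type-(i) count, which is precisely where you deviate. The monotonicity you invoke points the wrong way for the conclusion you draw from it. From ``if $D_{q_0^\downarrow},T\models p(\bar x)$ then $D_{(q_0\cup S)^\downarrow},T\models p(\bar x)$'' you can conclude that a rule whose $\bodyCQ(T)$-core already passes the test is logically redundant; you cannot conclude that \emph{every} type-(i) rule in $T'$ is of this form. In general there are rules $q_0\cup S\to A_{p(\bar x)}(\bar x)$ in $T'$ for which $D_{q_0^\downarrow},T\not\models p(\bar x)$: the added $A$-atoms in $S$ unfold under $(\cdot)^\downarrow$ into extra structure that is essential for deriving $p$. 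If that never happened, the passage from $\bodyCQ(T)$ to $\bodyCQ^+(T)$ would be pointless — it exists exactly so that already-derived $A$-facts can participate in further derivations. So such rules are not subsumed by any rule with body in $\bodyCQ(T)$, and your ``at most $|\bodyCQ(T)|^2$ up to subsumption'' does not bound $|T'|$. Moreover, the lemma asserts a cardinality bound on $T'$ as defined (all candidate rules passing the test), not on a subsumption-minimal equivalent program, so even a sound subsumption argument would establish a different statement.

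For comparison, the paper obtains the type-(i) bound by arguing directly that $\bodyCQ^+(T)$ itself has at most exponentially many members: each $q\in\bodyCQ(T)$ has at most $k$ variables, and an element of $\bodyCQ^+(T)$ is determined by $q$ together with the $A$-labels placed on those variables (and the nullary $A$-atoms). Your worry that a literal count gives $2^{|\bodyCQ(T)|\cdot O(k)}$ labelings — since each variable may carry a \emph{set} of labels — and hence a double-exponential bound on $\bodyCQ^+(T)$ is a fair reading of the definition and worth raising. But the correct response is to scrutinize or tighten that count (or to settle for a double-exponential type-(i) bound and check that the downstream automata constructions still absorb it), not to repair it with a subsumption argument that is unsound.
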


\begin{proof}
  First note that there are at most exponentially many queries in
  $\bodyCQ^+(T)$. Indeed, by construction, there are only
  exponentially many queries in $\bodyCQ(T)$, and each query has at
  most $k$ variables, $k$ the body width of $T$. These at most $k$
  variables are now labeled with the fresh concept names $A_{\psi(\bar
  x)}$ with $\psi(\bar x)$ in $\bodyCQ(T)$. It follows that there are
  at most exponentially many queries in $\bodyCQ^+(T)$. Overall, there
  are at most exponentially many candidates for rules of type~(i) and
  at most double exponentially many candidates for rules of type~(ii)
  in $T'$.

  Also note that, for each query $q(\bar x)$ that can occur in a rule
  body in~(i) or~(ii), the query $q^\downarrow$ is a $T$-type, and
  thus of size exponential in $||T||$. Moreover, note that the checks
  $D_{q^{\downarrow}},T\models p(\bar x)$ that have to be made in oder
  to decide whether a candidate rule is included in $T'$ are instances
  of query evaluation w.r.t.\ frontier-one TGDs. 
  Since query evaluation w.r.t.\ frontier-one TGDs is
  \TwoExpTime-complete~\cite{BMRT11}, all these checks can be made in
  triple exponential time. 
\end{proof}

We are now in a position to describe the automaton $\Amf_3$. Let
$T_2'$ be the monadic datalog program obtained from $T_2$. The
automaton uses $T_2'$ to verify the correctness of the labeling
$\mu_w$ by visiting every node  $w\in W$ and doing the
following for every $c\in\adom(L_0(w))$ and every $q(\bar x)\in
\bodyCQ(T_2)$: 
\begin{enumerate}

  \item if $q(x)\in \mu_w(c)$ is unary, then verify that
    $I_{W,L_0},T_2\models \Amc_{q(x)}([w]_c)$;

  \item if $q\in \mu_w(c)$ is Boolean, then verify that
    $I_{W,L_0},T_2\models \Amc_{q}$;

  \item if $q(x)\notin \mu_w(c)$ is unary, then verify that
    $I_{W,L_0},T_2\not\models \Amc_{q(x)}([w]_c)$;

  \item if $q\notin \mu_w(c)$ is Boolean, then verify that
    $I_{W,L_0},T_2\not\models \Amc_{q}$.

\end{enumerate}
By Lemma~\ref{lem:datalog-rewriting}, the automaton may use $T'_2$ in
place of $T_2$. For Points~1 and~2, the automaton guesses a
\emph{derivation}, as commonly used to define the semantics of
datalog; for details, we refer to~\cite{AbHV95}. For Points~3 and~4,
it needs to verify that there is no derivation, which is easy by
dualizing the subautomaton for Points~1 and~2. We thus concentrate on
Points~1 and~2.

To verify that $I_{W,L_0},T_2'\models A_{q(x)}([w]_c)$ (resp.,
$I_{W,L_0},T_2'\models A_{q}$), the automaton non-deterministically
chooses a derivation of $A_{q(\bar x)}([w]_c)$ (resp., $A_{q}$) in
$I_{W,L_0}$ under $T_2'$. For doing so, it uses states from
$\mn{cls}(T_2'',\Delta)$ where $T_2''$ is the fragment of $T_2'$
consisting only of the rules of type~(i) and where $\mn{cls}$ defined
as in the description of $\Amf_2$. It starts in state $A_{q(\bar
x)}(c)$ (resp., $A_q$). (Recall that in world $w$, the element
$[w]_c$ of $I_{W,L_0}$ is represented by constant $c$.)

Intuitively, if the automaton visits $w\in W$ in a state
$q(\bar c)$, then this represents the obligation to find a derivation for
$q(\widehat c)$ in $I_{W,L_0}$ under $T_2'$, where $\widehat
c=[w]_{c}$ if $\bar c=c$ consists of a single constant and $\widehat
c$ is empty otherwise. We distinguish cases depending on the shape of
$q(\bar c)$. 

\smallskip\textit{Case~(1)} If $q(\bar c)$ is of shape $A_{p(\bar x)}(\bar
c)$, then the automaton non-deterministically does one of the
following: 
\begin{itemize}

  \item non-deterministically choose a rule $q'(\bar x)\to
    A_{p(\bar x)}(\bar x)$ of type~(i) in $T_2'$ and proceed in state
    $q'(\bar c)$, or

  \item non-deterministically choose a rule $q'(x)\to A_{p(\bar
    x)}(\bar x)$ of type~(ii) in $T_2'$ and:  
    \begin{itemize}

      \item if $\bar c=c$ is a single constant, then (using
	alternation) the automaton proceeds in states $A_{p'}(c)$, for
	all unary atoms $A_{p'}(x)$ that occur in $q'$, and in
	$A_{p'}$, for all nullary atoms $A_{p'}$ that occur in $q'$;

      \item if $\bar c$ is empty, the automaton navigates
	(non-deterministically) to some $w\in W$, picks a constant
	$c\in \adom(L_0(w))$ and proceeds as in the previous item
	(again using alternation).

    \end{itemize}

\end{itemize}

\smallskip\textit{Case~(2)} If $q(\bar c)$ is not of shape $A_{p(\bar
x)}(\bar c)$, then the
automaton non-deterministically chooses an
$\mn{adom}(B(w))$-splitting $q_0(\bar c_0),q_1(\bar c_1),\ldots,q_n(\bar c_n)$
of $q(\widehat c)$. It then obtains $q_0'(\bar c_0)$ from $q_0(\bar c_0)$
by dropping all atoms of the form $A_{p(x)}(x)$ and $A_{p}$ and
proceeds to verify that $q_0'(\bar c)$ (viewed as a database) is contained in
$L_0(w)$. Additionally, for each $i$ with $1\leq i\leq n$: 
\begin{itemize}

  \item if $q_i(\bar c_i)$ is unary with $\bar c_i = c$, then the \ata
    sends a copy in state $q_i(\bar c_i)$ to some $v\in [w]_c$; 

  \item if $q_i(\bar c_i)$ is Boolean, then the \ata sends a copy in
    state $q_i(\bar c_i)$ to some $v\in W$.

\end{itemize}
Finally, the dropped atoms are processed as follows. 
\begin{itemize}

 \item if $A_p(c)$ is a unary atom in $q_0(\bar c_0)$, the automaton sends
   a copy in state $A_p(c)$ to $w$; 

  \item if $A_p$ is a Boolean atom in $q_0(\bar c_0)$, the automaton sends
   a copy in state $A_p$ to $w$.

\end{itemize}
Using the priorities, we can make sure that the process terminates.
Combining Lemma~\ref{lem:cls-size} and
Lemma~\ref{lem:datalog-rewriting-cmp}, one can verify that
$\mn{cls}(T_2'',\Delta)$ (recall that $T_2''$ is the subset of $T_2'$
consisting only of rules of type~(i)) contains exponentially many
queries and thus $\Amf_3$ uses at most exponentially many states. By
Lemma~\ref{lem:datalog-rewriting-cmp}, $\Amf_3$ can be computed in
triple exponential time.

\subsubsection*{Automaton $\Amf_4$.} To construct automaton $\Amf_4$,
first note that
$\mn{chase}_{T_2}(I_{W,L_1}) \not\rightarrow I_{W,L_1}$ if for some
$w\in W$ and some $c\in \adom(L_0(w))$, there is no
$\querysig$-homomorphism $h$ from $\chase_{T_2}(\mu_w(c))$ to
$I_{W,L_1}$ with $h(x) = [w]_c$. It thus suffices to check the latter.

For convenience, we concentrate on the complement and build an
automaton that is capable of verifying that, given $w\in W$ and
$c\in \adom(L_0(w))$,
\begin{itemize}

  \item[$(\dagger)$] there is a $\querysig$-homomorphism $h$ from
    $\chase_{T_2}(\mu_w(c))$ to $I_{W,L_1}$ with $h(x)=[w]_c$.

\end{itemize}
The automaton $\Amf_4$ then non-deterministically guesses a $w\in W$
and a $c\in \adom(L_0(w))$ and uses the complement/dualization of the
automaton that verifies $(\dagger)$.

We rely on the representation of $\chase_{T_2}(\mu_w(c))$ as a
(rooted!)  $\mu_w(c)$-proper $T_2$-labeled instance tree
$\Tmc = (V,E,B,\mu)$, see the discussion that preceeds
Lemma~\ref{lem:limtoinfinite}.  It is important to realize that the
instance $B(v)$ at some node $v\in V$ together with the labeling
$\mu_v$ of $\adom(B(v))$ with $T_2$-types completely determine the
successors $v'$ of $v$ and their labeling $B(v')$ and $\mu_{v'}$. More
precisely, the type $\mu(c)$ of a constant $c\in \adom(B(v))$
determines all successors $v'$ of $v$ that have $c$ in their domain
$\adom(B(v'))$. Moreover, $v$ has a successor with label
$B(v'),\mu_{v'}$ iff $\mu_v(c),T_2\models q_{(B(v'),\mu_{v'})}^c(x)$
 (c.f. Condition~2 of properness).
Since query evaluation w.r.t.\
frontier-one TGDs is \TwoExpTime-complete~\cite{BMRT11} and the size
of the input is exponential in $||T||$, this check is possible in
triple exponential time. Hence, all possible successors can be
computed in triple exponential time.

For achieving~$(\dagger)$, the automaton proceeds as follows. It
memorizes (in its states) the database $B(v)$ at the current node
$v\in V$ of $\Tmc$ and the type labeling $\mu_v$.  It then guesses a
partial $\Sigma_Q$-homomorphism from $B(v)$ to the currently visited node
$w\in W$. Each variable that is mapped to the current state gives rise
to successors $v'$ of $v$ with associated $B(v')$ and $\mu_{v'}$
labelings, and the automaton spawns copies of itself that generate
these successor (as states), moves to neighboring nodes in the input
tree, and proceeds there. As in the encoding of $\Tmc$ as a labeled
tree, it remaps the constants in the instances $B(\cdot)$ to ensure
that only finitely many states are used; this is possible since every
instance $B(v)$ is isomorphic to the head of some TGD in $T_2$.
%
%

More formally, the automaton uses as states pairs $\langle q(\bar
c),\mu\rangle$ where:
\begin{itemize}

  \item $q(\bar c)$ is an element of $\mn{cls}(T_2,\Delta)$, and 

  \item $\mu$ assigns a $T_2$-type to every variable in $q(\bar c)$.

\end{itemize}
When the automaton visits a node $w\in W$ in state $\langle q(\bar
c),\mu\rangle$, this represents the obligation to verify that there is
a $\querysig$-homomorphism $h$ from $q$ to $I_{W,L_1}$ such that:
\begin{itemize}

  \item for every constant $c\in \bar c$, $h(c)=[w]_c$, 
    and 

  \item for every variable $x$ in $q$, there is a
    $\querysig$-homomorphism $g$ from $\chase_{T_2}(\mu(x))$ to
    $I_{W,L_1}$ with $g(x)=h(x)$.

\end{itemize}
For doing so, the automaton non-deterministically chooses an
$\mn{adom}(B(w))$-splitting $q_0(\bar c_0), \ldots, q_n(\bar c_n)$ of
$q(\bar c)$ and proceeds as follows: 
\begin{itemize}

  \item it verifies that the $\querysig$-restriction of $q_0(\bar c_0)$
    is a subset of $L_1(w)$;


      \item for every $i$ with $1\leq i\leq n$, we let $\mu_i$ be the restriction of
	$\mu$ to the variables in $q_i$, then 

    \begin{itemize}

      \item if $q_i(\bar c_i)$ is unary with $\bar c_i = c$, then the \ata
	sends a copy in state $\langle q_i(\bar c_i),\mu_i\rangle$ to
	some $v\in [w]_c$.

      \item if $q_i(\bar c_i)$ is Boolean, then the \ata sends a copy in
	state $\langle q_i(\bar c_i),\mu_i\rangle$ to some $v\in W$.

    \end{itemize}

  \item for every variable $x$ in $q$ that was replaced by a constant
    $d$ in the splitting, consider any node $v$ in \Tmc and any
    $e \in \mn{adom}(B(v))$ with $\mu_v(e)=\mu(x)$,\footnote{Choosing
      different $v$ and $e$ leads to exactly the same result provided
      that $\mu_v(e)=\mu(x)$.} and all successors $v'$ of $v$ with
    $\mn{adom}(B(v)) \cap \mn{adom}(B(v')) \subseteq \{ e \}$. Let $q'$
    be $B(v')$ viewed as a CQ which is Boolean with $e$ viewed as
    the answer variable if $e \in \mn{adom}(B(v'))$ and Boolean
    otherwise. Further let $\mu'=\mu_{v'}$. The automaton
    does the following:
    \begin{itemize}

      \item if $q'$ is unary, then it sends a copy in state
	$\langle q'(d),\mu'\rangle$ to some $w'\in [w]_d$;

      \item if $q'$ is Boolean, then it sends a copy in state
	$\langle q',\mu'\rangle$ to some $w'\in W$.

    \end{itemize}





\end{itemize}
Overall, one can verify that the number of states is at most double
exponential in $||T_2||$. There are doubly exponentially many types
and, by Lemma~\ref{lem:cls-size}, the size of $\mn{cls}(T_2,\Delta)$ is
bounded exponentially in the size of $||T_2||$. Since all queries in
$\mn{cls}(T_2,\Delta)$ have at most $||T_2||$ variables, the triple
exponential bound follows. As argued, the automaton can 
be computed in time triple exponential in $||T_2||$. 

\subsection{Upper Bounds for CQ-Conservativity}

We actually work with a refinement of the characterization given in
Theorem~\ref{thm:charsecond}; its proof is based on
Lemma~\ref{lem:finiteI}. The formulation of this refinement is
somewhat more technical than the formulation of
Theorem~\ref{thm:charsecond}, and in fact we decided to go in these
two steps for didactic reasons.
%
%
%
%
%
\begin{restatable}{theorem}{thmcharsecondrefined}
  \label{thm:charsecondrefined}
  Let $T_1$ and $T_2$ be sets of frontier-one TGDs, $\dbsig$ and
  $\querysig$ schemas, and $k$ the body width of $T_1$.  Then
  $T_1 \models^{\text{CQ}}_{\dbsig,\querysig} T_2$ iff for all
  tree-like $\dbsig$-databases $D$ of width at most $k$ and all
  tree-like models $I$ of $T_1$ and $D$ of width $\max(k,\ell)$, the
  following holds:
  \begin{enumerate}[label=\arabic*.]

    \item $\chase_{T_2}(D)|^\con_{\querysig} \rightarrow_{\querysig}
      I$;

    \item for every labeled $\querysig$-head fragment $A=(F,\mu)$ of
      $T_2$ with $\chase_{T_2}(D) \models q_A$, one
      of the following holds:
      \begin{enumerate}

	\item $\chase_{T_2}(D_A)|^\con_{\querysig} \rightarrow_{\querysig} I$;

	\item $\chase_{T_2}(D_A)|^\con_{\querysig} \rightarrow^{\lim}_{\querysig}
          \chase_{T_1}(\tp_{T_1}(I,c))$
	  for some $c \in \adom(D)$. 

      \end{enumerate}

  \end{enumerate} 
\end{restatable}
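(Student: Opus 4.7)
My plan is to derive Theorem~\ref{thm:charsecondrefined} from Theorem~\ref{thm:charsecond} together with Lemmas~\ref{pro:chase}, \ref{lem:typedet}, and~\ref{lem:finiteI}. The two refinements over Theorem~\ref{thm:charsecond}—allowing arbitrary tree-like models $I$ in place of $\chase_{T_1}(D)$, and quantifying over labeled $\querysig$-head fragments $A$ in place of maximally $\querysig$-connected components of $\chase_{T_2}(D)\setminus\chase_{T_2}(D)|^\con_\querysig$—can each be bridged by homomorphism-composition arguments, so the proof should consist essentially of bookkeeping: translating between $\chase_{T_1}(D)$ and $I$ via the universality of the chase, and between components and head fragments via Lemma~\ref{lem:finiteI}.

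For the ``only if'' direction, I would assume $T_1\models^{\text{CQ}}_{\dbsig,\querysig}T_2$ and fix a tree-like $\dbsig$-database $D$ of width $\le k$ together with a tree-like model $I$ of $T_1\cup D$ of width $\max(k,\ell)$. Lemma~\ref{pro:chase} yields a database-preserving homomorphism $g\colon \chase_{T_1}(D)\to I$, and composing the homomorphism given by Point~1 of Theorem~\ref{thm:charsecond} with $g$ delivers Point~1 here. For Point~2, let $A=(F,\mu)$ be a labeled $\querysig$-head fragment with $\chase_{T_2}(D)\models q_A$; extending the witnessing map via Lemma~\ref{pro:chase} yields $h\colon\chase_{T_2}(D_A)\to\chase_{T_2}(D)$. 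Since $F$ is $\querysig$-connected and consists of null constants, $h$ sends $\chase_{T_2}(D_A)|^\con_\querysig$ into a single maximally $\querysig$-connected component $C$ of $\chase_{T_2}(D)$. If $C\subseteq\chase_{T_2}(D)|^\con_\querysig$, composing $h$ with the homomorphism just obtained for Point~1 yields case~(a). Otherwise, $C$ is a component of $\chase_{T_2}(D)\setminus\chase_{T_2}(D)|^\con_\querysig$, to which Point~2 of Theorem~\ref{thm:charsecond} applies: case~(a) of that theorem composes with $g$ to give~(a) here, and case~(b) yields $C\to^\lim_\querysig\chase_{T_1}(D)|^\downarrow_c$, which Lemma~\ref{lem:typedet} rewrites as $\chase_{T_1}(\tp_{T_1}(\chase_{T_1}(D),c))$; the homomorphism induced by the inclusion $\tp_{T_1}(\chase_{T_1}(D),c)\subseteq\tp_{T_1}(I,c)$ (which holds because $g$ preserves unary CQs at constants in $\adom(D)$) together with Lemma~\ref{pro:chase} then produces~(b) here after composing with $h$.

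For the ``if'' direction, I would instantiate the refined theorem at $I=\chase_{T_1}(D)$, which is tree-like of width $\max(k,\ell)$ by Lemma~\ref{lem:chasenarrow}. Point~1 of the refined theorem directly yields Point~1 of Theorem~\ref{thm:charsecond}. For Point~2 of Theorem~\ref{thm:charsecond}, let $J$ be a maximally $\querysig$-connected component of $\chase_{T_2}(D)\setminus\chase_{T_2}(D)|^\con_\querysig$ and use Lemma~\ref{lem:finiteI} to extract a labeled $\querysig$-head fragment $A$ with $\chase_{T_2}(D)\models q_A$ and $\chase_{T_2}(D_A)|^\con_\querysig$ homomorphically equivalent to $J$. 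Applying Point~2 of the refined theorem to $A$ yields (a) or (b) phrased in terms of $\chase_{T_1}(\tp_{T_1}(\chase_{T_1}(D),c))$, and Lemma~\ref{lem:typedet} converts these statements back into the $\chase_{T_1}(D)|^\downarrow_c$ formulation of the original Point~2. I expect the main obstacle to be the ``only if'' analysis of where $h$ sends $\chase_{T_2}(D_A)|^\con_\querysig$—specifically, verifying rigorously that the image lies in a single maximally $\querysig$-connected component of $\chase_{T_2}(D)$ and that this dichotomy (inside or outside $\chase_{T_2}(D)|^\con_\querysig$) exhausts all cases; this hinges on how $F$ being $\querysig$-connected in the head of a TGD is preserved by the chase through fresh nulls.
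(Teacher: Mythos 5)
Your overall strategy coincides with the paper's: both directions are obtained from Theorem~\ref{thm:charsecond} by (i) passing between $\chase_{T_1}(D)$ and an arbitrary tree-like model $I$ via the universality of the chase (Lemma~\ref{pro:chase}) and (ii) passing between maximally $\querysig$-connected components of $\chase_{T_2}(D)\setminus\chase_{T_2}(D)|^\con_\querysig$ and labeled $\querysig$-head fragments via Lemma~\ref{lem:finiteI}. The paper merely phrases the direction you call ``only if'' contrapositively, using the same dichotomy you identify (does $\chase_{T_2}(D_A)|^\con_\querysig$ map into $\chase_{T_2}(D)|^\con_\querysig$ or into a component of the complement?). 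That direction of yours is sound, including the step you flag as the main obstacle, and including the transfer of condition~2(b): there all the auxiliary homomorphisms point \emph{into} larger instances, namely $\chase_{T_1}(D)|^\downarrow_c \rightarrow \chase_{T_1}(\tp_{T_1}(\chase_{T_1}(D),c)) \rightarrow \chase_{T_1}(\tp_{T_1}(I,c))$, so composing them with a homomorphism limit is unproblematic.

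The gap is in your ``if'' direction, in the claim that Lemma~\ref{lem:typedet} ``converts these statements back into the $\chase_{T_1}(D)|^\downarrow_c$ formulation''. Writing $t=\tp_{T_1}(\chase_{T_1}(D),c)$, Lemma~\ref{lem:typedet} identifies $\chase_{T_1}(D)|^\downarrow_c$ with $\chase_{T_1}(t)|^\downarrow_c$, i.e., with the part of $\chase_{T_1}(t)$ generated below the free variable only; condition~2(b) of the refined theorem, however, refers to the \emph{entire} instance $\chase_{T_1}(t)$, which strictly contains a copy of $\chase_{T_1}(D)|^\downarrow_c$ (it also contains the other elements of the type database and everything chased below them). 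A homomorphism limit into $\chase_{T_1}(t)$ may use those extra parts, so it does not restrict to a homomorphism limit into $\chase_{T_1}(D)|^\downarrow_c$, and your conversion runs in the wrong direction. The repair is what the paper does: since $t$ viewed as a database maps into $\chase_{T_1}(D)$ sending its free variable to $c$, refined 2(b) yields $J \rightarrow^{\lim}_{\querysig} \chase_{T_1}(D)$ for the corresponding component $J$, and one must then rerun the skipping-homomorphisms case distinction from the ``$\Rightarrow$'' direction of the proof of Theorem~\ref{thm:charsecond}, which terminates in condition~2(a) \emph{or} 2(b) of that theorem (possibly 2(a), not necessarily 2(b)). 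Because the target is that disjunction, this closes the gap, but it is an additional argument rather than an application of Lemma~\ref{lem:typedet}.
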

\noindent
\emph{Proof.}
It suffices to show that for every tree-like $\dbsig$-database~$D$ of
width $k$, Conditions~1 and~2 of Theorem~\ref{thm:charsecond} are
satisfied if and only if for all tree-like models $I$ of $T_1$ and $D$ of width
$\max(k,\ell)$, Conditions~1 and 2 above are satisfied.

First assume that for all tree-like models $I$ of $T_1$ and $D$ of
width $\max(k,\ell)$, Conditions~1 and~2 above are satisfied. Since
$\mn{chase}_{T_1}(D)$ is such a model, Condition~1 of
Theorem~\ref{thm:charsecond} is also satisfied. Now for
Condition~2. By Lemma~\ref{lem:finiteI}, for all maximally
\querysig-connected components $J$ of
$\chase_{T_2}(D)\setminus \chase_{T_2}(D)|^\con_\querysig$,
Condition~2(a) or~2(b) above is satisfied when
$\chase_{T_2}(D_A)|^\con_{\querysig}$ is replaced by $J$. In the
former case, also Condition~2(a) of Theorem~\ref{thm:charsecond} is
satisfied. In the latter case, it follows that
$J \rightarrow^{\lim}_{\querysig} \chase_{T_1}(D)$.  One can then show
exactly as in the proof of Theorem~\ref{thm:charsecond} that either
Condition~2(a) or~2(b) of that theorem is satisfied.

Conversely, suppose that Conditions~1 and~2 of
Theorem~\ref{thm:charsecond} are satisfied for $D$. Since
$\mn{chase}_{T_1}(D) \rightarrow I$ for every model $I$ of $T_1$ and
$D$, Condition~1 of Theorem~\ref{thm:charsecond} implies that for all
tree-like models $I$ of $T_1$ and $D$ of width $\max(k,\ell)$,
Condition~1 above is satisfied. It remains to argue that Condition~2
above is satisfied.  Assume to the contrary that is is not. Then there
is some tree-like model $I$ of $T_1$ and $D$ of width $\max(k,\ell)$
and some labeled $\querysig$-head fragment $A=(F,\mu)$ of $T_2$ such
that both 2(a) and 2(b) above are violated. Since
$\mn{chase}_{T_1}(D) \rightarrow I$, these conditions are still
violated when $I$ is replaced by $\mn{chase}_{T_1}(D)$.

We distinguish the following cases:
  \begin{itemize}
  \item 
%
$\chase_{T_2}(D_A)|^\con_{\querysig}\not\to
  \chase_{T_2}(D)|^\con_{\querysig}$.

  Then $\chase_{T_2}(D_A)|^\con_{\querysig}$ is an induced subinstance
  of a maximally $\querysig$-connected component $I$ of
  $\chase_{T_2}(D) \setminus \chase_{T_2}(D)|^\con_{\querysig}$. Thus,
  $\chase_{T_2}(D_A)|^\con_{\querysig}\not\to_{\querysig}
  \chase_{T_1}(D)$ implies $I\not\to_{\querysig} \chase_{T_1}(D)$ and
  Condition~2(a) of Theorem~\ref{thm:charsecond} is not satisfied.
  Moreover,
  $\chase_{T_2}(D_A)|^\con_{\querysig}\not\to^\lim_{\querysig}
  \chase_{T_1}(\tp_{T_1}(\chase_{T_1}(D),c))$ for all $c\in \adom(D)$
  implies $I\not\to^\lim_{\querysig} \chase_{T_1}(D)|^\downarrow_c$,
  for all $c\in \adom(D)$.  This is because
  $\chase_{T_2}(D_A)|^\con_{\querysig}$ is a subinstance of $I$ and,
  due to Lemma~\ref{lem:typedet}, $\chase_{T_1}(D)|^\downarrow_c$ is a
  subinstance of $\chase_{T_1}(\tp_{T_1}(\chase_{T_1}(D),c))$.  Thus,
  both Condition~2(a) and~2(b) of Theorem~\ref{thm:charsecond} are
  violated, a contradiction.


\item
  $\chase_{T_2}(D_A)|^\con_{\querysig}\to
  \chase_{T_2}(D)|^\con_{\querysig}$.

  Then $\chase_{T_2}(D_A)|^\con_{\querysig}
  \not\rightarrow_{\querysig} \chase_{T_1}(D)$ implies 
  $\chase_{T_2}(D)|^\con_{\querysig} \not\rightarrow_{\querysig}
  \chase_{T_1}(D)$. Hence, Condition~1 of
  Theorem~\ref{thm:charsecond} is
  not satisfied, which is again a contradiction. \hfill $\Box$
  \end{itemize}

\medskip

Let $T_1,T_2,\dbsig,\querysig$ be given. We may again assume without
loss of generality that all symbols from \dbsig and \querysig occur in
$T_1$.  Let $k$ and $\ell$ be the body and head width of $T_1$.  
It suffices to devise a \ata \Bmf such
that
\begin{itemize}

  \item[$(\ast_\Bmf)$] $\Bmf$ accepts all tree-like instances $I$ of width
    $\max(k,\ell)$ that are a model of $T_1$ and of some tree-like
    $\querysig$-database $D$ of width $k$ such that Condition~1
    and Condition~2 of Theorem~\ref{thm:charsecondrefined} are
    violated.

\end{itemize}
In order to represent tree-like instances of bounded width as the
input to \ata{}s, we use exactly the same encoding of infinite
instances as for hom-conservativity, and in fact, the constructed
automata run over the same alphabet
$\Theta = \Theta_0\times\Theta_0\times \Theta_1$.  Recall that an
input tree over this alphabet represents a $\dbsig$-database $D$ in
the first component, a model $I$ of $T_1$ and $D$ in the second
component, and the chase of $D$ with $T_2$, restricted to $\adom(D)$,
in the last component.

The desired \ata \Bmf is constructed as the intersection of {\ata}s
$\Bmf_0,\Bmf_1,\Bmf_2,\Bmf_3$, and \Bmf' where \Bmf' in turn is the union of
{\ata}s $\Bmf_4$ and $\Bmf_5$, all of them provided by the following
lemma.
\begin{lemma} \label{lem:cq-automata}
  There are {\ata}s $\Bmf_0,\Bmf_1,\Bmf_2,\Bmf_3,\Bmf_4$ such that:
  \begin{itemize}

    \item[--] $\Bmf_0$ accepts $(W,L)$ iff it is well-typed and
      $(W,L_0)$ and $(W,L_1)$ are
      well-formed;

    \item[--] $\Bmf_1$ accepts $(W,L)$ iff $I_{W,L_0}$ is a
      $\dbsig$-database of width~$k$; 

    \item[--] $\Bmf_2$ accepts $(W,L)$ iff $I_{W,L_1}$ is a model of
      $I_{W,L_0}$ and $T_1$;

    \item[--] $\Bmf_3$ accepts $(W,L)$ iff 
      for every $w\in W$ and every $c\in \adom(L_0(w))$, 
      \[\mu_w(c) = \tp_{T_2}(\chase_{T_2}(I_{W,L_0}),[w]_c).\]

    \item[--] $\Bmf_4$ accepts $(W,L)$ iff Condition~1 of
      Theorem~\ref{thm:charsecondrefined} is violated;
       with `$I$' replaced
       with `$I_{W,L_1}$' is violated;
      
    \item[--] $\Bmf_5$ accepts $(W,L)$ iff Condition~2 of
      Theorem~\ref{thm:charsecondrefined} with `$I$' and `$D$'
      replaced with `$I_{W,L_1}$' and `$I_{W,L_0}$', respectively, is
      violated;

  \end{itemize}
  The number of states
  \begin{itemize}

    \item[--] of $\Bmf_0$ is exponential in $||T_1||$ (and independent
      of $T_2$);

    \item[--] of $\Bmf_1$ does not depend on the input;

    \item[--] of $\Bmf_2$ is exponential in $||T_1||$ (and independent
      of $T_2$);

    \item[--] of $\Bmf_3$ is exponential in $||T_2||$ (and independent
      of $T_1$); 

    \item[--] of $\Bmf_4$ is exponential in $||T_2||$ (and independent
      of $T_1$);

    \item[--] of $\Bmf_5$ is double
      exponential in both $||T_1||$ and $||T_2||$. 

  \end{itemize}
  All automata can be constructed in time triple exponential in
    $||T_1||+||T_2||$
%
  and have maximum priority
  one.  
  %
\end{lemma}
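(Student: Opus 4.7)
The first four automata $\Bmf_0,\ldots,\Bmf_3$ coincide with $\Amf_0,\ldots,\Amf_3$ from Lemma~\ref{lem:hom-automata} and can be lifted verbatim, with identical state bounds. The work thus consists in constructing $\Bmf_4$ and~$\Bmf_5$. The automaton $\Bmf_4$ dualises the walking construction of $\Amf_4$: it non-deterministically picks $w\in W$ and $c\in\adom(L_0(w))$ and walks the $\mu_w(c)$-proper $T_2$-labeled instance tree representing $\chase_{T_2}(D)|^\downarrow_{[w]_c}$ in parallel with the input, restricting every transition to $\querysig$-edges so that the walk stays inside $\chase_{T_2}(D)|^\con_\querysig$. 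At every step it certifies that no $\querysig$-homomorphism into $I_{W,L_1}$ can extend the partial placement guessed so far. Because the $T_2$-types at database constants are already visible in the third input component~$L_2$ (and certified by $\Bmf_3$), the walk need not carry a type assignment in its state; this shrinks the state count from the double-exponential bound of $\Amf_4$ to single-exponential in $||T_2||$, as required.

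For $\Bmf_5$ the core idea is to push every homomorphism-limit test out of the automaton and into a precomputation table. I enumerate all labeled $\querysig$-head fragments $A=(F,\mu)$ of $T_2$; there are at most exponentially many in $||T_2||$. For each such $A$ and each $T_1$-type $\widehat t\in\TP(T_1)$ I invoke Theorem~\ref{thm:decidelimit} to decide whether $\chase_{T_2}(D_A)|^\con_\querysig\rightarrow^{\lim}_\querysig \chase_{T_1}(\widehat t)$ and record the answer as $\tau(A,\widehat t)\in\{\mathsf{yes},\mathsf{no}\}$. Since $|\TP(T_1)|$ is double exponential in $||T_1||$ and each decision runs in triple exponential time, the entire table is assembled within the target time bound. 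The automaton $\Bmf_5$ then non-deterministically guesses a head fragment $A$ and verifies three things concurrently: (a)~$\chase_{T_2}(D)\models q_A$, using the datalog-rewriting machinery already employed for $\Bmf_3$; (b)~$\chase_{T_2}(D_A)|^\con_\querysig\not\rightarrow_\querysig I_{W,L_1}$, using an adaptation of the $\Bmf_4$ walk seeded at the constant-size database $D_A$; and (c)~for every $w\in W$ and every $c\in\adom(L_0(w))$, $\tau(A,\tp_{T_1}(I_{W,L_1},[w]_c))=\mathsf{no}$. For~(c), $\Bmf_5$ guesses the unary $T_1$-type $\widehat t$ at $[w]_c$ and certifies the presence and absence of each $q\in\bodyCQ(T_1)$ in $\widehat t$ by spawning subautomata patterned after $\Amf_3$ (and its dual) over $T_1'$ applied to the second input component; it then rejects unless the table lookup returns $\mathsf{no}$.

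The principal obstacle is to keep $\Bmf_5$ within a doubly exponential state bound while juggling, simultaneously, a guessed $T_1$-type (doubly exponentially many in $||T_1||$), a derivation state for the datalog check that certifies this type (singly exponential by Lemmas~\ref{lem:cls-size} and~\ref{lem:datalog-rewriting-cmp}), a walking state inside $\chase_{T_2}(D_A)|^\con_\querysig$ (singly exponential in $||T_2||$), and the guess of $A$ itself (singly exponential in $||T_2||$). The product is doubly exponential in both $||T_1||$ and $||T_2||$, matching the bound announced in the lemma, and all priorities remain at most one because each constituent subautomaton implements a reachability or safety objective. Computing the precomputation table and all automata takes time triple exponential in $||T_1||+||T_2||$, so the overall construction yields the \ThreeExpTime upper bound for CQ-conservativity in Theorem~\ref{thm:frontierone}.
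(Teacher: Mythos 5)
Your construction follows the paper's proof essentially step for step: $\Bmf_0,\ldots,\Bmf_3$ are taken over unchanged from Lemma~\ref{lem:hom-automata}; $\Bmf_4$ is the $\Amf_4$ walk restricted to the $\querysig$-connected part of the chase; and $\Bmf_5$ guesses a labeled $\querysig$-head fragment $A$, checks $\chase_{T_2}(I_{W,L_0})\models q_A$ via the certified type labels, handles Condition~2(a) by another $\Amf_4$-style walk seeded at $q_F$, and handles Condition~2(b) by precomputing, via Theorem~\ref{thm:decidelimit}, the limit-homomorphism tests for all pairs of a head fragment and a $T_1$-type, guessing the $T_1$-type of $[w]_c$ in $I_{W,L_1}$, certifying it with the monadic datalog rewriting as in $\Amf_3$, and looking up the table. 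This is exactly the paper's decomposition and yields the same time bounds.

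The one step whose justification I do not accept is your single-exponential state bound for $\Bmf_4$. You argue that the walk need not carry a $T_2$-type assignment because the types are ``visible in the third input component $L_2$,'' but $L_2$ records only the types of database constants $[w]_c$ with $c\in\adom(L_0(w))$. The simulated traversal of $\chase_{T_2}(\mu_w(c))$ proceeds through chase-generated nulls, and it is precisely the $T_2$-types of those nulls that determine which successor bags exist in the $\mu_w(c)$-proper instance tree; they are not present in the input and must be tracked in the state, exactly as in $\Amf_4$. With that correction the walking component of $\Bmf_4$ (and of the 2(a)-subautomaton inside $\Bmf_5$) is double exponential in $||T_2||$ rather than single exponential --- which is in fact what the paper's own proof text asserts (``same size as $\Amf_4$''), the lemma's stated bound for $\Bmf_4$ notwithstanding. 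None of this affects the triple-exponential construction time or the \ThreeExpTime upper bound of Theorem~\ref{thm:frontierone}, since \ata emptiness is single exponential in the number of states, but the reason you give for the sharper bound does not stand.
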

It can be verified that \Bmf satisfies ($\ast_\Bmf$) and thus $L(\Bmf)\neq\emptyset$ iff $T_1
\not\models_{\dbsig,\querysig}^{\textup{CQ}} T_2$.  The rest of this
section is devoted to proving Lemma~\ref{lem:cq-automata}. Automata
$\Bmf_0,\Bmf_1,\Bmf_2,\Bmf_3$ are exactly as
$\Amf_0,\Amf_1,\Amf_2,\Amf_3$ in
Lemma~\ref{lem:hom-automata}, so we concentrate on $\Bmf_4$ and
$\Bmf_5$.

\subsubsection{Automaton $\Bmf_4$.} The task of $\Bmf_4$ is to verify
$\chase_{T_2}(I_{W,L_0})|^\con_{\querysig} \not\rightarrow_{\querysig}
I_{W,L_1}$. Note that this is very similar to what is achieved by
automaton $\Amf_4$ from Lemma~\ref{lem:hom-automata}, which verifies
that $\chase_{T_2}(I_{W,L_0}) \not\rightarrow_{\querysig} I_{W,L_1}$.
In fact, it can be solved using essentially the same construction and
thus has the same size and can be computed in the same time as
$\Amf_4$. More precisely, the gist of the construction of $\Amf_4$ is
to find an automaton 
that verifies, given
$w\in W$ and $c\in \adom(L_0(w))$, that there is a
$\querysig$-homomorphism $h$ from $\chase_{T_2}(\mu_w(c))$ to
$I_{W,L_1}$ with $h(x) = [w]_c$. This is done by constructing
$\chase_{T_2}(\mu_w(c))$ `in the states'. $\Bmf_4$ does exactly the
same, but disregards $\Sigma_Q$-disconnected parts of $\chase_{T_2}(\mu_w(c))$.

\subsubsection{Automaton $\Bmf_5$.} The task of $\Bmf_5$ is to verify
that for all
$\querysig$-labeled head fragments $A=(F,\mu)$ of $T_2$ such that
$\chase_{T_2}(I_{W,L_0}) \models q_A$, the
following hold:
\begin{enumerate}

  \item $\chase_{T_2}(D_A)|^\con_{\querysig} \not\rightarrow_{\querysig} I_{W,L_1}$;

  \item $\chase_{T_2}(D_A)|^\con_{\querysig} \not\rightarrow^{\lim}_{\querysig}
    \chase_{T_1}(\mn{tp}_{T_1}(I_{W,L_0},[w]_c))$ for all $[w]_c \in \adom(I_{W,L_0})$.

\end{enumerate}
%
%
Note that the condition
$\chase_{T_2}(I_{W,L_0}) \models q_A$ is satisfied iff for some
$[w]_c\in\adom(I_{W,L_0})$, the type
$t=\tp_{T_2}(\chase_{T_2}(I_{W,L_0},[w]_c))$
satisfies $t,T_2\models q_A$. Since we are considering the intersection
with $\Bmf_3$, we can assume that
$\mu_w(c) = \tp_{T_2}(\chase_{T_2}(I_{W,L_0},[w]_c))$ and thus, the
latter condition is satisfied 
iff $\mu_w(c),T_2\models q_A$ for some
$w\in W$ and $c\in \adom(L_0(w))$. 

Thus, the automaton can identify all relevant labeled
$\querysig$-head fragments $A=(F,\mu)$ of $T_2$
by visiting all
$w\in W$, all $c\in\adom(L_0(w))$, and testing for each
whether $\mu_w(c),T_2
\models q_A$ is satisfied.  The result of all possible such tests
can be computed in time triple exponential in $||T_2||$
already during the construction of $\Bmf_5$, since query evaluation
w.r.t.\ frontier-one TGDs is \TwoExpTime-complete~\cite{BMRT11}.

If the test $\mu_w(c),T_2
\models q_A$ is succesfull, the automaton has
to verify Points~1 and~2 above for
$\chase_{T_2}(D_A)|^\con_{\querysig}$. There is
once more a lot of similarity between Point~1 and what is achieved by
automaton $\Amf_4$ from Lemma~\ref{lem:hom-automata}. Constructing an
automaton that verifies Point~1 is thus another variation of the
construction of $\Amf_4$, the main difference being that instead of
chasing a single type we chase $D_A$ with $T_2$ in the states of the
automaton. In particular, the automaton starts in state $\langle
q_F,\mu\rangle$ (note that $q_F\in\mn{cls}(T_2,\Delta)$).

For Point~2, we invoke Theorem~\ref{thm:decidelimit} for every
labeled $\querysig$-head fragment $A=(F,\mu)$ of $T_2$ identified above. The automaton
memorizes $A$ in its states and visits (again) all $w\in W$, and all
$c\in \adom(I_{W,L_0})$ in order to verify that 
\[\chase_{T_2}(D_A)|^\con_{\querysig} \not\rightarrow^{\lim}_{\querysig}
  \chase_{T_1}(\tp_{T_1}(I_{W,L_1},[w]_c)).\]
Recall that these tests have been precomputed via
Theorem~\ref{thm:decidelimit}. Hence, all the automaton has to do at
this point is to guess\footnote{Recall that the $T_1$-type is not
represented in the input.} the $T_1$-type $t$ of $[w]_c$ in
$I_{W,L_1}$, verify that it is the correct type using the monadic
datalog rewriting $T_1'$ of $T_1$ as in automaton~$\Amf_3$ of
Lemma~\ref{lem:hom-automata}, and lookup the result of 
$\chase_{T_2}(D_A)|^\con_\querysig\rightarrow^{\lim}_{\querysig} \chase_{T_1}(t)$ in
the precomputated table.

Overall, the resulting automaton is of size double exponential in both $||T_2||$
(for Point~1) and $||T_1||$ (for guessing a $T_1$-type and verifying it 
in Point~2). It can be computed in triple exponential time.
In particular, the computation of the lookup table for Point~2 is
possible in triple exponential time, by Theorem~\ref{thm:decidelimit}.

\subsection{Proof of Theorem~\ref{thm:decidelimit}}

We prove Theorem~\ref{thm:decidelimit} via the characterization of
bounded homomorphisms in terms of standard (unbounded) homomorphisms
given by Lemma~\ref{lem:limtoinfinite}. Let two sets of frontier-one
TGDs $T_1,T_2$, a schema $\Sigma$, a labeled $\Sigma$-head fragment $A =
(D,\mu)$ of $T_2$, and a $T_1$-type $\widehat t$ be given. It suffices to devise a
\ata $\Cmf$ such that:
\begin{itemize}

\item[$(\ast_\Cmf)$] \Cmf accepts all encodings of $\widehat t$-proper
  $T_1$-labeled instance trees $\Tmc$ of width~$m$ such
  that $\chase_{T_2}(D_A)|_{\Sigma}^\con\to I_\Tmc$.

\end{itemize}
Here we need to work with possibly non-rooted instance trees, and thus we
slightly modify our encoding of instance trees as input to the \ata.  The
input alphabet is $\Theta' = \Theta_0 \times \{0,1\} \times \Theta_1'$
where $\Theta_0$ is defined as above and $\Theta'_1$ is the set of all
mappings $\mu:\Delta'\to \TP(T_1)$ for some $\Delta'\subseteq \Delta$
with $|\Delta'|\leq m$. Note that, in contrast to the alphabet
$\Theta_1$ employed before, here we use $T_1$-types in place of
$T_2$-types. For a $\Theta'$-labeled tree $(W,L)$ and $w \in W$ with
$L(w)=(B,i,\mu)$, we use $L_0(w)$ to denote $B$, $i_w$ to denote $i$,
and $\mu_w$ to denote $\mu$.  Our aim is that every $\Theta'$-labeled
tree $(W,L)$ represents a $T_1$-labeled instance tree
$\Tmc=(V,E,B,\mu)$ where the $(V,E,B)$-part is represented by
$(W,L_0)$ as before and the $\mu$-part is represented by $(W,\mu_w)$.

The additional labeling with the 0/1-marker $i_w$ is necessary because
$T_1$-labeled instance trees need not have a root and thus may contain
an infinite predecessor path. This path will be represented as a
\emph{downward} path in the (rooted!) $\Theta'$-labeled trees, but
marked with a 1-marker for identification purposes. 

A $\Theta'$-labeled tree $(W,L)$ is \emph{well-typed} if, for all
$w\in W$, the domain of $\mu_w$ is $\adom(L(w))$, and for all
successors $v$ of $w$, and all $d\in
\adom(L_0(w))\cap \adom(L_0(v))$, we have $\mu_w(d)=\mu_v(d)$. It is
\emph{well-formed} if $(W,L_0)$ satisfies the two conditions of
well-formedness for $\Theta_0$-labeled trees from the automata
constructions above plus the following additional condition:
\begin{itemize}

  \item there is a finite or infinite (and non-empty) path
    $\Pi=w_0,w_1,w_2,\ldots$ in $W$ that starts at the root such that
    all nodes $w\in W$ with $i_w=1$ lie on this path.

\end{itemize}
Every well-typed and well-formed $\Theta'$-labeled tree $(W,L)$ gives
rise to a $\fullsig$-instance tree $(V,E,B)$ and an
associated instance $I_{W,L}$ as follows. 
\begin{itemize}

  \item the set of nodes $V$ is $W$;

  \item the set of edges $E$ is defined as follows: 
    \begin{itemize}

      \item if $w'$ is a successor of $w$ and $i_{w'}=0$, then $(w,w')\in E$;

      \item if $w'$ is a successor of $w$ and $i_{w'}=1$, then (both $w,w'$ lie on the
	path $\Pi$) and $(w',w)\in E$.

    \end{itemize}

\end{itemize}
That is, the successor relation on the path $\Pi$ becomes the
predecessor relation; the remaining successor relations stay the same.
The definition of the labeling $B$ and consequently also of $I_{W,L}$
is exactly as in the preceeding encoding.  Setting $\mu=\bigcup_{w\in
W}\mu_w$, this extends to a $T_1$-labeled instance tree
$\Tmc_{W,L}=(V,E,B,\mu)$.

Conversely, for every $T_1$-labeled instance tree $\Tmc=(V,E,B,\mu)$
of width at most $m$, we can find a $\Theta'$-labeled tree $(W,L)$
that represents $\Tmc$ in the sense that $\Tmc_{W,L}$ is isomorphic to
$\Tmc$.  Since $\Delta$ is of size $2m$, it is possible to select a
mapping $\pi:\adom(I_\Tmc)\to \Delta$ such that for each edge $(v,w)\in E$
and all constants $c,c'\in \adom(B(w))\cup\adom(B(v))$, we have
$\pi(c) = \pi(c')$ iff $c=c'$.  Define the $\Theta'$-labeled tree
$(W,L)$ as follows:
\begin{itemize}

  \item If $(V,E)$ has a root, then $W=(V,E)$. Otherwise, there is an
    infinite path $v_0,v_1,\ldots$ in $(V,E)$ such that $v_{i+1}$ is a
    predecessor of $v_i$, for all $i\geq 0$. We make this path the
    infinite \emph{successor} path $\Pi$ starting from $v_0$ (and leave all
    other successor relations untouched).

  \item For all $w\in W$, $L(w)=(B(w),0,\mu_w)$ if $w\notin \Pi$ and
    $B_w=(B(w),1,\mu_w)$, if $w\in \Pi$.

\end{itemize}
Clearly, $(W,L)$ satisfies the desired properties.

\smallskip

The automaton $\Cmf$ is the intersection of the three \ata{}s
$\Cmf_0,\Cmf_1,\Cmf_2$ provided by the following lemma.
\begin{lemma} \label{lem:proper-automata}
  There are {\ata}s $\Cmf_0,\Cmf_1,\Cmf_2$ such that:
  \begin{itemize}

  \item[--] $\Cmf_0$ accepts $(W,L)$ iff $(W,L)$ is well-typed and
    well-formed;

  \item[--] $\Cmf_1$ accepts $(W,L)$ iff the $T_1$-labeled
    instance tree $\Tmc_{W,L}$ is $\widehat t$-proper;

    \item[--] $\Cmf_2$ accepts $(W,L)$ iff
      $\chase_{T_2}(D_2)|_{\Sigma}^\con\to I_{W,L}$. 

  \end{itemize}
  The number of states of 
  \begin{itemize}

    \item[--] $\Cmf_0$ is exponential in $||T_1||$ (and independent of
      $T_2$); 

    \item[--] $\Cmf_1$ is linear in $||T_1||$ (and independent of $T_2$); 

    \item[--] $\Cmf_2$ is double exponential in $||T_2||$ (and independent of
      $T_1$).
 
  \end{itemize}
  All automata can be constructed in time triple exponential in
    $||T_1||+||T_2||$
%
  and have maximum priority 
  one. 
  %
\end{lemma}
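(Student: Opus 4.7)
The plan is to build $\Cmf_0$, $\Cmf_1$, and $\Cmf_2$ separately, relying on the automata toolkit already developed for hom-conservativity and on a preprocessing phase that pushes the costly CQ-entailment checks out of the state space. Throughout, I may treat $B(w)$ and $\mu_w$ as being visible to the automaton at node $w$, since they are part of the input alphabet.

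For $\Cmf_0$, I verify well-typedness by sending one copy of the automaton to every node, which for each $c \in \Delta$ shared with a neighbour compares the $T_1$-type assigned to $c$ on both sides. The two conditions on $(W,L_0)$ of being a proper instance-tree encoding are verified exactly as in $\Amf_0$ from Lemma~\ref{lem:hom-automata}, using exponentially many states in $||T_1||$. For the $1$-marker path condition, the automaton carries a one-bit flag ``still on the $1$-path'' that flips irrevocably once a $0$-marked node is visited; any later $1$-marker then causes rejection. For $\Cmf_2$, I reuse the construction of $\Amf_4$ essentially verbatim, the only change being that the chased object is not a single $T_2$-type $\mu_w(c)$ but the fixed labeled $\Sigma$-head fragment $A=(F,\mu)$: the automaton starts at the root of $W$ in state $\langle q_F,\mu\rangle$, chases $D_A$ with $T_2$ on the fly using states drawn from $\mn{cls}(T_2,\Delta)$ annotated with $T_2$-type labelings, and follows only the $\Sigma$-connected component meeting $\adom(D)$. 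Sending copies along input-tree edges to track the target homomorphism into $I_{W,L}$ works exactly as in $\Amf_4$, keeping the state count double exponential in $||T_2||$ and construction time triple exponential.

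For $\Cmf_1$, the plan is to precompute, for every candidate bag pattern $B$ (the isomorphism types of heads of TGDs in $T_1$ together with the distinguished root pattern $\{\mn{true}(c_0)\}$), every type labeling $\mu:\adom(B)\to \TP(T_1)$, every $T_1$-type $t$, and every constant $c\in \adom(B)$, the two CQ-entailments from the definition of $\widehat t$-properness, namely $\widehat t,T_1 \models q_{(B,\mu)}$ and $t,T_1 \models q^c_{(B,\mu)}(x)$. There are doubly exponentially many such inputs, and each single entailment under frontier-one TGDs is in 2\ExpTime by \cite{BMRT11}, so the whole lookup table is built in triple exponential time. The automaton itself then needs only a handful of flag states (``am I on the $1$-path?'', ``am I the final $1$-node?'', ``am I the root of $\Tmc_{W,L}$?'') together with at most one state per TGD to record which head $B(v)$ is being matched; this yields a state count linear in $||T_1||$. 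Root identification splits into three cases --- no $1$-marked node, a finite non-empty $1$-marker path, and an infinite $1$-marker path --- the first two producing a root of $\Tmc_{W,L}$ at which either Condition~1(a) or~1(b) must be verified, and the third yielding no root, in which case every node must satisfy~1(b). The remaining checks at each node and each edge are pure table lookups.

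The main obstacle will be $\Cmf_1$, specifically meeting the linear bound on its state count: the entire nontrivial content must be absorbed into the precomputed tables, and the rerooting implicit in the encoding of potentially non-rooted instance trees must be woven into the transition function rather than the state space. Correctness is then immediate: the local conditions read off the input at each node and each edge are exactly Conditions~1 and~2 of $\widehat t$-properness, while $\Cmf_0$ guarantees that $(W,L)$ encodes a bona fide $T_1$-labeled instance tree and $\Cmf_2$ witnesses $\chase_{T_2}(D_A)|^\con_\Sigma \to I_{W,L}$. Together with Lemma~\ref{lem:limtoinfinite}, the intersection of the three automata is non-empty iff $\chase_{T_2}(D_A)|^\con_\Sigma \rightarrow^{\lim}_\Sigma \chase_{T_1}(\widehat t)$.
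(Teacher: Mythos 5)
Your construction matches the paper's own proof in all essentials: $\Cmf_0$ by straightforward local checks, $\Cmf_1$ by precomputing the frontier-one CQ-entailments $\widehat t,T_1\models q_{(B,\mu)}$ and $\mu_u(c),T_1\models q^c_{(B,\mu)}(x)$ into a triple-exponential-time lookup table so that only flag-style states remain, and $\Cmf_2$ by rerunning the $\Amf_4$ machinery started in state $\langle q_F,\mu\rangle$ and restricted to the $\Sigma$-connected component meeting $\adom(D)$. Your explicit three-case root identification for the $1$-marker path is a slightly more detailed treatment of a point the paper glosses over, but the approach is the same.
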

It can be verified that \Cmf satisfies $(\ast_\Cmf)$, and thus
$L(\Cmf)\neq \emptyset$ iff there is some $\widehat t$-proper $T_1$-labeled
instance tree $\Tmc$ with
$\chase_{T_2}(D_A)|_{\Sigma}^\con\to I_\Tmc$. The automaton $\Cmf_0$
is straightforward.

\subsubsection{Automaton $\Cmf_1$.} The automaton simply visits every
node $w \in W$ in the input tree $(W,L)$ and verifies locally at each
node that Conditions~1 and~2 of properness are satisfied. For
Condition~1, we have to check whether the labeling $L_1(w)$ of the
current node $w$ satisfies Condition~1 of Properness.  Condition~1(a)
is a simple lookup and for Condition~1(b), one has to decide (during
the construction of the automaton) whether $\widehat t,T_1\models
q_{(B(v),\mu_v)}$ for all possible labelings $(B(v),\mu(v))$. This is
possible in time triple exponential in $||T_1||$, since query
evaluation w.r.t.\ frontier-one TGDs is
\TwoExpTime-complete~\cite{BMRT11}.  For Condition~2 of properness,
the automaton needs to memorize the constant $c$ (if any) that is
shared between neighboring nodes in $W$. The
condition~$\mu_u(c),T_1\models q^c_{(B(w),\mu_v)}(x)$ that is part of
Condition~2 of properness can then be checked, again in triple
exponential time in $||T_1||$. Thus, $\Cmf_1$ can be computed in
triple exponential time. 

\subsubsection{Automaton $\Cmf_2$.} The check
$\chase_{T_2}(D_A)|_{\Sigma}^\con\to I_{W,L}$ is similar to what
$\Amf_4$ in Lemma~\ref{lem:hom-automata} achieves and, in fact,
exactly what the sub-automaton for Point~1 of $\Bmf_5$ in
Lemma~\ref{lem:cq-automata} achieves. We repeat it here for the sake
of convenience. The \ata $\Cmf_2$ behaves exactly as $\Amf_4$, but
starts in state $\langle q_F,\mu\rangle$. Recall that $A=(F,\mu)$,
that $q_F$ is $F$ viewed as Boolean CQ, and that
$q_F\in\mn{cls}(T_2,\Delta)$ and so $\langle q_F,\mu\rangle$ is a
state in $\Amf_4$.

\end{document}